\newcommand{\comment}[1]{}
\def\archive{1}
\title{Private Index Coding}
\author{
	\IEEEauthorblockN{Varun~Narayanan\IEEEauthorrefmark{1}, Jithin~Ravi\IEEEauthorrefmark{2}, Vivek~K.~Mishra\IEEEauthorrefmark{3}, Bikash~Kumar~Dey\IEEEauthorrefmark{4}, Nikhil~Karamchandani\IEEEauthorrefmark{4}, Vinod~M.~Prabhakaran\IEEEauthorrefmark{1}}\\
	\IEEEauthorblockA{\IEEEauthorrefmark{1}Tata Institute of Fundamental Research, Mumbai, }
	\IEEEauthorblockA{\IEEEauthorrefmark{2}Universidad Carlos III de Madrid, Legan\'es, Spain,\\}
	\IEEEauthorblockA{\IEEEauthorrefmark{3}Qualcomm, India,}
	\IEEEauthorblockA{\IEEEauthorrefmark{4}Indian Institute of Technology Bombay, Mumbai\\}
	Emails: varun.narayanan@tifr.res.in,  rjithin@tsc.uc3m.es, vivemish@qti.qualcomm.com, \{bikash, nikhilk\}@ee.iitb.ac.in, vinodmp@tifr.res.in }
\begin{document}
	\maketitle
	\begin{abstract}
		We study the fundamental problem of index coding under an additional privacy constraint that requires each receiver to learn nothing more about the collection of messages beyond its demanded messages from the server and what  is available to it as side information. To enable such private communication, we allow the use of a collection of independent secret keys, each of which
		is shared amongst a subset of users and is known to the server. The goal is to study properties of the key access
		structures which make the problem feasible and then design  encoding and decoding schemes efficient in the size of the server transmission as well as the sizes of the secret keys. We call this the \textit{private
			index coding} problem.
		
		We begin by characterizing the key access structures that make private index coding feasible. We also give conditions to check if a given linear scheme is a valid private index code. For up to three users, we characterize the rate region of feasible server transmission and key rates, and show that all feasible rates can be achieved using scalar linear coding and time sharing; we also show that scalar linear codes are sub-optimal for four receivers. The outer bounds used in the case of three users are extended to arbitrary number of users and seen as a generalized version of the well-known polymatroidal bounds for the standard non-private index coding. We also show that the presence of common randomness and private randomness does not change the rate region. Furthermore, we study the case where no keys are shared among the users and provide some necessary and sufficient conditions for feasibility in this setting under a weaker notion of privacy. If the server has the ability to multicast to any subset of users,  we demonstrate how this flexibility can be used to provide privacy and characterize the minimum number of server multicasts required. 
	\end{abstract}

	\section{Introduction}
	
Index coding~\cite{BirkK98, YossefBJK11, MalekiCJ14,AlonLSWH08,ArbabjolfaeiK18} is a fundamental problem in network information
theory which studies a setup consisting of a server with $N$ messages $x_1, x_2, \ldots, x_N$ communicating with $N$ users over a noiseless broadcast channel. Each user $i$ is assumed to have prior knowledge of a subset of the $N$ messages, referred to as its side information, and wants to obtain $x_i$. The goal of the index coding problem is to minimize the server transmission size while satisfying all the user demands. In this work, we introduce an additional privacy constraint wherein each user $i$ should learn no additional information about the messages other than those present in its side information set and its requested message $x_i$. To enable such private communication, we allow the use of a collection of independent secret keys, each of which is shared amongst a subset of users and is known to the server. The goal is to study properties of the key access structures which make the problem feasible and then design encoding and decoding schemes which utilize them efficiently in terms of the sizes of the server transmission and the secret keys. We call this the {\em private index coding} problem.

The index coding problem was introduced by Birk and Kol~\cite{BirkK98}, where it was observed that significant gain in the rate of transmission can be obtained by utilizing the broadcast nature of the network. Since then, the problem has garnered significant attention and  various aspects have been addressed (see~\cite{ArbabjolfaeiK18} for a survey). Characterizing the optimal transmission rate for an index coding problem is known to be hard~\cite{YossefBJK11} in general and there have been several works aimed at obtaining good upper and lower bounds, see, e.g.,~\cite{Blasiak,BlasiakKL13,ShanmugamDL14,SunJ15,ThapaOJ17}. Restricting attention to linear encoding schemes, Bar-Yossef \emph{et al.}~\cite{YossefBJK11} obtained the optimal rate for a general index coding problem. Lubetzky and Stav~\cite{LubetzkyS09} showed that there exist index coding problems where non-linear codes strictly outperform linear codes.

While the broadcast nature of the network helps in reducing the transmission rate in index coding, it does adversely affect user privacy.
Some recent works have studied the security and privacy aspects of index
coding. Roughly, these works can be divided into two groups. The first group, including~\cite{DauSC12,MojahedianGA15,OngVKY16,OngVYKY16,OngKV18,MojahedianAG17,LiuVKS18,Sun20}, considers the 
 security against an eavesdropper who tries to learn some
information about the messages by wiretapping the broadcast link from the
server to the users. The second group, including~\cite{KarmooseSCF19, NarayananRMDKP18, LiuT19,LiuT20, LiuSAS20,SasiR19}, considers privacy aspects amongst the legitimate users.

Security against an eavesdropper who has access to a
subset of messages was first studied by Dau \emph{et al.}~\cite{DauSC12}. 
They obtained the conditions that any linear code should satisfy to
achieve decodability as well as secrecy. Ong \emph{et al.} studied similar problems~\cite{OngVKY16, OngKV18}, where an equivalence between secure index coding and secure
network coding was shown. For this problem, Liu \emph{et al.}~\cite{LiuVKS18} gave  an achievable scheme using a secret key with vanishingly small rate.
Ong \emph{et al.}~\cite{OngVYKY16} further 
considered a weaker notion of security where the eavesdropper cannot gather any information about each individual message outside its side information set. Note that this does not preclude the eavesdropper from gaining some information about the entire collection of such messages.  

 Mojahedian \emph{et al.}~\cite{ MojahedianAG17} studied the security against an eavesdropper
without any side information and provided schemes which achieve secrecy
using a key shared between the server and the users. 
Securely transmitting a single  message
to a group of users against more than one adversary was considered in a recent work~\cite{Sun20}. Here, the legitimate users and the adversaries possess some subsets of the keys that the server has, and the required transmission rate and the necessary key rates have been analysed for some special cases.

Karmoose \emph{et al.}~\cite{KarmooseSCF19} studied the issue of privacy amongst the legitimate users, where each user wants to hide the identities of its side information messages and its demanded messages from the other users. Making the observation that for a linear code,  a user may learn about the identities of other users side information and requested messages from the encoding matrix, they proposed a scheme to preserve privacy by not fully revealing the encoding matrix~\cite{KarmooseSCF19}. In contrast to this, our work considers a different privacy constraint where no user should get any information concerning the set of messages that have neither been requested by it nor are in its side information set. It will turn out that, under all but trivial cases, the above privacy
requirement can be met only  if some subsets of users possess exclusive keys.

We also study a weaker notion of privacy  where 
each user does not learn any information about each individual message not present in its side information or requested by it, but may gain some information about the entire collection of all such messages. This is similar to the notion of $1$-block
security~\cite{DauSC12} or weak security~\cite{OngVYKY16} defined in the context of
eavesdropper security. Liu \emph{et al.}~\cite{LiuSAS20} extended this setup to a setting where the weak privacy constraint is required only against a subset of the messages not available to a user. 
Recently, some works~\cite{SasiR19,LiuT19,LiuT20}
studied the notion of weak privacy for a variant of the index coding problem called the Pliable Index CODing (PICOD($t$)) introduced by Brahma \emph{et al.}~\cite{BrahmaF15}, where each user wants to obtain any $t \ge 1$ messages that it does not have access to. Sasi \emph{et al.}~\cite{SasiR19} studied
PICOD(1) with the privacy condition that each user is able to decode exactly one message outside its side information set for the special case of consecutive side information. A generalization of~\cite{SasiR19} in terms of the form of the side information sets was investigated by Liu \emph{et al.}~\cite{LiuT19}. The authors further extended their study to a decentralized setting~\cite{LiuT20} where communication occurs among users rather than by a central server.

\subsection{Contributions}
Below, we briefly describe the main contributions of this work\footnote{A preliminary version appeared in~\cite{NarayananRMDKP18}.}.  

\begin{enumerate}
\item \emph{Privacy with keys}: We define the private index coding problem, where user privacy is enabled by sharing secret keys of different sizes amongst the server and various subsets of users. Our interest here is to study the impact of the \emph{key access structures} on the feasibility of the private index coding problem and the tradeoff between the server transmission rate and the sizes of the secret keys. 

\begin{enumerate}

\item We characterize the \emph{key access structures} that make private index coding feasible (Theorem~\ref{thm_key_access}). We give conditions under which a given linear coding scheme is a valid private index code (Theorem~\ref{Thm_linear}).

\item We define the rate region of the private index coding problem as the set of all feasible tuples of the server transmission rates and  key rates. We  characterize the rate region when the number of users is at most three and show that all feasible rates can
	be achieved using scalar linear coding and time sharing (Theorem~\ref{thm_3user}). Further, we give an
	example of a private index coding problem with four users where all the feasible points cannot be
	achieved using scalar linear coding (Theorem~\ref{thm:4user}).  This is in contrast to the non-private index coding problem, where it was
	shown that scalar linear coding is optimal up to 4 users~\cite{Ong14}.  The outer bounds used for characterizing the rate region for three users is generalized  to any private index coding problem (Theorem~\ref{polymatroidal}). These are a generalized version of the polymatroidal bounds for the index coding rate proposed in~\cite{Blasiak}.

\item We study the minimum sum key rate over all  feasible key access structures and provide upper and lower bounds (Theorem~\ref{thm:sum_key}). From the characterization of the rate regions up to three users, we observe that the 
	minimum sum key rate and the optimal server transmission rate are achieved simultaneously in all those cases, i.e., there is no trade-off between the optimal server transmission rate and the minimum sum key rate. However, the question of 
	whether this is indeed the case in general is yet unresolved and left open.
	We do show that there exists a trade-off between the transmission rate and the size of the key access structure (Example~\ref{Ex_sumkey_size}).

\item Finally, we consider the most general notion of private index coding that uses private randomness at the server and the receivers, and also employs common randomness that is available to all agents.
We show that the rate region of such schemes coincides with the more limited family of private index codes which use neither private randomness nor common randomness, thus justifying our limiting attention to such schemes in the rest of the paper.
Our result does not apply for \emph{perfect private index coding}, in which we require `the privacy and correctness error' to be zero at all receivers (in our main definition, we only require these to vanish asymptotically). 
Indeed, we leave the role of private randomness in perfect private index codes as an open problem.

\end{enumerate}
 \item \emph{Weak privacy without keys}:  If no keys are shared among the users, then even weak privacy cannot be achieved for all index coding problems. We derive a necessary condition that an index coding problem has to satisfy for weak privacy to be feasible (Theorem~\ref{prop_neces}). Further, we also give a sufficient condition (Theorem~\ref{Thm_Sec_Cliq}). We characterize the condition under which a linear coding scheme achieves weak privacy (Theorem~\ref{Thm_1b_linear}). 
	
	\item \emph{Privacy through multicasts}: We consider a model in which there is no shared key between the server and the users. However, the server can
multicast to any subset of users and this flexibility is used to ensure privacy in the index coding setting. For this model, we characterize  the minimum number of multicasts required as the fractional chromatic number of a certain graph specified by the index coding problem (Theorem~\ref{Thm_multicast}).
\end{enumerate}

The rest of the paper is organized as follows. We describe our private index coding setup
in Section~\ref{sec_model} and give the characterization of a feasible key
access structure in Section~\ref{Sec_feasble}. Our results on linear
codes is presented in Section~\ref{sec_linear}. Characterization of rate regions up to three users and the polymatroidal outer bounds are provided in Section~\ref{sec_rate}. 
We give our results on sum key rate in Section~\ref{sumkeyrate}. The roles of private and common randomness in private index coding are discussed in Section~\ref{determinism}.
Our results on weak privacy
and private index coding through multicast sessions are provided in
Section~\ref{sec_1b} and Section~\ref{sec_multicast}, respectively.

	\section{Problem Formulation and Preliminaries}
	\label{sec_model}

A server possesses $N$ messages, $\x_1, \ldots, \x_N$, and user $i \in [N]:=\{1,\ldots, N\}$ wants the message $\x_i$. 
We assume that  $\x_i$'s are independent and take values uniformly in a field $\mathbb{F}$.
For a subset of indices $\cS \subseteq [N]$, the set of messages $\{\x_i: i \in \cS\}$ is represented by $\x_{\cS}$.
Each user $i$ has a subset of messages  $\x_{\s_i}$ as side information, where $\s_i \subseteq [N] \setminus \{i\}$. Let $\a_i$ denote the set $\s_i \cup \{ i\}$. 
Index coding problem can be represented by a directed graph $G$ with vertex set $V(G)=[N]$ and edge set $E(G)$ such that $(i,j) \in E$ if and only if $ j \in \s_i$.
Complement of graph $G$, denoted by $G^c$, has vertex set $V(G^c)=[N]$ and edge set $ E(G^c) = (E(G))^c$.

The privacy requirement we consider is that user $i$ should not obtain any information about $\x_{[N] \setminus \a_i}$ in an {\em asymptotic} sense as given later.
The server has access to \emph{keys} that are shared among various subsets of users.
A key is a random variable that is independent of the messages and other keys.
For $\cS \subsetneq [N], \cS \neq \emptyset$, let $\b \in \{0, 1\}^N \setminus \{\vec{0}, \vec{1}\}$ represent the characteristic vector of $\cS$, \emph{i.e.,} $i^{\text{th}}$ bit in $\b$, denoted by $\b^{(i)}$, is 1 if and only if $i \in \cS$.
The key that is available exclusively to users in $\cS$ is denoted by $\k_{\b}$.

We allow block coding, i.e., the server observes
$n$ independent copies of each message before transmission. 
For block length $n$, we assume that the key  $\k_{\b}$
takes values in the set $\{1, \cdots, |\mathbb{F}|^{nR_{\b}}\}$ uniformly at random, where $R_{\b}$ denotes the rate of the key\footnotemark.
\footnotetext{Rates and entropies in this paper are expressed in units of $\log |\mathbb{F}|$ bits.}

Let $\ak$ denote the set of all keys, \emph{i.e.,} $\ak = \{0, 1\}^N \setminus \{\vec{0}, \vec{1}\}$.
For $i \in [N]$, let $\b^{(i)}$ denote the $i^{\text{th}}$ bit in $\b$.
The subset of keys available to the user $i$ is denoted by $\ak_i \defeq  \{\b \in \ak: \b^{(i)} = 1\}$.

An $(n,\Rm, (R_{\b}:\b \in \ak))$ scheme consists of an encoding function $\phi$ and $N$ decoding functions $\{\psi_i\}_{i \in [N]}$. The encoding function 
\begin{align}
\label{eq:encoder}
\phi: \prod_{i \in [N]} \mathbb{F}^n \times \prod_{\b \in \ak} [|\mathbb{F}|^{nR_{\b}}] \longrightarrow [|\mathbb{F}|^{n\Rm}],
\end{align}
outputs the random variable $\mv = \phi\left(\xv_{[N]}, \kv_{\ak}\right)$. For $i \in [N]$, the decoding function
\begin{align}
\label{eq:decoder}
\psi_i: [|\mathbb{F}|^{n\Rm}] \times \prod_{j \in \s_i} \mathbb{F}^n \times \prod_{\b \in \ak_i} [|\mathbb{F}|^{nR_{\b}}] \longrightarrow  \mathbb{F}^n.
\end{align}
maps the message received from the transmission and the side information data to an estimate of the file needed at user $i$
\begin{align}
\widehat{\xv_i} := \psi_i\left(\mv, \xv_{\s_i}, \kv_{\ak_i} \right).
\end{align} 

\begin{definition}
A tuple $(\Rm, (R_{\b}:\b \in \ak))$ is said to be {\em achievable}, if for each $\epsilon>0$
there exists an $(n,\Rm, (R_{\b}:\b \in \ak))$ scheme for some large enough $n$ such that
 the following conditions are satisfied:
\begin{align}
\label{Eq_Dec}
\pr{\widehat{\xv_i} = \xv_i, \forall i \in [N]} \ge 1 - \epsilon,
\end{align}
and 
\begin{align}
\label{Eq_Priv}
I \left( \mv ;\xv_{[N] \setminus \a_i} \middle | \xv_{\s_i}, \kv_{\ak_i} \right) \le n\epsilon.
\end{align}
\end{definition}

For a private index coding problem represented by graph $G$, the {\em rate region} is defined as the closure
of all achievable rate tuples, and it is denoted by $\cR(G)$. Two quantities of interest for a private index coding problem are the optimal transmission rate and the sum key rate which are defined next.

\begin{definition}
\label{def:sumkeyrate}
For a private index coding problem represented by graph $G$, the the optimal transmission rate $R^*(G)$ is defined as 
\begin{align*}
R^*(G) & = \min\{ \Rm: (\Rm,(R_{\b}:\b \in \ak)) \in \cR(G) \text{ for some } (R_{\b}:\b \in \ak) \}.
\end{align*}
For a rate tuple $\vec{R}=(R,(R_{\b} : \b \in \ak))$, the sum key rate is defined as $\mathsf{SumKeyRate}(\vec{R}) \defeq \sum_{\b \in \ak} R_{\b}$.
The minimum sum key rate $\mathsf{SumKeyRate}^{*}(G)$  is defined as 
\begin{align*}
\mathsf{SumKeyRate}^{*}(G) \defeq \min_{\vec{R} \in \cR(G)}\mathsf{SumKeyRate}(\vec{R}).
\end{align*}
\end{definition}

\begin{definition}
	\label{Def_Feasible}
	The {\em key access structure} of a private index code, denoted by $\ks$, is the set of indices corresponding to keys with non-zero rates, \emph{i.e.,} $\ks =  \{\b \in \ak : R_{\b} > 0\}$.

	A key access structure $\ks$ is said to be \emph{feasible} for a private index coding problem represented by graph $G$ if there exists a point in $\cR(G)$ with $R_{\b} = 0$ for all $ \b \notin \ks$.
\end{definition}

We define a stronger notion of \emph{perfect private index coding} where the decoding and privacy conditions are not asymptotic.

\begin{definition}
	\label{def:perfect-pic}
	For a private index coding problem represented by graph $G$, an $(n,\Rm, (R_{\b}:\b \in \ak))$ scheme is said to be a \textbf{perfect private index code} if the decoding error and privacy leakage at all users are  simultaneously zero, i.e.,
\paragraph*{Decoding Condition}
\begin{align}
\label{Eq_Dec_perfect}
\pr{\widehat{\xv_i} = \xv_i, \forall i \in [N]}  = 1
\end{align}
\paragraph*{Privacy condition} At each user $i \in [N]$,
\begin{align}
\label{Eq_Priv_perfect}
I \left( \mv ;\xv_{[N] \setminus \a_i} \middle | \xv_{\s_i}, \kv_{\ak_i} \right) = 0.
\end{align}
\end{definition}
All the private index coding schemes described in the paper indeed achieve this stronger notion of perfect privacy and the converses are shown for the weaker notion of asymptotic privacy as previously defined. A scheme which satisfies~\eqref{Eq_Dec_perfect} and~\eqref{Eq_Priv_perfect} with $n=1$ is called a {\em scalar private index code}.

	\section{Feasibility of Private Index Coding}
	\label{Sec_feasble}

In private index coding, achieving \eqref{Eq_Dec} and \eqref{Eq_Priv} relies on the availability of certain keys among users.
Hence, the feasibility of private index coding depends on the key access structure (e.g., see Fig.~\ref{Fig_feasible0}).
The following theorem characterizes the feasible key access structures for a private index coding problem.
\begin{figure}[h]
	\centering
	\includegraphics[scale =0.6]{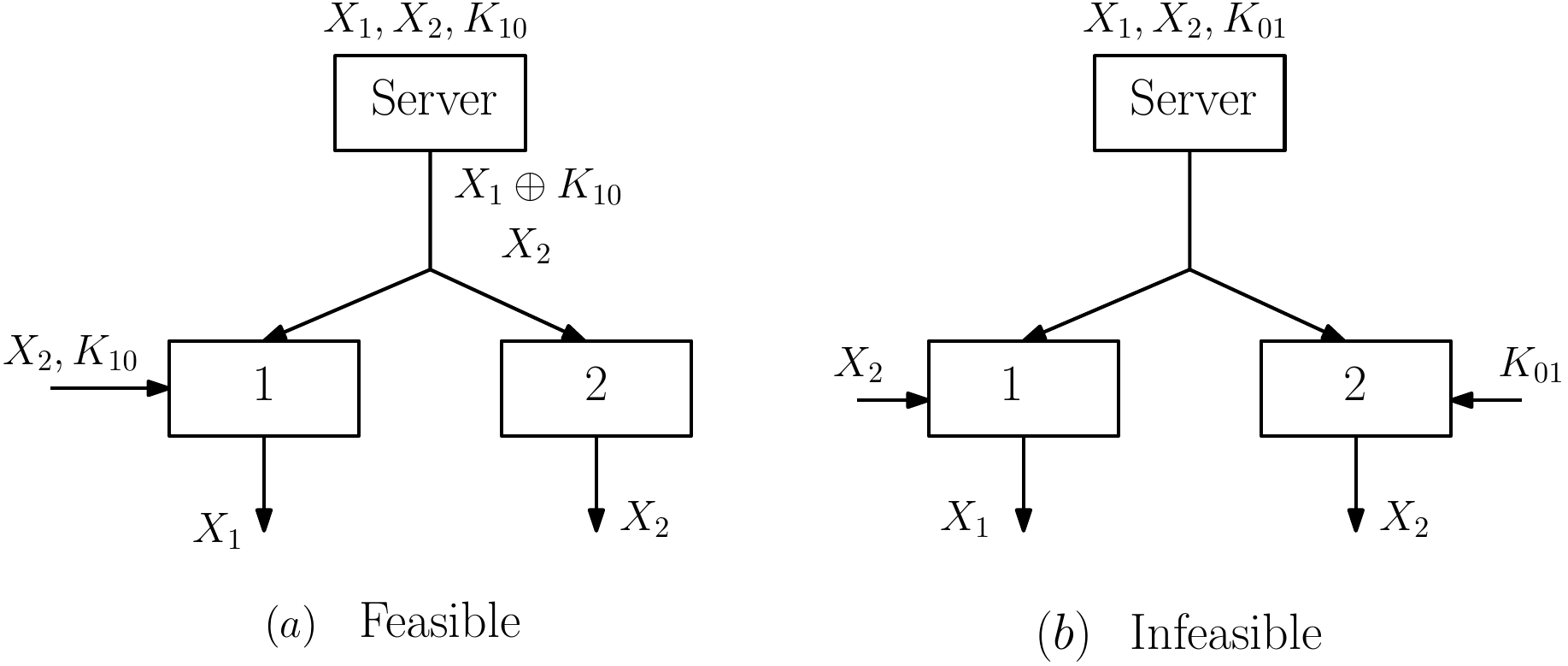}
	\caption{Figure (a) shows a feasible key access structure along with a code that achieves privacy. Figure (b) shows an infeasible key access structure. Here, it can be seen that any encoding that allows users 1 and 2 to decode $X_1$ and $X_2$, respectively, also allows user 2 to decode $X_1$, breaking the privacy condition at user 2.}
	\label{Fig_feasible0}
\end{figure}

\begin{theorem}
	\label{thm_key_access}
	A key access structure $\ks$ is feasible if and only if $\forall \; i,j \in [N] $ such that $i \notin \a_j$, there exists a $\b \in \ks$ such that $b_i =1, b_j=0$.
\end{theorem}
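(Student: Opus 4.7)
My plan is to prove both directions of the biconditional.

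For necessity, I would suppose the condition fails at some pair $(i,j)$ with $i \notin \a_j$; equivalently, every $\b \in \ks$ with $b^{(i)}=1$ also satisfies $b^{(j)}=1$, i.e.\ $\ak_i \cap \ks \subseteq \ak_j \cap \ks$. Writing $V_j := (\mv,\x_{\s_j},\kv_{\ak_j})$ and $T := \s_i \setminus \a_j$ (so $\{i\}\cup T \subseteq [N]\setminus \a_j$), I would derive two incompatible bounds on $H(x_i \mid V_j)$. On one side, the joint privacy constraint $I(\mv;\x_{[N]\setminus \a_j}\mid \x_{\s_j},\kv_{\ak_j}) \le n\epsilon$ gives directly $H(x_i \mid V_j) \ge n - n\epsilon$, and by sub-additivity of entropy applied to the sub-tuple $(x_i,\x_T) \subseteq \x_{[N]\setminus \a_j}$ it also gives $I(x_i;\x_T \mid V_j) \le n\epsilon$. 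On the other side, Fano at $j$ gives $H(x_j \mid V_j) \le n\epsilon'$ (for some $\epsilon' \to 0$ as $\epsilon \to 0$); and because $\s_i = (\s_i \cap \a_j) \cup T$ yields $\x_{\s_i} \subseteq \x_{\s_j} \cup \{x_j\} \cup \x_T$, while $\kv_{\ak_i}$ is, up to the trivial zero-rate keys, a sub-tuple of $\kv_{\ak_j}$ by the failed condition, Fano at $i$ gives $H(x_i \mid V_j,x_j,\x_T) \le n\epsilon'$. Chaining,
\[
H(x_i \mid V_j) = H(x_i \mid V_j,x_j,\x_T) + I(x_i;x_j,\x_T \mid V_j) \le n\epsilon' + I(x_i;\x_T \mid V_j) + H(x_j \mid V_j) \le 2n\epsilon' + n\epsilon,
\]
which contradicts the lower bound $n - n\epsilon$ for small enough $\epsilon,\epsilon'$.

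For sufficiency, I would construct a scalar perfect private index code. Assuming the condition, for each ordered pair $(i,j)$ with $j \in J_i := \{j : i \notin \a_j\}$ pick $\b_{ij} \in \ks$ with $b_{ij}^{(i)} = 1$ and $b_{ij}^{(j)} = 0$. For each $\b \in \ks$ used, choose $R_{\b}$ large enough that $k_{\b}$ can be parsed into independent uniform blocks $k_{\b}^{(i,j)}$, one per pair $(i,j)$ with $\b_{ij}=\b$. Transmit $\mv = (y_1,\ldots,y_N)$ where
\[
y_i \;=\; x_i + \sum_{j \in J_i} k_{\b_{ij}}^{(i,j)}.
\]
Since $b_{ij}^{(i)} = 1$, user $i$ possesses every summand and recovers $x_i$ by subtraction. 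For privacy at user $j$, consider any $i \in [N]\setminus \a_j$: the mask $k_{\b_{ij}}^{(i,j)}$ in $y_i$ is a fresh uniform random variable that $j$ does not hold (as $b_{ij}^{(j)} = 0$) and that appears in no other $y_{i'}$ (being indexed by the pair $(i,j)$). Grouping the masks in $y_i$ that $j$ does not hold into a single variable $U_i$, the $U_i$'s are jointly independent uniform across $i\in[N]\setminus\a_j$, and a change of variables shows that conditional on $(\x_{\s_j},\kv_{\ak_j})$ the transmission $\mv$ is jointly independent of $\x_{[N]\setminus \a_j}$, yielding $I(\mv;\x_{[N]\setminus \a_j}\mid \x_{\s_j},\kv_{\ak_j}) = 0$.

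The main obstacle is the necessity direction, which requires combining joint privacy at $j$ (to obtain both the lower bound on $H(x_i\mid V_j)$ and the near-independence $I(x_i;\x_T\mid V_j)\le n\epsilon$) with Fano applied at \emph{both} $i$ and $j$, and exploiting the containments $\x_{\s_i} \subseteq \x_{\s_j}\cup\{x_j\}\cup\x_T$ and $\ak_i \cap \ks \subseteq \ak_j \cap \ks$ to run $i$'s decoder from $j$'s augmented view. The sufficiency is a direct one-time-pad construction; the only care required is to use fresh independent copies of each shared key so that mask collisions across different $y_i$'s cannot leak linear combinations of the unauthorised messages.
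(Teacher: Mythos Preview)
Your proposal is correct in both directions, but the necessity argument follows a genuinely different route from the paper's.

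For necessity, the paper does not argue by contradiction via a simulation of user $i$'s decoder at user $j$. Instead it proves a \emph{quantitative} key-rate inequality: for any achievable scheme and any $i\notin\a_j$, it shows $I(\mv;\kv_{\ks_i\setminus\ks_j}\mid \xv_{[N]},\kv_{\ks_i\cap\ks_j}) \ge H(\xv_i)-o(n)$, hence $\sum_{\b\in\ks_i\setminus\ks_j}R_\b\ge 1$, from which nonemptiness of $\ks_i\setminus\ks_j$ follows. The derivation uses privacy at $j$ (pushed down to conditioning on $\kv_{\ks_i\cap\ks_j}$ via the independence of keys and messages, their Lemma~\ref{lem_ind}) together with decodability at $i$. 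Your argument, by contrast, shows directly that if $\ks_i\cap\ks\subseteq\ks_j\cap\ks$ then user $j$, augmented with its own decoded $\xv_j$ and the ``leftover'' side information $\xv_T$, can run user $i$'s decoder; you then use the joint privacy bound at $j$ both to lower bound $H(\xv_i\mid V_j)$ and to control $I(\xv_i;\xv_T\mid V_j)$. Your route is arguably more intuitive, but the paper's route buys more: the inequality $\sum_{\b\in\ks_i\setminus\ks_j}R_\b\ge 1$ is exactly one of the facets used later in characterising the three-user rate regions (Claim~\ref{clm:keyrate-bound1}) and in the polymatroidal bound, so their proof of Theorem~\ref{thm_key_access} is doing double duty.

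For sufficiency, both arguments are one-time-pad constructions and essentially equivalent. The paper indexes fresh key coordinates by the single user $i$ (taking $\kv_\b\in\mathbb{F}^N$ and sending $\mv^{(i)}=\x_i+\sum_{\b\in\ks_i}\k_\b^{(i)}$), and then invokes the linear-code privacy criterion of Theorem~\ref{Thm_linear}; you index fresh coordinates by the ordered pair $(i,j)$ and argue privacy directly via a change of variables. Your indexing is slightly more wasteful in key rate but the logic is the same.
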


Theorem~\ref{thm_key_access} shows that  if user $j$ does not have $X_i$ as side information, then user $i$ should have a key that is not available with user $j$ to preserve privacy. And also that, if all users satisfy this condition, then we can obtain a private index code.
The full proof of this theorem is provided in the Appendix~\ref{app_thm_key_access}.
Here we give an outline of the proof of this result for \emph{perfect private index coding}.
The general result is shown in a similar way.

\paragraph{If part} Let $\x_i \in \mathbb{F}$, for all $i \in [N]$. 
We consider a scheme which uses $N$ independent copies of each key in the given key access structure, \emph{i.e.,} $\forall \b \in \ks, \kv_{\b} \in \mathbb{F}^N$. 
For $i \in [N]$, similar to the definition of $\ak_i$, we define $\ks_i = \{\b \in \ks : \b^{(i)} = 1\}$, \emph{i.e.,} $\ks_i$ is the subset of keys in the key access structure that is available to user $i$. 
For a representation of $\k_{\b}$ as an $N$-length vector,
let $\kv_{\b}^{(i)}$ denote the $i^{\text{th}}$ co-ordinate of the vector. 
The transmission $\mv \in \mathbb{F}^N$, where  $i^{\text{th}}$ transmission $\mv^{(i)}$, $i\in[N]$, is given by
\begin{align}
\mv^{(i)} = \x_i  + \sum_{\b \in \ks_i} \k_{\b}^{(i)}. \label{Sch_Feas}
\end{align}
From $\mv^{(i)}$, user $i$ can decode $\x_i$ since she has $\k_{\b}$ (specifically, $\k_{\b}^{(i)}$) for all $\b \in \ks_i$.
This scheme also satisfies the privacy condition~\eqref{Eq_Priv_perfect} since if  $i \notin \a_j$, then, by assumption, user $j$ does not have $ \k_{\b}^{(i)}$ for some $\b \in \ks_i$, and it acts as one-time pad for the message $\x_i$ against user $j$.
We note that this scheme is  linear, and a formal proof of the privacy condition follows using Theorem~\ref{Thm_linear}. For details, see Appendix~\ref{app_thm_key_access}.

\paragraph{Only if part} For an $n$ block length perfect private index code, we show that if $i \notin \a_j$, then the rate of the keys that user $i$ has and user $j$ does not have, given by  $ H(\kv_{\ks_i \setminus \ks_j} )$, is lower bounded by $H(\xv_i)$.
From the privacy condition~\eqref{Eq_Priv_perfect} at user $j$, it follows that
\begin{align}
I \left( \mv \; ; \; \xv_{[N] \setminus \a_j} \middle | \xv_{\s_j}, \kv_{\ks_j} \right) & = 0. \nonumber
\intertext{By the decoding condition~\eqref{Eq_Dec_perfect} at user $j$, $\xv_j$ is a function of $(\mv, \xv_{\s_j}, \kv_{\ks_j})$, hence}
I \left( \xv_j, \mv \; ; \; \xv_{[N] \setminus \a_j} \middle | \xv_{\s_j}, \kv_{\ks_j} \right) & = 0. \label{Eq_mutl_info1}
\intertext{Since $i \notin \a_j$, using the chain rule of mutual information,  from \eqref{Eq_mutl_info1} we obtain}
I \left( \mv \; ; \; \xv_{i} \middle | \xv_{[N] \setminus \{i\}}, \kv_{\ks_j} \right) & = 0 \label{Eq_mutl_info}.
\intertext{Using the fact that messages and keys are independent, it can be shown  from~\eqref{Eq_mutl_info} that}
I \left( \mv \; ; \; \xv_{i} \middle | \xv_{[N] \setminus \{i\}}, \kv_{\ks_j \cap \ks_i} \right) & = 0 \label{Eq_feasb2}.
\end{align}
By the decoding condition~\eqref{Eq_Dec_perfect} for user $i$,
\begin{align}
\label{Eq_feasb3}
I \left( \mv ; \xv_i \middle | \xv_{[N] \setminus \{i\}}, \kv_{\ks_i} \right) = H\left(\xv_i\right).
\end{align}
Thus, we have
\begin{align}
I\left( \mv ; \kv_{\ks_i \setminus \ks_j} \middle | \xv_{[N] \setminus \{i\}}, \xv_i, \kv_{\ks_j \cap \ks_i} \right)
&  = I\left( \mv ; \kv_{\ks_i \setminus \ks_j}, \xv_i \middle | \xv_{[N] \setminus \{i\}}, \kv_{\ks_j \cap \ks_i} \right)\label{Eq_feasbl3}\\
& \ge I\left( \mv ; \xv_i \middle | \xv_{[N] \setminus \{i\}}, \kv_{\ks_i} \right)\notag \\ 
& = H\left(\xv_i\right),\label{eq:from_conv_of_feasibility_for_rate_sec}
\end{align}
where \eqref{Eq_feasbl3} follows from~\eqref{Eq_feasb2}, and \eqref{eq:from_conv_of_feasibility_for_rate_sec} follows from~\eqref{Eq_feasb3}. 
Since $I \left( \mv ; \kv_{\ks_i \setminus \ks_j} \middle | \xv_{[N] \setminus \{i\}}, \xv_i, \kv_{\ks_j \cap \ks_i} \right) \leq H(\kv_{\ks_i \setminus \ks_j} )$, and since $H(\xv_i) > 0$, it follows from \eqref{eq:from_conv_of_feasibility_for_rate_sec} that  $R_{\b}$ is non-zero for some $\b \in \ks_i \setminus \ks_j$.

\section{Linear Private Index Codes}
\label{sec_linear}
In this section, we consider linear coding schemes for perfect private index coding.
In Theorem~\ref{Thm_linear}, we characterize linear schemes that satisfy the decoding~\eqref{Eq_Dec_perfect} and privacy~\eqref{Eq_Priv_perfect} conditions.

In the context of linear coding, for block length $n \geq 1$, let $\bX_i$ denote the row-vector corresponding to $\xv_i$ and let $\bK_{\b}$ denote the key uniformly distributed in $\mathbb{F}^{nR_{\b}}$.
The linear encoder is of the form
\begin{align}
\label{Eq_lin_encd}
\bM^T = \sum_{i \in [N]} \bG_i \bX_i^T + \sum_{\b \in \ak} \bH_{\b} \bK_{\b}^T,
\end{align}
where $\bM \in \mathbb{F}^r$, $\bG_i \in \mathbb{F}^{r \times n}$ for $i \in [N]$, and $\bH_{\b} \in \mathbb{F}^{r \times n R_{\b}}$ for $\b \in \ak$.
Transmission rate $R$ is said to be achievable if for some $n\geq 1$ there exists a scheme such that $R=r/n$ and it satisfies \eqref{Eq_Dec_perfect} and \eqref{Eq_Priv_perfect}. If $n = 1$, the scheme is called a \emph{scalar linear code}.

\begin{theorem}
	\label{Thm_linear}
	A linear encoding scheme is a valid perfect private index coding scheme if and only if it satisfies the following conditions for each $i \in [N]$,
	\begin{enumerate}
		\item Let $\bG_i = \left[ \bG_i^{(1)} \ldots \bG_i^{(n)} \right]$, then for each $1 \le k \le n$,
		\begin{align*}
		\bG_i^{(k)} \notin \langle \{\bG_j | {j \notin \a_i}\} \cup \{\bH_{\b} | \b \notin \ak_i\} \rangle,
		\end{align*}
		\item $\langle \{\bG_j | j \notin \a_i\} \rangle \subseteq \langle \{\bH_{\b} | \b \notin \ak_i\} \rangle$.
	\end{enumerate}
\end{theorem}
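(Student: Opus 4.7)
I would establish the two conditions separately, starting from the observation that user~$i$, knowing $\xv_{\s_i}$ and $\kv_{\ak_i}$, can algebraically subtract these contributions from $\mv$ to obtain the residual
\begin{equation*}
\mv^T \;-\; \sum_{j \in \s_i}\bG_j\bX_j^T \;-\; \sum_{\b \in \ak_i}\bH_{\b}\bK_{\b}^T \;=\; \bG_i\bX_i^T \;+\; \sum_{j \notin \a_i}\bG_j \bX_j^T \;+\; \sum_{\b \notin \ak_i}\bH_{\b}\bK_{\b}^T.
\end{equation*}
Decoding~\eqref{Eq_Dec_perfect} becomes: $\bX_i$ is uniquely determined by this residual with probability~$1$. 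Perfect privacy~\eqref{Eq_Priv_perfect} becomes: the conditional distribution of this residual (equivalently, of $\mv$) given $(\xv_{\s_i},\kv_{\ak_i})$ does not depend on $\xv_{[N]\setminus \a_i}$. Both requirements then translate into linear-algebraic conditions on the column spans of $\bG_i$, $\{\bG_j : j \notin \a_i\}$, and $\{\bH_{\b} : \b \notin \ak_i\}$.

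\emph{Decoding $\Leftrightarrow$ Condition 1.} After further conditioning on $\xv_{[N]\setminus\a_i}$ and the unavailable keys, the residual is a deterministic function of $\bX_i$ shifted by a vector ranging, as the hidden variables vary, over $W := \langle \{\bG_j : j\notin \a_i\}\cup\{\bH_{\b}:\b \notin \ak_i\}\rangle$. Zero-error decoding is therefore equivalent to injectivity of the quotient map $\bX_i \mapsto \bG_i \bX_i^T \bmod W$, i.e., linear independence of the columns of $\bG_i$ modulo~$W$; in particular each column must satisfy $\bG_i^{(k)} \notin W$, which is exactly Condition~1.

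\emph{Privacy $\Leftrightarrow$ Condition 2.} Now condition on $(\xv_{\s_i},\kv_{\ak_i},\xv_{[N]\setminus\a_i})$: the remaining randomness is in $\bX_i$ and in the keys $\{\bK_{\b}:\b\notin \ak_i\}$, all independent and uniform over their ranges. The linear map $(\bX_i,(\bK_{\b}))\mapsto \bG_i\bX_i^T + \sum_{\b\notin \ak_i}\bH_{\b}\bK_{\b}^T$ is surjective onto $U := \langle \bG_i \cup \{\bH_{\b}:\b\notin \ak_i\}\rangle$, so the conditional distribution of the residual is uniform on the coset $\sum_{j\notin\a_i}\bG_j\bX_j^T + U$. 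Perfect privacy holds iff this coset is constant in $\xv_{[N]\setminus\a_i}$, which amounts to $\langle \{\bG_j:j\notin\a_i\}\rangle \subseteq U$. Combining this with the column-independence given by decoding (any $\bG_i\ba$ lying in $W$ forces $\ba = 0$) eliminates the $\bG_i$ summand on the right, yielding the sharper $\langle \{\bG_j:j\notin\a_i\}\rangle \subseteq \langle \{\bH_{\b}:\b\notin\ak_i\}\rangle$, namely Condition~2.

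\emph{Main obstacle.} The delicate step is the necessity of Condition~2. One must argue that whenever the span containment fails, two choices of $\xv_{[N]\setminus\a_i}$ produce cosets of $U$ that genuinely differ, forcing a discrepancy in the conditional distribution of $\mv$ and thus violating conditional independence. This hinges on the keys being mutually independent and uniformly distributed so that $\sum_{\b\notin\ak_i}\bH_{\b}\bK_{\b}^T$ fills out $\langle \{\bH_{\b}:\b\notin\ak_i\}\rangle$ exactly and no larger space. Sufficiency of both conditions, and the necessity of Condition~1, reduce to routine coset arithmetic.
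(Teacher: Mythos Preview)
Your approach matches the paper's: both subtract the known side information to isolate a residual and then read off decodability and privacy from the span structure of the $\bG_j$ and $\bH_\b$ columns. Your coset formulation for privacy is in fact cleaner than the paper's, which argues via ``projection onto the space orthogonal to $\langle\{\bH_\b : \b\notin\ak_i\}\rangle$''---an awkward notion over a finite field that your distributional argument sidesteps. Your use of the decoding condition to sharpen $\langle\{\bG_j:j\notin\a_i\}\rangle\subseteq U$ down to Condition~2 is a nice step the paper leaves implicit.

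One point needs care. You correctly establish that zero-error decoding is equivalent to the columns of $\bG_i$ being linearly independent modulo $W$, and you note that this implies Condition~1 (each column individually lies outside $W$). But the converse fails: with $n=2$ take $\bG_i=[v\ v]$ for some $v\notin W$; Condition~1 holds, yet $\bG_i\bX_i^T=v\,(X_i^{(1)}+X_i^{(2)})$ and $\bX_i$ is unrecoverable. So when you write that ``sufficiency of both conditions\ldots reduce[s] to routine coset arithmetic,'' the sufficiency of Condition~1 for decoding is not routine---it is false as literally stated. The paper's own proof glosses over exactly this, simply asserting that Condition~1 is necessary and sufficient for decodability. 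Your derivation actually exposes the gap; what is really needed for the ``if'' direction is the full linear-independence-mod-$W$ condition you identified, not merely its columnwise consequence.
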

Here $\langle . \rangle$ denotes the linear span of column vectors.
\begin{proof}
	The first condition is the necessary and sufficient condition for the decodability of $\bX_i$ from the linear code at user $i$ using $\bX_{\s_i}$ and $\bK_{\ak_i}$ as side information.
	
	The second condition is the necessary and sufficient condition for privacy at user $i$.
	Suppose the condition is not satisfied, then $\langle \{\bG_j  | j \notin \a_i \} \rangle$ has a non zero projection onto the space orthogonal to $\langle \{\bH_j | j \notin \ak_i\} \rangle$.
	But this space is a subspace of $\langle \bG_1, \ldots \bG_N, \{\bH_j  | j \in \ak_i\} \rangle$.
	Hence by projecting the transmitted message onto this space, user $i$ may obtain a non-zero linear function of $\bX_{[N] \setminus \a_i}$, hence the privacy condition~\eqref{Eq_Priv_perfect} does not hold at user $i$. The transmitted message can be written as
	
	\begin{align*}
	\bM^T = \sum_{j \in \a_i}\bG_j \bX_j^T & + \sum_{\b \in \ak_i} \bH_{\b} \bK_{\b}^T + \left ( \sum_{j \in [N] \setminus \a_i}\bG_j \bX_j^T + \sum_{\b \notin \ak_i} \bH_{\b} \bK_{\b}^T \right ).
	\end{align*}
	If the second condition is satisfied for $i$, then $\sum_{j \in [N] \setminus \a_i}\bG_j \bX_j^T$ lies in the subspace $\sum_{\b \notin \ak_i} \bH_{\b} \bK_{\b}^T$. But, $\sum_{\b \notin \ak_i} \bH_{\b} \bK_{\b}^T$ is independent of $\bX_{[N]}$ and $\bK_{\ak_i}$ and is uniformly distributed in the space spanned by $\bH_{\b}, \b \notin \ak_i$. Hence, the sum in the brackets is distributed uniformly in the vector space irrespective of the value of $\bX_{[N]}, \bK_{\ak_i}$. This implies that $\bM$ is independent of $\bX_{[N] \setminus \a_i}$ conditioned on $\bX_{\s_i}, \bK_{\ak_i}$, hence the privacy condition is satisfied at user $i$.
\end{proof}

	\section{Rate of Private Index Coding}
	\label{sec_rate}

\subsection{Connection to Index Coding Rate}
We first show a simple connection between private index coding and index coding (without privacy) for the same side information structure.
Given a zero-error index coding scheme (specifically, an optimal scheme) of block-length $n$, we describe a perfect private index code (for a certain key access structure we specify below) with the same transmission rate:

For $i \in [N]$, let $\b_i \in \{0, 1\}^N$ denote the characteristic vector of the set $\a_i$.
We choose $\kv_{\b_i}$ of the same rate as $\xv_i$ (\emph{i.e.,} $\kv_{\b_i}$ is uniform in $\mathbb{F}^n$).
Not that $\kv_{\b_i}$ is available to all users in $\a_i$.
Taking $\{\xv_i + \kv_{\b_i}, i \in [N]\}$ as the messages, the given index coding scheme can be employed to deliver $\xv_i + \kv_{\b_i}$ to user~$i$, $i\in[N]$; note that user~$i$ has access to side-information $\{\xv_j + \kv_{\b_j}: j \in \s_i\}$ as required.
Having access to $\kv_{\b_i}$, user~$i$ can recover $\xv_i$.
Privacy follows from the fact that $\kv_{\b_i}$ is unavailable to any user who should not learn $\xv_i$.
Thus, the optimal transmission rate of the index coding problem is also achievable using \emph{perfect private index codes} for a certain key access structure.

Given a private index coding scheme, it is easy to see, using an averaging argument, that there exists an assignment of values to the keys which gives an index code which guarantees decoding error no more than that in the private index code at each user.
From this observation it is clear that the minimum transmission rate of private index coding cannot be less than that achieved by asymptotic index coding.
Since rate of asymptotic index coding coincides with zero-error index coding~\cite{langberg}, we have the following observation.
\begin{theorem}\label{thm:priv_nopriv}
	For a given side information structure, optimal transmission rates of zero-error index coding and private index coding (optimized over key access structures and key rates) are the same.
\end{theorem}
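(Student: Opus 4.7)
The plan is to prove the two inequalities separately: first that the optimal zero-error index coding rate is achievable by a perfect private index code (for some key access structure), and second that the private index coding transmission rate is lower bounded by the zero-error index coding rate. The first direction follows the one-time-pad reduction sketched immediately before the theorem; the second uses an averaging argument over keys combined with the Langberg--Effros result~\cite{langberg}.

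For the upper bound, I would fix a zero-error index code of block length $n$ and transmission rate $R$ for the given side-information structure $\{\s_i\}$. Introduce independent one-time-pad keys $\k_{\b_i}$, one per message, each uniform over $\mathbb{F}^n$, where $\b_i$ is chosen so that (i) every user $j$ with $i \in \s_j$ (and user $i$ itself) has access to $\k_{\b_i}$, and (ii) no user $j$ with $i \notin \a_j$ has access to it. Define masked messages $\y_i \defeq \x_i + \k_{\b_i}$. By construction, for each $j \in \s_i$ user $i$ holds both $\x_j$ and $\k_{\b_j}$, and can therefore form $\y_j$; hence user $i$ can simulate having $\{\y_j:j\in\s_i\}$ as side-information for an index coding instance on $\y_1,\dots,\y_N$ with the same side-information graph $G$. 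Running the fixed zero-error index code on the $\y_i$'s lets user $i$ recover $\y_i$ exactly and, via $\k_{\b_i}$, recover $\x_i$. The transmission rate equals $R$. Privacy at user $i$ follows from Theorem~\ref{Thm_linear}: for each $k \notin \a_i$, the key $\k_{\b_k}$ is not among $\kv_{\ak_i}$ and acts as a one-time pad on $\x_k$, so the span condition of the theorem is satisfied. Since the construction is a perfect private index code with rate $R = R^{\mathrm{IC}}_0(G)$, we conclude $R^{*}(G) \le R^{\mathrm{IC}}_0(G)$.

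For the lower bound, I would take an arbitrary $(n,R_m,(R_{\b}))$ private index code with decoding error and privacy leakage at most $\epsilon$. Fixing the random keys $\kv_{\ak}$ at any particular realization $\kappa$ turns the scheme into a deterministic (or stochastic, if one allows private randomness) encoder $\phi(\cdot,\kappa):\prod_i \mathbb{F}^n \to [|\mathbb{F}|^{nR_m}]$ together with decoders. The per-user decoding condition~\eqref{Eq_Dec} implies that $\Pr[\widehat{\xv_i}\ne \xv_i]\le\epsilon$ for each $i$ averaged over both messages and keys; by a union bound and an averaging argument over $\kappa$, there exists a key realization $\kappa^{*}$ for which the induced deterministic index code has per-user error at most $N\epsilon$ for every $i\in[N]$. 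This gives an ordinary asymptotic index code of transmission rate $R_m$, and by the Langberg--Effros equivalence of asymptotic and zero-error index coding rates~\cite{langberg}, we get $R_m \ge R^{\mathrm{IC}}_0(G)$. Combining with the first direction yields the theorem.

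The main obstacle is the first direction, specifically the definition of the key access structure for a directed side-information graph: one must be careful that the chosen $\b_i$ simultaneously makes $\k_{\b_i}$ available to every user who needs $\y_j$ as side-information and unavailable to every user who must not learn $\x_i$; the natural choice $\b_i \leftrightarrow \{i\}\cup\{j:i\in\s_j\}$ satisfies both, and its correctness and privacy then follow formally from Theorem~\ref{Thm_linear}. The averaging argument in the second direction is essentially standard; the only subtlety is passing from average-over-keys error to per-user error, handled by a union bound over the $N$ users at the cost of a factor $N$, which is harmless for asymptotic rates.
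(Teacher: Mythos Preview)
Your proposal is correct and follows the same two-step approach as the paper: a one-time-pad reduction for the upper bound and an averaging-over-keys argument combined with~\cite{langberg} for the lower bound. One small caveat: Theorem~\ref{Thm_linear} applies only to \emph{linear} encoders, so it cannot be invoked when the underlying zero-error index code is non-linear; privacy instead follows directly from the one-time-pad property you already state (for each $k\notin\a_i$, user $i$ lacks $\k_{\b_k}$, hence $\y_k=\x_k+\k_{\b_k}$ is uniform and independent of $\x_k$ from user $i$'s viewpoint, and the transmission is a function of the $\y$'s alone). Your choice $\b_i\leftrightarrow\{i\}\cup\{j:i\in\s_j\}$ is exactly the right one for directed side-information graphs.
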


\subsection{Rate Region of Small Private Index Coding Problems}
The following theorem shows that when $N \le 3$, rate region of private index coding can be characterized.
\begin{theorem}
\label{thm_3user}
	For every private index coding problem $G$ with at most 3 users, the rate region $\mathcal{R}(G)$ is achievable using scalar linear codes and time sharing.
\end{theorem}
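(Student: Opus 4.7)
The plan is a case-by-case analysis over the finitely many distinct side-information graphs on at most $3$ vertices. Up to relabelling of users, $N=1$ is trivial, $N=2$ admits only a handful of cases, and $N=3$ yields at most $2^{6}=64$ directed graphs, which collapse to a manageable list under the vertex-permutation symmetry. For each graph $G$ the strategy has three stages: (i) identify the feasible key access structures via Theorem~\ref{thm_key_access}; (ii) specialise the polymatroidal outer bound of Theorem~\ref{polymatroidal} to obtain an explicit finite list of linear inequalities in the transmission rate and the key rates $(R_{\b})_{\b\in\ak}$; and (iii) construct, for each extreme point of this outer region, a matching scalar linear code (validated by Theorem~\ref{Thm_linear}), so that time-sharing over these finitely many schemes fills $\cR(G)$.

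For the achievability step a natural building block is the one-time-pad construction from the ``if'' direction of Theorem~\ref{thm_key_access}: pair each message $\x_i$ with a key $\k_{\b_i}$ whose support corresponds exactly to $\a_i$, transmit the encrypted symbol $\x_i+\k_{\b_i}$, and then invoke an ordinary (non-private) index code on these encrypted messages. For $N\le 3$ scalar linear codes are known to be optimal for classical index coding, so this immediately delivers the minimum-transmission-rate vertex of the outer region (consistent with Theorem~\ref{thm:priv_nopriv}). Other extreme points, which trade smaller keys or alternative access structures against larger transmission rates, are handled by variant scalar linear constructions --- for instance, splitting a message across several keys, or letting a single key serve several users --- and the resulting finite catalogue of scalar codes is then combined by time sharing to sweep out the entire region.

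The main obstacle I anticipate is matching the inner and outer bounds at \emph{every} extreme point of the polymatroidal region simultaneously, not merely at the minimum-transmission-rate point. Because the region lives in the high-dimensional space indexed by all feasible key rates, different vertices may call for qualitatively different scalar linear codes, so the construction cannot be uniform across cases and each $G$ with $N\le 3$ has to be checked on its own. That this succeeds at all is genuinely special to $N\le 3$: Theorem~\ref{thm:4user}, announced in the introduction, exhibits a $4$-user instance where scalar linear codes combined with time sharing are provably insufficient, so the case analysis here cannot be replaced by any purely structural argument. The verification in stages (i) and (ii) is largely mechanical given Theorems~\ref{thm_key_access} and~\ref{polymatroidal}; the real content lies in exhibiting, for each of the finitely many $3$-user graphs and each vertex of its outer region, a concrete scalar linear scheme whose parameters agree with that vertex.
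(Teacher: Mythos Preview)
Your proposal is correct and matches the paper's approach: enumerate the non-isomorphic side-information graphs on $\le 3$ vertices, derive a polyhedral outer bound on $\cR(G)$ for each, exhibit a scalar linear code at every vertex of that polytope (verified via Theorem~\ref{Thm_linear}), and fill the region by time sharing. The only cosmetic difference is that the paper derives its outer-bound inequalities by direct, graph-specific entropy expansions (Claims in Appendix~\ref{Sec_3user_proof}) rather than by first invoking Theorem~\ref{polymatroidal} and specialising, but these yield the same finite list of linear constraints.
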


\begin{table*}[]
    \begin{tabular}{|l|l|l|l|}
        \hline
        Side info. graph $G$  & Rate Region $\cR(G)$ & Vertices of $\cR(G)$  & A private index code achieving the vertex\\ \cline{1-4}
        \multirow{9}*{
            \begin{tikzpicture}
                \node (1) at (-1, 0) [draw=none]{$1$};
                \node (2) at (1, 0)  [draw=none]{$2$};
                \node (3) at (0, 1)  [draw=none]{$3$};
                \path[->,thick] (1) edge (2);
                \path[->,thick] (2) edge (3);
                \path[->,thick] (3) edge (1);
            \end{tikzpicture}
        } & \multirow{9}*{
            $\begin{array}{lcl}
                R_{100} + R_{101} \ge 1\\
                R_{010} + R_{110} \ge 1\\
                R_{001} + R_{011} \ge 1\\
                R \ge 2\\
                R + R_{011} \ge 3\\
                R + R_{101} \ge 3\\
                R + R_{110} \ge 3\\
            	R_{\b} \ge 0, \b \in \ak
            \end{array}$
        }
        & $(2, 0, 0, 1, 0, 1, 1)$    & $\x_1 + \x_2 + \k_{101} + \k_{110}, \x_2 + \x_3 + \k_{110} + \k_{011}$  \\ \cline{3-4}
        &&    $(3, 0, 0, 1, 1, 1, 0)$    & $\x_1 + \k_{101}, \x_2 + \k_{110}, \x_3 + \k_{001}$\\ \cline{3-4}
        &&    $(3, 0, 1, 0, 0, 1, 1)$    & $\x_1 + \k_{101}, \x_2 + \k_{010}, \x_3 + \k_{011}$\\ \cline{3-4}
        &&    $(3, 0, 1, 0, 1, 1, 0)$    & $\x_1 + \k_{101}, \x_2 + \k_{010}, \x_3 + \k_{001}$\\ \cline{3-4}
        &&    $(3, 1, 0, 1, 0, 0, 1)$    & $\x_1 + \k_{100}, \x_2 + \k_{110}, \x_3 + \k_{011}$\\ \cline{3-4}
        &&    $(3, 1, 0, 1, 1, 0, 0)$    & $\x_1 + \k_{100}, \x_2 + \k_{110}, \x_3 + \k_{001}$\\ \cline{3-4}
        &&    $(3, 1, 1, 0, 0, 0, 1)$    & $\x_1 + \k_{100}, \x_2 + \k_{010}, \x_3 + \k_{011}$\\ \cline{3-4}
        &&    $(3, 1, 1, 0, 1, 0, 0)$    & $\x_1 + \k_{100}, \x_2 + \k_{010}, \x_3 + \k_{001}$\\
        & & &\\ \cline{1-4}
    \end{tabular}
    \caption{For the 3 user private index coding problem $G$, every vertex of the polytope $\cR(G)$ can be achieved using scalar linear PIC. In the table, the vertices are represented as tuples $(R, R_{100}, R_{010}, R_{110}, R_{001}, R_{101}, R_{011})$.}
    \label{Table_Eg1}
\end{table*}

A full proof of this theorem and \emph{the characterization of the rate region} for every private index coding problem with $N \le 3$  is presented in the Appendix~\ref{app_thm_3user}.
Here we illustrate the proof method by characterizing the rate region of \emph{perfect private index codes} for an example.
Consider the side information graph given in Table~\ref{Table_Eg1}.
We first show the necessity of the constraints on the rates. In arguing the ``only if'' part of Theorem~\ref{thm_key_access} we showed that if $i \notin \a_j$, then $\sum_{\b \in \ak_i \setminus \ak_j} R_{\b} \ge H(\xv_i)$.
The first three inequalities in the table follow from this using $H(\xv_i)/n = 1, i = 1, 2, 3$. To see the next inequality, note that the transmission rate of a private index code is lower bounded by the rate of the index coding problem for the same side information graph. Hence, $R$ is lower bounded by number of vertices in the maximum acyclic induced subgraph~\cite{YossefBJK11}, which is 2 in this example. To show the next inequality, we use the bound $H\left(\mv \right) \ge I(\mv ; \xv_1, \xv_2, \xv_3, \kv_{001}, \ldots, \kv_{110})$
and expand the mutual information term as follows,
\begin{multline} \label{Eq_rate_lowr}
    I\left(\mv ; \xv_2, \kv_{100}, \kv_{110}, \kv_{101} \right)+  I\left(\mv ; \xv_1 \middle | \xv_2, \kv_{100}, \kv_{110}, \kv_{101} \right)
	+ I\left(\mv ; \kv_{001}, \kv_{011} \middle | \xv_1, \xv_2, \kv_{100}, \kv_{110}, \kv_{101} \right)\\
    + I\left(\mv ; \xv_3 \middle | \xv_1, \xv_2, \kv_{100}, \kv_{110}, \kv_{101}, \kv_{001}, \kv_{011} \right)
	+I\left(\mv ; \kv_{010} \middle | \xv_{[3]}, \kv_{100}, \kv_{110}, \kv_{101}, \kv_{010}, \kv_{011} \right).
\end{multline}
We lower bound the first and the third terms in the expression by zero.
The decodability condition at user 1 implies that the second term is $H(\xv_1)$.
Using the independence of keys and messages, the fourth term can be written as
\begin{align*}
I \left(\mv ; \xv_3 \middle | \xv_1, \kv_{101}, \kv_{001}, \kv_{011} \right) \stackrel{(a)}{=} H(\xv_3),
\end{align*}
where (a) follows from the decodability condition at user 3.
To bound the fifth term in the expression~\eqref{Eq_rate_lowr}, we note that
\begin{align*}
    & I\left(\mv ; \kv_{010} \middle | \xv_{[3]}, \kv_{100}, \kv_{110}, \kv_{101}, \kv_{010}, \kv_{011} \right) + H\left(\kv_{110}\right) \ge I \left(\mv ; \kv_{010}, \kv_{110} \middle | \xv_{[3]}, \kv_{100}, \kv_{101},  \kv_{010}, \kv_{011} \right)\\
	\stackrel{(a)}= & I \left(\kv_{100}, \kv_{010}, \kv_{011} ; \kv_{010}, \kv_{110} \middle | \xv_{[3]}\right) + I \left(\mv ; \kv_{010}, \kv_{110} \middle | \xv_{[3]}, \kv_{100}, \kv_{101},  \kv_{010}, \kv_{011} \right)\\
    = & I \left(\kv_{100}, \kv_{010}, \kv_{011}, \mv ; \kv_{010}, \kv_{110} \middle | \xv_{[3]}, \kv_{101} \right) \ge I \left(\mv ; \kv_{010}, \kv_{110} \middle | \xv_{[3]}, \kv_{101} \right) \stackrel{(b)}{\ge} H \left(\xv_2 \right).%
\end{align*}
Here, (a) follows from the independence of files and keys and (b) follows from~\eqref{eq:from_conv_of_feasibility_for_rate_sec} with $i = 2$ and $j=3$, since $2 \notin \a_3$.
Hence, the fifth term in~\eqref{Eq_rate_lowr} can be lower bounded by $H(\xv_2) - H(\kv_{110})$.
Putting all these together we have $H(\mv) \ge \sum_{i \in [3]} H(\xv_i) - H(\kv_{110})$ which implies that $R \ge 3 - R_{110}$, similarly we get the next two inequalities.
The table shows that the vertices of the polygon described by these inequalities can be achieved using scalar linear codes.
In the appendix we show this for all graphs with up to 3 vertices, thereby proving the theorem.

In the case of 4 users, we have the following theorem, the proof of which can be found in Appendix~\ref{Sec_4user}.
\begin{theorem}
	\label{thm:4user}
	There is a 4 user private index coding problem where a vector linear code obtains a rate tuple outside the rate region obtained by scalar linear coding and time sharing.
\end{theorem}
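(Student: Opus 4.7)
The plan is to exhibit a concrete 4-user side-information graph $G$, describe an explicit vector linear perfect private index code of block length $n=2$ that achieves a rate tuple $\vec{R}^\star$, and then prove that $\vec{R}^\star$ lies strictly outside the convex hull of all rate tuples achievable by scalar linear private index codes for $G$. Since time sharing yields exactly this convex hull, producing such a separating point establishes the theorem.

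For the vector construction, I would look for a graph $G$ on four vertices whose feasible key access structure (as characterized by Theorem~\ref{thm_key_access}) forces each user to carry several distinct keys, but whose side-information pattern allows two users' transmissions to share a common mask when viewed at block length $2$. A natural candidate is a graph with two pairs of users whose side-information sets are ``complementary'' in the sense that no single key may serve both pairs simultaneously, yet a key of rate $1$ split into two coordinates can simultaneously mask message sub-blocks destined for both pairs. The transmission would have the form $\bM^T = \sum_i \bG_i \bX_i^T + \sum_\b \bH_\b \bK_\b^T$ with carefully chosen $\bG_i$ and $\bH_\b$ whose combined column spans meet conditions~1 and~2 of Theorem~\ref{Thm_linear}, producing a rate tuple with transmission rate $R^\star = r/2$ and certain fractional key rates $R^\star_\b$ that scalar coding cannot mimic without paying strictly more.

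For the converse, I would exploit the fact that, by Theorem~\ref{Thm_linear} applied at $n=1$, a scalar linear code is specified by vectors $\bG_i$ and matrices $\bH_\b$ satisfying $\bG_i \notin \langle \{\bG_j : j \notin \a_i\} \cup \{\bH_\b : \b \notin \ak_i\}\rangle$ together with the subspace containment $\langle \{\bG_j : j \notin \a_i\} \rangle \subseteq \langle \{\bH_\b : \b \notin \ak_i\} \rangle$. From these constraints I would derive, by a case analysis on how the $\bG_i$ can sit inside the span of the key columns, a linear inequality of the form
\begin{equation*}
\alpha R \;+\; \sum_{\b \in \ak} \beta_\b R_\b \;\ge\; \gamma,
\end{equation*}
that is satisfied by every scalar linear achievable tuple for $G$. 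Because linear inequalities are preserved under convex combinations, the same bound holds on the scalar-linear-plus-time-sharing region. Finally I would check that $\vec{R}^\star$ from the vector construction violates this inequality, giving the separation.

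The main obstacle is the converse: unlike the polymatroidal outer bounds of Theorem~\ref{polymatroidal}, which apply to all schemes, the inequality used here must be one that vector linear codes can violate, so it cannot be derived from entropy inequalities alone. I expect it to come specifically from the dimension-counting forced by condition~2 of Theorem~\ref{Thm_linear}: at block length $1$, the privacy subspace $\langle \bH_\b : \b \notin \ak_i\rangle$ must cover each $\bG_j$ with $j \notin \a_i$ \emph{as a whole vector}, whereas at block length $2$ different coordinates of $\bG_j$ can be covered by different keys. Isolating this phenomenon in a combinatorially small instance of $G$, and turning the forced redundancy into a clean scalar linear inequality that $\vec{R}^\star$ provably violates, will be the technically delicate part of the argument.
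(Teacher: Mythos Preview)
Your high-level plan matches the paper's proof: exhibit a specific $4$-user side-information graph, give an explicit block-length-$2$ linear code satisfying the conditions of Theorem~\ref{Thm_linear}, and then use the $n=1$ instance of those same conditions to rule out scalar codes at the target rate. Your diagnosis that the obstruction is precisely condition~2 of Theorem~\ref{Thm_linear} at $n=1$ (each $\bG_j$ must lie \emph{as a whole vector} in the key span, whereas at $n=2$ different coordinates may be covered by different keys) is exactly the mechanism the paper exploits.

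The one tactical divergence is in the converse. You propose to extract a single linear inequality $\alpha R+\sum_\b\beta_\b R_\b\ge\gamma$ valid for all scalar codes and violated by $\vec R^\star$. The paper does not do this; instead it observes that since $\vec R^\star$ has $R_\b=0$ outside a fixed six-key set, any convex combination achieving $\vec R^\star$ must use only scalar codes with that same key support, and then shows by a direct case analysis on the two-dimensional column vectors $v_i=\bG_i$, $v_\b=\bH_\b$ that no scalar code with transmission rate $2$ and key rates bounded by $(1,1,1,2,1,1)$ can satisfy both conditions of Theorem~\ref{Thm_linear}, forcing $R\ge 3$ on that slice. This infeasibility-of-a-point argument is shorter to execute than hunting for a global separating hyperplane, though in principle your route would also work since a separating inequality must exist once the vector point lies outside the scalar convex hull.
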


\subsection{A Polymatroidal Outer Bound}
The outer bounds we used in establishing the rate regions of all private index coding problems with up to 3 users in Theorem~\ref{thm_3user} can be generalized.
The following theorem provides an outer bound for private index coding by generalizing the polymatroidal bound for the rate of index coding that was proposed in~\cite{Blasiak}.

\begin{theorem}\label{polymatroidal}
    A private index coding rate of $\left(R, R_{\b}: \b \in \ak \right)$ is achievable for $G$ only if there exists a function $f : 2^{[N]} \times 2^{\ak} \rightarrow \mathbb{R}_{\ge 0}$ such that the following conditions are satisfied.
    \begin{align}
    & f\left([N], \ak\right) = 0, \label{pm1}\\
    & f\left(\emptyset, \emptyset\right) \le R, \label{pm2} \\
    & f\left(\cS, \cT\right) \ge f\left(\cS', \cT'\right) \text{ if } \cS \subseteq \cS', \cT \subseteq \cT'\label{pm3}\\
    & \text{For disjoint sets } \cS, \cS', \cS'' \subseteq [N], \cT, \cT', \cT'' \subseteq \ak, \nonumber\\
    & \quad f\left(\cS, \cT\right) - f\left(\cS \cup \cS'', \cT \cup \cT'' \right) \nonumber\\
    & \quad \le f\left(\cS \cup \cS', \cT \cup \cT'\right) - f\left(\cS \cup \cS' \cup \cS'', \cT \cup \cT' \cup \cT'' \right) \label{pm4}\\
    & f\left(\s_i, \ak_i\right) - f\left(\a_i, \ak_i\right) \ge 1, \forall i \in [N], \label{pm5}\\
    & f\left(\a_i, \ak_i\right) - f\left([N], \ak_i\right) = 0, \forall i \in [N], \label{pm6}\\
    & f\left([N], \ak \setminus \cT\right) \le \sum_{\b \in \cT} R_{\b}, \forall \cT \subseteq \ak \label{pm7}.
    \end{align}
\end{theorem}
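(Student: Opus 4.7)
The plan is to construct the function $f$ directly from the joint distribution of the messages, keys, and transmitted signal of a good code. For $\epsilon > 0$ and an $(n, \Rm, (R_{\b}))$ scheme that satisfies~\eqref{Eq_Dec} and~\eqref{Eq_Priv} at level $\epsilon$, I set
\begin{align*}
f(\cS, \cT) \defeq \tfrac{1}{n}\, H\!\left(\mv \,\middle|\, \xv_{\cS}, \kv_{\cT}\right), \qquad \cS \subseteq [N],\ \cT \subseteq \ak.
\end{align*}
With this choice, \eqref{pm1}--\eqref{pm3} and \eqref{pm7} are immediate from standard information-theoretic facts: $\mv$ is a deterministic function of $(\xv_{[N]}, \kv_{\ak})$; $H(\mv) \le n\Rm$ since $\mv$ takes values in $[|\mathbb{F}|^{n\Rm}]$; conditioning reduces entropy; and once $(\xv_{[N]}, \kv_{\ak \setminus \cT})$ is fixed, $\mv$ is a function of $\kv_{\cT}$ alone, whose entropy is at most $n\sum_{\b\in\cT}R_{\b}$.

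The submodular-type inequality \eqref{pm4} rewrites, after cancellation, as
\begin{align*}
I\!\left(\mv;\xv_{\cS''},\kv_{\cT''} \,\middle|\, \xv_{\cS},\kv_{\cT}\right) \le I\!\left(\mv;\xv_{\cS''},\kv_{\cT''} \,\middle|\, \xv_{\cS\cup\cS'},\kv_{\cT\cup\cT'}\right),
\end{align*}
which follows from the identity $I(X;Y \mid Z,W) - I(X;Y\mid Z) = I(Y;W\mid X,Z) - I(Y;W\mid Z)$ together with the observation that, since messages and keys are mutually independent and the sets $\cS, \cS', \cS''$ and $\cT, \cT', \cT''$ are pairwise disjoint, $(\xv_{\cS'},\kv_{\cT'})$ and $(\xv_{\cS''},\kv_{\cT''})$ are conditionally independent given $(\xv_{\cS},\kv_{\cT})$, making the second term on the right-hand side of the identity vanish. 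For \eqref{pm5},
\begin{align*}
f(\s_i,\ak_i) - f(\a_i,\ak_i) = \tfrac{1}{n} I(\mv;\xv_i\mid \xv_{\s_i},\kv_{\ak_i}) = 1 - \tfrac{1}{n} H(\xv_i\mid \mv,\xv_{\s_i},\kv_{\ak_i}) \ge 1 - \delta_n,
\end{align*}
using the independence of $\xv_i$ from $(\xv_{\s_i},\kv_{\ak_i})$ together with a Fano bound $H(\xv_i\mid \mv,\xv_{\s_i},\kv_{\ak_i}) \le n\delta_n$, $\delta_n \to 0$, coming from \eqref{Eq_Dec}. For \eqref{pm6}, a chain-rule expansion exploiting independence of messages yields
\begin{align*}
I\!\left(\mv;\xv_{[N]\setminus\a_i}\,\middle|\,\xv_{\a_i},\kv_{\ak_i}\right) = I\!\left(\mv;\xv_{[N]\setminus\a_i}\,\middle|\,\xv_{\s_i},\kv_{\ak_i}\right) + I\!\left(\xv_i;\xv_{[N]\setminus\a_i}\,\middle|\,\mv,\xv_{\s_i},\kv_{\ak_i}\right),
\end{align*}
whose first summand is at most $n\epsilon$ by \eqref{Eq_Priv} and whose second is at most $H(\xv_i\mid \mv,\xv_{\s_i},\kv_{\ak_i}) \le n\delta_n$ by Fano.

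To close the argument, for any rate tuple in $\cR(G)$ I pick a sequence of $\epsilon_k$-achievable schemes with $\epsilon_k \to 0$ and block lengths $n_k \to \infty$ whose rate tuples converge to the target. Each associated $f^{(k)}$ satisfies \eqref{pm1}--\eqref{pm4} and \eqref{pm7} exactly and \eqref{pm5}, \eqref{pm6} with slack tending to zero, and each takes values in a bounded interval on the finite domain $2^{[N]} \times 2^{\ak}$. Bolzano--Weierstrass then produces a subsequential limit $f$ that satisfies all seven conditions exactly at the target rate tuple. The step requiring the most care is \eqref{pm4}: compared with the classical polymatroidal bound for non-private index coding~\cite{Blasiak}, the ground set is enlarged to the disjoint union of messages \emph{and} keys, and one must carry both coordinates through the calculation while exploiting the mutual independence of the underlying variables to orient the Shannon-type inequality in the required direction.
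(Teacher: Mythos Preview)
Your proof is correct and follows essentially the same route as the paper's: both define $f$ from the conditional information quantities of a good code, verify \eqref{pm1}--\eqref{pm4} and \eqref{pm7} exactly and \eqref{pm5}--\eqref{pm6} with vanishing slack, then pass to a convergent subsequence. The only cosmetic difference is that you set $f(\cS,\cT)=\tfrac{1}{n}H(\mv\mid \xv_{\cS},\kv_{\cT})$ whereas the paper writes $f(\cS,\cT)=\tfrac{1}{n}I(\mv;\xv_{[N]\setminus\cS},\kv_{\ak\setminus\cT}\mid \xv_{\cS},\kv_{\cT})$, but since the encoder is deterministic these coincide; your inline derivations of \eqref{pm4}--\eqref{pm6} match the paper's Lemma~\ref{lem_ind} and Lemma~\ref{lem_enc_dec} respectively.
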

\begin{proof}
To be more explicit, we will denote the files, keys and transmitted message in an $(n, R, R_{\b} : \b \in \ak)$ scheme by $\xv_i : i \in [N]$, $\bl{\k}{n}_{\b} : \b \in \ak$ and $\bl{\m}{n}$, respectively.
If a rate of $(R, R_{\b} : \b \in \ak)$ is achievable, then there is a sequence of $(n_{\ell}, R, R_{\b} : \b \in \ak)_{\ell \in \mathbb{N}}$ schemes such that for each $\ell \in \mathbb{N}$, when $\epsilon_{\ell} = \frac{1}{\ell}$,
\begin{align}
    \pr{\widehat{\iid{\x}{n_{\ell}}_i} = \iid{\x}{n_{\ell}}_i, \forall i \in [N]} \ge 1 - \epsilon_{\ell} \text{ and }
    I \left( \bl{\m}{n_{\ell}} ;\iid{\x}{n_{\ell}}_{[N] \setminus \a_i} \middle | \bl{\k}{n_{\ell}}_{\ak_i}, \iid{\x}{n_{\ell}}_{\s_i} \right) \le n_{\ell} \epsilon_{\ell}, \forall i \in [N]. \label{eqn:dec_priv_seq}
\end{align}
In the proof, we will use a sub-sequence of the above sequence corresponding to block-lengths $(n_\ell)_{\ell \in \mathbb{S}}$ (where $\mathbb{S}$ is an infinite subset of $\mathbb{N}$) that has the convergence property defined below.
\begin{definition}
A sequence of private index coding schemes of block-lengths $(n_\ell)_{\ell \in \mathbb{S}}$ (where $\mathbb{S}$ is an infinite subset of $\mathbb{N}$) is said to be convergent if the sequence
$
\left(\frac{1}{n_{\ell}} H\left(\bl{\m}{n_{\ell}} \middle | \iid{\x}{n_{\ell}}_{\cS}, \bl{\k}{n_{\ell}}_{\cT}\right)\right)_{\ell \in \mathbb{S}}
$
converges for each $\cT \subseteq \ak$ and $S \subseteq [N]$.
\end{definition}
Note that such a convergent sub-sequence always exists.
This is because, since the rate of all schemes in the above sequence is $(R, R_{\b} : \b \in \ak)$, for every block length $n$ in the sequence and every set $\cT \subseteq \ak$, $\frac{1}{n} H\left(\bl{\m}{n} \middle | \iid{\x}{n}_{[N]}, \bl{\k}{n}_{\cT}\right)$ can be upper bounded as
\begin{align}
\frac{1}{n} H\left(\bl{\m}{n} \middle | \iid{\x}{n}_{[N]}, \bl{\k}{n}_{\cT}\right) \le \frac{1}{n} H\left(\bl{\m}{n}\right) \le R. \label{pm_sub_seq}
\end{align}

Consider the convergent sub-sequence of private index coding schemes corresponding to block-lengths $(n_\ell)_{\ell \in \mathbb{S}}$ described above.
For $\cS \subseteq [N]$ and $\cT \subseteq \ak$, we define $f(\cS, \cT)$ as
\begin{align*}
f(\cS, \cT) \defeq \lim_{\ell \rightarrow \infty} \frac{1}{n_{\ell}}  I\left( \bl{\m}{n_{\ell}} ; \iid{\x}{n_{\ell}}_{[N] \setminus \cS}, \bl{\k}{n_{\ell}}_{\ak \setminus \cT} \mid \iid{\x}{n_{\ell}}_{\cS}, \bl{\k}{n_{\ell}}_{\cT}\right).
\end{align*}
Since the sub-sequence we consider is convergent, the limits in the definition of $f$ is well defined.
Equality~\eqref{pm1} is trivially true.
Inequality~\eqref{pm2} can be shown as follows.
\begin{align*}
f(\emptyset, \emptyset) = \lim_{\ell \rightarrow \infty} \frac{1}{n_{\ell}}  I\left(\bl{\m}{n_{\ell}} ; \iid{\x}{n_{\ell}}_{[N]}, \bl{\k}{n_{\ell}}_{\ak}\right) \le \lim_{\ell \rightarrow \infty} \frac{1}{n_{\ell}} H\left(\bl{\m}{n_{\ell}}\right) = R.
\end{align*}
Inequality~\eqref{pm3} is a direct consequence of the chain rule of mutual information.
By definition of $f$, when $\cS, \cS', \cS''$ and $\cT, \cT', \cT''$ are disjoint,
\begin{align*}
    f(\cS, \cT) - f(\cS \cup \cS'', \cT \cup \cT'')
    = \lim_{\ell \rightarrow \infty} \frac{1}{n_{\ell}} I\left(\bl{\m}{n_{\ell}} ; \iid{\x}{n_{\ell}}_{\cS''}, \bl{\k}{n_{\ell}}_{\cT''} \middle | \iid{\x}{n_{\ell}}_{\cS}, \bl{\k}{n_{\ell}}_{\cT}\right),
\end{align*}
and
\begin{align*}
    f(\cS \cup \cS', \cT \cup \cT') - f(\cS \cup \cS' \cup \cS'', \cT \cup \cT', \cup \cT'')
    = \lim_{\ell \rightarrow \infty} \frac{1}{n_{\ell}} I\left(\bl{\m}{n_{\ell}} ; \iid{\x}{n_{\ell}}_{\cS''}, \bl{\k}{n_{\ell}}_{\cT_{3}} \middle | \iid{\x}{n_{\ell}}_{\cS \cup \cS'}, \bl{\k}{n_{\ell}}_{\cT_{1} \cup \cT'}\right).
\end{align*}
Inequality~\eqref{pm4} now follows from the following lemma when $\cS_1, \cS_2, \cS_3, \cS_4$ are replaced by $\cS'', \emptyset, \cS, \cS'$, respectively and $\cT_1, \cT_2, \cT_3, \cT_4$ are replaced by $\cT'', \emptyset, \cT, \cT'$, respectively.
The proof of the lemma is provided in Appendix~\ref{app_thm_key_access}.
\begin{lemma}\label{lem_ind}
    Let $\cS_1, \cS_2, \cS_3, \cS_4$ be disjoint subsets of  $[N]$ and $\cT_1, \cT_2, \cT_3, \cT_4$ be disjoint subsets of $\{0, 1\}^N \setminus \{\vec{1}, \vec{0}\}$. Then,
    \begin{align*}
    I \left( \bl{\m}{n} ; \xv_{\cS_1 \cup \cS_2}, \bl{\k}{n}_{\cT_1 \cup \cT_2} \middle | \xv_{\cS_3 \cup \cS_4}, \bl{\k}{n}_{\cT_3 \cup \cT_4} \right) \ge I \left( \bl{\m}{n} ; \xv_{\cS_1}, \bl{\k}{n}_{\cT_1} \middle | \xv_{\cS_3}, \bl{\k}{n}_{\cT_3} \right). 
    \end{align*}
\end{lemma}
We now prove inequalities~\eqref{pm5}~and~\eqref{pm6}.
For $i \in [N]$,
\begin{align*}
    f(\s_i, \ak_i) - f(\a_i, \ak_i) = \lim_{\ell \rightarrow \infty} \frac{1}{n_{\ell}} I\left(\bl{\m}{n_{\ell}} ; \iid{\x}{n_{\ell}}_{i} \middle | \iid{\x}{n_{\ell}}_{\s_i}, \bl{\k}{n_{\ell}}_{\ak_i}\right).
\end{align*}
and
\begin{align*}
    f(\a_i, \ak_i) - f([N], \ak_i) = \lim_{\ell \rightarrow \infty} \frac{1}{n_{\ell}} I\left(\bl{\m}{n_{\ell}} ; \iid{\x}{n_{\ell}}_{[N] \setminus \a_i} \middle | \iid{\x}{n_{\ell}}_{\a_i}, \bl{\k}{n_{\ell}}_{\ak_i}\right).
\end{align*}
We state a lemma which immediately implies inequalities~\eqref{pm5}~and~\eqref{pm6}.
The proof of the lemma is provided in Appendix~\ref{app_thm_key_access}.
\begin{lemma}\label{lem_enc_dec}
    For block-length $n$, $\epsilon > 0$ and $i \in [N]$, if
    \begin{align*}
        I \left( \mv; \xv_{[N] \setminus \a_i} \middle |\xv_{\s_i}, \kv_{\ak_i} \right) & \le n\epsilon,\\
        \pr{\psi_i\left(\mv, \xv_{\s_i}, \kv_{\ak_i}\right) \neq \xv_i} & \le \epsilon,
    \end{align*}
    then, when $h(.)$ denotes the Boolean entropy function, \emph{i.e.,} $h(\epsilon) = \epsilon \log_{|\mathbb{F}|}{\frac{1}{\epsilon}} + (1 - \epsilon) \log_{|\mathbb{F}|}{\frac{1}{1 - \epsilon}}$, for sufficiently small values of $\epsilon$,
    \begin{align}
        I \left( \mv; \xv_{[N] \setminus \a_i} \middle |\xv_{\a_i}, \kv_{\ak_i} \right) & \le 3n h(\epsilon)\\
        I \left( \mv ; \xv_i \middle | \xv_{\s_i}, \kv_{\ak_i} \right) &\ge H \left( \xv_i \right) - 3n h(\epsilon).
    \end{align}
\end{lemma}
By the conditions in~\eqref{eqn:dec_priv_seq}, for $\ell \in \mathbb{S}$, $\epsilon_{\ell} \rightarrow 0$ as $\ell \rightarrow \infty$ and, consequently, $h(\epsilon_{\ell}) \rightarrow 0$.
Hence, using Lemma~\ref{lem_enc_dec}, inequalities~\eqref{pm5}~and~\eqref{pm6} can be obtained by taking $\ell \rightarrow \infty$ such that $\ell \in \mathbb{S}$.

Finally, to see~\eqref{pm7}, for every $\cT \subseteq \ak$,
\begin{align*}
    f([N], \ak \setminus \cT) & = \lim_{n_{\ell}} \frac{1}{n_{\ell}}  H\left(\bl{\m}{n_{\ell}} \middle | \iid{\x}{\ell}_{[N]}, \bl{\k}{n_{\ell}}_{\ak \setminus \cT}\right)  - \frac{1}{n_{\ell}}  H\left(\bl{\m}{n_{\ell}} \middle | \iid{\x}{n_\ell}_{[N]}, \bl{\k}{n_{\ell}}_{\ak}\right)\\
    & = \lim_{n_{\ell}} \frac{1}{n_{\ell}}  I\left(\bl{\m}{n_{\ell}} ; \bl{\k}{n_{\ell}}_{\cT} \middle | \iid{\x}{n_{\ell}}_{[N]}, \bl{\k}{n_{\ell}}_{\ak \setminus \cT}\right)\\
    & \le \lim_{n_{\ell}} \frac{1}{n_{\ell}} \cdot H\left(\bl{\k}{n_{\ell}}_{\cT}\right) \\
    & = \sum_{\b \in \cT} R_{\b}.
\end{align*}
\end{proof}
\begin{remark}
We may recover the polymatroidal bound for index coding rate of~\cite[Theorem 5.1]{ArbabjolfaeiK18} (for equal sized files) as a special case of the bound given in the above theorem.
Define $f' : 2^{[N]} \rightarrow \mathbb{R}_{\ge 0}$ as $f'(\cS) = f([N] \setminus \cS, \ak)$.
Theorem~\ref{polymatroidal} imposes the following inequalities of $f'$.
\begin{align}
    f'(\emptyset) &= 0\\
    f'([N]) &\le R\label{pm_ic1}\\ 
    f'(\cJ) &\le f'(\cK) \text{ if } \cJ \subseteq \cK\label{pm_ic2}\\
    f'(\cJ \cap \cK) + f'(\cJ \cup \cK) &\le f'(\cJ) + f'(\cK)\label{pm_ic3}\\ 
    f'([N] \setminus \s_i) - f'([N] \setminus \a_i) &\ge 1, \forall i \in [N].\label{pm_ic4}
\end{align}
Here, $f'([N]) = f(\emptyset, \ak) \le f(\emptyset, \emptyset) \le R$.
Inequality~\eqref{pm_ic3} is obtained from rearranging the inequality~\eqref{pm4} after setting $\cS = [N] \setminus (\cJ \cup \cK), \cS' = \cJ \setminus \cK, \cS'' = \cK \setminus \cJ, \cT = \ak$ and $\cT', \cT'' = \emptyset$.
Inequality~\eqref{pm_ic4} is implied by the inequalities~\eqref{pm5}~and~\eqref{pm3}.
It can be verified that this is indeed the constraint on transmission rate implied by the polymatroidal inner bound when all files are of the same size.
\end{remark}

	\section{Sum Key Rate}
	\label{sumkeyrate}

In this section, we discuss our results on the minimum sum key rate. To this end, we first give definitions of some well-known graph theoretic quantities.
\subsection{Graph-theoretic definitions}
Consider a graph $G$ and assign a color to each vertex in its vertex set $V(G)$, such that no two neighboring vertices share the same color. The minimum number of colors required  is called the chromatic number of graph $G$, which we denote by $\chi(G)$.
 \begin{definition}
	\label{Def_bfold}
	Let  $[L]=\{1, \cdots, L\}$ be a set of $L$ colors.
	Let each $v \in V(G)$ be assigned a subset of colors of size $b$ of the set $[L]$ such that any two adjacent nodes get disjoint sets. Such an assignment is called a $b$-fold coloring and the minimum $L$ for which a $b$-fold coloring exists is called the $b$-fold chromatic number of $G$, denoted by $\chi_b(G)$.
\end{definition}

\begin{definition}
	\label{Def_chrom_num}
	The fractional chromatic number $\chi_f(G)$ of a graph $G$ is defined as 
	\begin{align*}
	\chi_f(G) = \lim_{b\to\infty} \frac{\chi_{b}(G)}{b} = \inf_{b} \frac{\chi_{b}(G)}{b}.
	\end{align*}
\end{definition}
It is easy to verify that $\chi_{b}(G)$ is subadditive. Thus, the limit exists.

\subsection{Bounds on the Minimum Sum Key Rate}

Recall the definition of $\mathsf{SumKeyRate}^{*}(G)$ from Definition~\ref{def:sumkeyrate}. In  Theorem~\ref{thm:sum_key}, we give upper and lower bounds on $\mathsf{SumKeyRate}^{*}(G)$.

\begin{theorem}\label{thm:sum_key}
For a private index coding problem represented by graph $G$, $\mathsf{SumKeyRate}^{*}(G)$ always satisfies the following:
\begin{align}
R^*(G) - 1 \le \mathsf{SumKeyRate}^{*}(G)\leq \chi_f(G^c).  \label{Eq_bounds_keyrate}
\end{align}
\end{theorem}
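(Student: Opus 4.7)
The proof splits into independent arguments for the two bounds.

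For the upper bound $\mathsf{SumKeyRate}^{*}(G) \le \chi_f(G^c)$, I would construct a perfect private index code parameterized by a $b$-fold coloring of $G^c$ and let $b$ grow. Fix such a coloring with $L = \chi_b(G^c)$ colors; each color class $\cC_\ell$ is an independent set in $G^c$ and hence a clique in $G$ in which every two users mutually hold the other's message as side information. Split each $\xv_i$ into $b$ equal-sized shares indexed by the colors $i$ receives, introduce an independent key $\kv_\ell$ of rate $1/b$ shared among the users of $\cC_\ell$, and broadcast $m_\ell = \kv_\ell + \sum_{i \in \cC_\ell} \xv_{i,\ell}$ for each color $\ell$. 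A user in $\cC_\ell$ recovers her own share by subtracting $\kv_\ell$ and the color-mates' shares (held as side information), while a user $j \notin \cC_\ell$ lacks $\kv_\ell$, so $m_\ell$ appears as a one-time pad from her perspective. The structural point driving privacy is: if $i \notin \a_j$ then $i$ and $j$ cannot share a color (sharing forces mutual side information), so every share of $\xv_i$ is masked by a key unavailable to $j$; Theorem~\ref{Thm_linear} certifies privacy. The total sum key rate of this scheme is $L/b$, which approaches $\chi_f(G^c)$ as $b \to \infty$.

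For the lower bound $R^*(G) - 1 \le \mathsf{SumKeyRate}^{*}(G)$, I take any $(\Rm, (R_\b)) \in \cR(G)$ with $S = \sum_\b R_\b$. My target is $R^*(G) \le S + 1$, which combined with the trivial $\Rm \ge R^*(G)$ yields $S \ge R^*(G) - 1$. The natural first step is to observe that broadcasting $M$ together with the entire key bundle $\kv_{\ak}$ forms a valid asymptotic index code for $G$ of rate $\Rm + S$: each user still has her side information and now publicly accesses every key she needs, so her private decoder succeeds. By Theorem~\ref{thm:priv_nopriv}, this gives the coarse bound $R^*(G) \le \Rm + S$. However, $\Rm$ may well exceed $R^*(G)$ at the minimum-$S$ point, so this alone is insufficient.

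The main obstacle is to sharpen the conversion to an index code of rate $S + 1$ rather than $\Rm + S$. My plan is to invoke the polymatroidal outer bound of Theorem~\ref{polymatroidal}: the associated function $f$ satisfies $f(\emptyset,\emptyset) \le \Rm$ and $f([N], \emptyset) \le S$, while the identities $f(\a_i, \ak_i) = f([N], \ak_i)$ from~\eqref{pm6} and $f(\s_i, \ak_i) - f(\a_i, \ak_i) \ge 1$ from~\eqref{pm5} precisely capture the one-message-worth of content that $M$ reveals to user $i$ beyond her side information and keys. Combining these through the submodularity-style inequality~\eqref{pm4} to transport the ``$\ge 1$'' gap from the setting $\cT = \ak_i$ down to $\cT = \emptyset$ pays a single extra unit of broadcast rate, whose aggregate effect is the ``$+1$'' slack; this would show that the private scheme implies an asymptotic index code for $G$ of rate $1 + S$, completing the chain. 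The delicate step, and the principal obstacle, is verifying that the quantitative trade-off in~\eqref{pm4} propagates cleanly through the required combination of~\eqref{pm4}--\eqref{pm7} to yield exactly the transported inequality rather than something weaker.
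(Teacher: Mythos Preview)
Your upper-bound argument is correct and coincides with the paper's: both build a perfect linear scheme from a $b$-fold coloring of $G^c$, with one key per color class, and let $b \to \infty$.

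Your lower-bound plan, however, has a genuine gap. You want to deduce $R^*(G) \le S + 1$, which is an \emph{achievability} statement about non-private index coding: to prove it you must exhibit (or establish the existence of) an index code of rate $S+1$. The polymatroidal function $f$ of Theorem~\ref{polymatroidal} is a converse object; manipulating it can only tighten lower bounds on achievable rates, not produce codes. Even if you succeeded in transporting the gap in~\eqref{pm5} through~\eqref{pm4}--\eqref{pm7} to obtain a non-private polymatroidal function $f'$ with $f'([N]) \le S+1$, that would merely say the polymatroidal outer bound does not rule out rate $S+1$; it says nothing about whether such a rate is actually achievable. So the direction of the implication is wrong for what you need.

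The paper's approach is different and more direct. It first treats the special case where every message must be hidden from at least one user (for every $i$ there exists $j$ with $i \notin \a_j$). In that case one expands $nR = H(\mv) = I(\mv; \xv_{[N]}) + I(\mv; \kv_{\ak} \mid \xv_{[N]})$; the second term is at most $nS$, and the crucial observation is that the first term vanishes asymptotically because each $I(\mv; \xv_i \mid \xv_{[N]\setminus\{i\}})$ is forced small by the privacy condition at the user $j$ who must not learn $\xv_i$. This yields $R \le S$ for every achievable tuple, hence $\mathsf{SumKeyRate}^*(G) \ge R^*(G)$ in the special case. The general case is then reduced to the special one: let $T$ be the set of messages held as side information by every other user (so no privacy constraint binds them), and let $G'$ be the induced subgraph on $[N]\setminus T$. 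One checks $\mathsf{SumKeyRate}^*(G') \le \mathsf{SumKeyRate}^*(G)$ (zero out the removed messages and keys) and $R^*(G) \le R^*(G') + 1$ (append the single symbol $\sum_{i \in T} \xv_i$ to any code for $G'$), and chains these with the special-case bound for $G'$. The ``$+1$'' thus arises not from any polymatroidal trade-off but from the fact that all unconstrained messages can be packed into one extra clique-type transmission.
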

The upper bound on $\mathsf{SumKeyRate}^{*}(G)$ follows by showing that if $G^c$ has an $n$-fold coloring using $C$ colors, then $G$ has a private index code with sum key rate and transmission rate $C/n$.
To show the lower bound, we first argue that if the index coding problem satisfies that for every $i \in [N]$, there exists a $j \in [N]$,  such that $i \notin \a_j$, then $\mathsf{SumKeyRate}^{*}(G)$ is lower bounded by the optimal transmission rate $R^*(G)$.
We then generalize this result to any arbitrary side information structure to obtain the lower bound in~\eqref{Eq_bounds_keyrate}.
The full proof of Theorem~\ref{thm:sum_key} can be found in Appendix~\ref{app_thm:sum_key}.

Next we show that the bounds on $\mathsf{SumKeyRate}^{*}(G)$ in Theorem~\ref{thm:sum_key} are loose in general using the following example.
\begin{figure}
	\centering
	\begin{tikzpicture}
	\node (1) at (0, 3) [draw=none]{$1$};
	\node (2) at (-2, 1.5)  [draw=none]{$2$};
	\node (3) at (-1, 0)  [draw=none]{$3$};
	\node (4) at (1, 0)  [draw=none]{$4$};
	\node (5) at (2, 1.5)  [draw=none]{$5$};
	\path[->,thick] (1) edge (2);
	\path[->,thick] (1) edge (3);
	\path[->,thick] (2) edge (3);
	\path[->,thick] (2) edge (4);
	\path[->,thick] (3) edge (4);
	\path[->,thick] (3) edge (5);
	\path[->,thick] (4) edge (5);
	\path[->,thick] (4) edge (1);
	\path[->,thick] (5) edge (1);
	\path[->,thick] (5) edge (2);
	\end{tikzpicture}
	\caption{A private index coding problem for which the minimum sum key rate is strictly larger than the minimum transmission rate and strictly smaller than the fractional chromatic number of its complement. This shows that both the bounds in Theorem~\ref{thm:sum_key} can be simultaneously loose.}
	\label{G_Q}
\end{figure}
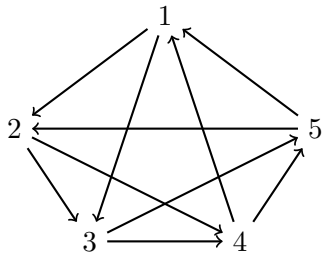 
\begin{example}
	\label{Ex_sumkey_bounds}
	Consider the graph $G$ shown in Fig.~\ref{G_Q}. The complement graph $G^c$ of $G$ is also isomorphic to $G$.
	Since any two nodes in $G^c$ are connected, it follows that $\chi(G^c) =\chi_f(G^c) = 5$. Now consider the following linear private index code.
	\begin{align*}
	\x_1 + \x_2 + \x_3 + \k_{10010} + \k_{11001}, \\
	\x_2 + \x_3 + \x_4 + \k_{11001} + \k_{01110}, \\
	\x_3 + \x_4 + \x_5 + \k_{01110} + \k_{00101}.
	\end{align*}
	It can be verified that this linear code satisfies the conditions in Theorem~\ref{Thm_linear} and is thus a perfect private index code. The sum key rate of this code is 4
	which is strictly less than $\chi_f(G^c) = 5$. 
	This shows that the upper bound in Theorem~\ref{thm:sum_key} is loose in general.
	In Appendix~\ref{app_thm:sum_key}, we show that the minimum sum key rate is strictly larger than 3, while the minimum transmission rate is clearly at most 3 as the scheme constructed shows. This further implies that the lower bound in Theorem~\ref{thm:sum_key} is not tight.
\end{example}

A natural question that arises given these observations is whether there is a trade-off between the transmission rate and the sum key rate.
From the proof of Theorem~\ref{thm:priv_nopriv}, it is clear that the optimal transmission rate may be achieved with a sum key rate of $N$.
Up to $3$ users, the minimum sum key rate and the optimal transmission rate for any private index coding problem can be achieved simultaneously.
Whether this is indeed the case in general remains open. In the next subsection, we discuss a related but different trade-off, the one between the transmission rate and the size of the key access structure.

\subsection{Trade-off between the transmission rate and the size of key access structure}

For a given a key access structure $\ks$, we refer to the cardinality of $\ks$ as the \emph{size of key access structure} $\ks$.
Next, we give an example to show that there indeed exists a trade-off between the transmission rate and the size of the key access structure.
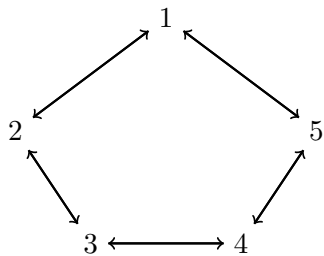
\begin{figure}
	\centering
	\begin{tikzpicture}
	\node (1) at (0, 3) [draw=none]{$1$};
	\node (2) at (-2, 1.5)  [draw=none]{$2$};
	\node (3) at (-1, 0)  [draw=none]{$3$};
	\node (4) at (1, 0)  [draw=none]{$4$};
	\node (5) at (2, 1.5)  [draw=none]{$5$};
	\path[->,thick] (1) edge (2);
	\path[->,thick] (2) edge (1);
	\path[->,thick] (2) edge (3);
	\path[->,thick] (3) edge (2);
	\path[->,thick] (3) edge (4);
	\path[->,thick] (4) edge (3);
	\path[->,thick] (4) edge (5);
	\path[->,thick] (5) edge (4);
	\path[->,thick] (5) edge (1);
	\path[->,thick] (1) edge (5);
	\end{tikzpicture}
	\caption{A private index coding problem for which there is a trade-off between the size of the key access structure and the optimal transmission rate. The transmission rate for every feasible key access structure with 3 keys is at least 3 while the optimal transmission rate is 2.5.}
	\label{G_kas}
\end{figure}

\begin{example}
	\label{Ex_sumkey_size}
	The minimum transmission rate of private index codes for the graph $G$ in Fig.~\ref{G_kas} is at least 3 for any key access structure $\ks$ such that $|\ks| \leq 3$.
	We prove this in Appendix~\ref{app_thm:sum_key}.
	 
	On the other hand, the following scheme with block length 2 achieves a lower transmission rate of $2.5$ for a key access structure $\ks$ such that $|\ks| =5$.
	For $i \in [5]$, let $\x_i^{(1)}$ and $\x_i^{(2)}$ denote the first and second co-ordinates of $\x_i^2$, respectively.
	The server makes five transmissions as described below.
	\begin{align*}
	\x_1^{(1)} + \x_2^{(1)} + K_{11000}, \\
	\x_3^{(1)} + \x_4^{(1)} + K_{00110}, \\
	\x_5^{(1)} + \x_1^{(2)} + K_{10001}, \\
	\x_2^{(2)} + \x_3^{(2)} + K_{01100}, \\
	\x_4^{(2)} + \x_5^{(2)} + K_{00011}.
	\end{align*} 
	Using Theorem~\ref{Thm_linear}, it is easy to verify that this scheme satisfies the decodability and privacy conditions.
	The transmission rate is clearly 2.5 which is strictly less than the transmission rate achievable using a key access structure of size at most 3.
	This demonstrates that there exists a trade-off between the transmission rate and the size of the key access structure. 
\end{example}

	\section{Private and Public Randomness in Private Index Coding}
	\label{determinism}
In this section, we consider \emph{general private index coding schemes} with randomized encoder and decoders with common randomness available at the server and all the users ($\credpic$).
We would like to note that the common randomness can be thought of as a key available to the server and all the users and private randomness at encoder as key available only to the server, which could be treated as $\k_{\vec{1}}$ and $\k_{\vec{0}}$, respectively.
Since we are not interested in the rate of common and private randomness, instead of treating them like keys, we treat the common randomness and private randomness at encoder (denoted by $W$ and $W_{\phi}$, respectively) similar to private randomness of decoders (denoted by $W_{\psi_i}$, for user $i \in [N]$). 
We show that the rate region of such schemes is no larger that the the rate region of private index codes ($\pic$) as defined in Section~\ref{sec_model}.

Formal definitions of general private index coding schemes with randomized encoder and decoders that use common randomness and its different variations follow.
\begin{definition}
A $\credpic$ scheme (Private Index Coding scheme with Common randomness and Randomness at Encoder and Decoder) of rate $\left(R, R_{\b}: \b \in \ak\right)$ and block length $n$ is a private index coding scheme with randomized encoder and decoder that uses common and private randomness described by an encoder
\begin{align}
\label{Eq_encod_gen}
 \phi: \prod_{i \in [N]} \mathbb{F}^n \times \prod_{\b \in \ak} [|\mathbb{F}|^{nR_{\b}}] \times \cW \times \cW_{\phi} \longrightarrow [|\mathbb{F}|^{nR}],
\end{align}
and decoder $\psi_i$ for each user $i \in [N]$ described as,
\begin{align}
\label{Eq_decod_gen}
	\psi_i: [|\mathbb{F}|^{nR}] \times \prod_{j \in \s_i} \mathbb{F}^n \times \prod_{\b \in \ak_i} [|\mathbb{F}|^{nR_{\b}}] \times \cW \times \cW_{\psi_i} \longrightarrow  \mathbb{F}^n.
\end{align}
Here, $\cW, \cW_{\phi}$ and $\cW_{\psi_i} : i \in [N]$ are arbitrarily large finite sets.

The common randomness $W$ available to all users and the server is uniformly distributed in $\cW$, the private randomness $W_{\phi}$ of the server and $W_{\psi_i}$ of user $i$ are uniformly distributed in $\cW_{\phi}$ and $\cW_{\psi_i}$, respectively.
We define the random variables corresponding to the transmitted message and the estimate of $\xv_i$ computed by user $i$, respectively, as
\begin{align*}
	\mv \defeq \phi\left(\xv_{[N]}, \kv_{\ak}, W, W_{\phi}\right) \text{ and }
	\widehat{\xv_i} \defeq \psi_i\left(\mv, \xv_{\s_i}, \kv_{\ak_i}, W, W_{\psi_i}\right).
\end{align*}
For $\epsilon, \delta \ge 0$, the above described encoder $\phi$ and decoders $\psi_i : i \in [N]$ constitute a $(n, \epsilon, \delta)$-$\credpic$ scheme of rate $\left(R, R_{\b}: \b \in \ak\right)$ if the following decoding and privacy conditions are satisfied.
\begin{flalign}
  \text{$\epsilon$-Decoding error:} &&\pr{\widehat{\xv_i} = \xv_i, \forall i \in [N]} & \ge 1 - \epsilon,\label{Eq_Dec_gen}&\\ 
  \text{$\delta$-Privacy error:}  &&I \left( \mv ;\xv_{[N] \setminus \a_i} \middle | \kv_{\ak_i}, \xv_{\s_i}, W \right) & \le n\delta, \text{ for all } i \in [N]. \label{Eq_Priv_gen}
\end{flalign}
Rate $\left(R, R_{\b}: \b \in \ak\right)$ is said to be achievable using $\credpic$ schemes if for every $\epsilon > 0$, there exists a $(n, \epsilon, \epsilon)$-$\credpic$ scheme for a large enough block-length $n$.

Several variations of private index codes of the same rate that use private and common randomness to varying extends are defined below.
These definitions will be crucially used in proving that the rate region of $\credpic$ and $\pic$ coincide.
A summary of the definitions is provided in Table~\ref{table:def-schemes}.
\begin{table}[]
\begin{tabular}{lllll}
                    & \begin{tabular}[c]{@{}l@{}}Private randomness\\ at decoders \end{tabular}& \begin{tabular}[c]{@{}l@{}}Common randomness\\ (encoder and decoders)\end{tabular} & \begin{tabular}[c]{@{}l@{}}Private randomness\\ at Encoder \end{tabular}&\begin{tabular}[c]{@{}l@{}} Zero-error \\ decoding \end{tabular}\\
$\credpic$          &$\checkmark$& $\checkmark$& $\checkmark$& $\times$\\
$\crepic$           &$\times$&$\checkmark$&$\checkmark$&$\times$\\
$\repic$            &$\times$&$\times$&$\checkmark$&$\times$\\
zero-error $\repic$ &$\times$&$\times$&$\checkmark$&$\checkmark$\\
$\pic$              &$\times$&$\times$&$\times$&$\times$                                   
\end{tabular}
\caption{\label{table:def-schemes} A summary of the variations of private index codes that use randomness to varying degrees.}
\end{table}
\begin{itemize}
	\item If $\cW_{\psi_i} = \emptyset$ for all $i \in [N]$, \emph{i.e.,} decoders do not use private randomness, the above scheme is said to be $(n, \epsilon, \delta)$-$\crepic$ (Private Index Coding scheme with Common randomness and Randomness at Encoder).
	\item If, additionally, $\cW = \emptyset$, \emph{i.e.,} the scheme does not use common randomness, the above scheme is said to be $(n, \epsilon, \delta)$-$\repic$(Private Index Coding scheme with Randomness at Encoder).
	\item $(n, \delta)$-zero-error $\repic$ denotes $(n, 0, \delta)$-$\repic$.
	\item If, additionally, $\cW_{\phi} = \emptyset$, \emph{i.e.,} the encoder does not use private randomness, the above scheme is said to be $(n, \epsilon, \delta)$-$\pic$(Private Index Coding scheme with Randomness at Encoder). This coincides with our definition of private index codes given in Section~\ref{sec_model}.
\end{itemize}
Achievability using $\crepic, \repic$ and zero-error $\repic$ schemes are defined similarly as the achievability for $\credpic$.

\begin{figure}
	\centering
	\begin{tikzpicture}
	\node (1) at (0, 0) [draw=none]{$\credpic$};
	\node (01) at (2, 1) [draw=none, text width=2cm, align=center]{Maximum likelihood decoding};
	\node (01a) at (2, -0.5) [draw=none, text width=2cm, align=center]{Lemma~\ref{lem_rand_dec}};
	\node (2) at (4, 0)  [draw=none]{$\crepic$};
	\node (12) at (6, 1) [draw=none, text width=2cm, align=center]{Averaging argument};
	\node (23a) at (6, -0.5) [draw=none, text width=2cm, align=center]{Lemma~\ref{lem_no_cr}};
	\node (3) at (8, 0)  [draw=none]{$\repic$};
	\node (34) at (9.5, 1) [draw=none, text width=2cm, align=center]{Hybrid coding};
	\node (34a) at (9.5, -0.5) [draw=none, text width=2cm, align=center]{Lemma~\ref{lem_0_err}};
	\node (4) at (12, 0)  [draw=none]{zero-error $\repic$};
	\node (56) at (14.5, 1) [draw=none, text width=2cm, align=center]{Channel simulation};
	\node (56a) at (14.5, -0.5) [draw=none, text width=2cm, align=center]{Lemma~\ref{lem_det}};
	\node (5) at (16, 0)  [draw=none]{$\pic$};
	\path[<->,thick] (1) edge (2);
	\path[<->,thick] (2) edge (3);
	\path[<->,thick] (3) edge (4);
	\path[<->,thick] (4) edge (5);
	\end{tikzpicture}
	\caption{Theorem~\ref{thm:private_rnd} is proved in four stages by showing that the rate regions of different variants of private index coding schemes coincide.
	}
	\label{fig:priv-rand-proof}
\end{figure}
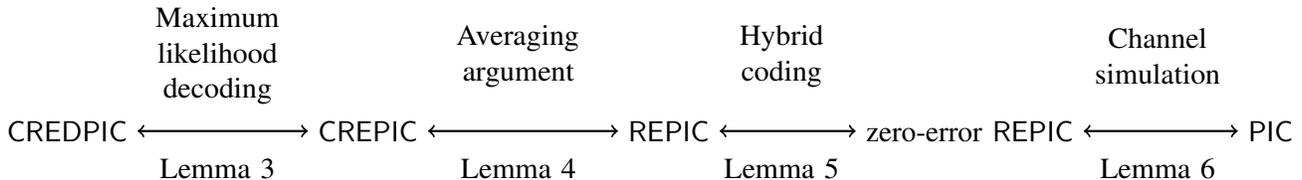
\end{definition}

\begin{theorem}\label{thm:private_rnd}
	For any side information structure $G$, the rate region of $\pic$ is identical to the rate region of $\credpic$.
\end{theorem}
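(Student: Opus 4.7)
The plan is to prove the nontrivial inclusion---that the rate region of $\credpic$ is contained in that of $\pic$---by following the four-stage chain shown in Figure~\ref{fig:priv-rand-proof}, eliminating one source of auxiliary randomness at each stage; the reverse inclusion is immediate since $\pic$ is the special case of $\credpic$ with all auxiliary randomness alphabets set to a single point. For $\credpic \to \crepic$, I would replace each randomized decoder $\psi_i$ by a deterministic maximum-likelihood decoder acting on $(\mv,\xv_{\s_i},\kv_{\ak_i},W)$: a randomized decoder is a convex combination of deterministic ones, so the ML rule cannot worsen the probability of correct decoding, and the privacy constraint~\eqref{Eq_Priv_gen}, which depends only on $(\mv,\xv_{[N]},\kv_{\ak},W)$, is unaffected. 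For $\crepic \to \repic$, a standard averaging argument fixes a realization $W=w^*$ for which both the total decoding error and the sum of the $N$ conditional privacy leakages stay within a constant factor of their original expectations; conditioning on this realization turns the conditional privacy leakage into the unconditional privacy leakage required by the $\repic$ definition, giving a scheme of the same rate up to vanishing losses.

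For $\repic \to \text{zero-error }\repic$, I propose a hybrid construction: run $L$ independent instances of the given $(n,\epsilon,\delta)$-scheme on $L$ independent message batches, then append a short corrective transmission resending the messages of the failing batches (there are $O(\epsilon L)$ of them in expectation) encrypted with fresh one-time-pad keys drawn from the feasible key access structure guaranteed by Theorem~\ref{thm_key_access}. As $\epsilon\to 0$ the corrective overhead vanishes, zero-error decoding is obtained by the retransmission, and privacy of both the original and the corrective blocks is preserved via the fresh keys. Finally, for $\text{zero-error }\repic \to \pic$, I would use a channel-simulation argument to absorb the encoder's private randomness $W_{\phi}$ into the existing keys: the conditional distribution of $\mv$ given $(\xv,\kv)$ produced by the randomized encoder is reproduced deterministically by a vanishing-rate portion of the keys, leaving the joint distribution of $(\mv,\xv,\kv)$---and hence the privacy leakages---asymptotically unchanged.

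I expect the main obstacle to be this last stage. Naively fixing $W_{\phi} = w_{\phi}^*$ preserves zero-error decoding but need not preserve privacy, because $I(\mv;\xv_{[N]\setminus\a_i}\mid \kv_{\ak_i},\xv_{\s_i})$ averages over $W_{\phi}$ and its per-realization counterpart may be much larger, so the simulation has to be carried out more carefully; identifying precisely which keys, or which block-length asymptotics, allow the simulating randomness to be extracted without inflating any key rate is the crux of the argument. The hybrid stage is also delicate in that the indicator of which blocks failed (implicit in the corrective transmission) must itself be shown not to leak information about $\xv_{[N]\setminus\a_i}$ at user $i$ beyond what the fresh keys already mask.
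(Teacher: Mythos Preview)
Your first two stages match the paper's Lemmas~\ref{lem_rand_dec} and~\ref{lem_no_cr} exactly. The real gaps are in stages 3 and 4.

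For stage 3, your retransmission-with-fresh-keys construction does not reach zero decoding error at the stated rates. To guarantee zero error you must budget both transmission and fresh key symbols for the worst case (all $L$ blocks failing), which costs $\Theta(N)$ extra per message symbol and does not vanish; if instead you budget only for the typical $O(\epsilon L)$ failures, the residual decoding error is $e^{-\Theta(\epsilon L)}$, not zero. The paper's Lemma~\ref{lem_0_err} sidesteps this by making the correction \emph{non-private}: on the failure event the encoder sends a zero-error (keyless) index code of the same rate, prefixed by a single indicator bit $\Theta$. The resulting privacy leak is at most $H(\Theta)+\Pr[\Theta=1]\cdot H(\xv_{[N]})\le 1+\epsilon nN$, which is $o(n)$. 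Allowing a controlled privacy leak on the rare failure event---rather than trying to keep the correction private with fresh keys---is the idea you are missing.

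For stage 4, the channel-simulation instinct is right, but ``absorbing $W_{\phi}$ into the existing keys'' cannot work: the simulating randomness must be available to every decoder, whereas each key $\kv_{\b}$ with $\b\neq\vec{1}$ is absent at some user. The paper's Lemma~\ref{lem_det} (via Claim~\ref{clm:in_lem_det}) simulates the channel $(\xv_{[N]},\kv_{\ak})\to\mv$ over $m$ i.i.d.\ copies using fresh \emph{common randomness} $W$, producing a deterministic encoder of rate $\tfrac{1}{n}I(\mv;\xv_{[N]},\kv_{\ak})+\delta\le R+\delta$; it then eliminates $W$ by reapplying the averaging argument of Lemma~\ref{lem_no_cr}. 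So the ``crux'' you identified is resolved not by choosing among the keys but by temporarily reintroducing common randomness and removing it with the stage-2 tool. This also explains why the zero-error reduction in stage 3 is genuinely needed: the channel simulation runs $m$ copies of the block-$n$ scheme, and a nonzero per-block decoding error would accumulate as $m\to\infty$ before the simulation approximation has a chance to kick in.
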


\begin{proof}
The theorem is proved in four stages, summarized in Figure~\ref{fig:priv-rand-proof}.
Lemmas~\ref{lem_rand_dec}~and~\ref{lem_no_cr} show that the use of randomization at decoder and/or using common randomness do not enhance the rate region.
We will use this observation to show in Lemma~\ref{lem_0_err} that the rate region of zero-error $\repic$ coincides with that of $\credpic$.
Finally, Lemma~\ref{lem_det} shows that the rate region of zero-error $\repic$ is identical to that of $\pic$.
Clearly, these lemmas imply the statement of the theorem.
\end{proof}

The following two lemmas show that the rate region of $\credpic, \crepic$ and $\repic$ are identical for a given side information structure.

\begin{lemma}\label{lem_rand_dec}
	Given an $(n, \epsilon, \delta)$-$\credpic$ scheme for a side information structure $G$, one can construct an $(n, \epsilon, \delta)$-$\crepic$ scheme of the same rate for $G$.
\end{lemma}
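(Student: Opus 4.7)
The plan is to derandomize the decoders via a standard averaging argument, leaving the encoder (and hence the transmitted message $\mv$) untouched so that the privacy guarantee transfers for free.

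The key observation is that the privacy condition in~\eqref{Eq_Priv_gen}, namely $I(\mv;\xv_{[N]\setminus \a_i}\mid \kv_{\ak_i},\xv_{\s_i},W)\le n\delta$, involves only $\mv$, the side information $\xv_{\s_i}$, the keys $\kv_{\ak_i}$ available to user $i$, and the common randomness $W$; it never references the decoder outputs or the decoder private randomness $W_{\psi_i}$. Consequently, any modification of the decoders that leaves the joint distribution of $(\mv,\xv_{[N]},\kv_{\ak},W)$ unchanged automatically preserves the privacy bound at every user.

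Starting from an $(n,\epsilon,\delta)$-$\credpic$ scheme, I would express the joint probability of correct decoding as
\[
\Pr[\widehat{\xv_i}=\xv_i,\forall i] = \mathbb{E}\bigl[\Pr[\widehat{\xv_i}=\xv_i,\forall i\mid W_{\psi_1},\ldots,W_{\psi_N}]\bigr] \ge 1-\epsilon,
\]
and invoke a standard averaging argument to extract specific realizations $(w_{\psi_1}^\ast,\ldots,w_{\psi_N}^\ast)$ for which the conditional success probability remains at least $1-\epsilon$. The desired $\crepic$ scheme then retains the original encoder $\phi$, the keys, the server private randomness $W_\phi$, and the common randomness $W$, and replaces each decoder by the restriction $\psi'_i(\mv,\xv_{\s_i},\kv_{\ak_i},W) := \psi_i(\mv,\xv_{\s_i},\kv_{\ak_i},W,w_{\psi_i}^\ast)$. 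This new scheme has identical rate $(R,R_{\b}:\b\in\ak)$, decoding error at most $\epsilon$ by the choice of $(w_{\psi_i}^\ast)$, and privacy leakage at most $n\delta$ by the preservation observation. There is essentially no real obstacle here: this is a routine derandomization whose only subtle point is recognizing that the privacy functional is a property of the encoder output alone, so it is insensitive to any change confined to the decoder side.
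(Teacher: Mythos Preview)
Your proposal is correct and rests on the same key observation as the paper: the privacy condition~\eqref{Eq_Priv_gen} involves only $(\mv,\xv_{[N]},\kv_{\ak},W)$ and is therefore insensitive to any change on the decoder side. The only difference is cosmetic --- the paper replaces each randomized decoder by the maximum likelihood decoder (which can only improve the success probability), whereas you fix the decoder randomness to a good realization via averaging; both derandomizations are standard and either one suffices here.
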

\begin{proof}
	Note that the privacy condition at a user does not depend on the decoding function.
	Hence, the private randomness at the decoders may be removed by using a maximum likelihood decoder so that the decoding error remains at most $\epsilon$.
	It is worth noting that this argument cannot be used to transform a randomized encoder into a deterministic one, since fixing the randomness at the encoder will directly affect the privacy condition.
\end{proof}

\begin{lemma}\label{lem_no_cr}
	Given an $(n, \epsilon, \delta)$-$\crepic$ scheme for a side information structure $G$, one can construct an $(n, 2\epsilon, 2N\delta)$-$\repic$ scheme of the same rate for $G$.
\end{lemma}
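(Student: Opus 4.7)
The natural approach is a derandomization of the common randomness via an averaging argument: find a specific value $w^*$ of $W$ such that conditioning on $W=w^*$ preserves both the decoding and privacy guarantees up to small constant-factor losses, then hard-code $W=w^*$ into the encoder and all decoders. Since $W$ is not counted toward the rate, this does not change the transmission rate $R$ or any key rate $R_\b$, and because $W_\phi$ is retained at the encoder, the resulting scheme is a $\repic$ rather than a $\pic$.

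Concretely, I would define, for each $w \in \cW$, the conditional decoding error $\epsilon(w) \defeq \Pr[\widehat{\xv_i} \neq \xv_i \text{ for some } i \mid W=w]$ and the conditional per-user privacy leakage $\delta_i(w) \defeq \frac{1}{n} I\!\left(\mv;\xv_{[N]\setminus\a_i} \mid \kv_{\ak_i}, \xv_{\s_i}, W=w\right)$. The CREPIC hypotheses \eqref{Eq_Dec_gen}--\eqref{Eq_Priv_gen} immediately give $\mathbb{E}_W[\epsilon(W)] \le \epsilon$ and $\mathbb{E}_W[\delta_i(W)] \le \delta$ for every $i$; note that the privacy condition is already written conditional on $W$, so passing from the conditional mutual information to an expectation over $w$ is automatic.

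Next, applying Markov's inequality separately to the non-negative random variables $\epsilon(W)$ and $\sum_{i\in[N]} \delta_i(W)$, I get $\Pr[\epsilon(W) > 2\epsilon] \le 1/2$ and $\Pr\!\left[\sum_i \delta_i(W) > 2N\delta\right] \le 1/2$. Since each $\delta_i(w) \ge 0$, the bound $\sum_i \delta_i(w) \le 2N\delta$ implies $\delta_i(w) \le 2N\delta$ for every $i$ simultaneously, which is exactly the per-user privacy requirement we need. A union bound then yields a value $w^* \in \cW$ for which $\epsilon(w^*) \le 2\epsilon$ and $\delta_i(w^*) \le 2N\delta$ for all $i \in [N]$. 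Fixing $W = w^*$ in $\phi$ and in every $\psi_i$ produces the required $(n, 2\epsilon, 2N\delta)$-$\repic$ scheme.

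The one delicate point is that the two Markov bounds sum to exactly $1$, so ``strict positivity'' of the good event is not automatic. This is the main obstacle, and I would handle it by the standard device of absorbing an arbitrarily small $\gamma > 0$ into the constants: replacing the thresholds $2\epsilon, 2N\delta$ by $(2+\gamma)\epsilon, (2+\gamma)N\delta$ makes both Markov inequalities strict and the good event have positive probability, which is harmless since achievability in our framework requires only that the bounds hold for every positive slack. The argument cannot be used to remove $W_\phi$ because fixing the encoder's private randomness would directly affect the privacy leakage at the users, exactly as noted after Lemma~\ref{lem_rand_dec}.
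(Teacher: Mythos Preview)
Your proposal is correct and follows essentially the same derandomization-by-averaging approach as the paper: apply Markov's inequality to the conditional decoding error and to the conditional privacy leakages, take a union bound, and fix $W=w^*$ in the encoder and decoders. The only minor difference is that the paper applies Markov to each per-user leakage $\delta_i(W)$ separately (yielding $N$ privacy bad-events each of probability at most $1/(2N)$), whereas you apply Markov once to the aggregate $\sum_i \delta_i(W)$ and then use non-negativity to extract the per-user bounds; both routes land on the same union-bound total of $1$, and you are in fact more explicit than the paper about resolving that boundary case via an infinitesimal slack.
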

This is proved by showing the existence of a realization of the common randomness for which the decoding error is at most $2\epsilon$ (Condition~\ref{Eq_Dec_gen}) and privacy leakage is $2N\delta$ (Condition~\ref{Eq_Priv_gen}).
The lemma follows from this observation since we may fix the common randomness to be this realization.
Detailed proof is given in Appendix~\ref{app_thm:private_rnd}.
The following corollary follows from the above two lemmas.

\begin{corollary}\label{cor_nocrd}
	For $G$, if a rate $(R, R_{\b}: \b \in \ak)$ is achievable using $\credpic$ schemes, then this rate is also achievable using $\repic$ schemes.
\end{corollary}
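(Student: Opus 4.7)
The plan is to chain Lemma~\ref{lem_rand_dec} and Lemma~\ref{lem_no_cr} and carefully track how the decoding and privacy error parameters transform, so that the asymptotic achievability definition is preserved. Recall that achievability of $(R, R_{\b} : \b \in \ak)$ using $\credpic$ means that for every $\epsilon' > 0$ there exists an $n$ (which may depend on $\epsilon'$) and an $(n, \epsilon', \epsilon')$-$\credpic$ scheme of this rate.

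Given a target error $\epsilon'' > 0$ for $\repic$, I would first choose $\epsilon = \min\{\epsilon''/2,\, \epsilon''/(2N)\}$. By the $\credpic$-achievability hypothesis, there exists a large enough block length $n$ and an $(n, \epsilon, \epsilon)$-$\credpic$ scheme of the desired rate. Applying Lemma~\ref{lem_rand_dec} to this scheme produces an $(n, \epsilon, \epsilon)$-$\crepic$ scheme of the same rate (the lemma preserves both the block length and the error parameters, since it only replaces the decoder by a maximum likelihood rule, which does not affect the privacy leakage and does not worsen the decoding error). Then applying Lemma~\ref{lem_no_cr} to this $\crepic$ scheme produces an $(n, 2\epsilon, 2N\epsilon)$-$\repic$ scheme of the same rate, which by the choice of $\epsilon$ is in particular an $(n, \epsilon'', \epsilon'')$-$\repic$ scheme.

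Since $\epsilon'' > 0$ was arbitrary, this establishes that $(R, R_{\b} : \b \in \ak)$ is achievable using $\repic$ schemes, as required. There is essentially no obstacle here: all the real content lies in Lemmas~\ref{lem_rand_dec} and~\ref{lem_no_cr}, and the only thing to be mindful of is that the asymptotic achievability definition uses a single parameter $\epsilon$ for both the decoding error and the privacy leakage, so the two factors $2$ and $2N$ produced by Lemma~\ref{lem_no_cr} must be absorbed simultaneously, which is handled by the minimum in the choice of $\epsilon$.
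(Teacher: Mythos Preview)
Your proof is correct and follows exactly the approach the paper intends: the paper simply states that the corollary ``follows from the above two lemmas,'' and you have spelled out the chaining of Lemma~\ref{lem_rand_dec} and Lemma~\ref{lem_no_cr} with an explicit choice of $\epsilon$ to absorb the $2$ and $2N$ blow-up factors. There is nothing to add.
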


\begin{lemma}\label{lem_0_err}
	If a rate of $(R, R_{\b}: \b \in \ak)$ is achievable using $\credpic$ schemes, then for every $\delta > 0$, a rate of $(R + \delta, R_{\b}: \b \in \ak)$ is achievable using zero-error $\repic$ schemes.
\end{lemma}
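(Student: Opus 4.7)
My plan is to reduce to the $\repic$ setting via Corollary~\ref{cor_nocrd} and then upgrade to zero decoding error using a hybrid coding construction that combines the given scheme with the ``safe'' perfect private index code guaranteed by Theorem~\ref{thm_key_access}. Starting from a rate $(R, R_\b)$ that is achievable with $\credpic$, Corollary~\ref{cor_nocrd} lets me fix $\epsilon > 0$ (to be chosen later in terms of $\delta$) and take an $(n_0, \epsilon, \epsilon)$-$\repic$ scheme at rate $(R, R_\b)$ with $n_0$ sufficiently large. Since $R_\b > 0$ for every $\b$ in the key access structure $\ks = \{\b : R_\b > 0\}$, this $\ks$ is feasible, and Theorem~\ref{thm_key_access} supplies a zero-error, perfectly-private ``safe code'' of rate $N$ that uses only the keys in $\ks$. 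This safe code will act as the fallback.

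The hybrid scheme operates on super-blocks of length $n = L n_0$ with $L$ chosen large. In each super-block, the encoder applies the $\repic$ scheme independently to each of the $L$ sub-blocks, occupying the first $L n_0 R$ symbols. Because the server knows all messages, keys, and the encoder's randomness, it can simulate every user's decoder on every sub-block and detect failures; for each failing sub-block, it re-samples $W_\phi$ a bounded number of times and, if no sample succeeds, marks the sub-block as ``bad''. The remaining $L n_0 \delta$ symbols are reserved for safe-code transmissions that correct the bad sub-blocks, prefixed by a short header identifying which sub-blocks are bad. Since the safe code has rate $N$, the reserve accommodates at most $L\delta/N$ bad sub-blocks, which is where the main accounting tension lies.

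The main obstacle is to achieve exactly zero decoding error (not merely vanishing error), since a direct concentration argument would only bound the number of bad sub-blocks per super-block with high probability, leaving some positive overflow probability that would contradict zero-error. To resolve this, I would replace the stochastic selection of $W_\phi$ by an explicit \emph{deterministic} selection rule analogous to the expurgation in Lemma~\ref{lem_no_cr}: call an input $(\xv, \kv)$ ``truly bad'' if no value of $W_\phi$ yields correct decoding across all users. Because a truly bad input contributes full weight to the $\repic$ scheme's error, the probability of truly bad inputs is at most $\epsilon$, and hence their number is at most $\epsilon \cdot |\mathbb{F}|^{n_0(N + \sum_\b R_\b)}$. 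By choosing $\epsilon$ small enough (which is permitted by taking $n_0$ large), these truly bad inputs can be enumerated and encoded into the $L n_0 \delta$ reserve via the safe code as a fixed ``backup dictionary'', indexed in a prefix-free way. The encoder then uses the good branch for all non-bad inputs (randomising $W_\phi$ over the good set to preserve privacy) and the dictionary branch for bad inputs, guaranteeing correct decoding for every realisation.

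Finally, the rate is $R + \delta$ by construction, and the privacy analysis combines the privacy of the $\repic$ scheme, the perfect privacy of the safe code applied to encrypted dictionary entries, and the chain rule of mutual information; the combined leakage is $O(\epsilon)$ which is made less than any prescribed $\delta'$ by the choice of $n_0$. The most delicate step is verifying that enumerating the bad inputs via the safe code does not itself leak information to unauthorised users: this uses the fact that the safe-code one-time-pads from Theorem~\ref{thm_key_access} are available to every user $i$ via $\kv_{\b}^{(i)}$ for $\b \in \ks_i$ but absent from unauthorised users, so the dictionary encoding of $\xv_i$'s correction is independent of $\xv_i$ in any unauthorised user's view.
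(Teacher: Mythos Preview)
Your hybrid idea is on the right track, but the execution has a genuine gap that prevents zero-error decoding. The super-block structure with $L$ independent sub-blocks leaves the number of ``truly bad'' sub-blocks random: each sub-block is truly bad with probability at most $\epsilon$, but across $L$ sub-blocks the count is binomially distributed and can exceed the reserve capacity $L\delta/N$ with positive probability. Your ``backup dictionary'' does not fix this --- enumerating the truly bad $n_0$-inputs bounds their \emph{total count over the alphabet}, not the number that happen to occur in any given super-block realisation, so the reserve can still overflow and the decoding error is not zero. A second, independent problem is key accounting: the safe code of Theorem~\ref{thm_key_access} needs its own fresh key material (one-time pads cannot be reused with the $\repic$ keys without destroying privacy), and you never explain how this extra key is supplied while keeping the key rates at exactly $(R_\b)$.

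The paper avoids both issues by a much simpler fallback: it uses a \emph{non-private} zero-error index code of rate $R$ (which exists by Theorem~\ref{thm:priv_nopriv} and the zero-error/asymptotic equivalence of \cite{langberg}) rather than a private safe code of rate $N$. Because this fallback has the same rate $R$ as the $\repic$ branch, no super-block amortisation is needed: on a single block of length $n$, the encoder runs the $\repic$ encoder, checks all decoders, and if any fail, switches to the index code, prefixing a single flag bit $\theta$. Zero-error is then automatic. The privacy leakage from the non-private branch is at most $\pr{\theta=1}\cdot H(\xv_{[N]})\le \epsilon nN$, and the flag contributes $H(\theta)\le h(\epsilon)$, both of which vanish with $\epsilon$. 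The key point you missed is that the fallback need not itself be private --- rare use of a leaky code leaks little --- and choosing a rate-$R$ fallback eliminates the amortisation problem entirely.
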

Given Corollary~\ref{cor_nocrd}, it is sufficient to show that for any $\delta > 0$, a rate of $(R + \delta, R_{\b}: \b \in \ak)$ is achievable using zero-error $\repic$ schemes if a rate of $(R + \delta, R_{\b}: \b \in \ak)$ is achievable using $\repic$ schemes.
Fix $\delta > 0$; for every $\epsilon > 0$, for a large enough block-length $n$, we show that a $(n, \epsilon, \epsilon)$-$\repic$ scheme of rate $(R, R_{\b}: \b \in \ak)$ can be transformed into a $(n, 3N\epsilon)$-zero-error $\repic$ scheme of rate $(R + \delta, R_{\b}: \b \in \ak)$.
Clearly, such a construction will imply the lemma.

The new zero-error $\repic$ scheme is a hybrid code that uses the given $\repic$ scheme and a \emph{non-private} zero-error index code.
An informal description of the construction follows.
For a given realization of files and keys, the sender checks if the transmitted message computed by the encoder of the $(n, \epsilon, \epsilon)$-$\repic$ scheme can be decoded correctly by all users (this can be done since decoders are deterministic).
In this case, the output of the encoder is transmitted after appending enough zeros to ensure that the transmission rate is $R + \delta'$(value of $\delta'$ will be decided later).
Otherwise, the sender uses a zero-error (non-private) index code of rate $R + \delta'$ instead of the original scheme.
Index code with this rate can be constructed for large enough $n$ by Theorem~\ref{thm:priv_nopriv} and the fact that asymptotic index coding rate and zero-error index coding rate are same~\cite{langberg}.
With a single bit as prefix to the transmitted message, the sender can indicate to the receivers whether to use the original scheme or the index coding scheme for decoding.
It is easy to verify that the decoding error is zero for this scheme.
The privacy leakage of the new scheme is comparable to the given scheme. This is because, loosely speaking, the new scheme leaks more information than the original scheme only when it uses the index coding scheme instead of the original scheme for encoding; but the probability of this event is small since the original scheme has low decoding error. 
By choosing $\delta'$ appropriately we can ensure that after adding the prefix bit, the rate is $(R + \delta, R_{\b}: \b \in \ak)$.
A formal proof which includes the construction of a zero-error index code that is required in the proof is given in Appendix~\ref{app_thm:private_rnd}.

It follows from the previous lemma that to prove Theorem~\ref{thm:private_rnd}, we only need to show that the rate region of zero-error $\repic$ schemes is identical to that of $\pic$ schemes.
We now state the lemma which immediately implies this identification.

\begin{lemma}\label{lem_det}
	If a rate of $(R, R_{\b}: \b \in \ak)$ is achievable using zero-error $\repic$ schemes, then for every $\delta > 0$, the rate $(R + \delta, R_{\b}: \b \in \ak)$ is achievable using $\pic$ schemes.
\end{lemma}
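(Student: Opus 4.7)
Plan for Lemma~\ref{lem_det}. Fix $\delta > 0$ and a zero-error $\repic$ encoder $\phi_0$ of block-length $n_0$ and rate $(R, R_\b)$ with private randomness $W_\phi$ uniform on a finite set $\cW_\phi$. My plan is to derandomize $\phi_0$ by a \emph{channel-simulation} (soft-covering) argument applied at a long block length $n = T n_0$, so that the private randomness at the encoder is replaced by a deterministic function of the input, with a rate overhead that amortizes to at most $\delta$ per symbol for $T$ large.

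The $\pic$ scheme partitions the $n$ input symbols into $T$ sub-blocks of length $n_0$ and applies $\phi_0$ in parallel to each sub-block. In each sub-block $t$, the value $w_\phi^{(t)}$ playing the role of the encoder's private randomness is computed deterministically from a short \emph{seed} $\Sigma^{(t)}$ that is also transmitted. The seed is obtained by a universal-hash extractor applied to a dedicated portion of the input messages and keys that is disjoint from sub-block $t$; its length is a small fraction of $n_0$. A fixed channel-resolvability codebook maps $\Sigma^{(t)}$ to $w_\phi^{(t)} \in \cW_\phi$ in a distributionally faithful way. The transmission is the concatenation of the $T$ sub-block encodings with the $T$ seeds; by amortization over $T$ and by choosing seed lengths proportional to $\delta / 2$ per sub-block, the total rate is at most $R + \delta$.

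Each receiver decodes zero-error sub-block by sub-block: it reads the $T$ seeds from the transmission, uses the public codebook to reconstruct $w_\phi^{(t)}$, and then runs the original zero-error decoder of $\phi_0$. Privacy at user $i$ is verified one sub-block at a time. Because the input to the extractor for $\Sigma^{(t)}$ has large min-entropy conditional on user $i$'s side information (it is drawn from a slot disjoint from what user $i$ knows), the leftover hash lemma gives that $\Sigma^{(t)}$ is close in total variation to uniform from user $i$'s viewpoint; the resolvability estimate then turns this into approximate uniformity of $w_\phi^{(t)}$, so the conditional distribution of the sub-block output $M^{(t)}$ given user $i$'s side information lies within vanishing total variation of the distribution produced by the original randomized $\phi_0$. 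Pinsker's inequality converts this total-variation closeness into a vanishing mutual-information privacy leakage per sub-block.

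The main obstacle is the joint analysis across sub-blocks: the seeds and the $T$ sub-block transmissions may depend on overlapping portions of the input, so a naive construction would let leakages couple and compound. I address this by a careful bookkeeping step: arrange the dedicated extractor slots so that the input to seed $\Sigma^{(t)}$ is disjoint not only from sub-block $t$ but also from every other sub-block's message symbols and every other seed's hash slot. This forces, conditional on any user's side information, the $T$ seeds to be approximately jointly independent and uniform, and hence the joint distribution of $(M^{(1)}, \ldots, M^{(T)}, \Sigma^{(1)}, \ldots, \Sigma^{(T)})$ to factorize and remain close to the product of the original randomized distributions. The rate cost of reserving disjoint slots is a bounded fraction that is absorbed in the $\delta$ slack, and a union bound over $N$ users and $T$ sub-blocks yields a total privacy leakage of $O(n\eta)$ for a vanishing $\eta$, which is within $n\delta$ for suitable parameter choices. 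Combined with the zero decoding error, this delivers the claimed $\pic$ scheme of rate $(R+\delta, R_\b)$.
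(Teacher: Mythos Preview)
Your high-level instinct (replace the encoder's private randomness by something deterministic via a channel-simulation argument) is exactly the right one, but the concrete plan has real gaps. First, the decoders $\psi_i$ of a zero-error $\repic$ scheme do not use $W_\phi$ at all---they take only $(\mv,\xv_{\s_i},\kv_{\ak_i})$---so there is no reason to transmit seeds so that receivers can ``reconstruct $w_\phi^{(t)}$.'' Second, and more seriously, your seeds are deterministic functions of the very messages and keys that must be kept private from some users. The leftover-hash step gives you that $\Sigma^{(t)}$ is close to uniform given user $i$'s side information, but the privacy condition is a mutual-information bound on the \emph{entire} transmission against $\xv_{[N]\setminus\a_i}$; once you feed $\Sigma^{(t)}$ into the encoder and also transmit it, you must control the joint leakage of $(\Sigma^{(1)},\ldots,\Sigma^{(T)},m^{(1)},\ldots,m^{(T)})$ about all protected message coordinates simultaneously for every user. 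Your ``careful bookkeeping'' paragraph asserts the necessary disjointness and factorization but does not actually establish it; in particular, reserving slots disjoint from \emph{every} sub-block's message symbols and from \emph{every} other seed's source, while still having high conditional min-entropy from \emph{every} user's viewpoint, is a nontrivial combinatorial constraint that you have not shown is satisfiable within the $\delta$ rate slack (and if the slots contain message symbols, those symbols must still be delivered, which you do not account for).

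The paper avoids all of this by a two-step argument that you should consider instead. Step one: invoke a channel-simulation result (Cuff) to replace the randomized encoder $\phi$, viewed as a channel from $(\xv_{[N]},\kv_{\ak})$ to $\mv$, by a deterministic encoder that uses unlimited \emph{common} randomness $W$ and transmits at rate $I(\mv;\xv_{[N]},\kv_{\ak})/n + \delta \le R+\delta$; total-variation closeness of the simulated joint law to the true one, combined with entropy-continuity, controls both decoding error (zero in the original, hence small after simulation) and privacy leakage. Step two: remove the common randomness by the averaging argument of Lemma~\ref{lem_no_cr}, fixing $W=w^*$ for a realization on which both error and leakage are within constant factors of their expectations. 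This yields a $\pic$ scheme directly, with no need to manufacture randomness from the inputs.
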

We prove the lemma using the following claim which constructs a $\pic$ scheme from a given zero-error $\repic$ scheme.
\begin{claim}\label{clm:in_lem_det}
	For $\epsilon > 0$, If there is a $(n, \epsilon)$-zero-error-$\repic$ scheme of rate $(R, R_{\b} : \b \in \ak)$ with encoder $\phi$ that uses private randomness $W_{\phi}$ and decoders $\psi_i$ for user $i \in [N]$, then for any $\delta > 0$, and a large enough $m$, there exists a $(mn, 2\epsilon, 2N\epsilon')$-$\pic$ scheme of rate $(R' + \delta, R_{\b} : \b \in \ak)$, where
	\begin{align*}
		\epsilon' = 4\epsilon\left(\frac{1}{2} + R + N + \sum_{\b \in \ak}R_{\b} + \frac{1}{mn} \cdot \log{\frac{1}{\epsilon}}\right) \text{ and } nR' = I\left(\phi\left(\xv_{[N]}, \kv_{\ak}, W_{\phi}\right) ; \xv_{[N]}, \kv_{\ak}\right).
	\end{align*}
\end{claim}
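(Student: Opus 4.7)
The plan is to construct a deterministic $\pic$ encoder on $m$-block inputs that \emph{channel-simulates} $m$ independent copies of the randomized encoder $\phi$, thereby compressing the transmitted message from rate $R$ down to essentially the mutual-information content $R' = I(\mv; \xv_{[N]}, \kv_{\ak})/n$. Specifically, let $P_{\mv \mid \xv_{[N]}, \kv_{\ak}}$ denote the conditional law of $\phi(\xv_{[N]}, \kv_{\ak}, W_\phi)$ and $P_{\mv}$ its marginal. I would generate a random codebook $\{\mathbf{M}(t)\}_{t=1}^{T}$ of size $T = 2^{mn(R' + \delta)}$ with codewords drawn i.i.d.\ from $P_{\mv}^{\otimes m}$. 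The deterministic encoder $\phi'(\xv^m, \kv^m)$ outputs the smallest index $t^*$ for which $(\mathbf{M}(t^*), \xv^m, \kv^m)$ is jointly typical under the target product law $P_{\mv, \xv_{[N]}, \kv_{\ak}}^{\otimes m}$, defaulting to $t^* = 1$ if no such index exists. Each new decoder $\psi_i'$ reconstructs $\mathbf{M}(t^*)$ from the (fixed) codebook and applies the original $\psi_i$ block-wise using its side information.

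Because $\log T/(mn) = R' + \delta > R' = I(\mv; \xv_{[N]}, \kv_{\ak})/n$, the soft-covering (channel resolvability) lemma ensures that the expected (over the random codebook) total variation distance between the induced joint of $(\mathbf{M}(t^*), \xv^m, \kv^m)$ and the target $P^{\otimes m}$ decays exponentially in $m$. For $m$ sufficiently large, I would fix a codebook realization achieving TV distance at most $\epsilon$; this defines the final $\pic$ scheme of rate $(R' + \delta, R_\b : \b \in \ak)$. Since the original $\repic$ is zero-error, every $(\mv^m, \xv^m, \kv^m)$ in the support of $P^{\otimes m}$ decodes correctly block-wise, so the TV bound transfers into a decoding error of at most $2\epsilon$ for the new scheme (an extra factor of two absorbs, e.g., a typicality-failure event).

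For the privacy analysis, start from the per-block bound $I_{P^{\otimes m}}(\mv^m; \xv^m_{[N]\setminus\a_i} \mid \xv^m_{\s_i}, \kv^m_{\ak_i}) \le mn\epsilon$ inherited from the original scheme, and apply an Alicki--Fannes--Winter-style continuity inequality for conditional mutual information on a product alphabet whose log-support sizes are $mnR$ for $\mv^m$, $mnN$ for $\xv^m$, and $mn \sum_\b R_\b$ for $\kv^m$. This translates the TV bound into an additive correction of order $\epsilon \cdot mn(R + N + \sum_\b R_\b) + h(\epsilon)$, which, after normalization by $mn$, yields the stated per-block $\epsilon'$; the factor $N$ in $2N\epsilon'$ and the outer factor $2$ reflect a union bound over the $N$ privacy constraints and the combination of the inherited and simulation-induced leakage. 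The main obstacle I anticipate is matching the precise constants in the form $\epsilon' = 4\epsilon\bigl(\tfrac{1}{2} + R + N + \sum_\b R_\b + \tfrac{1}{mn}\log(1/\epsilon)\bigr)$: this requires identifying the log-support factors cleanly and propagating a single TV bound on the full joint through the chain-rule decomposition of each conditional mutual information, using the sharpest available continuity inequality so as not to incur extra multiplicative losses beyond the factor of $4$.
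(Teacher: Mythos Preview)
Your approach shares the paper's core idea --- compress the randomized encoder via channel simulation --- but differs structurally: the paper first builds a scheme \emph{with common randomness} $W$ (invoking Cuff's simulation theorem, Theorem~\ref{thm_cuff}), establishes $(mn,\epsilon,\epsilon')$ parameters for that intermediate scheme, and only then removes $W$ via Lemma~\ref{lem_no_cr}; that lemma is precisely where the factors $2$ and $2N$ in $(2\epsilon,2N\epsilon')$ come from. You instead fix a single codebook up front by averaging on total variation alone and argue privacy afterwards.

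The real gap is in your privacy step. Your TV guarantee is on the joint of $(\mathbf{M}(t^*),\xv^m,\kv^m)$ versus the $m$-fold product law, so the continuity inequality you invoke controls only $I\bigl(\mathbf{M}(t^*);\xv^m_{[N]\setminus\a_i}\mid\xv^m_{\s_i},\kv^m_{\ak_i}\bigr)$. But the message you actually transmit is the \emph{index} $t^*$, and in general $I(t^*;\,\cdot\,)\ge I(\mathbf{M}(t^*);\,\cdot\,)$. The paper closes this residual leakage with a separate excess-rate argument: it bounds $I(J,W;\xv^m_{[N]\setminus\a_i}\mid\ldots,\widehat{\mv})\le H(J)-I(\widehat{\mv};\xv^m_{[N]},\kv^m_{\ak})$, uses $H(J)\le m\beta$ with $\beta\le nR'+n\min(\epsilon,\delta)$, and applies continuity again to the subtracted term --- which is why the simulation rate must be pinned at $R'+\epsilon$ and then zero-padded to $R'+\delta$, a point your choice $T=2^{mn(R'+\delta)}$ would spoil. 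Your route can in fact be salvaged without that excess-rate step: because you pick the \emph{smallest} jointly typical index, duplicate codewords are never selected, so $t^*\leftrightarrow\mathbf{M}(t^*)$ is a bijection on the support of $t^*$ and the two mutual informations coincide. But this observation is essential and absent from your sketch, and your accounting of the $2N$ factor as a ``union bound over the $N$ privacy constraints'' is not how it actually arises in either argument.
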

The lemma follows directly from the above claim since $\epsilon' \rightarrow 0$ as $\epsilon \rightarrow 0$ and 
	\begin{align*}
		R' = \frac{1}{n} \times I\left(\phi\left(\xv_{[N]}, \kv_{\ak}, W_{\phi}\right) ; \xv_{[N]}, \kv_{\ak}\right) \le \frac{1}{n} \times H\left(\phi\left(\xv_{[N]}, \kv_{\ak}, W_{\phi}\right)\right) = R.
	\end{align*}
The construction in the claim may be informally described as follows.
In the $(n, \epsilon)$-zero-error-$\repic$ scheme, the encoder $\phi$ can be thought of as the channel from $\x_{[N]}, \k_{\ak}$ to the transmitted message $\phi\left(\x_{[N]}, \k_{\ak}\right)$.
The randomness in the channel is exactly the private randomness of the encoder \emph{viz.} $W_{\phi}$.
We use results from channel simulation to \emph{approximately} compress multiple independent uses of this channel using appropriately large common randomness~\cite{cuff}.
The rate of such a compression can be made arbitrarily close to $\frac{1}{n} \times I\left(\phi\left(\xv_{[N]}, \kv_{\ak}, W_{\phi}\right) ; \xv_{[N]}, \kv_{\ak}\right)$.
As long as this approximation is close, the decoding error will be small since $\text{zero-error } \repic$ has zero decoding error.
It can also be shown that the privacy error of the approximation is not much larger than $\epsilon$.
The $\pic$ scheme is obtained by removing the common randomness introduced by the channel simulation scheme.
By Lemma~\ref{lem_no_cr} such a transformation does not increase decoding error or privacy parameter substantially.
The detailed proof of this claim is provided in the Appendix~\ref{app_thm:private_rnd}.

\paragraph*{Role of Private Randomness in Perfect Private index Coding} It is not clear whether private randomness at the encoder helps in enhancing the rate region of \emph{perfect private index codes}.
However, for linear perfect private index coding, this question is answered in the negative, in the following lemma. 
The proof of this lemma is provided in Appendix~\ref{app_thm:private_rnd}.
\begin{lemma}\label{lem:priv_linear}
	The rate region of linear perfect private index coding is not enhanced by the use of private randomness at encoder.
\end{lemma}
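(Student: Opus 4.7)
The plan is to start from an arbitrary valid linear perfect private index code that uses private encoder randomness and quotient out the subspace spanned by the randomness to obtain a linear code without encoder randomness at the same (or smaller) transmission rate and identical key rates. Regarding the private encoder randomness as a phantom key $\kv_{\vec 0}$ held by no user (so $\vec 0 \notin \ak_i$ for every $i$), any such linear encoding has the form
\begin{align*}
\bM^T = \sum_{i \in [N]} \bG_i \bX_i^T + \sum_{\b \in \ak} \bH_{\b} \bK_{\b}^T + \bH_{\vec 0}\,\bK_{\vec 0}^T \in \mathbb{F}^r,
\end{align*}
and the proof of Theorem~\ref{Thm_linear} extends verbatim to show that perfect decodability and privacy are equivalent to, for every $i \in [N]$ and $k \in [n]$,
\begin{align*}
\bG_i^{(k)} \notin \langle \{\bG_j : j \notin \a_i\} \cup \{\bH_{\b} : \b \notin \ak_i\} \cup \{\bH_{\vec 0}\}\rangle,
\end{align*}
and
\begin{align*}
\langle \{\bG_j : j \notin \a_i\}\rangle \subseteq \langle \{\bH_{\b} : \b \notin \ak_i\}\cup \{\bH_{\vec 0}\}\rangle.
\end{align*}

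Next, let $V = \langle \bH_{\vec 0}\rangle \subseteq \mathbb{F}^r$, fix any complement $W$ with $\mathbb{F}^r = V \oplus W$, and let $\pi : \mathbb{F}^r \to W$ denote the projection parallel to $V$. I would have the server transmit $\pi(\bM)$ in place of $\bM$. Since $\pi\bH_{\vec 0} = 0$, the new scheme
\begin{align*}
\pi(\bM)^T = \sum_{i \in [N]} \pi(\bG_i) \bX_i^T + \sum_{\b \in \ak} \pi(\bH_{\b})\bK_{\b}^T
\end{align*}
is linear, uses no encoder randomness, has unchanged key rates $R_{\b}$, and has transmission size $\dim W = r - \dim V$.

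It remains to check that this new scheme satisfies both conditions of Theorem~\ref{Thm_linear}. For decoding, if $\pi(\bG_i^{(k)})$ belonged to $\langle \{\pi(\bG_j) : j \notin \a_i\} \cup \{\pi(\bH_{\b}) : \b \notin \ak_i\}\rangle$, pulling back by $\pi$ would give $\bG_i^{(k)} \in V + \langle \{\bG_j : j \notin \a_i\} \cup \{\bH_{\b} : \b \notin \ak_i\}\rangle$, contradicting the first condition displayed above. For privacy, applying $\pi$ to both sides of the second condition and using $\pi(V) = 0$ yields exactly $\langle \{\pi(\bG_j) : j \notin \a_i\}\rangle \subseteq \langle \{\pi(\bH_{\b}) : \b \notin \ak_i\}\rangle$, which is the required privacy hypothesis.

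Since the new scheme has transmission rate $(r - \dim V)/n \le R$ with identical key rates, every tuple $(R, R_{\b} : \b \in \ak)$ in the rate region of linear perfect private index codes with encoder randomness also lies in the rate region without it (appending $\dim V$ dummy zero coordinates if one wishes to match $R$ exactly). The only conceptual step—and the reason there is no real obstacle—is the identification of encoder randomness with a key held by no user; once this is made, the quotient-by-$V$ construction eliminates it while preserving both conditions of Theorem~\ref{Thm_linear}.
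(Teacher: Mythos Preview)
Your proof is correct and takes a genuinely different route from the paper's. The paper argues from the decoder side: it uses the linear decoding maps $L^i$ guaranteed by zero-error decodability, observes that each $L^i\bM$ is a function only of $\bX_{\a_i}$ and $\bK_{\ak_i}$ (hence independent of the private randomness), and then takes the row-reduced stack $[L^1;\dots;L^N]\Psi$ as the new encoder. Your argument instead works purely on the encoder side: you treat the private randomness as a phantom key $\bK_{\vec 0}$ held by no user, quotient $\mathbb{F}^r$ by the subspace $V=\langle \bH_{\vec 0}\rangle$, and verify directly that the projected matrices $\pi(\bG_i),\pi(\bH_{\b})$ still satisfy the two conditions of Theorem~\ref{Thm_linear}. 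Your approach is arguably cleaner and more structural---it makes transparent exactly which subspace is being discarded and why both decodability and privacy survive the quotient---whereas the paper's approach has the mild advantage of not needing to restate Theorem~\ref{Thm_linear} for the extended key set. Both yield the same conclusion that the transmission rate can only decrease.
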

We leave the usefulness of private randomness in general perfect private index coding open.

	\section{Weak Privacy}
	\label{sec_1b}
Theorem~\ref{thm_key_access} shows that if the goal is to achieve the privacy required by \eqref{Eq_Dec},
then, in all but trivial cases, we need to distribute keys among the users. In this section, we consider a model 
without the extra resource of keys. In the absence of keys, we aim to achieve \emph{weak privacy}. 
An $(n,R)$ scheme consists of an encoding function $\phi$ and $N$ decoding functions $\{\psi_i\}_{i \in [N]}$. The encoding function 
\begin{align}
	\label{Eq_encod}
	\phi: \prod_{i \in [N]} \mathbb{F}^n  \longrightarrow [|\mathbb{F}|^{n\Rm}],
\end{align}
outputs the random variable $\mv = \phi\left(\xv_{[N]}\right)$. For $i \in [N]$, the decoding function
\begin{align}
	\label{Eq_decod}
	\psi_i: [|\mathbb{F}|^{n\Rm}] \times \prod_{j \in \s_i} \mathbb{F}^n  \longrightarrow  \mathbb{F}^n
\end{align}
maps the message received from the transmission and the side information data to an estimate of the file needed at user $i$
\begin{align}
	\widehat{\xv_i} := \psi_i\left(\mv, \xv_{\s_i} \right).
\end{align} 
 Rate $R$ is said to be \emph{achievable} under weak privacy, if for each $\epsilon>0$ there exists an $(n,R)$ scheme for some large enough $n$ such that the following conditions are satisfied: 
\begin{align}
\label{Eq_Dec_1b}
\pr{\psi_i\left(\mv, \xv_{\s_i}\right) = \xv_i, \forall i \in [N]} \ge 1 - \epsilon,
\end{align}
and
\begin{align}
\label{Eq_Priv_1b}
I \left( \mv;\xv_{j} \middle | \xv_{\s_i} \right) \leq n \epsilon \mbox{ for all } i \in [N], j \in [N] \setminus \a_i.
\end{align}

Observe that under weak privacy defined by~\eqref{Eq_Priv_1b}, if user $i$ does not have $X_j$ and $X_k$ as side information, where $i\neq j\neq k$, then user $i$ must not learn anything about $X_j$ or $X_k$ individually, but the user may gain some information about the pair $(X_j,X_k)$.
\subsection{Feasibility of Weak Privacy}
It is not possible to achieve weak privacy for all index coding problems. For example, Fig.~\ref{Fig_feasible} shows a feasible index coding problem
and an infeasible index coding problem under weak privacy. For the 4 user network in the left, transmitting $X_1 \oplus X_2$  and $X_3 \oplus X_4$ is a scheme under weak privacy.
For the 3 user network in the right, decodability at user 3 implies that for all $\epsilon >0$ and for large enough $n$, we have $H\left(\xv_{3}|\mv,\xv_{1},\xv_{2}\right) \leq n \epsilon$. But from the privacy condition at user 2, we have $ I\left(\mv,\xv_{1},\xv_{2};\xv_{3}\right) \leq n \epsilon$  which further implies that $H \left(\xv_{3}|\mv,\xv_{1}, \xv_{2} \right) \geq n (1-\epsilon)$. So, there is no scheme for this network.
\begin{figure}[h]
\centering
\includegraphics[scale=0.44]{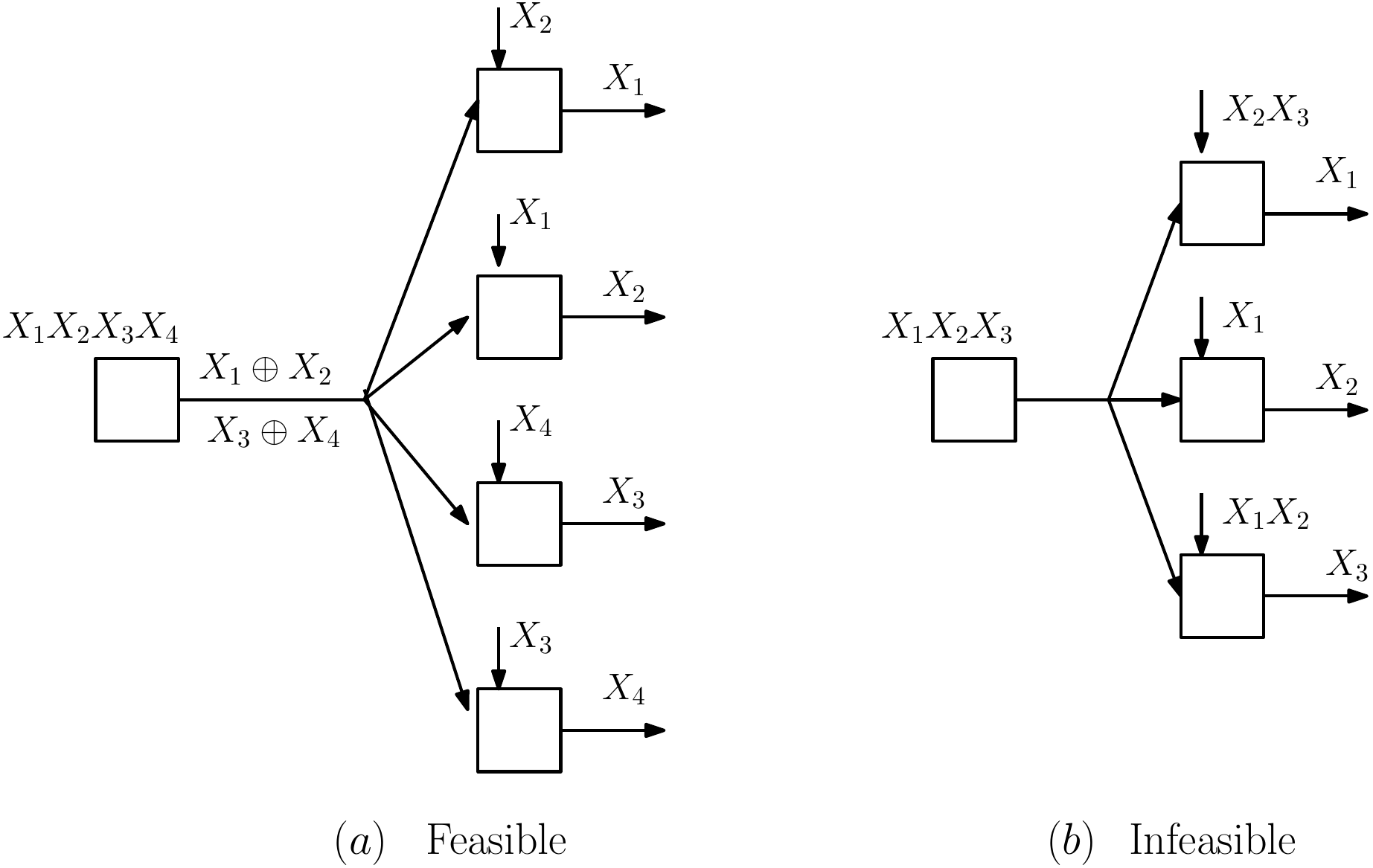}
\caption{Figure $(a)$ shows a feasible index coding instance under weak privacy along with a code that achieves weak privacy. Figure $(b)$ shows an index coding instance where the weak privacy condition at user 2 cannot be met for any code that allows all users to decode their requested file.
}
\label{Fig_feasible}
\end{figure}
Next we study the feasibility of index coding under weak privacy. We first give some necessary conditions that the network
should satisfy in order to be feasible. We start with a simple subset condition that any pair of nodes should satisfy
in order to be feasible. 
\begin{lemma}[Subset Condition]
\label{Lem_Subset}
An index coding problem under weak privacy is not feasible  if the following holds:\\
There exist users $i,j \in[N], i\neq j$ such that  $i \notin \s_j$ and $\s_i \subseteq \a_j$.
\end{lemma}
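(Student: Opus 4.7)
The idea is to show that user $j$ can, using only its own side information $\xv_{\s_j}$ and the broadcast $\mv$, reconstruct $\xv_i$, which would then violate the weak privacy condition at user $j$ (since $i \notin \s_j \cup \{j\} = \a_j$, so the pair $(j,i)$ is subject to \eqref{Eq_Priv_1b} with the roles $i \leftarrow j$, $j \leftarrow i$). The structural ingredient enabling this is the hypothesis $\s_i \subseteq \a_j$: everything user $i$ uses as side information is either in user $j$'s side information or is $\xv_j$ itself, which $j$ can decode.

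I would split the argument into two cases depending on whether $j \in \s_i$.

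\textbf{Case 1: $j \notin \s_i$.} Then $\s_i \subseteq \a_j \setminus \{j\} = \s_j$, so $\xv_{\s_i}$ is already a subvector of $\xv_{\s_j}$. Hence user $j$ can directly apply user $i$'s decoder $\psi_i$ to $(\mv, \xv_{\s_i})$ to obtain an estimate $\widehat{\xv_i}$ that agrees with $\xv_i$ with probability $\ge 1-\epsilon$ by \eqref{Eq_Dec_1b}. Fano's inequality then gives $H(\xv_i \mid \mv, \xv_{\s_j}) \le 1 + n\epsilon\log|\mathbb{F}|$ (in log-$|\mathbb{F}|$ units, a quantity $o(n)$). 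But the weak privacy condition at user $j$ with respect to message $i$ yields
\begin{align*}
H(\xv_i \mid \mv, \xv_{\s_j}) \;\ge\; H(\xv_i \mid \xv_{\s_j}) - n\epsilon \;=\; n - n\epsilon,
\end{align*}
using independence and uniformity of the messages. For small enough $\epsilon$ this is a contradiction.

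\textbf{Case 2: $j \in \s_i$.} Now $\s_i \setminus \{j\} \subseteq \s_j$, but user $j$ lacks $\xv_j$ as side information. The plan is a two-stage decoding by user $j$: first invoke $\psi_j(\mv, \xv_{\s_j})$ to recover $\xv_j$ with error $\le \epsilon$, then concatenate this estimate with the portion of $\xv_{\s_j}$ indexed by $\s_i \setminus \{j\}$ to obtain an estimate of $\xv_{\s_i}$, and finally apply $\psi_i$ to recover $\xv_i$. By the union bound the combined error is at most $2\epsilon$, and Fano again forces $H(\xv_i \mid \mv, \xv_{\s_j}) = o(n)$, contradicting the privacy lower bound $\ge n(1-\epsilon)$ derived exactly as in Case 1.

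\textbf{Expected obstacle.} The calculations are routine once the reduction is in place; the only subtlety is the bookkeeping in Case 2, where one must carefully propagate the decoding errors through the cascade $\psi_j \to \psi_i$ and verify that the asymptotic Fano bound still beats the $n(1-\epsilon)$ lower bound coming from weak privacy. Making the two conditions in \eqref{Eq_Dec_1b} and \eqref{Eq_Priv_1b} incompatible for \emph{some} sufficiently small $\epsilon > 0$ is what delivers infeasibility.
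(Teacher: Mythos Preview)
Your proposal is correct and follows essentially the same idea as the paper's proof: use decodability at user $i$ together with $\s_i \subseteq \a_j$ to conclude that user $j$ can (approximately) recover $\xv_i$, contradicting weak privacy at $j$. The only difference is presentational: the paper avoids your case split by conditioning directly on $(\mv,\xv_j,\xv_{\s_j})$, noting that $\s_i \subseteq \a_j = \{j\}\cup\s_j$ immediately gives $H(\xv_i\mid\mv,\xv_j,\xv_{\s_j}) \le H(\xv_i\mid\mv,\xv_{\s_i}) \le n\epsilon$, which handles both of your cases in one line; your two-stage decoding in Case~2 is the operational counterpart of this conditioning.
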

The proof of Lemma~\ref{Lem_Subset} follows from the fact that if $\s_i \subseteq \a_j$,
then user $j$ can obtain $X_i$ using the broadcast message and the side information since user $i$ obtains $X_i$ and user knows/recovers everything that user $i$ knows. Thus, weak privacy is violated if $i \notin \s_j$. The proof is given in Appendix~\ref{Sec_proof_subset}.
\begin{figure}[h]
	\centering
	\includegraphics[scale=0.5]{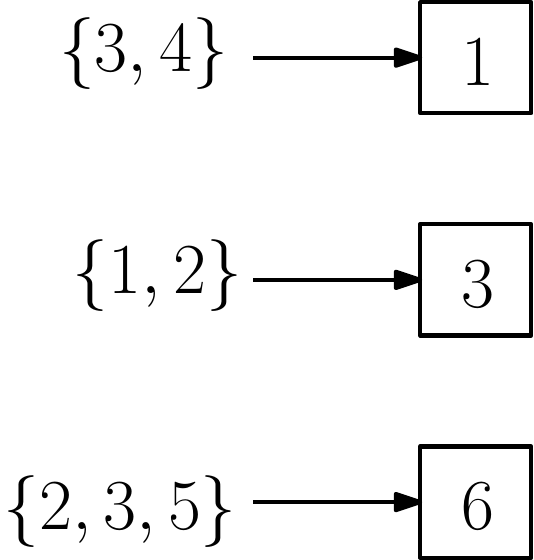}
	\caption{A private index coding instance that is infeasible although  it satisfies the condition in Lemma~\ref{Lem_Subset}.}
	\label{Fig_subset}
\end{figure}
Even if there are no two users $i,j \in[N]$ for which the condition in Lemma~\ref{Lem_Subset} is satisfied, weak privacy may not be feasible. We show this using the example depicted in Fig.~\ref{Fig_subset}. In this example, there are  $6$ users and the side information for users 1,3 and 6 are shown. Further, let $\s_i = [N]\setminus \{i\}$ for $i=2,4,5$. It is easy to verify that for any two users $i,j \in[6]$, the condition in Lemma~\ref{Lem_Subset} does not hold.
However,  in Appendix~\ref{Sec_proof_example}, we show the following.
\begin{claim}
	\label{Clm_subset}
	There is no scheme under weak privacy for the private coding instance in Fig.~\ref{Fig_subset}.
\end{claim}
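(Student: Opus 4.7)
The plan is to combine the decodability conditions at users $2$, $4$, and $5$ (who have full side information, i.e.\ $\s_k=[N]\setminus\{k\}$) with the weak-privacy conditions at users $1$, $3$, and $6$, and derive a contradiction by chaining the resulting entropic inequalities around the side-information structure of Fig.~\ref{Fig_subset}. First, by Fano's inequality applied to~\eqref{Eq_Dec_1b}, for each $k \in \{2,4,5\}$ we obtain
\begin{align*}
H\!\left(\xv_k \,\middle|\, \mv, \xv_{[6]\setminus\{k\}}\right) \le n\,\eta(\epsilon),
\end{align*}
with $\eta(\epsilon)\to 0$ as $\epsilon\to 0$. Second, by~\eqref{Eq_Priv_1b} together with the mutual independence and uniformity of the $\xv_j$'s, for each $i \in \{1,3,6\}$ and each $j \notin \a_i$ we have $H(\xv_j \mid \mv, \xv_{\s_i}) \ge n(1-\epsilon)$, and since $\xv_j$ is independent of $\xv_{\s_i}$ this also gives $H(\xv_j \mid \mv) \ge n(1-\epsilon)$; each $j \in [6]$ is covered by at least one of the three privacy-users.

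Next, for each valid pair $(i,k)$ with $i \in \{1,3,6\}$, $k \in \{2,4,5\}$, and $k \notin \a_i$, the chain rule applied to the decomposition $[6]\setminus\{k\} = \s_i \cup ([6]\setminus(\s_i\cup\{k\}))$ yields
\begin{align*}
H(\xv_k \mid \mv, \xv_{\s_i}) \le H\!\left(\xv_{[6]\setminus(\s_i\cup\{k\})}\,\middle|\,\mv, \xv_{\s_i}\right) + H(\xv_k \mid \mv, \xv_{[6]\setminus\{k\}}).
\end{align*}
Combining the privacy lower bound on the LHS with the Fano upper bound on the last term forces $H(\xv_{[6]\setminus(\s_i\cup\{k\})} \mid \mv, \xv_{\s_i}) \ge n - o(n)$, or, equivalently after applying the chain rule once more, $H(\xv_{[6]\setminus\{k\}}\mid \mv) - H(\xv_{\s_i}\mid \mv) \ge n - o(n)$ for every valid $(i,k)$.

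Finally, I aggregate this family of inequalities in a combination dictated by the specific arrangement of side-information shown in Fig.~\ref{Fig_subset}; together with the individual-privacy bounds on each $H(\xv_j\mid\mv)$ and the decoding-implied bound $H(\xv_{[6]\setminus\{k\}}\mid\mv) \le H(\xv_{[6]}\mid\mv) \le 6n$ for $k \in \{2,4,5\}$, this yields a lower bound on $H(\xv_{[6]}\mid\mv)$ that strictly exceeds $6n$ for all sufficiently small $\epsilon$, giving the contradiction. The main obstacle is that weak privacy controls only \emph{individual}-message leakage: joint statements such as $I(\mv;\xv_A\mid\xv_B) \le |A|\,n\epsilon$ are false in general, so every joint conditional entropy must be assembled one message at a time via chain rule and the independence of the $\xv_j$'s. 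The delicate part is the combinatorial choice of which $(i,k)$-inequalities to aggregate, tailored to the specific structure in Fig.~\ref{Fig_subset} so that the cumulative $o(n)$ slack remains strictly smaller than the gap that accrues in the final aggregated bound.
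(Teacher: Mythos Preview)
Your proposal has a genuine gap: the inequalities you derive cannot be aggregated into the contradiction $H(\xv_{[6]}\mid\mv)>6n$ you claim. Concretely, the polymatroid $h_S=\min(|S|,5)\,n$ satisfies every constraint you list --- the Fano bounds at users $2,4,5$ give $h_{[6]}-h_{[6]\setminus\{k\}}=5n-5n=0$; the weak-privacy bounds at users $1,3,6$ give $h_{\s_i\cup\{j\}}-h_{\s_i}=3n-2n=n$ since $|\s_i|=2$; your derived inequality $h_{[6]\setminus\{k\}}-h_{\s_i}=5n-2n=3n\geq n$ holds with room to spare; and every individual $h_{\{j\}}=n$ --- yet $h_{[6]}=5n$, well below $6n$. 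So no linear (or even Shannon-type) combination of your inequalities can force $h_{[6]}>6n$.

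The missing ingredient is decodability at a user with \emph{limited} side information. You only invoke decoding at users $2,4,5$, whose side information is all of $[6]\setminus\{k\}$; this tells you that $\xv_k$ is determined by $(\mv,\xv_{[6]\setminus\{k\}})$, which is too weak to interact with the privacy constraints. The paper instead starts from decodability at user~$1$ (with $\s_1=\{3,4\}$): Fano gives $I(\mv,\xv_3,\xv_4;\xv_1)\geq n(1-\epsilon)$, and the chain rule splits this as $I(\mv,\xv_3;\xv_1)+I(\xv_4;\xv_1\mid\mv,\xv_3)$. Privacy of $\xv_1$ at user~$6$ (who has $\xv_3\in\s_6$) kills the first term, forcing $I(\xv_4;\mv,\xv_1,\xv_3)\geq n-o(n)$; but user~$3$ has $\xv_1\in\s_3$ and (after decoding) $\xv_3$, so privacy of $\xv_4$ at user~$3$ forces $I(\xv_4;\mv,\xv_1,\xv_3)\leq o(n)$ --- contradiction. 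The argument is short and entirely local to the triangle of users $1,3,6$; your global aggregation over users $2,4,5$ never touches the structure that actually makes the instance infeasible.
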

Next we give an improved necessary condition for feasibility.
\begin{theorem}
	\label{prop_neces}
	If there exists a user $i \in [N]$ such that the following condition holds, then index coding under weak privacy is not feasible:\\
For any $S \subseteq \a_i$ such that $i \in S$, there exists a user $j\in [N], i \neq j$ and a $k \in S$ such that $k \notin \a_j$ and $S\setminus \{k\} \subseteq \a_j$.
\end{theorem}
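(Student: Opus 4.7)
The plan is to argue by contradiction. Suppose a weak-privacy scheme exists for $G$ with decoding and privacy error at most $\epsilon$ on block length $n$. Applying Fano's inequality to user~$i$'s decodability gives
\[
I(\mv;\xv_i\mid \xv_{\s_i})\;\geq\;n-n\epsilon-h(\epsilon),
\]
which is essentially $n$ for small $\epsilon$. I will use the hypothesis to invoke the weak-privacy condition at a carefully chosen sequence of other users and establish the \emph{opposite} bound $I(\mv;\xv_i\mid \xv_{\s_i})\leq 2Nn\epsilon$, contradicting the above when $\epsilon$ is small enough.

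The first step is to iterate the hypothesis to produce a descending chain of subsets of $\a_i$ containing~$i$. Starting from $S_0=\a_i$, while $i\in S_t$ the hypothesis supplies some $(j_t,k_t)$ with $j_t\neq i$, $k_t\in S_t$, $k_t\notin \a_{j_t}$, and $S_t\setminus\{k_t\}\subseteq \a_{j_t}$; set $S_{t+1}=S_t\setminus\{k_t\}$ and halt at the first $T$ with $k_T=i$ (such a step occurs because, applied to $S=\{i\}$, the only legal choice is $k=i$). Write $\mathcal{K}=\{k_0,\ldots,k_{T-1}\}\subseteq \s_i$; the elements $k_0,\ldots,k_T$ are distinct by construction. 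The case $T=0$ (i.e.\ $k_0=i$) recovers Lemma~\ref{Lem_Subset}, so the proof strictly generalizes that result.

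The technical heart is the following \emph{per-step bound}: for each $t\in\{0,\ldots,T\}$,
\[
I\bigl(\mv;\xv_{k_t}\,\bigm|\,\xv_{S_t\setminus\{k_t\}}\bigr)\;\leq\;2n\epsilon.
\]
If $j_t\notin S_t\setminus\{k_t\}$, then $S_t\setminus\{k_t\}\subseteq \s_{j_t}$ and, since $k_t\notin \s_{j_t}$, mutual independence of the messages lets me condition on the extra coordinates $\xv_{\s_{j_t}\setminus(S_t\setminus\{k_t\})}$ without adding any information about $\xv_{k_t}$; this yields $I(\mv;\xv_{k_t}\mid \xv_{S_t\setminus\{k_t\}})\leq I(\mv;\xv_{k_t}\mid \xv_{\s_{j_t}})\leq n\epsilon$ using \eqref{Eq_Priv_1b}. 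The delicate case is $j_t\in S_t\setminus\{k_t\}$ (possible when $j_t\in\s_i\setminus\{k_0,\ldots,k_{t-1}\}$); here $\xv_{j_t}$ appears in the conditioning but is absent from $\xv_{\s_{j_t}}$. My workaround is to apply Fano to $j_t$'s decodability to get $H(\xv_{j_t}\mid \mv,\xv_{\s_{j_t}})\leq n\epsilon+h(\epsilon)$, combine with \eqref{Eq_Priv_1b} to obtain $I(\mv;\xv_{k_t}\mid \xv_{\a_{j_t}})\leq 2n\epsilon$, and then use independence of messages again to descend from conditioning on $\xv_{\a_{j_t}}$ to conditioning on $\xv_{S_t\setminus\{k_t\}}$. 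I expect this case to be the main obstacle, because naive use of privacy at $j_t$ alone does not relate the two quantities.

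Finally, I telescope via the chain rule. Ordering the decomposition so that the $t$-th conditioning is $\xv_{\a_i\setminus\{k_0,\ldots,k_t\}}=\xv_{S_t\setminus\{k_t\}}$, the per-step bounds give
\[
I\bigl(\mv;\xv_{\mathcal{K}\cup\{i\}}\,\bigm|\,\xv_{\s_i\setminus\mathcal{K}}\bigr)
\;=\;\sum_{t=0}^{T}I\bigl(\mv;\xv_{k_t}\,\bigm|\,\xv_{S_t\setminus\{k_t\}}\bigr)\;\leq\;2(T+1)n\epsilon\;\leq\;2Nn\epsilon.
\]
Since $\xv_{\s_i}=\xv_{\mathcal{K}}\sqcup\xv_{\s_i\setminus\mathcal{K}}$, one more use of the chain rule yields $I(\mv;\xv_i\mid \xv_{\s_i})\leq I(\mv;\xv_{\mathcal{K}\cup\{i\}}\mid \xv_{\s_i\setminus\mathcal{K}})\leq 2Nn\epsilon$, which contradicts the Fano lower bound for small $\epsilon$ and proves the infeasibility claim.
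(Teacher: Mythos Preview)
Your proposal is correct, and it shares the same skeleton as the paper's argument: build the descending chain $\a_i=S_0\supsetneq S_1\supsetneq\cdots$ by iterating the hypothesis, invoke weak privacy at the witness user $j_t$ at each step, and use Fano at user~$i$ to close the contradiction. The difference lies in which chain-rule identity is telescoped. The paper tracks $I(\mv,\xv_{S_t\setminus\{i\}};\xv_i)$ and argues it stays close to $n$ as $S_t$ shrinks, by showing at each step that $I(\xv_{k_t};\xv_i\mid \mv,\xv_{S_{t+1}\setminus\{i\}})$ must be small---else user $j_t$, who possesses $\mv$ and all of $\xv_{S_{t+1}}\ni\xv_i$, would learn $\xv_{k_t}$. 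You instead bound $I(\mv;\xv_{k_t}\mid\xv_{S_{t+1}})$ directly from~\eqref{Eq_Priv_1b} via the monotonicity step $S_{t+1}\subseteq\a_{j_t}$, and then sum. Your packaging is cleaner: it replaces the paper's running dichotomy (``either this term is large and privacy is violated, or it is small and we continue'') by an explicit summation with error $O(Nn\epsilon)$, and your Case~2 treatment through $I(\mv;\xv_{k_t}\mid\xv_{\a_{j_t}})\le I(\mv;\xv_{k_t}\mid\xv_{\s_{j_t}})+H(\xv_{j_t}\mid\mv,\xv_{\s_{j_t}})$ is tidy. The paper's version, on the other hand, makes more transparent \emph{why} the hypothesis asks for $S\setminus\{k\}\subseteq\a_{j}$ rather than $\subseteq\s_{j}$: it is precisely so that $j$ has access to $\xv_i$.
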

Theorem~\ref{prop_neces} subsumes Lemma~\ref{Lem_Subset} as explained next. Suppose there exist $i,j \in [N]$ for which the condition in  Lemma~\ref{Lem_Subset} holds. Then for any $S \subseteq \a_i$ such that $i \in S$, it holds that for user $j$ and $k =i$, $k \notin \a_j$ and $S\setminus k \subseteq \a_j$.
The proof of Theorem~\ref{prop_neces} uses the following idea. If the condition in Theorem~\ref{prop_neces} is satisfied, then there exists a $j\in[N]$ such that $ i \notin \a_j$ which follows by taking $S=\{i\}$. Then we cannot transmit $X_i$ without any coding since it violates weak privacy at user $j$. If $X_i$ is conveyed to user $i$ by encoding $X_i$ using some side information of user $i$, then by the given condition, we can show that it violates the weak privacy at some user. The formal proof is given in Appendix~\ref{Sec_proof_prop}.

\begin{figure}[h]
	\centering
	\includegraphics[scale=0.4]{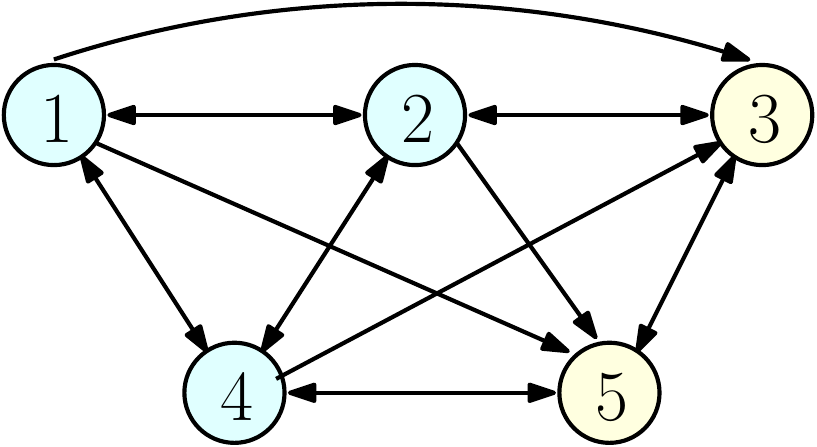}
	\caption{For the graph, $\{\{1,2,4\},\{3,5\}\}$ is a secure clique cover, i.e., $\{1,2,4\}$ and $\{3,5\}$ are cliques which together contains all the vertices such that the vertices $3$ and $5$ have only one outgoing edge to $\{1,2,4\}$, and the vertices $1,2$ and $4$ have two outgoing edges to $\{3,5\}$.}
	\label{Fig_sec_cliq}
\end{figure}

Next we give a sufficient condition to achieve weak privacy. 
Towards that, we first give the definition of a secure clique cover. A \emph{clique} is a subset of vertices where every two vertices within which are adjacent, and 
a {\em clique cover} is a set of cliques which cover all the vertices, i.e., each vertex is in at least one of the cliques.
\begin{definition}[Secure clique cover]
	\label{Def_sec_cliq}
A clique cover $\cC_{G}$ of $G$ is said to be {\em secure} if it satisfies the condition that any $v \in V(G)$ is in exactly one of the cliques in $\cC_G$ and
 for any $c \in \cC_{G}$ with $|c| =k$, there does not exist a  $v \in V(G)\setminus c$ such that $v$ has exactly $k-1$ outgoing edges to the nodes in $c$.
\end{definition}
 Observe that $\{\{1,2,4\},\{3,5\}\}$ is a clique cover for the graph shown in Fig.~\ref{Fig_sec_cliq}. It is also a secure clique cover since nodes $3$ and $5$ do not have two edges to the clique $\{1,2,4\}$, and  the nodes $1,2$ and $4$ have two edges to the clique $\{3,5\}$. Transmission of $X_1\oplus X_2 \oplus X_4$ and $ X_3\oplus X_5$ gives a scheme under weak privacy for this example. We have the following theorem.
 \begin{theorem}
  \label{Thm_Sec_Cliq}
  For the index coding problem represented by $G$,  weak privacy is feasible if $G$ has a secure clique cover.
 \end{theorem}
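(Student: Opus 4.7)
My plan is to construct a scalar linear scheme that broadcasts, for each clique $c \in \cC_G$ of the given secure clique cover, the single field element $M_c := \sum_{k \in c} \x_k \in \mathbb{F}$, giving $|\cC_G|$ transmissions in total. I expect this to achieve zero decoding error and perfect weak privacy simultaneously, which in particular implies the asymptotic feasibility required.

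Decoding is immediate from the clique property. By the secure clique cover definition, each user $i$ lies in a unique clique $c(i)$, and since $c(i)$ is a clique in $G$, every vertex of $c(i) \setminus \{i\}$ is an out-neighbor of $i$, so $c(i) \setminus \{i\} \subseteq \s_i$. User $i$ therefore recovers $\x_i = M_{c(i)} - \sum_{k \in c(i) \setminus \{i\}} \x_k$.

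For weak privacy, I would fix arbitrary $i$ and $j \in [N] \setminus \a_i$ and aim to establish $I(\mv; \x_j \mid \x_{\s_i}) = 0$. The key structural simplification is that the cliques of the cover are pairwise disjoint, so each $\x_k$ appears in exactly one transmission, and the $M_c$'s are functions of disjoint groups of independent messages; consequently only $M_{c(j)}$ can possibly leak $\x_j$. Setting $U := c(j) \setminus \s_i$, the conditional distribution of $M_{c(j)}$ given $\x_{\s_i}$ is, up to a known shift, the sum $\sum_{k \in U} \x_k$ of independent uniform terms. Since $j \in U$ (because $j \notin \s_i$), the standard one-time-pad argument yields $I(M_{c(j)}; \x_j \mid \x_{\s_i}) = 0$ as soon as $|U| \ge 2$, and combining this with the conditional independence of the other $M_c$'s from $\x_j$ gives the desired $I(\mv; \x_j \mid \x_{\s_i}) = 0$.

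The one substantive step, and the only place where the secure clique cover hypothesis enters, is to rule out $|U| = 1$. First I would observe that $i \notin c(j)$, for otherwise $j \in c(i) \setminus \{i\} \subseteq \s_i$, contradicting $j \notin \a_i$; similarly $c(j) \not\subseteq \s_i$, since that would again force $j \in \s_i$. Hence $i$ has strictly fewer than $|c(j)|$ outgoing edges into $c(j)$, and the bad case $|U| = 1$ corresponds exactly to $i$ having $|c(j)| - 1$ such edges while lying outside $c(j)$ — precisely the configuration forbidden by Definition~\ref{Def_sec_cliq} applied to the clique $c(j)$. The main obstacle I anticipate is purely notational: keeping the roles of the user $i$, the message index $j$, and the clique $c(j)$ straight while matching the contrapositive of the secure clique cover definition to the weak-privacy leakage condition. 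Everything else is routine.
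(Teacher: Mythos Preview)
Your proposal is correct and follows essentially the same route as the paper: transmit $\sum_{k\in c}\x_k$ for each clique $c$ of the secure clique cover, decode via the clique property, and show weak privacy by observing that after conditioning out the known messages, $M_{c(j)}$ still contains at least one independent uniform summand besides $\x_j$, exactly because the secure-clique-cover definition forbids $i$ from having $|c(j)|-1$ outgoing edges into $c(j)$. The only cosmetic difference is that the paper conditions on $\x_{\a_i}$ rather than $\x_{\s_i}$ (which is equivalent here since $\x_i$ is recovered from $(\mv,\x_{\s_i})$ and independent of $\x_j$); your choice matches the weak-privacy definition more directly.
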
   
 \begin{figure}[h]
	\centering
	\includegraphics[scale=0.4]{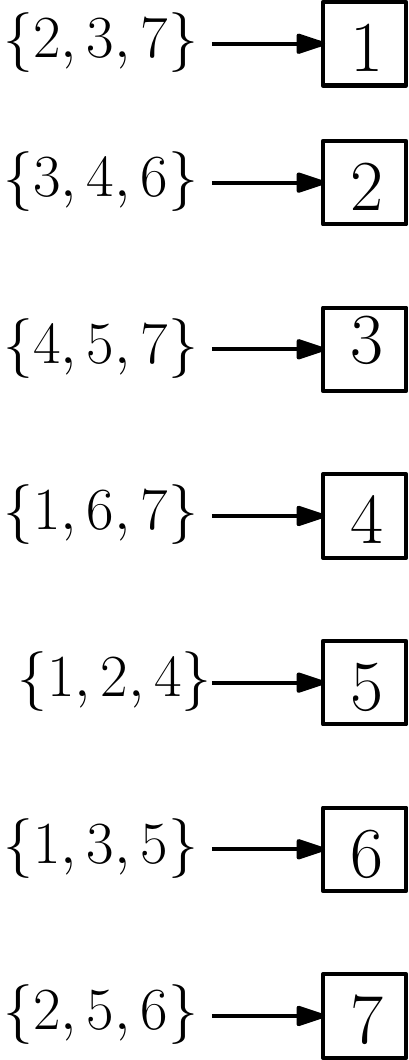}
	\caption{A private index coding instance that is feasible using  a linear scheme but it has no secure clique cover.}
	\label{Fig_linear}
\end{figure}
 The achievability scheme to show Theorem~\ref{Thm_Sec_Cliq}  is linear and the proof is given in Appendix~\ref{Proof_prop_cliq_covr}. In the next subsection, we study linear coding schemes to achieve weak privacy. A characterization of all linear schemes which achieve weak privacy is given in Theorem~\ref{Thm_1b_linear}.  We show using an example (Fig.~\ref{Fig_linear}) that having a secure clique cover  is not a necessary condition to obtain weak privacy. For this example,  we give a linear scheme which satisfies the conditions in Theorem~\ref{Thm_1b_linear}. But it can be verified that the example shown in Fig.~\ref{Fig_linear} has no secure clique cover. 
 Details are given in Appendix~\ref{Sec_linear_schm}.
\subsection{Perfect Linear Index Coding under Weak Privacy}
We consider linear coding schemes under weak privacy which satisfies \eqref{Eq_Dec_1b} and \eqref{Eq_Priv_1b} with $\epsilon =0$. Then, linear encoding for weak privacy is similar to \eqref{Eq_lin_encd} with only the first term involving $\bG_i$s.
In Theorem~\ref{Thm_1b_linear}, we characterize the linear schemes that satisfy the perfect decoding and weak privacy conditions. We use the same notations that we used to describe linear coding in Section~\ref{sec_linear}.
The proof of Theorem~\ref{Thm_1b_linear} is along similar lines as that of Theorem~\ref{Thm_linear}, hence it is omitted.
\begin{theorem}
\label{Thm_1b_linear}
The matrices $(\bG_i)_{i \in [N]}$ is a valid encoding scheme under weak privacy if and only if they satisfy the following conditions for each $i \in [N]$,
\begin{enumerate}
\item $g_i^{k} \notin \langle \{\bG_j\}_{j \notin \a_i} \rangle, \mbox{ for }1 \le k \le n,$
\item For $j \notin \a_i, \; \langle \bG_j \rangle \subseteq \langle \{\bG_k\}_{k \notin \a_i} \rangle$.
\end{enumerate}
\end{theorem}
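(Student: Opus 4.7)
My plan is to mirror the two-step structure used in the proof of Theorem~\ref{Thm_linear}, but with the other unknown messages $\{\bX_k : k \notin \a_i\}$ now playing the role that the unavailable keys $\{\bH_\b : \b \notin \ak_i\}$ played there. Since each $\bX_k$ is independent and uniform on $\mathbb{F}^n$, an unknown message behaves in the linear-algebraic analysis exactly like an independent uniform key, which is what makes the parallel argument go through in the absence of any shared-key resource.

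The first step is to reduce the analysis at user $i$. Given $\bX_{\s_i}$, user $i$ can subtract $\sum_{k \in \s_i} \bG_k \bX_k^T$ from $\bM^T$, so the conditional distribution of $\bM$ given $\bX_{\s_i}$ is captured by the reduced observation
\begin{align*}
\bY_i^T \;=\; \bG_i \bX_i^T \;+\; \sum_{k \notin \a_i} \bG_k \bX_k^T.
\end{align*}
The decoding half is then the standard linear index-coding calculation: $\bX_i$ is recoverable from $\bY_i$ with zero error iff condition~(1) holds, and the argument is a direct copy of the decoding half in the proof of Theorem~\ref{Thm_linear}.

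For the privacy half, I would fix $j \notin \a_i$ and split $\bY_i^T = \bG_j \bX_j^T + \bZ_j^T$, where $\bZ_j^T$ collects all the remaining summands and is independent of $\bX_j$. Since $\bX_j$ is uniform on $\mathbb{F}^n$ and independent of $\bZ_j$, the conditional distribution of $\bY_i^T$ given $\bX_j$ is the translate $\bZ_j^T + \bG_j \bX_j^T$ of the distribution of $\bZ_j^T$; consequently $\bY_i$ is independent of $\bX_j$ iff the distribution of $\bZ_j$ is invariant under translation by every vector in $\langle \bG_j \rangle$, i.e., iff $\langle \bG_j \rangle$ is contained in the support of $\bZ_j^T$. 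Invoking condition~(1), $\langle \bG_i \rangle$ intersects $\langle \{\bG_k : k \notin \a_i\} \rangle$ trivially, and a short argument (the $\bG_i$-component of any decomposition of a column of $\bG_j$ must lie in both $\langle \bG_i \rangle$ and $\langle \{\bG_k : k \notin \a_i\} \rangle$, hence is zero) collapses the containment to $\langle \bG_j \rangle \subseteq \langle \{\bG_k : k \notin \a_i,\, k \neq j\} \rangle$, which is exactly condition~(2). The converse is equally immediate: if condition~(2) fails, one can pick a linear functional that annihilates $\langle \bG_i \rangle$ together with $\langle \{\bG_k : k \notin \a_i,\, k \neq j\} \rangle$ but not $\langle \bG_j \rangle$, and applying it to $\bY_i^T$ produces a non-trivial $\mathbb{F}$-linear function of $\bX_j$ alone, violating~\eqref{Eq_Priv_1b}.

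The only genuinely new feature compared with Theorem~\ref{Thm_linear} is the weaker form of privacy --- independence with respect to each single coordinate $\bX_j$ rather than the entire tuple $\bX_{[N]\setminus\a_i}$. This is why condition~(2) is a per-$j$ span containment rather than a single joint containment as in Theorem~\ref{Thm_linear}. With this observation in place, each step above is a direct adaptation of the corresponding step in the earlier proof, and I do not anticipate any real obstacle.
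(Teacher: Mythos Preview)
Your overall strategy is exactly what the paper intends: the proof is stated to be ``along similar lines as that of Theorem~\ref{Thm_linear}'' and omitted, and your reduction to $\bY_i^T = \bG_i\bX_i^T + \sum_{k\notin\a_i}\bG_k\bX_k^T$ together with the translation-invariance argument for $\bZ_j$ is the right adaptation.

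There is, however, one step that does not go through as written. You assert that condition~(1) forces $\langle\bG_i\rangle\cap\langle\{\bG_k:k\notin\a_i\}\rangle=\{0\}$, and you use this to strip $\bG_i$ out of the support of $\bZ_j$. But condition~(1) as stated only says that each \emph{individual} column $g_i^{k}$ lies outside $V:=\langle\{\bG_k:k\notin\a_i\}\rangle$; it does not prevent a nontrivial linear combination of those columns from landing in $V$. For example, with $g_i^{1}=e_1+e_2$, $g_i^{2}=e_2$ and $V=\langle e_1\rangle$, both columns avoid $V$ yet $g_i^{1}-g_i^{2}=e_1\in V$. In that situation your ``collapse'' argument fails in both directions: for sufficiency, the support of $\bZ_j$ genuinely contains $\langle\bG_i\rangle$, so privacy holds under the weaker containment $\langle\bG_j\rangle\subseteq\langle\bG_i,\{\bG_k:k\notin\a_i,\,k\neq j\}\rangle$ even when condition~(2) fails; and for necessity, a functional annihilating $\langle\bG_i\rangle$ and $\langle\{\bG_k:k\neq j\}\rangle$ need not exist with the property you want.

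The fix is that perfect decodability of $\bX_i$ actually requires the columns of $\bG_i$ to be linearly independent \emph{modulo} $V$ (otherwise two distinct values of $\bX_i$ are indistinguishable), which is strictly stronger than the column-by-column statement in condition~(1) and is exactly the assertion $\langle\bG_i\rangle\cap V=\{0\}$. Once you read condition~(1) in this strengthened form---which is what is really needed for zero-error decoding and is implicit in the paper's treatment of Theorem~\ref{Thm_linear} as well---your collapse step and the rest of the argument are correct.
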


	\section{Privacy Through Multicasts}
	\label{sec_multicast}

We consider a model in which there is no shared key between the server and the users. 
However, the server can multicast to any subset of users.
A multicast session is defined as transmitting one element from field  $\mathbb{F}$ to a subset of users. A  multicast scheme $\cM_K$  with $K$ sessions consists of $K$ subsets $\cS_k \subseteq  [N], k=1,\ldots,K$, an encoder for each session:
\begin{align}
\phi^{(k)}: \prod_{i \in [N]} \mathbb{F}^n \longrightarrow \mathbb{F}
\end{align}
and a decoder for each user $i$
\begin{align}
\psi_i: \prod_{ i \in \cS_k} \mathbb{F} \times  \prod_{j \in \s_i} \mathbb{F}^n \longrightarrow \mathbb{F}^n.
\end{align}
In the $k^{\text{th}}$ session, the server multicasts the message $M_k= \phi^{(k)}\left(\xv_{[N]}\right)$ to the subset $\cS_k$ of users.
Let $M(i)$ denote the set of messages that user $i$ gets from $K$ multicast sessions, i.e.,
\begin{align*}
M(i) & = \{ M_k: \; i \in \cS_k \}.
\end{align*}

User $i$ decodes $\widehat{\xv_i} := \psi_i\left(\mv(i), \xv_{\s_i} \right)$. A multicast scheme is said to be a perfect private multicast index code if
\begin{align*}
\pr{\psi_i\left( M(i), \xv_{\s_i}\right) = \xv_i, \forall i \in [N]} = 1,
\end{align*}
and for each user $i \in [N]$,
\begin{align*}
I \left( M(i) ;\xv_{[N] \setminus \a_i} \middle | \xv_{\s_i} \right) = 0.
\end{align*}

For a multicast scheme, we are interested in determining the minimum number of multicast sessions required.
Let $\kappa_n(G)$ denote the minimum number of multicast sessions required at block length $n$.
We define $\kappa(G) \triangleq \inf_{n} \frac{\kappa_n(G)}{n}$. 
A characterization of $\kappa(G) $ is given in the next theorem  which shows that  
for the index coding problem represented by $G$, 
$\kappa(G) $ is given by the fractional chromatic number (Definition~\ref{Def_chrom_num}) of $G^c$.

\begin{theorem}
\label{Thm_multicast}
For the index coding problem represented by $G$, $\kappa(G) = \chi_f(G^c)$.	
\end{theorem}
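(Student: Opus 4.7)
The plan is to prove the two inequalities $\kappa(G)\le\chi_f(G^c)$ and $\kappa(G)\ge\chi_f(G^c)$ separately. For the upper bound, fix $n$ and take any $n$-fold coloring of $G^c$ with $\chi_n(G^c)$ colors. Each color class $\cS_c$ is an independent set of $G^c$, so for any distinct $i,j\in\cS_c$ the edges $(i,j)$ and $(j,i)$ are both in $G$; equivalently $\cS_c\subseteq\a_i$ for every $i\in\cS_c$. Choosing, for each user $i$, a bijection $\sigma_i$ between its $n$ colors and the $n$ coordinates of $\xv_i$, the server multicasts $\sum_{j\in\cS_c}\x_j^{(\sigma_j(c))}$ to $\cS_c$ for each color $c$. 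A user $i\in\cS_c$ recovers $\x_i^{(\sigma_i(c))}$ by subtracting the terms $\x_j^{(\sigma_j(c))}$, $j\in\cS_c\setminus\{i\}$, which lie in its side information, and aggregating over its $n$ colors it recovers every coordinate of $\xv_i$. Privacy follows immediately: the multicast is a function of $\xv_{\cS_c}\subseteq\xv_{\a_i}$, so conditioned on $\xv_{\s_i}$ it is a function of $\xv_i$ alone. This shows $\kappa_n(G)\le\chi_n(G^c)$, and taking infimum over $n$ gives $\kappa(G)\le\chi_f(G^c)$.

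For the converse, fix any valid private multicast scheme $\{M_k\}_{k=1}^K$ of block length $n$, and let $\cS_k$ be the recipient set of $M_k$. The perfect privacy condition at user $i\in\cS_k$ together with the independence and uniformity of the messages forces $M_k$ to be a deterministic function of $\xv_{\a_i}$, since otherwise $M_k$ would, conditional on $\xv_{\s_i}$, carry non-trivial information about $\xv_j$ for some $j\notin\a_i$. Intersecting over $i\in\cS_k$, $M_k$ is a function of $\xv_{I_k}$ where $I_k:=\bigcap_{i\in\cS_k}\a_i$. Define the effective recipient set $\tilde{\cS}_k:=\cS_k\cap I_k$. For distinct $i,j\in\tilde{\cS}_k$, $i\in I_k\subseteq\a_j$ implies $i\in\s_j$, and symmetrically $j\in\s_i$, so $\tilde{\cS}_k$ is a bidirectional clique in $G$, i.e., an independent set of $G^c$. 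For each user $i$ set $\mathcal{K}_i:=\{k:i\in\tilde{\cS}_k\}$. If $i\in\cS_k\setminus\tilde{\cS}_k$, then $I_k\subseteq\s_i$, so $M_k$ is determined by $\xv_{\s_i}$ and $H(M_k\mid\xv_{\s_i})=0$; otherwise $H(M_k\mid\xv_{\s_i})\le 1$ since $M_k$ is a single field element. Hence $H(M(i)\mid\xv_{\s_i})\le|\mathcal{K}_i|$. On the other hand, zero-error decoding together with $H(\xv_i\mid\xv_{\s_i})=n$ yields $H(M(i)\mid\xv_{\s_i})\ge n$, so $|\mathcal{K}_i|\ge n$. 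Finally, let $w^*:[N]\to\mathbb{R}_{\ge 0}$ attain the LP dual of $\chi_f(G^c)$, i.e., $\sum_i w^*_i=\chi_f(G^c)$ and $\sum_{i\in S}w^*_i\le 1$ for every independent set $S$ of $G^c$. Then
\[
n\,\chi_f(G^c)=n\sum_i w^*_i\le\sum_i w^*_i|\mathcal{K}_i|=\sum_{k=1}^K\sum_{i\in\tilde{\cS}_k}w^*_i\le K,
\]
so $\kappa_n(G)/n\ge\chi_f(G^c)$ for every $n$, which gives $\kappa(G)\ge\chi_f(G^c)$.

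The main obstacle is the converse: one must translate the per-session perfect privacy requirement into the algebraic statement that $M_k$ is a deterministic function of $\xv_{I_k}$, and then identify the effective beneficiary set $\tilde{\cS}_k=\cS_k\cap I_k$ as an independent set of $G^c$ so that the fractional weights $w^*$ are admissible on it. Once these structural facts are in place, the LP-duality (equivalently, fractional clique weighting) step cleanly counts the multicasts and closes the bound.
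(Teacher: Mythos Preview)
Your upper bound is fine and essentially matches the paper's construction.

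For the lower bound, the central structural claim---that for each $i\in\cS_k$ the session message $M_k$ must be a deterministic function of $\xv_{\a_i}$---is true, but your justification for it is not. You argue that otherwise $M_k$ would, conditional on $\xv_{\s_i}$, carry non-trivial information about some $\xv_j$ with $j\notin\a_i$. That implication fails: take $M_k=\x_i+\x_j$ with $j\notin\a_i$. This $M_k$ is not a function of $\xv_{\a_i}$, yet $I(M_k;\xv_j\mid \xv_{\s_i})=0$ because $\x_i$ acts as a one-time pad. The privacy condition on $M_k$ alone, conditioned on $\xv_{\s_i}$, does not force $M_k$ to depend only on $\xv_{\a_i}$. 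The missing ingredient is zero-error decodability: since $\xv_i=\psi_i(M(i),\xv_{\s_i})$, privacy gives $I(M(i);\xv_{[N]\setminus\a_i}\mid \xv_{\s_i})=0 \Rightarrow I(M(i),\xv_i;\xv_{[N]\setminus\a_i}\mid \xv_{\s_i})=0$, hence $I(M_k;\xv_{[N]\setminus\a_i}\mid \xv_{\a_i})=0$. Now, because $M_k$ is a deterministic function of $\xv_{[N]}$, conditional independence given $\xv_{\a_i}$ forces $M_k$ to be constant in $\xv_{[N]\setminus\a_i}$, i.e., a function of $\xv_{\a_i}$. With this patch, the rest of your argument (the effective clique $\tilde\cS_k$, the entropy count $|\mathcal K_i|\ge n$, and the LP-duality inequality) goes through cleanly.

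On approach: the paper proves the stronger per-$n$ statement $\kappa_n(G)=\chi_n(G^c)$ by first pruning each $\cS_k$ to a bidirectional clique (showing that any $j\in\cS_k$ with $j\notin\a_i$ for some $i\in\cS_k$ gets no useful information from $M_k$) and then reading off an $n$-fold coloring directly. Your route instead keeps $\cS_k$ as is, identifies the effective recipient set $\tilde\cS_k$, and uses the LP dual of $\chi_f$ to count. Your argument is somewhat slicker for the asymptotic statement and avoids the explicit coloring construction, while the paper's version yields the exact finite-$n$ equality, not just the limit.
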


To prove Theorem~\ref{Thm_multicast}, we first show that
 from any multicast scheme given for block length $n$, we can obtain an $n$-fold coloring (Definition~\ref{Def_bfold}) of $G^c$. Further, we also show that
a multicast scheme for block length $n$ can be obtained  from any $n$-fold coloring of $G^c$. Formal proof is  given in Appendix~\ref{Proof_multicast}.
	\comment{
	We would also like to note that, given a multicast scheme, there is a natural perfect private index coding scheme in which, corresponding to $i^{\text{th}}$ multicast session that sends $M_i \in \mathbb{F}$ to a subset of users $\cS_i \subseteq [N]$, there is a secret key $\k_{\b_i}$ that is shared by the server and the users in $\cS_i$ and a server transmission $M_i + \k_{\b_i}$.
}

	\if \archive 1
	\begin{appendices}

\section{Proof of Theorem \ref{thm_key_access}} \label{app_thm_key_access}
The following lemma is used several places in  the following. The proof of the lemma follows from the independence of the messages and keys.
\newtheorem*{lem_ind}{Lemma \ref{lem_ind}}
\begin{lem_ind}
    Let $\cS_1, \cS_2, \cS_3, \cS_4$ be disjoint subsets of  $[N]$ and $\cT_1, \cT_2, \cT_3, \cT_4$ be disjoint subsets of $\{0, 1\}^N \setminus \{\vec{1}, \vec{0}\}$. Then,
    \begin{align*}
    I \left( \mv ; \xv_{\cS_1 \cup \cS_2}, \kv_{\cT_1 \cup \cT_2} \middle | \xv_{\cS_3 \cup \cS_4}, \kv_{\cT_3 \cup \cT_4} \right) \ge I \left( \mv ; \xv_{\cS_1}, \kv_{\cT_1} \middle | \xv_{\cS_3}, \kv_{\cT_3} \right). 
    \end{align*}
\end{lem_ind}
\begin{proof}
	    By assumption, the sets $\cS_1, \cS_2, \cS_3, \cS_4$ are disjoint among themselves and so are $\cT_1, \cT_2, \cT_3, \cT_4$.
	Since messages and keys are independent of other messages and keys, it follows that $\left( \xv_{\cS_1}, \kv_{\cT_1} \right), \left(\xv_{\cS_2 \cup \cS_4}, \kv_{\cT_2 \cup \cT_4}\right)$ and $\left( \xv_{\cS_3}, \kv_{\cT_3} \right)$ are independent. So, we have
	\begin{align}
	I \left( \xv_{\cS_2 \cup \cS_4}, \kv_{\cT_2 \cup \cT_4} ; \xv_{\cS_1}, \kv_{\cT_1} \middle | \xv_{\cS_3}, \kv_{\cT_3} \right) &= 0. \label{Eq_keys_messgae_indpnt} 
	\end{align}
	Then, we obtain
    \begin{align}
    I \left( \mv ; \xv_{\cS_1 \cup \cS_2}, \kv_{\cT_1 \cup \cT_2} \middle | \xv_{\cS_3 \cup \cS_4}, \kv_{\cT_3 \cup \cT_4} \right) & \ge I \left( \mv ; \xv_{\cS_1}, \kv_{\cT_1} \middle | \xv_{\cS_2 \cup \cS_3 \cup \cS_4}, \kv_{\cT_2 \cup \cT_3 \cup \cT_4} \right)\notag\\
     & = I \left( \mv ; \xv_{\cS_1}, \kv_{\cT_1} \middle | \xv_{\cS_2 \cup \cS_3 \cup \cS_4}, \kv_{\cT_2 \cup \cT_3 \cup \cT_4} \right) \nonumber \\
    & \; \; \; \; \; \; \; \; \; \; \; \;+ I \left( \xv_{\cS_2 \cup \cS_4}, \kv_{\cT_2 \cup \cT_4} ; \xv_{\cS_1}, \kv_{\cT_1} \middle | \xv_{\cS_3}, \kv_{\cT_3} \right) \label{Eq_ind2}\\
    & = I \left( \mv, \xv_{\cS_2 \cup \cS_4}, \kv_{\cT_2 \cup \cT_4} ; \xv_{\cS_1}, \kv_{\cT_1} \middle | \xv_{\cS_3}, \kv_{\cT_3} \right) \nonumber \\
    & \ge I \left( \mv; \xv_{\cS_1}, \kv_{\cT_1} \middle | \xv_{\cS_3}, \kv_{\cT_3} \right)\label{Eq_ind3}
    \end{align}
    where in~\eqref{Eq_ind2} we used~\eqref{Eq_keys_messgae_indpnt}.
    Thus, we have the lemma.
\end{proof}
We use the following two lemmas in proving Theorem~\ref{thm_key_access}.
\newtheorem*{lem_enc_dec}{Lemma \ref{lem_enc_dec}}
\begin{lem_enc_dec}
    For block-length $n$, $\epsilon > 0$ and $i \in [N]$, if
    \begin{align}
        I \left( \mv; \xv_{[N] \setminus \a_i} \middle |\xv_{\s_i}, \kv_{\ak_i} \right) & \le n\epsilon,\label{Eq_priv_cond1}\\
        \pr{\psi_i\left(\mv, \xv_{\s_i}, \kv_{\ak_i}\right) \neq \xv_i} & \le \epsilon, \label{Eq_decod_cond}
    \end{align}
    then, when $h(.)$ denotes the Boolean entropy function, \emph{i.e.,} $h(\epsilon) = \epsilon \log_{|\mathbb{F}|}{\frac{1}{\epsilon}} + (1 - \epsilon) \log_{|\mathbb{F}|}{\frac{1}{1 - \epsilon}}$, for sufficiently small values of $\epsilon$,
    \begin{align}
        I \left( \mv; \xv_{[N] \setminus \a_i} \middle |\xv_{\a_i}, \kv_{\ak_i} \right) & \le 3n h(\epsilon)  \label{Eq_priv1_0}\\
        I \left( \mv ; \xv_i \middle | \xv_{\s_i}, \kv_{\ak_i} \right) &\ge H \left( \xv_i \right) - 3n h(\epsilon). \label{Eq_priv6_0}
    \end{align}
\end{lem_enc_dec}
\begin{proof}
    Applying Fano's inequality to~\eqref{Eq_decod_cond}, we get
    \begin{align}
        H \left( \xv_{i} \middle |\xv_{\s_i}, \kv_{\ak_i}, \mv \right) \le h(\epsilon) + \epsilon H\left(\xv_i\right) \le h(\epsilon) + n\epsilon. \label{eq_lem_enc_dec0}
    \end{align}
    Then, 
    \begin{align}
        I \left( \xv_{i} ; \xv_{[N] \setminus \s_i} \middle |\xv_{\s_i}, \kv_{\ak_i}, \mv \right) \le H \left( \xv_{i} \middle |\xv_{\s_i}, \kv_{\ak_i}, \mv \right) \le h(\epsilon) + n\epsilon. \label{eq_lem_enc_dec1}
    \end{align}
    Hence,
    \begin{align*}
        I \left( \mv; \xv_{[N] \setminus \a_j} \middle |\xv_{\a_j}, \kv_{\ak_j} \right) \le I \left( \xv_i, \mv; \xv_{[N] \setminus \a_i} \middle |\xv_{\s_i}, \kv_{\ak_i} \right) \stackrel{(a)}{\le} h(\epsilon) + 2n\epsilon.
    \end{align*}
    Here, (a) follows from the above inequalities~\eqref{eq_lem_enc_dec1}~and~\eqref{Eq_priv_cond1}.
    Furthermore,
    \begin{align}
        I \left( \mv ; \xv_i \middle | \xv_{\s_i}, \kv_{\ak_i} \right) = H \left( \xv_i \middle | \xv_{\s_i}, \kv_{\ak_i} \right) - H \left( \xv_{i} \; | \; \xv_{\s_i}, \kv_{\ak_i}, \mv \right) \ge H \left( \xv_i \right) - h(\epsilon) - n\epsilon.\nonumber
    \end{align}
    Here, the inequality follows from the independence of $\xv_i$ and $\left( \xv_{[N] \setminus \{i\}}, \kv_{\ak_i}\right)$ and the inequality~\eqref{eq_lem_enc_dec0}.
    For sufficiently small values of $\epsilon$, the RHS of both the above inequalities can be bounded as $h(\epsilon) + 2n\epsilon \le 3n \cdot h(\epsilon)$, since $h(\epsilon) \ge \epsilon$. 
    This proves the lemma.
\end{proof}
\begin{lemma}\label{lem_ij}
    Suppose $i, j \in [N]$ such that $i \notin \a_j$. For sufficiently small $\epsilon > 0$, if the privacy condition~\eqref{Eq_Dec} and decoding conditions~\eqref{Eq_Priv} are satisfied for block-length $n$, then
    \begin{align*}
    I \left( \mv ; \k^n_{\ak_i \setminus \ak_j} \middle | \xv_{[N]}, \k^n_{\ak_i \cap \ak_j} \right) \ge H \left( \xv_i \right) - 6nh(\epsilon).
    \end{align*}
\end{lemma}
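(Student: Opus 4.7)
The plan is to adapt the ``only if'' argument for Theorem~\ref{thm_key_access} (which was carried out for the perfect case) to the asymptotic regime by replacing each vanishing mutual information in that proof with an $O(nh(\epsilon))$ upper bound obtained from Lemma~\ref{lem_enc_dec}.

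First, I would apply Lemma~\ref{lem_enc_dec} at user $j$ to obtain
\begin{align*}
I(\mv ; \xv_{[N]\setminus\a_j} \mid \xv_{\a_j}, \kv_{\ak_j}) \le 3nh(\epsilon).
\end{align*}
Since $i \notin \a_j$, the chain rule gives $I(\mv ; \xv_i \mid \xv_{[N]\setminus\{i\}}, \kv_{\ak_j}) \le 3nh(\epsilon)$. To replace $\kv_{\ak_j}$ by $\kv_{\ak_i \cap \ak_j}$ in the conditioning, I would mimic the step that yielded \eqref{Eq_feasb2} from \eqref{Eq_mutl_info}: adding $\kv_{\ak_j \setminus \ak_i}$ to the conditioning on the left and using the chain rule,
\begin{align*}
I(\mv ; \xv_i \mid \xv_{[N]\setminus\{i\}}, \kv_{\ak_i \cap \ak_j})
&\le I(\mv, \kv_{\ak_j \setminus \ak_i}; \xv_i \mid \xv_{[N]\setminus\{i\}}, \kv_{\ak_i \cap \ak_j}) \\
&= I(\kv_{\ak_j \setminus \ak_i}; \xv_i \mid \xv_{[N]\setminus\{i\}}, \kv_{\ak_i \cap \ak_j}) + I(\mv ; \xv_i \mid \xv_{[N]\setminus\{i\}}, \kv_{\ak_j}),
\end{align*}
where the first term vanishes by the independence of $\kv_{\ak_j \setminus \ak_i}$ from all messages and remaining keys, leaving at most $3nh(\epsilon)$.

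Second, I would apply Lemma~\ref{lem_enc_dec} at user $i$ to obtain $I(\mv; \xv_i \mid \xv_{\s_i}, \kv_{\ak_i}) \ge H(\xv_i) - 3nh(\epsilon)$. Adding $\xv_{[N]\setminus \a_i}$ to the conditioning can only preserve this bound, because $\xv_{[N]\setminus \a_i}$ is independent of $\xv_i$ given $(\xv_{\s_i}, \kv_{\ak_i})$, so $I(\mv ; \xv_i \mid \xv_{[N]\setminus\{i\}}, \kv_{\ak_i}) \ge H(\xv_i) - 3nh(\epsilon)$.

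Finally, I would combine these two bounds via two applications of the chain rule on $I(\mv ; \xv_i, \kv_{\ak_i \setminus \ak_j} \mid \xv_{[N]\setminus\{i\}}, \kv_{\ak_i \cap \ak_j})$. Expanding $\xv_i$ first gives
\begin{align*}
I(\mv ; \xv_i, \kv_{\ak_i \setminus \ak_j} \mid \xv_{[N]\setminus\{i\}}, \kv_{\ak_i \cap \ak_j}) = I(\mv ; \xv_i \mid \xv_{[N]\setminus\{i\}}, \kv_{\ak_i \cap \ak_j}) + I(\mv ; \kv_{\ak_i \setminus \ak_j} \mid \xv_{[N]}, \kv_{\ak_i \cap \ak_j}),
\end{align*}
while expanding $\kv_{\ak_i \setminus \ak_j}$ first gives a lower bound of at least $I(\mv ; \xv_i \mid \xv_{[N]\setminus\{i\}}, \kv_{\ak_i}) \ge H(\xv_i) - 3nh(\epsilon)$. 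Subtracting the first-step bound from the second rearranges to the target inequality with total slack $6nh(\epsilon)$. I do not foresee a real obstacle here; the proof is essentially the perfect-case argument with careful bookkeeping of $h(\epsilon)$ terms, and the only mildly delicate point is the independence step used to drop $\kv_{\ak_j\setminus \ak_i}$ from the conditioning, which goes through exactly as in the perfect version.
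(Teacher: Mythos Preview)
Your proposal is correct and mirrors the paper's proof essentially step for step: the paper applies Lemma~\ref{lem_enc_dec} at users $j$ and $i$ to obtain the two $3nh(\epsilon)$ bounds, invokes Lemma~\ref{lem_ind} (which is precisely your independence-of-keys argument) to adjust the conditioning sets, and then combines them via the same two chain-rule expansions of $I(\mv;\xv_i,\kv_{\ak_i\setminus\ak_j}\mid\xv_{[N]\setminus\{i\}},\kv_{\ak_i\cap\ak_j})$. The only cosmetic difference is that the paper packages your independence steps as appeals to Lemma~\ref{lem_ind} rather than writing them out inline.
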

\begin{proof}
    By the privacy and decoding conditions, we have
    \begin{align*}
        I \left( \mv; \xv_{[N] \setminus \a_i} \middle |\xv_{\s_i}, \kv_{\ak_i} \right) & \le n\epsilon,\\
        \pr{\psi_i\left(\mv, \xv_{\s_i}, \kv_{\cB_i}\right) \neq \xv_i} & \le \epsilon, 
    \end{align*}
    for all $i \in [N]$. By inequality~\eqref{Eq_priv1_0} in Lemma~\ref{lem_enc_dec},
    \begin{multline}
        I \left( \mv; \xv_{[N] \setminus \a_j} \middle |\xv_{\a_j}, \kv_{\ak_j} \right) \le 3n h(\epsilon)
        \stackrel{(a)}{\implies} I \left( \mv; \xv_{i} \middle |\xv_{[N] \setminus \{i\}}, \kv_{\ak_j} \right) \le 3n h(\epsilon)\\
        \stackrel{(b)}{\implies} I \left( \mv; \xv_{i} \middle |\xv_{[N] \setminus \{i\}}, \kv_{\ak_j \cap \ak_i} \right) \le 3n h(\epsilon) \label{Eq_priv1}.
    \end{multline}
    Here (a) follows from chain rule and the assumption that $i \notin \a_j$ and (b) follows from Lemma~\ref{lem_ind}.
    By inequality~\eqref{Eq_priv6_0} in Lemma~\ref{lem_enc_dec},
    \begin{multline}
        I \left( \mv ; \xv_i \middle | \xv_{\s_i}, \kv_{\ak_i \cap \ak_j}, \kv_{\ak_i \setminus \ak_j} \right) \ge H \left( \xv_i \right) - 3n h(\epsilon)\\
        \implies I \left( \mv ; \xv_i \middle | \xv_{[N] \setminus \{i\}}, \kv_{\ak_i \cap \ak_j}, \kv_{\ak_i \setminus \ak_j} \right) \ge H \left( \xv_i \right) - 3n h(\epsilon), \label{Eq_priv6}
    \end{multline}
    where the implication follows by Lemma~\ref{lem_ind}. Then,
    \begin{align}
        I \left( \mv ; \kv_{\ak_i \setminus \ak_j} \middle | \xv_{[N]}, \kv_{\ak_i \cap \ak_j} \right)
        & \stackrel{(a)}{\ge} I \left( \mv; \xv_i \middle | \xv_{[N] \setminus \{i\}}, \kv_{\ak_i \cap \ak_j} \right) - 3n h(\epsilon) \nonumber\\
        & \qquad + I \left( \mv ; \kv_{\ak_i \setminus \ak_j} \middle | \xv_{[N]}, \kv_{\ak_i \cap \ak_j} \right) \nonumber\\
        & = I \left( \mv ; \xv_i, \kv_{\ak_i \setminus \ak_j} \middle | \xv_{[N] \setminus \{i\}}, \kv_{\ak_i \cap \ak_j} \right) - 3n h(\epsilon) \nonumber\\
        & \ge I \left( \mv; \xv_i \middle | \xv_{[N] \setminus \{i\}}, \kv_{\ak_i \cap \ak_j}, \kv_{\ak_i \setminus \ak_j} \right) - 3n h(\epsilon) \nonumber\\
        & \stackrel{(b)}{\ge} H \left( \xv_i \right) - 6n h(\epsilon) \label{eq_priv7},
    \end{align}
    where (a) follows from~\eqref{Eq_priv1} and (b) follows from ~\eqref{Eq_priv6}. 
\end{proof}
We first show the necessity of the condition in Theorem~\ref{thm_key_access}.\\
$(\implies)\\$

For a given key access structure $\ks$, for $i \in [N]$, let $\ks_i$ represent the set of keys in $\ks$ that are available at user $i$, \emph{i.e.,} $\ks_i = \{\b \in \ks : \b^{(i)} = 1\}$.
Since for every block length $n$,
\begin{align*}
    H \left(\kv_{\ks_i \setminus \ks_j} \right) & \ge I \left(\mv ; \kv_{\ks_i \setminus \ks_j} \middle | \xv_{[N]}, \kv_{\ks_i \cap \ks_j}\right)
\end{align*}
and since $i \notin \a_j$, from Lemma~\ref{lem_ij} it follows that for sufficiently small $\epsilon > 0$, there exists a block-length $n$ such that,
\begin{align*}
    H \left(\kv_{\ks_i \setminus \ks_j} \right) \ge H\left(\x^{n}_i\right) - 6 nh(\epsilon),
\end{align*}
or,
\begin{align*}
    \sum_{\b \in \ks_i \setminus \ks_j} H \left(\kv_{\b} \right) \ge H\left(\x^{n}_i\right) - 6nh(\epsilon).
\end{align*}
Since $h(\epsilon) \rightarrow 0$ as $\epsilon \rightarrow 0$, from the above observation, we get
\begin{align*}
    \sum_{\b \in \ks_i \setminus \ks_j} R_{\b} \ge H\left(\x_i\right).
\end{align*}
Since $H(\x_i) > 0$, this implies that there is a key $\kv_{\b}$ with non-zero rate that is available at user $i$ and not available at user $j$.

Next we show that if the key access structure satisfies the condition in Theorem~\ref{thm_key_access}, then we may construct a private index coding scheme.\\
$(\impliedby)$

Let $\ks$ be a key access structure that satisfies the conditions in Theorem~\ref{thm_key_access}.
We assume that the alphabet of $\x_i, i \in [N]$ is a field $\mathbb{F}$ and that of keys $\kv_{\b}, \b \in \ks$ is the vector space $\mathbb{F}^N$. We denote the $i^{th}$ component of the key $\kv_{\b}$ by $\k^i_{\b}$.
Let $e_i, i \in [N]$ be the vector in $\mathbb{F}^N$, with $1$ as the $i^{th}$ component and other values being 0. This is the $i^{th}$ column of an $N \times N$ matrix over $\mathbb{F}$.
When we set $M= \left[M_1, \ldots M_N \right]^T$, the scheme described in~\eqref{Sch_Feas} may represented as follows 
\begin{align}
    M = \sum_{i \in [N]} e_i \left ( \x_i + \sum_{\b \in \ks_i} \kv_{\b}^i \right ). \label{Eq_ach_sch1}
\end{align}
Comparing this with the general linear encoder described in~\eqref{Eq_lin_encd}, we see that $G_i = [e_i]$ for $i \in [N]$ and that $H_{\b} = [\{e_i\}_{i:b_i=1}]$ for any $\b \in \ks$. For $i \in [N]$, if $\b \notin \ks_i$ then $e_i \notin \langle H_{\b} \rangle$. From these observations, it can be seen that for any $i \in [N]$,
\begin{align*}
    \langle e_i \rangle \notin \langle \{G_j\}_{j \notin \a_i},  \{H_{\b}\}_{\b \notin \ks_i} \rangle.
\end{align*}
Hence, our scheme satisfies the first condition in Theorem~\ref{Thm_linear}.

Let $i, j \in [n]$ such that $i \notin \a_j$, then by our assumption $\ks_i \setminus \ks_j$ is non-empty.
Let $\b \in \ks_i \setminus \ks_j$, then $b_i = 1$, hence
\begin{align*}
    e_i \in \langle H_{\b} \rangle = \langle \{e_{\ell}\}_{\ell:b_{\ell}=1} \rangle.
    \shortintertext{Since $G_i = [e_i]$,}
    \langle G_i \rangle \subseteq \langle H_{\b} \rangle \subseteq \langle \{H_{\b}\}_{\b \notin \ks_j} \rangle.
\end{align*}
Thus, the second condition in Theorem~\ref{Thm_linear} is also satisfied. Hence this is a valid linear (perfect) private index code.
This proves Theorem~\ref{thm_key_access}.

\begin{table}[]
    \centering
    \begin{tabular}{|l|l|l|l|}
        \hline
        Side info. graph $G$  & Rate Region $\cR(G)$ & Vertices of $\cR(G)$  & A private index code achieving the vertex\\ \cline{1-4}
        \multirow{4}*{
            \begin{tikzpicture}
                \node (1) at (-1, 0) [draw=none]{$1$};
                \node (2) at (1, 0)  [draw=none]{$2$};
            \end{tikzpicture}
        } & \multirow{4}*{
            $\begin{array}{lcl}
                R_{10}, R_{01} \ge 0\\
                R_{10} \ge 1\\
                R_{01} \ge 1\\
                R \ge 2\\
            \end{array}$
        }
        &    $(2, 1, 1)$    & $\x_1 + \k_{10}, \x_2 + \k_{01}$ \\ 
        & & & \\ 
        & & & \\
        & & & \\ \hline
        \multirow{3}*{
            \begin{tikzpicture}
                \node (1) at (-1, 0) [draw=none]{$1$};
                \node (2) at (1, 0)  [draw=none]{$2$};
                \path[->,thick] (1) edge (2);
            \end{tikzpicture}
        } & \multirow{3}*{
            $\begin{array}{lcl}
                R_{10}, R_{01} \ge 0\\
                R_{10} \ge 1\\
                R \ge 2\\
            \end{array}$
        }
        & $(2, 1, 0)$    & $\x_1 + \k_{10}, \x_2$ \\ 
        & & & \\
        & & & \\ \hline
        \multirow{3}*{
            \begin{tikzpicture}
                \node (1) at (-1, 0) [draw=none]{$1$};
                \node (2) at (1, 0)  [draw=none]{$2$};
                \path[->,thick] (1) edge (2);
                \path[->,thick] (2) edge (1);
            \end{tikzpicture}
        } & \multirow{3}*{
            $\begin{array}{lcl}
                R_{10}, R_{01} \ge 0\\
                R \ge 1\\
            \end{array}$
        }
        & $(1, 0, 0)$    & $\x_1 + \x_2$ \\
        & & & \\ 
        & & & \\ \hline
    \end{tabular}
    \caption{Characterization of rate regions of all private index coding problems with 2 users.
    The rate region is achievable using scalar linear private index codes.
    In the table, the vertices are represented as tuples $(R, R_{10}, R_{01})$.}
    \label{Table_2user}
\end{table}
\begin{table}[]
    \centering
    \begin{tabular}{|l|l|l|p{6cm}|}
        \hline
        Side info. graph $G$  & Rate Region $\cR(G)$ & Vertices of $\cR(G)$  & A private index code achieving the vertex\\ \cline{1-4}
        \multirow{13}*{
            \begin{tikzpicture}
                \node (1) at (-1, 0) [draw=none]{$1$};
                \node (2) at (1, 0)  [draw=none]{$2$};
                \node (3) at (0, 1)  [draw=none]{$3$};
            \end{tikzpicture}
        } & \multirow{13}*{
            $\begin{array}{lcl}
                R_{\b} \ge 0, \b \in \ak \\
                R_{100} + R_{101} \ge 1\\
                R_{100} + R_{110} \ge 1\\
                R_{010} + R_{110} \ge 1\\
                R_{010} + R_{011} \ge 1\\
                R_{001} + R_{011} \ge 1\\
                R_{001} + R_{101} \ge 1\\
                R_{001} + R_{101} + R_{100} \ge 2\\
                R_{001} + R_{011} + R_{010} \ge 2\\
                R_{100} + R_{110} + R_{010} \ge 2\\
                R \ge 3\\
            \end{array}$
        }
        &    $(3, 0, 0, 2, 0, 2, 2)$    & $\x_1 + \k_{101} + \k_{110}, \x_2 + \k'_{110} + \k_{011},$ \newline $\x_3 + \k'_{011} + \k'_{101}$ \\ \cline{3-4}
        &    &    $(3, 0, 0, 2, 1, 1, 1)$    & $\x_1 + \k_{101} + \k_{110}, \x_2 + \k'_{110} + \k_{011},$ \newline $ \x_3 + \k_{001}$\\ \cline{3-4}
        &    &    $(3, 0, 1, 1, 0, 2, 1)$    & $\x_1 + \k_{101} + \k_{110}, \x_2 + \k_{010},$ \newline $ \x_3 + \k_{011} + \k'_{101}$ \\ \cline{3-4}
        &    &    $(3, 0, 1, 1, 1, 1, 0)$    & $\x_1 + \k_{101} + \k_{110}, \x_2 + \k_{010}, \x_3 + \k_{001}$ \\ \cline{3-4}
        &    &    $(3, 1, 0, 1, 0, 1, 2)$    & $\x_1 + \k_{001}, \x_2 + \k_{110} + \k_{011},$ \newline $ \x_3 + \k'_{011} + \k_{101}$ \\ \cline{3-4}
        &    &    $(3, 1, 0, 1, 1, 0, 1)$    & $\x_1 + \k_{100}, \x_2 + \k_{110} + \k_{011}, \x_3 + \k_{001}$ \\ \cline{3-4}
        &    &    $(3, 1, 1, 0, 0, 1, 1)$    & $\x_1 + \k_{100}, \x_2 + \k_{010}, \x_3 + \k_{011} + \k_{101}$ \\ \cline{3-4}
        &    &    $(3, 1, 1, 0, 1, 0, 0)$    & $\x_1 + \k_{100}, \x_2 + \k_{010}, \x_3 + \k_{001}$ \\ 
        & & & \\ \hline
        \multirow{11}*{
            \begin{tikzpicture}
                \node (1) at (-1, 0) [draw=none]{$1$};
                \node (2) at (1, 0)  [draw=none]{$2$};
                \node (3) at (0, 1)  [draw=none]{$3$};
                \path[->,thick] (1) edge (2);
            \end{tikzpicture}
        } & \multirow{11}*{
            $\begin{array}{lcl}
                R_{\b} \ge 0, \b \in \ak \\
                R_{100} + R_{101} \ge 1\\
                R_{100} + R_{110} \ge 1\\
                R_{010} + R_{110} \ge 1\\
                R_{001} + R_{011} \ge 1\\
                R_{001} + R_{101} \ge 1\\
                R_{001} + R_{101} + R_{100} \ge 2\\
                R_{100} + R_{110} + R_{010} \ge 2\\
                R \ge 3\\
            \end{array}$
        }
        & $(3, 0, 0, 2, 0, 2, 1)$    & $\x_1 + \k_{101} + \k_{110}, \x_2 + \k'_{110},$ \newline $\x_3 + \k_{011} + \k'_{101}$ \\ \cline{3-4}
        &    &    $(3, 0, 0, 2, 1, 1, 0)$    & $\x_1 + \k_{101} + \k_{110}, \x_2 + \k'_{110}, \x_3 + \k_{001}$\\ \cline{3-4}
        &    &    $(3, 0, 1, 1, 0, 2, 1)$    & $\x_1 + \k_{101} + \k_{110}, \x_2 + \k_{010},$ \newline $\x_3 + \k_{011} + \k'_{101}$ \\ \cline{3-4}
        &    &    $(3, 0, 1, 1, 1, 1, 0)$    & $\x_1 + \k_{101} + \k_{110}, \x_2 + \k_{010}, \x_3 + \k_{001}$ \\ \cline{3-4}
        &    &    $(3, 1, 0, 1, 0, 1, 1)$    & $\x_1 + \k_{001}, \x_2 + \k_{110}, \x_3 + \k_{011} + \k_{101}$ \\ \cline{3-4}
        &    &    $(3, 1, 0, 1, 1, 0, 0)$    & $\x_1 + \k_{100}, \x_2 + \k_{110}, \x_3 + \k_{001}$ \\ \cline{3-4}
        &    &    $(3, 1, 1, 0, 0, 1, 1)$    & $\x_1 + \k_{100}, \x_2 + \k_{010}, \x_3 + \k_{011} + \k_{101}$ \\ \cline{3-4}
        &    &    $(3, 1, 1, 0, 1, 0, 0)$    & $\x_1 + \k_{100}, \x_2 + \k_{010}, \x_3 + \k_{001}$ \\ 
        & & & \\ 
        \hline
        \multirow{9}*{
            \begin{tikzpicture}
                \node (1) at (-1, 0) [draw=none]{$1$};
                \node (2) at (1, 0)  [draw=none]{$2$};
                \node (3) at (0, 1)  [draw=none]{$3$};
                \path[->,thick] (1) edge (2);
                \path[->,thick] (1) edge (3);
            \end{tikzpicture}
        } & \multirow{9}*{
            $\begin{array}{lcl}
                R_{\b} \ge 0, \b \in \ak \\
                R_{100} + R_{101} \ge 1\\
                R_{100} + R_{110} \ge 1\\
                R_{010} + R_{110} \ge 1\\
                R_{001} + R_{101} \ge 1\\
                R_{001} + R_{101} + R_{100} \ge 2\\
                R_{100} + R_{110} + R_{010} \ge 2\\
                R \ge 3\\
            \end{array}$
        }
        &    $(3, 0, 0, 2, 0, 2, 0)$    & $\x_1 + \k_{101} + \k_{110}, \x_2 + \k'_{110}, \x_3 + \k'_{101}$ \\ \cline{3-4}
        &    &    $(3, 0, 0, 2, 1, 1, 0)$    & $\x_1 + \k_{101} + \k_{110}, \x_2 + \k'_{110}, \x_3 + \k_{001}$\\ \cline{3-4}
        &    &    $(3, 0, 1, 1, 0, 2, 0)$    & $\x_1 + \k_{101} + \k_{110}, \x_2 + \k_{010}, \x_3 + \k'_{101}$ \\ \cline{3-4}
        &    &    $(3, 0, 1, 1, 1, 1, 0)$    & $\x_1 + \k_{101} + \k_{110}, \x_2 + \k_{010}, \x_3 + \k_{001}$ \\ \cline{3-4}
        &    &    $(3, 1, 0, 1, 0, 1, 0)$    & $\x_1 + \k_{001}, \x_2 + \k_{110}, \x_3 + \k_{101}$ \\ \cline{3-4}
        &    &    $(3, 1, 0, 1, 1, 0, 0)$    & $\x_1 + \k_{100}, \x_2 + \k_{110}, \x_3 + \k_{001}$ \\ \cline{3-4}
        &    &    $(3, 1, 1, 0, 0, 1, 0)$    & $\x_1 + \k_{100}, \x_2 + \k_{010}, \x_3 + \k_{101}$ \\ \cline{3-4}
        &    &    $(3, 1, 1, 0, 1, 0, 0)$    & $\x_1 + \k_{100}, \x_2 + \k_{010}, \x_3 + \k_{001}$ \\ 
        & & & \\ 
        \hline
        \multirow{8}*{
            \begin{tikzpicture}
                \node (1) at (-1, 0) [draw=none]{$1$};
                \node (2) at (1, 0)  [draw=none]{$2$};
                \node (3) at (0, 1)  [draw=none]{$3$};
                \path[->,thick] (1) edge (2);
                \path[->,thick] (2) edge (1);
            \end{tikzpicture}
        } & \multirow{8}*{
            $\begin{array}{lcl}
                R_{\b} \ge 0, \b \in \ak \\
                R_{100} + R_{110} \ge 1\\
                R_{010} + R_{110} \ge 1\\
                R_{001} + R_{011} \ge 1\\
                R_{001} + R_{101} \ge 1\\
                R \ge 2\\
                R + R_{110} \ge 3\\
            \end{array}$
        }
        &    $(2, 0, 0, 1, 0, 1, 1)$    & $\x_1 + \x_2 + \k_{110}, \x_3 + \k_{011} + \k_{101}$ \\ \cline{3-4}
        &    &    $(2, 0, 0, 1, 1, 0, 0)$    & $\x_1 + \x_2 + \k_{110}, \x_3 + \k_{001}$\\ \cline{3-4}
        &    &    $(3, 1, 1, 0, 0, 1, 1)$    & $\x_1 + \k_{100}, \x_2 + \k_{010}, \x_3 + \k_{011} + \k_{101}$ \\ \cline{3-4}
        &    &    $(3, 1, 1, 0, 1, 0, 0)$    & $\x_1 + \k_{100}, \x_2 + \k_{010}, \x_3 + \k_{001}$ \\ 
        & & & \\ 
        & & & \\ 
        & & & \\
        & & & \\
        \hline
        \multirow{9}*{
            \begin{tikzpicture}
                \node (1) at (-1, 0) [draw=none]{$1$};
                \node (2) at (1, 0)  [draw=none]{$2$};
                \node (3) at (0, 1)  [draw=none]{$3$};
                \path[->,thick] (1) edge (2);
                \path[->,thick] (2) edge (3);
            \end{tikzpicture}
        } & \multirow{9}*{
            $\begin{array}{lcl}
                R_{\b} \ge 0, \b \in \ak \\
                R_{100} + R_{101} \ge 1\\
                R_{100} + R_{110} \ge 1\\
                R_{010} + R_{110} \ge 1\\
                R_{001} + R_{011} \ge 1\\
                R_{001} + R_{101} + R_{100} \ge 2\\
                R \ge 3\\
            \end{array}$
        }
        &    $(3, 0, 0, 2, 0, 1, 1)$    & $\x_1 + \k_{101} + \k_{110}, \x_2 + \k'_{110}, \x_3 + \k_{011}$ \\ \cline{3-4}
        &    &    $(3, 0, 0, 2, 1, 1, 0)$    & $\x_1 + \k_{101} + \k_{110}, \x_2 + \k'_{110}, \x_3 + \k_{001}$\\ \cline{3-4}
        &    &    $(3, 0, 1, 1, 0, 1, 1)$    & $\x_1 + \k_{101} + \k_{110}, \x_2 + \k_{010}, \x_3 + \k_{011}$ \\ \cline{3-4}
        &    &    $(3, 0, 1, 1, 1, 1, 0)$    & $\x_1 + \k_{101} + \k_{110}, \x_2 + \k_{010}, \x_3 + \k_{001}$ \\ \cline{3-4}
        &    &    $(3, 1, 0, 1, 0, 0, 1)$    & $\x_1 + \k_{001}, \x_2 + \k_{110}, \x_3 + \k_{011}$ \\ \cline{3-4}
        &    &    $(3, 1, 0, 1, 1, 0, 0)$    & $\x_1 + \k_{100}, \x_2 + \k_{110}, \x_3 + \k_{001}$ \\ \cline{3-4}
        &    &    $(3, 1, 1, 0, 0, 0, 1)$    & $\x_1 + \k_{100}, \x_2 + \k_{010}, \x_3 + \k_{011}$ \\ \cline{3-4}
        &    &    $(3, 1, 1, 0, 1, 0, 0)$    & $\x_1 + \k_{100}, \x_2 + \k_{010}, \x_3 + \k_{001}$ \\ 
        & & & \\
        \hline
        \multirow{6}*{
            \begin{tikzpicture}
                \node (1) at (-1, 0) [draw=none]{$1$};
                \node (2) at (1, 0)  [draw=none]{$2$};
                \node (3) at (0, 1)  [draw=none]{$3$};
                \path[->,thick] (1) edge (2);
                \path[->,thick] (2) edge (1);
                \path[->,thick] (2) edge (3);
            \end{tikzpicture}
        } & \multirow{6}*{
            $\begin{array}{lcl}
                R_{\b} \ge 0, \b \in \ak \\
                R_{100} + R_{110} \ge 1\\
                R_{010} + R_{110} \ge 1\\
                R_{001} + R_{011} \ge 1\\
                R \ge 2\\
                R + R_{110} \ge 3\\
            \end{array}$
        }
        &    $(2, 0, 0, 1, 0, 0, 1)$    & $\x_1 + \x_2 + \k_{110}, \x_3 + \k_{011}$ \\ \cline{3-4}
        &    &    $(2, 0, 0, 1, 1, 0, 0)$    & $\x_1 + \x_2 + \k_{110}, \x_3 + \k_{001}$\\ \cline{3-4}
        &    &    $(3, 1, 1, 0, 0, 0, 1)$    & $\x_1 + \k_{100}, \x_2 + \k_{010}, \x_3 + \k_{011}$ \\ \cline{3-4}
        &    &    $(3, 1, 1, 0, 1, 0, 0)$    & $\x_1 + \k_{100}, \x_2 + \k_{010}, \x_3 + \k_{001}$ \\ 
        & & & \\ 
        & & & \\
        \hline    
    \end{tabular}
\end{table}
\begin{table}
    \begin{tabular}{|l|l|l|p{6cm}|}
    	    	\hline
                Side info. graph $G$  & Rate Region $\cR(G)$ & Vertices of $\cR(G)$  & A private index code achieving the vertex\\ \cline{1-4}
    	    	     \multirow{8}*{
    	    		\begin{tikzpicture}
    	    		\node (1) at (-1, 0) [draw=none]{$1$};
    	    		\node (2) at (1, 0)  [draw=none]{$2$};
    	    		\node (3) at (0, 1)  [draw=none]{$3$};
    	    		\path[->,thick] (1) edge (2);
    	    		\path[->,thick] (3) edge (2);
    	    		\end{tikzpicture}
    	    	} & \multirow{8}*{
    	    		$\begin{array}{lcl}
    	    		R_{\b} \ge 0, \b \in \ak \\
    	    		R_{100} + R_{101} \ge 1\\
    	    		R_{100} + R_{110} \ge 1\\
    	    		R_{001} + R_{011} \ge 1\\
    	    		R_{001} + R_{101} \ge 1\\
    	    		R_{001} + R_{101} + R_{100} \ge 2\\
    	    		R \ge 3\\
    	    		\end{array}$
    	    	}
    	    	&    $(3, 0, 0, 1, 0, 2, 1)$    & $\x_1 + \k_{101} + \k_{110}, \x_2, \x_3 + \k_{011} + \k'_{101}$ \\ \cline{3-4}
    	    	&    &    $(3, 0, 0, 1, 1, 1, 0)$    & $\x_1 + \k_{101} + \k_{110}, \x_2, \x_3 + \k_{001}$\\ \cline{3-4}
    	    	&    &    $(3, 1, 0, 0, 0, 1, 1)$    & $\x_1 + \k_{001}, \x_2, \x_3 + \k_{101} + \k_{101}$ \\ \cline{3-4}
    	    	&    &    $(3, 1, 0, 0, 1, 0, 0)$    & $\x_1 + \k_{100}, \x_2, \x_3 + \k_{001}$ \\ 
    	    	& & & \\ 
    	    	& & & \\
    	    	& & & \\
    	    	& & & \\
    	    	\hline
    	 \multirow{6}*{
    		\begin{tikzpicture}
    		\node (1) at (-1, 0) [draw=none]{$1$};
    		\node (2) at (1, 0)  [draw=none]{$2$};
    		\node (3) at (0, 1)  [draw=none]{$3$};
    		\path[->,thick] (1) edge (2);
    		\path[->,thick] (3) edge (2);
    		\path[->,thick] (1) edge (3);
    		\end{tikzpicture}
    	} & \multirow{6}*{
    		$\begin{array}{lcl}
    		R_{\b} \ge 0, \b \in \ak \\
    		R_{100} + R_{101} \ge 1\\
    		R_{100} + R_{110} \ge 1\\
    		R_{001} + R_{101} \ge 1\\
    		R_{001} + R_{101} + R_{100} \ge 2\\
    		R \ge 3\\
    		\end{array}$
    	}
    	&    $(3, 0, 0, 1, 0, 2, 0)$    & $\x_1 + \k_{101} + \k_{110}, \x_2, \x_3 + \k'_{101}$ \\ \cline{3-4}
    	&    &    $(3, 0, 0, 1, 1, 1, 0)$    & $\x_1 + \k_{101} + \k_{110}, \x_2, \x_3 + \k_{001}$\\ \cline{3-4}
    	&    &    $(3, 1, 0, 0, 0, 1, 0)$    & $\x_1 + \k_{001}, \x_2, \x_3 + \k_{101}$ \\ \cline{3-4}
    	&    &    $(3, 1, 0, 0, 1, 0, 0)$    & $\x_1 + \k_{100}, \x_2, \x_3 + \k_{001}$ \\ 
    	& & & \\ 
    	& & & \\ 
        \hline
        \multirow{5}*{
            \begin{tikzpicture}
                \node (1) at (-1, 0) [draw=none]{$1$};
                \node (3) at (1, 0)  [draw=none]{$2$};
                \node (2) at (0, 1)  [draw=none]{$3$};
                \path[->,thick] (1) edge (2);
                \path[->,thick] (3) edge (2);
                \path[->,thick] (3) edge (1);
                \path[->,thick] (1) edge (3);
            \end{tikzpicture}
        } & \multirow{5}*{
            $\begin{array}{lcl}
                R_{\b} \ge 0, \b \in \ak \\
                R_{100} + R_{110} \ge 1\\
                R_{010} + R_{110} \ge 1\\
                R \ge 2\\
                R + R_{110} \ge 3
            \end{array}$
        }
        &    $(2, 0, 0, 1, 0, 0, 0)$    & $\x_1 + \x_2 + \k_{110}, \x_3$ \\ \cline{3-4}
        &    &    $(3, 1, 1, 0, 0, 0, 0)$    & $\x_1 + \k_{100}, \x_2 + \k_{010}, \x_3$\\ 
        & & & \\ 
        & & & \\ 
        & & & \\ 
        \hline
        \multirow{6}*{
            \begin{tikzpicture}
                \node (1) at (-1, 0) [draw=none]{$1$};
                \node (2) at (1, 0)  [draw=none]{$2$};
                \node (3) at (0, 1)  [draw=none]{$3$};
                \path[->,thick] (1) edge (2);
                \path[->,thick] (2) edge (1);
                \path[->,thick] (3) edge (1);
            \end{tikzpicture}
        } & \multirow{6}*{
            $\begin{array}{lcl}
                R_{\b} \ge 0, \b \in \ak \\
                R_{010} + R_{110} \ge 1\\
                R_{001} + R_{011} \ge 1\\
                R_{001} + R_{101} \ge 1\\
                R \ge 2\\
                R + R_{110} \ge 3
            \end{array}$
        }
        &    $(2, 0, 0, 1, 0, 1, 1)$    & $\x_1 + \x_2 + \k_{110}, \x_3 + \k_{101} + \k_{011}$ \\ \cline{3-4}
        &    &    $(2, 0, 0, 1, 1, 0, 0)$    & $\x_1 + \x_2 + \k_{110}, \x_3 + \k_{001}$\\ \cline{3-4}
        &    &    $(3, 0, 1, 0, 0, 1, 1)$    & $\x_1, \x_2 + \k_{010}, \x_3 + \k_{011}$\\ \cline{3-4}
        &    &    $(3, 0, 1, 0, 1, 0, 0)$    & $\x_1, \x_2 + \k_{010}, \x_3 + \k_{001}$\\ 
        & & & \\ 
        & & & \\ 
        \hline
        \multirow{6}*{
            \begin{tikzpicture}
                \node (1) at (-1, 0) [draw=none]{$1$};
                \node (2) at (1, 0)  [draw=none]{$2$};
                \node (3) at (0, 1)  [draw=none]{$3$};
                \path[->,thick] (2) edge (3);
                \path[->,thick] (3) edge (2);
                \path[->,thick] (3) edge (1);
                \path[->,thick] (1) edge (3);
            \end{tikzpicture}
        } & \multirow{6}*{
            $\begin{array}{lcl}
                R_{\b} \ge 0, \b \in \ak \\
                R_{100} + R_{101} \ge 1\\
                R_{010} + R_{011} \ge 1\\
                R \ge 2\\
                R + R_{101} + R_{011} \ge 3
            \end{array}$
        }
        &    $(2, 0, 0, 0, 0, 1, 1)$    & $\x_1 + \k_{101}, \x_2 + \x_3 + \k_{011}$ \\ \cline{3-4}
        &    &    $(2, 1, 0, 0, 0, 0, 1)$    & $\x_1 + \k_{100}, \x_2 + \x_3 + \k_{011}$\\ \cline{3-4}
        &    &    $(2, 0, 1, 0, 0, 1, 0)$    & $\x_2 + \k_{010}, \x_1 + \x_3 + \k_{101}$\\ \cline{3-4}
        &    &    $(3, 1, 1, 0, 0, 0, 0)$    & $\x_1 + \k_{100}, \x_2 + \k_{010}, \x_3$\\ 
        & & & \\ 
        & & & \\ 
        \hline
        \multirow{9}*{
            \begin{tikzpicture}
                \node (1) at (-1, 0) [draw=none]{$1$};
                \node (2) at (1, 0)  [draw=none]{$2$};
                \node (3) at (0, 1)  [draw=none]{$3$};
                \path[->,thick] (1) edge (2);
                \path[->,thick] (2) edge (3);
                \path[->,thick] (3) edge (1);
            \end{tikzpicture}
        } & \multirow{9}*{
            $\begin{array}{lcl}
                R_{\b} \ge 0, \b \in \ak \\
                R_{100} + R_{101} \ge 1\\
                R_{010} + R_{110} \ge 1\\
                R_{001} + R_{011} \ge 1\\
                R \ge 2\\
                R + R_{110} \ge 3\\
                R + R_{101} \ge 3\\
                R + R_{011} \ge 3
            \end{array}$
        }
             &    $(2, 0, 0, 1, 0, 1, 1)$    & $\x_1 + \x_2 + \k_{110} + \k_{101},$ \newline $\x_2 + \x_3 + \k_{110} + \k_{011}$ \\ \cline{3-4}
        &    &    $(3, 0, 0, 1, 1, 1, 0)$    & $\x_1 + \k_{101}, \x_2 + \k_{110}, \x_3 + \k_{001}$\\ \cline{3-4}
        &    &    $(3, 0, 1, 0, 0, 1, 1)$    & $\x_1 + \k_{101}, \x_2 + \k_{010}, \x_3 + \k_{101}$\\ \cline{3-4}
        &    &    $(3, 0, 1, 0, 1, 1, 0)$    & $\x_1 + \k_{101}, \x_2 + \k_{010}, \x_3 + \k_{001}$\\ \cline{3-4}
        &    &    $(3, 1, 0, 1, 0, 0, 1)$    & $\x_1 + \k_{100}, \x_2 + \k_{110}, \x_3 + \k_{011}$\\ \cline{3-4}
        &    &    $(3, 1, 0, 1, 1, 0, 0)$    & $\x_1 + \k_{100}, \x_2 + \k_{110}, \x_3 + \k_{001}$\\ \cline{3-4}
        &    &    $(3, 1, 1, 0, 0, 0, 1)$    & $\x_1 + \k_{100}, \x_2 + \k_{010}, \x_3 + \k_{011}$\\ \cline{3-4}
        &    &    $(3, 1, 1, 0, 1, 0, 0)$    & $\x_1 + \k_{100}, \x_2 + \k_{010}, \x_3 + \k_{001}$\\ 
        & & & \\ 
        \hline
        \multirow{5}*{
            \begin{tikzpicture}
                \node (1) at (-1, 0) [draw=none]{$1$};
                \node (2) at (1, 0)  [draw=none]{$2$};
                \node (3) at (0, 1)  [draw=none]{$3$};
                \path[->,thick] (1) edge (2);
                \path[->,thick] (2) edge (1);
                \path[->,thick] (2) edge (3);
                \path[->,thick] (3) edge (1);
            \end{tikzpicture}
        } & \multirow{5}*{
            $\begin{array}{lcl}
                R_{\b} \ge 0, \b \in \ak \\
                R_{010} + R_{110} \ge 1\\
                R_{001} + R_{011} \ge 1\\
                R \ge 2\\
                R + R_{110} \ge 3\\
            \end{array}$
        }
             &    $(2, 0, 0, 1, 0, 0, 1)$    & $\x_1 + \x_2 + \k_{110}, \x_3 + \k_{011}$ \\ \cline{3-4}
        &    &    $(2, 0, 0, 1, 1, 0, 0)$    & $\x_1 + \x_2 + \k_{110}, \x_3 + \k_{001}$ \\ \cline{3-4}
        &    &    $(3, 0, 1, 0, 0, 0, 1)$    & $\x_1, \x_2 + \k_{010}, \x_3 + \k_{011}$\\ \cline{3-4}
        &    &    $(3, 0, 1, 0, 1, 0, 0)$    & $\x_1, \x_2 + \k_{010}, \x_3 + \k_{001}$\\ 
        & & & \\ 
        \hline
        \multirow{5}*{
            \begin{tikzpicture}
                \node (1) at (-1, 0) [draw=none]{$1$};
                \node (2) at (1, 0)  [draw=none]{$2$};
                \node (3) at (0, 1)  [draw=none]{$3$};
                \path[->,thick] (1) edge (2);
                \path[->,thick] (2) edge (1);
                \path[->,thick] (3) edge (2);
                \path[->,thick] (3) edge (1);
            \end{tikzpicture}
        } & \multirow{5}*{
            $\begin{array}{lcl}
                R_{\b} \ge 0, \b \in \ak \\
                R_{001} + R_{011} \ge 1\\
                R_{001} + R_{101} \ge 1\\
                R \ge 2\\
            \end{array}$
        }
             &    $(2, 0, 0, 0, 0, 1, 1)$    & $\x_1 + \x_2, \x_3 + \k_{101} + \k_{011}$ \\ \cline{3-4}
        &    &    $(2, 0, 0, 0, 1, 0, 0)$    & $\x_1 + \x_2, \x_3 + \k_{001}$ \\ 
        & & & \\ 
        & & & \\ 
        & & & \\ 
        \hline
        
    \end{tabular}

\end{table}

\begin{table}[]
	\centering
	\begin{tabular}{|l|l|l|p{6cm}|}
		\hline
        Side info. graph $G$  & Rate Region $\cR(G)$ & Vertices of $\cR(G)$  & A private index code achieving the vertex\\ \cline{1-4}
		\multirow{5}*{
			\begin{tikzpicture}
			\node (1) at (-1, 0) [draw=none]{$1$};
			\node (2) at (1, 0)  [draw=none]{$2$};
			\node (3) at (0, 1)  [draw=none]{$3$};
			\path[->,thick] (1) edge (2);
			\path[->,thick] (2) edge (1);
			\path[->,thick] (2) edge (3);
			\path[->,thick] (3) edge (1);
			\path[->,thick] (1) edge (3);
			\end{tikzpicture}
		} & \multirow{5}*{
			$\begin{array}{lcl}
			R_{\b} \ge 0, \b \in \ak \\
			R_{010} + R_{110} \ge 1\\
			R \ge 2\\
			\end{array}$
		}
		&    $(2, 0, 0, 1, 0, 0, 0)$    & $\x_1 + \x_2 + \k_{110}, \x_3$ \\ \cline{3-4}
		&    &    $(2, 0, 1, 0, 0, 0, 0)$    & $\x_1 + \x_3, \x_2 + \k_{010}$ \\ 
		& & & \\ 
		& & & \\ 
		& & & \\ 
		\hline
		\multirow{5}*{
			\begin{tikzpicture}
			\node (1) at (-1, 0) [draw=none]{$1$};
			\node (2) at (1, 0)  [draw=none]{$2$};
			\node (3) at (0, 1)  [draw=none]{$3$};
			\path[->,thick] (1) edge (2);
			\path[->,thick] (2) edge (1);
			\path[->,thick] (2) edge (3);
			\path[->,thick] (3) edge (2);
			\path[->,thick] (3) edge (1);
			\path[->,thick] (1) edge (3);
			\end{tikzpicture}
		} & \multirow{5}*{
			$\begin{array}{lcl}
			R_{\b} \ge 0, \b \in \ak \\
			R \ge 1\\
			\end{array}$
		}
		&    $(1, 0, 0, 0, 0, 0, 0)$    & $\x_1 + \x_2 + \x_3$ \\ 
		& & & \\ 
		& & & \\ 
		& & & \\ 
		& & & \\ 
		\hline
		  \end{tabular}
    \caption{Rate regions of all private index coding problems with 3 users.
    The rate region is achievable using scalar linear private index codes.
    In the table, the vertices are represented as tuples $(R, R_{100}, R_{010}, R_{110}, R_{001}, R_{101}, R_{011})$.}
\label{Table_3user}    
\end{table}

    \section{Proof of Theorem \ref{thm_3user}} \label{app_thm_3user}
    \label{Sec_3user_proof}
    \begin{proof}[Proof of the Theorem~\ref{thm_3user}]
        Tables~\ref{Table_2user} and~\ref{Table_3user} list all the graphs on 2 and 3 vertices respectively (up to isomorphism) and characterize their rate region.
        For each graph $G$, the second column lists the inequalities that describe the rate region, $\cR(G)$, which turns out be a polygon. 
        The vertices of the polygon thus described are listed in the third column of the table.
        Note that the achievable rate region for each of these graphs is necessarily contained in these polygons.
        One can verify that these are indeed the vertices by checking that these points satisfy all the listed inequalities and that at least 7 inequalities are satisfied with equalities.
        The fourth column presents scalar linear private index coding schemes that achieve the rate tuples represented by each of these vertices.
        It can be easily verified that these schemes satisfy the conditions in Theorem~\ref{Thm_linear} and hence are valid linear (perfect) private index codes.

        In the rest of this section we show the necessity of the inequalities given in the second column.
        From the decoding condition~\eqref{Eq_Dec} and privacy condition~\eqref{Eq_Priv}, we obtain two kinds of inequalities: ones that bound the key rates and others that bound the rate of transmission.
        First we describe the kind of inequalities that bound the key rates.
        \begin{paragraph}{Bounds on key rates}
            To describe the lower bounds on the rate region, we use two kinds of bounds on key rates.
            These bounds are general and can be used for any $N$ and are described in the following claims.
        \end{paragraph}
        \begin{claim}\label{clm:keyrate-bound1}
            If $i, j \in [N]$ are users such that $i \notin \a_j$, \emph{i.e.,} in the directed graph $G(V, E)$ corresponding to the private index coding problem, $(j , i) \notin E$, then,
            \begin{align*}
                \sum_{\b \in \ak_i \setminus \ak_j} R_{\b} \ge 1.
            \end{align*}
        \end{claim}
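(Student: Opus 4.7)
The plan is to recognize that this claim is essentially a restatement of the ``only if'' direction of Theorem~\ref{thm_key_access}, repackaged as a rate inequality, and that the heavy lifting has already been done in Lemma~\ref{lem_ij}. Concretely, consider any achievable tuple $(R,(R_\b:\b \in \ak))$. By definition, for every $\epsilon>0$ there exists a block length $n$ and an $(n,R,(R_\b))$ scheme satisfying the decoding~\eqref{Eq_Dec} and privacy~\eqref{Eq_Priv} conditions. Since $i \notin \a_j$ by hypothesis, Lemma~\ref{lem_ij} applies and yields
\begin{equation*}
I\!\left(\mv\,;\,\kv_{\ak_i \setminus \ak_j}\,\middle|\,\xv_{[N]},\kv_{\ak_i \cap \ak_j}\right) \;\ge\; H(\xv_i)-6nh(\epsilon)\;=\;n-6nh(\epsilon),
\end{equation*}
using $H(\xv_i)=n$ in units of $\log|\mathbb{F}|$.

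Next I would upper bound the left-hand side by $H(\kv_{\ak_i \setminus \ak_j}) \le n \sum_{\b \in \ak_i \setminus \ak_j} R_\b$, since each key $\kv_\b$ has entropy $nR_\b$ and the keys are independent. Combining the two bounds and dividing by $n$ gives
\begin{equation*}
\sum_{\b \in \ak_i \setminus \ak_j} R_\b \;\ge\; 1 - 6h(\epsilon).
\end{equation*}
Because this holds for every $\epsilon>0$ in the achievability definition, and $h(\epsilon)\to 0$ as $\epsilon \to 0$, taking limits delivers the desired inequality $\sum_{\b \in \ak_i \setminus \ak_j} R_\b \ge 1$.

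There is no serious obstacle here beyond bookkeeping: the privacy-and-decoding-to-key-rate chain has already been encapsulated in Lemma~\ref{lem_ij}. The only care needed is to recall that achievability is defined asymptotically (with error $\epsilon$) rather than perfectly, so one must invoke Lemma~\ref{lem_ij} with the slack term $6nh(\epsilon)$ and then pass to the limit, rather than arguing directly from the perfect-secrecy equations~\eqref{Eq_feasb2}--\eqref{Eq_feasb3} used in the sketch of Theorem~\ref{thm_key_access}.
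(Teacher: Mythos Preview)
Your proposal is correct and follows essentially the same route as the paper: the paper's proof of this claim simply points back to the ``only if'' part of Theorem~\ref{thm_key_access}, which in the appendix is carried out via Lemma~\ref{lem_ij} and the entropy bound $H(\kv_{\ak_i\setminus\ak_j})\ge I(\mv;\kv_{\ak_i\setminus\ak_j}\mid \xv_{[N]},\kv_{\ak_i\cap\ak_j})$, exactly as you outline. Your handling of the asymptotic slack $6nh(\epsilon)$ and the limit $\epsilon\to 0$ matches the paper's argument as well.
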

        \begin{proof}
            If $i \notin \a_j$, we have shown in the proof of Theorem~\ref{thm_key_access}, that
            \begin{align*}
                \sum_{\b \in \ak_i \setminus \ak_j} R_{\b} \ge H \left(\x_i \right).
            \end{align*}
            The claim follows by substituting $H(\x_i) = 1$ for all $i \in [N]$.
        \end{proof}
        \begin{claim}
            If $i, j, \ell \in [N]$ are users such that $i \notin \a_j$ and $i, j \notin \a_{\ell}$, \emph{i.e.,} in the directed graph $G(V, E)$ corresponding to the private index coding problem, $(j , i), (\ell , i), (\ell , j) \notin E$, then
            \begin{align*}
            \sum_{\b \in (\ak_i \cup \ak_j) \setminus \ak_{\ell}} R_{\b} \ge 2.
            \end{align*}
        \end{claim}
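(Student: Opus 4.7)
The plan is to generalize the argument behind Claim~\ref{clm:keyrate-bound1}, which itself follows the ``only if'' direction in the proof of Theorem~\ref{thm_key_access}, by tracking $\xv_i$ and $\xv_j$ jointly rather than a single file. Define $\mathcal{C} = (\ak_i \cup \ak_j) \setminus \ak_\ell$ and $\mathcal{D} = (\ak_i \cup \ak_j) \cap \ak_\ell$, so that $\mathcal{C}$ and $\mathcal{D}$ partition $\ak_i \cup \ak_j$. My goal is to prove $H(\kv_\mathcal{C}) \ge H(\xv_i) + H(\xv_j) - o(n) = 2n - o(n)$, which immediately yields $\sum_{\b \in \mathcal{C}} R_\b \ge 2$.

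First I will establish a joint decoding estimate
\[
I\left(\mv; \xv_i, \xv_j \middle| \xv_{[N]\setminus\{i,j\}}, \kv_{\ak_i \cup \ak_j}\right) \ge 2n - o(n).
\]
The hypothesis $i \notin \a_j$ forces $\s_j \subseteq [N]\setminus\{i,j\}$, so $\psi_j$ recovers $\xv_j$ from $\mv, \xv_{\s_j}, \kv_{\ak_j}$; with $\xv_j$ in addition to $\xv_{[N]\setminus\{i,j\}}$ the full side information $\xv_{\s_i}$ becomes available, after which $\psi_i$ recovers $\xv_i$. Fano-type arguments (Lemma~\ref{lem_enc_dec}) together with the independence of $(\xv_i,\xv_j)$ from the conditioning variables then give the bound.

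Next I will establish a joint privacy estimate
\[
I\left(\mv; \xv_i, \xv_j \middle| \xv_{[N]\setminus\{i,j\}}, \kv_\mathcal{D}\right) \le o(n).
\]
From the privacy condition at $\ell$ and the containment $\{i,j\} \subseteq [N]\setminus \a_\ell$, one has $I(\mv; \xv_i, \xv_j \mid \xv_{\s_\ell}, \kv_{\ak_\ell}) \le n\epsilon$. Two adjustments of the conditioning then bring this into the required form: introducing the files $\xv_\ell$ and $\xv_{[N]\setminus(\a_\ell\cup\{i,j\})}$ on the conditioning side, and dropping the keys $\kv_{\ak_\ell \setminus \mathcal{D}}$. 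Each of these steps proceeds along the lines of the derivation of~\eqref{Eq_feasbl3}--\eqref{eq:from_conv_of_feasibility_for_rate_sec} in the proof of Theorem~\ref{thm_key_access}, using Lemma~\ref{lem_ind} and the mutual independence of files and keys to move variables across the conditioning bar without inflating mutual information.

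The two ingredients are combined by the two-way chain-rule expansion
\begin{align*}
I\left(\mv; \xv_i\xv_j, \kv_\mathcal{C} \middle| \xv_{[N]\setminus\{i,j\}}, \kv_\mathcal{D}\right)
&= I\left(\mv; \xv_i\xv_j \middle| \xv_{[N]\setminus\{i,j\}}, \kv_\mathcal{D}\right) + I\left(\mv; \kv_\mathcal{C} \middle| \xv_{[N]}, \kv_\mathcal{D}\right)\\
&= I\left(\mv; \kv_\mathcal{C} \middle| \xv_{[N]\setminus\{i,j\}}, \kv_\mathcal{D}\right) + I\left(\mv; \xv_i\xv_j \middle| \xv_{[N]\setminus\{i,j\}}, \kv_{\ak_i\cup\ak_j}\right).
\end{align*}
Lower bounding the second line via the decoding estimate, upper bounding the first line via the privacy estimate, and using $H(\kv_\mathcal{C}) \ge I(\mv; \kv_\mathcal{C} \mid \xv_{[N]}, \kv_\mathcal{D})$ yields $H(\kv_\mathcal{C}) \ge 2n - o(n)$; dividing by $n$ and letting $\epsilon \to 0$ gives $\sum_{\b \in \mathcal{C}} R_\b \ge 2$. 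The main obstacle will be careful $o(n)$ bookkeeping in the privacy estimate; in the perfect-privacy setting these manipulations become equalities and the argument is as transparent as the in-line derivation of $R + R_{110} \ge 3$ following Table~\ref{Table_Eg1}.
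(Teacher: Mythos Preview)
Your proposal is correct and follows essentially the same route as the paper's proof: both establish the joint privacy bound $I(\mv;\xv_i,\xv_j\mid \xv_{[N]\setminus\{i,j\}},\kv_{\mathcal D})\le o(n)$ from privacy and decodability at user~$\ell$ via Lemma~\ref{lem_ind}, the joint decoding bound $I(\mv;\xv_i,\xv_j\mid \xv_{[N]\setminus\{i,j\}},\kv_{\ak_i\cup\ak_j})\ge 2n-o(n)$ using $\s_j\subseteq[N]\setminus\{i,j\}$, and then combine them through the same two-way chain-rule expansion of $I(\mv;\xv_i\xv_j,\kv_{\mathcal C}\mid\xv_{[N]\setminus\{i,j\}},\kv_{\mathcal D})$ to lower bound $H(\kv_{\mathcal C})$.
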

        \begin{proof}
            The proof for this claim is similar to the previous one.
            For a given $\epsilon > 0$, there exists large enough block-length $n$ such that the decoding condition and privacy conditions are satisfied, \emph{i.e.,}
            \begin{align*}
                \pr{\psi_i \left(\mv, \xv_{\s_i}, \kv_{\ak_i} \right) = \xv_i, i \in [N]} & \ge 1 - \epsilon,\\
                I \left( \mv; \xv_{[N] \setminus \a_i} \middle | \xv_{\s_i}, \kv_{\ak_i} \right) & \le n\epsilon, \text{ for all } i \in [N].
            \end{align*}
            Applying Fano's inequality to the decodability condition of user ${\ell}$, we have 
            \begin{align*}
            H \left( \xv_{{\ell}} \middle | \xv_{\s({\ell})}, \kv_{\ak_{\ell}}, \mv \right) \le h(\epsilon) + n \epsilon \le 2nh(\epsilon).
            \end{align*}
            The last inequality follows from $h(\epsilon)$ being at least $\epsilon$ for sufficiently small $\epsilon > 0$. Hence, from the above condition and the privacy condition for user $\ell$,
            \begin{multline*}
                I \left( \xv_{\ell}, \mv; \xv_{[N] \setminus \a_{\ell}} \middle | \xv_{\s_{\ell}}, \kv_{\ak_{\ell}} \right) \le I \left(\mv; \xv_{[N] \setminus \a_{\ell}} \middle | \xv_{\s_{\ell}}, \kv_{\ak_{\ell}} \right) + H \left( \xv_{{\ell}} \middle | \xv_{\s({\ell})}, \kv_{\ak_{\ell}}, \mv \right) \\ \le h(\epsilon) + 2nh(\epsilon) \le 3nh(\epsilon)
            \end{multline*}
            Hence,
            \begin{multline}
                I \left( \mv; \xv_{[N] \setminus \a_{\ell}} \middle | \xv_{\a_{\ell}}, \kv_{\ak_{\ell}} \right) \le 3n h(\epsilon)
                \stackrel{(a)}{\implies} I \left( \mv; \xv_{i}, \xv_{j} \middle | \xv_{[N] \setminus \{i, j\}}, \kv_{\ak_{\ell}} \right) \le 3n h(\epsilon) \\
                \stackrel{(b)}{\implies} I \left( \mv; \xv_{i}, \xv_j \middle | \xv_{[N] \setminus \{i, j\}}, \kv_{\left( \ak_i \cup \ak_j \right ) \cap \ak_{\ell} } \right) \le 3n h(\epsilon) \label{Eq_KeyRate_2}.
            \end{multline}
            Here (a) follows from chain rule and our assumption that $i, j \notin \a_{\ell}$ and (b) follows from Lemma~\ref{lem_ind}.
            Applying Fano's inequality on the decodability condition of user $j$, we have 
            \begin{align*}
            H \left( \xv_{j} \middle | \xv_{\s_j}, \kv_{\ak_j}, \mv \right) \le h(\epsilon) + n \epsilon \le 2n h(\epsilon).
            \end{align*}
            But,
            \begin{align*}
                I \left( \mv; \xv_{i}, \xv_j \middle | \xv_{[N]\setminus\{i, j\}}, \kv_{ \ak_{i} \cup \ak_{j} } \right) =
                I \left( \mv; \xv_j  \middle | \xv_{[N]\setminus\{i, j\}}, \kv_{\ak_{i} \cup \ak_{j}} \right)
                + I \left( \mv; \xv_i \middle | \xv_{[N]\setminus\{i\}}, \kv_{\ak_{i} \cup \ak_{j}} \right)
            \end{align*}
            Since $i \notin \s_j$, $\s_j \subseteq [N] \setminus \{i, j\}$, 
            \begin{multline*}
                I \left( \mv; \xv_j \middle | \xv_{[N]\setminus\{i, j\}}, \kv_{\ak_{i} \cup \ak_{j}} \right) = H \left( \xv_j \middle |  \xv_{[N]\setminus\{i, j\}}, \kv_{\ak_{i} \cup \ak_{j}} \right) - H \left( \xv_j | \mv, \xv_{[N]\setminus\{i, j\}}, \kv_{\ak_{i} \cup \ak_{j}} \right)\\
                \ge H \left( \xv_j \right) - 2n h(\epsilon).
            \end{multline*}
            Where the last inequality follows from the fact that the first term is equal to $H(\xv_j)$ as $\xv_j$ is independent of $\left( \xv_{[N]\setminus\{i, j\}}, \kv_{\ak_{i} \cup \ak_{j}}\right)$ and the second term is at most $2n h(\epsilon)$ by the decodability condition at user $j$.
            Similarly, since $\s_i \subseteq [N] \setminus \{i\}$, applying Fano's inequality to decodability condition at user $i$, we have
            \begin{multline*}
                I \left(\mv; \xv_i \middle | \xv_{[N]\setminus\{i\}}, \kv_{\ak_{i} \cup \ak_{j}} \right) = H \left(\xv_i \middle |  \xv_{[N]\setminus\{i\}}, \kv_{\ak_{i} \cup \ak_{j}} \right) - H \left(\xv_i \middle | \mv, \xv_{[N]\setminus\{i\}}, \kv_{\ak_{i} \cup \ak_{j}} \right)\\
                \ge H \left(\xv_i \right) - 2n h(\epsilon).
            \end{multline*}
            Hence,
            \begin{align}
                \label{Eq_KeyRate_3}
                I \left( \mv; \xv_{i}, \xv_j \middle | \xv_{[N]\setminus\{i, j\}}, \kv_{\ak_{i} \cup \ak_{j}} \right) \ge H \left( \xv_i \right) + H \left( \xv_j \right) - 2n h(\epsilon).
            \end{align}
            Then,
            \begin{align*}
                & H \left( \kv_{ \left( \ak_i \cup \ak_j \right) \setminus \ak_{\ell}} \right) \stackrel{(a)}{=} H \left( \kv_{ \left( \ak_i \cup \ak_j \right) \setminus \ak_{\ell}} \middle | \xv_{[N]}, \kv_{ \left( \ak_{i} \cup \ak_{j} \right) \cap \ak_{\ell}} \right)
                \ge I \left( \mv ; \kv_{ \left( \ak_i \cup \ak_j \right) \setminus \ak_{\ell}} \middle | \xv_{[N]}, \kv_{ \left( \ak_{i} \cup \ak_{j} \right) \cap \ak_{\ell}} \right)\\
                \stackrel{(b)}{=} & I \left( \mv ; \xv_i, \xv_j \middle | \xv_{[N]\setminus\{i, j\}}, \kv_{ \left( \ak_{i} \cup \ak_{j} \right) \cap \ak_{\ell}} \right) + I \left( \mv ; \kv_{ \left( \ak_i \cup \ak_j \right) \setminus \ak_{\ell}} \middle | \xv_{[N]}, \kv_{ \left( \ak_{i} \cup \ak_{j} \right) \cap \ak_{\ell}} \right) - 3n h(\epsilon)\\
                = & I \left( \mv ; \kv_{ \left( \ak_i \cup \ak_j \right) \setminus \ak_{\ell}}, \xv_i, \xv_j \middle | \xv_{[N] \setminus \{i, j\}}, \kv_{ \left( \ak_{i} \cup \ak_{j} \right) \cap \ak_{\ell}} \right) - 3n h(\epsilon)\\
                = & I \left( \mv ; \kv_{ \left( \ak_i \cup \ak_j \right) \setminus \ak_{\ell}} \middle | \xv_{[N] \setminus \{i, j\}}, \kv_{ \left( \ak_{i} \cup \ak_{j} \right) \cap \ak_{\ell}} \right)
                + I \left( \mv ; \xv_i, \xv_j \middle | \xv_{[N]\setminus\{i, j\}}, \kv_{\ak_{i} \cup \ak_{j}} \right) - 3n h(\epsilon)\\
                \stackrel{(c)}{\ge} & H \left( \xv_i \right) + H \left( \xv_j \right) - 5nh(\epsilon).
            \end{align*}
            Where (a) follows from the independence of $\kv_{ \left( \ak_i \cup \ak_j \right) \setminus \ak_{\ell}}$ and $\left( \xv_{[N]}, \kv_{ \left( \ak_{i} \cup \ak_{j} \right) \cap \ak_{\ell}}\right)$, (b) follows from~\eqref{Eq_KeyRate_2} and (c) follows from~\eqref{Eq_KeyRate_3}. The result follows from this inequality by substituting $H(\xv_i) = n$ for $i \in [N]$ and taking $\epsilon$ to zero since $h(\epsilon) \rightarrow 0$ when $\epsilon \rightarrow 0$.
        \end{proof}
        \begin{paragraph}{Bounds on transmission rate}
            We describe three bounds of this kind. Although the first bound in Claim~\ref{Claim_MAIS} is applicable to any general private index coding problem, the latter three bounds apply specifically to $N = 3$.
        \end{paragraph}
        \begin{claim}\label{Claim_MAIS}
            The transmission rate of a private index coding scheme for a graph $G$ is lower bounded by the number of vertices in the \emph{maximum acyclic induced subgraph} of $G$, denoted by $\mathsf{MAIS}(G)$.
        \end{claim}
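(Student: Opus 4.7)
The plan is to reduce the claim to the classical MAIS lower bound for ordinary (non-private) index coding, by invoking the reduction already established just before Theorem~\ref{thm:priv_nopriv} which equates the optimal transmission rates of the two problems on the same side information structure $G$.

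First, I would observe that from any $(n,R,R_{\b}:\b \in \ak)$ private index coding scheme with vanishing decoding error, one can extract an ordinary $(n,R)$ index code of the same transmission rate whose per-user decoding error is no larger than that of the private scheme. This is done by the averaging argument sketched before Theorem~\ref{thm:priv_nopriv}: since the keys $\kv_{\ak}$ are independent of the messages and the overall probability of decoding error is small, there must exist a fixed realization of $\kv_{\ak}$ (which the server and receivers can hard-code into the encoder $\phi$ and decoders $\psi_i$) for which the resulting deterministic non-private scheme still achieves small decoding error at every user. Letting the error tend to $0$, this shows that every transmission rate $R$ achievable by a private index code on $G$ is also achievable by an asymptotic (non-private) index code on $G$.

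Second, I would invoke the well-known MAIS bound of Bar-Yossef \emph{et al.}~\cite{YossefBJK11}: for any zero-error index code on $G$, the transmission rate is at least $\mathsf{MAIS}(G)$. The standard entropy argument picks a maximum acyclic induced subset $A \subseteq V(G)$ with $|A| = \mathsf{MAIS}(G)$, orders its vertices in reverse topological order of the induced subgraph, and uses the decodability condition at each user $i \in A$ together with the side information $\xv_{\s_i \cap A}$ (already recovered from earlier steps of the induction, since $A$ is acyclic) to conclude $H(\mv) \geq \sum_{i \in A} H(\xv_i) = n\,\mathsf{MAIS}(G)$, so $R \geq \mathsf{MAIS}(G)$. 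Because asymptotic and zero-error index coding rates coincide~\cite{langberg}, the same lower bound holds under vanishing-error decoding.

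Combining the two steps gives $R \geq \mathsf{MAIS}(G)$ for every achievable transmission rate of the private index code on $G$, which is exactly the claim. There is no real obstacle here beyond careful bookkeeping: the work is already done by Theorem~\ref{thm:priv_nopriv} together with the classical index coding lower bound, and the claim follows as an immediate corollary.
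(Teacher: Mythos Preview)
Your proposal is correct and follows essentially the same approach as the paper's own proof: reduce to ordinary (non-private) index coding via the averaging argument preceding Theorem~\ref{thm:priv_nopriv}, then invoke the classical $\mathsf{MAIS}$ lower bound of~\cite{YossefBJK11}. The paper's proof is simply a two-sentence version of your argument, so you have recovered it with additional (but unnecessary) detail.
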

        \begin{proof}
            The index coding rate for a graph $G$ is bounded from below by $\mathsf{MAIS}(G)$~\cite{YossefBJK11}.
            The claim follows from the observation that a lower bound on the transmission rate for (non-private) index coding is also a lower bound for the transmission rate for the private index coding.
        \end{proof}
        \begin{claim}\label{clm:g_1}
            If the graph of the private index coding problem is a subgraph of $G_1$ given in Figure~\ref{G_Tran_1}, then the transmission rate is lower bounded as
            \begin{align*}
                R \ge 3 - R_{110}.
            \end{align*}
        \end{claim}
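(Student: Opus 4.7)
The plan is to adapt the explicit calculation displayed for the example graph in Table~\ref{Table_Eg1} in the body of the paper. I would start from $nR \ge H(\mv) \ge I(\mv;\xv_{[3]},\kv_{\ak})$ and expand the right-hand side by the chain rule into the five conditional mutual information terms displayed in~\eqref{Eq_rate_lowr}. Three of these will contribute the entropies $H(\xv_1)$, $H(\xv_2)$, $H(\xv_3)$, one for each user, as follows. The $\xv_1$ term equals $H(\xv_1)$ by the decoding condition at user~$1$. The $\xv_3$ term equals $H(\xv_3)$ by the decoding condition at user~$3$, after using the independence of files and keys (Lemma~\ref{lem_ind}) to strip irrelevant conditioning. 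The $H(\xv_2)$ contribution comes out of the final key-revealing term through a privacy-derived inequality of the form
\begin{align*}
I\!\left(\mv;\kv_{010},\kv_{110}\;\middle|\;\xv_{[3]},\kv_{101}\right)\;\ge\;H(\xv_2),
\end{align*}
which is an instance of~\eqref{eq:from_conv_of_feasibility_for_rate_sec} (equivalently Lemma~\ref{lem_ij}) applied with $i=2$, $j=3$, permitted because $2\notin\a_3$ in $G_1$. The remaining two terms in the chain-rule expansion are non-negative and are simply discarded, and the only cost incurred is the subtraction of $H(\kv_{110})$ that is picked up when one adds $H(\kv_{110})$ to upper-bound the final term of~\eqref{Eq_rate_lowr} by the quantity appearing on the left of the displayed privacy inequality. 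Summing yields $H(\mv)\ge \sum_{i\in[3]} H(\xv_i) - H(\kv_{110})$, and dividing by $n$ gives $R \ge 3 - R_{110}$.

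For any subgraph $G$ of $G_1$, each user has weakly less side information, so the decoding conditions at users~$1,2,3$ remain at least as strong and the condition $2\notin\a_3$ continues to hold; hence the chain of inequalities carries through verbatim.

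The main obstacle will be choosing the chain-rule ordering and tracking the conditioning sets so that at each step the conditioning is in the exact form needed either to invoke the decoding condition for some user $i$ (for which the conditioning should be essentially $\xv_{\s_i}\cup\kv_{\ak_i}$) or to invoke~\eqref{eq:from_conv_of_feasibility_for_rate_sec} for the pair $(i,j)=(2,3)$ (for which the conditioning must not contain any key in $\ak_2\setminus\ak_3$). The bookkeeping — inserting and removing inert files and keys via Lemma~\ref{lem_ind} — is where the care lies, but it is structurally identical to the worked example in the body and presents no genuine new difficulty.
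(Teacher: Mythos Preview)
Your proposal is correct and follows essentially the same approach as the paper's proof: the same chain-rule expansion of $I(\mv;\xv_{[3]},\kv_{\ak})$, the same use of decoding at users~$1$ and~$3$ (via Lemma~\ref{lem_ind} to strip extraneous conditioning), and the same appeal to Lemma~\ref{lem_ij} with $(i,j)=(2,3)$ to extract the $H(\xv_2)-H(\kv_{110})$ term. The paper groups the expansion into three terms rather than five and handles the subgraph case by noting that any scheme for a subgraph of $G_1$ is automatically a scheme for $G_1$, but these are cosmetic differences only.
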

        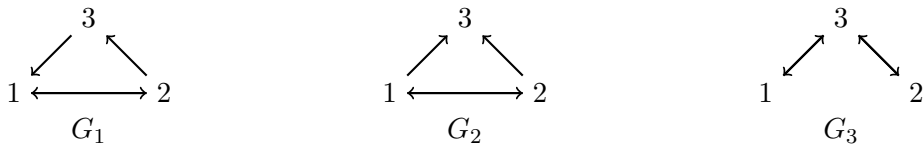
\begin{figure}
            \centering
            \begin{tikzpicture}
                \node (3) at (-1, 0) [draw=none]{$1$};
                \node (2) at (1, 0)  [draw=none]{$2$};
                \node (1) at (0, 1)  [draw=none]{$3$};
                \node (a) at (0, -0.5)  [draw=none]{$G_1$};
                \path[->,thick] (2) edge (1);
                \path[->,thick] (2) edge (3);
                \path[->,thick] (3) edge (2);
                \path[->,thick] (1) edge (3);

                \node (13) at (-1 + 5, 0) [draw=none]{$1$};
                \node (12) at (1 + 5, 0)  [draw=none]{$2$};
                \node (11) at (0 + 5, 1)  [draw=none]{$3$};
                \node (b) at (0 + 5, -0.5)  [draw=none]{$G_2$};
                \path[->,thick] (12) edge (11);
                \path[->,thick] (12) edge (13);
                \path[->,thick] (13) edge (12);
                \path[->,thick] (13) edge (11);

                \node (23) at (-1 + 10, 0) [draw=none]{$1$};
                \node (22) at (1 + 10, 0)  [draw=none]{$2$};
                \node (21) at (0 + 10, 1)  [draw=none]{$3$};
                \node (c) at (0 + 10, -0.5)  [draw=none]{$G_3$};
                \path[->,thick] (21) edge (22);
                \path[->,thick] (22) edge (21);
                \path[->,thick] (23) edge (21);
                \path[->,thick] (21) edge (23);
            \end{tikzpicture}
            \caption{Graphs $G_1, G_2$ and $G_3$ used in the statements of Claims~\ref{clm:g_1},~\ref{clm:g_2}~and~\ref{clm:g_3}, respectively.}
            \label{G_Tran_1}
        \end{figure}
        \begin{proof}
            For block length $n$, we have $nR \ge H(\mv) \ge I(\mv ; \xv_1, \xv_2, \xv_3, \kv_{001}, \kv_{010}, \kv_{100}, \kv_{011}, \kv_{101}, \kv_{110})$, but
            \begin{align}
                I \left(\mv ; \xv_1, \xv_2, \xv_3, \kv_{001}, \ldots , \kv_{110} \right) \; =\;  & I \left(\mv ; \xv_1, \xv_2, \kv_{100}, \kv_{101}, \kv_{110} \right) \nonumber\\
                & + I \left(\mv ; \xv_3, \kv_{001}, \kv_{011} \middle | \xv_1, \xv_2, \kv_{100}, \kv_{101}, \kv_{110} \right) \nonumber\\
                & + I \left(\mv ; \kv_{010} \middle | \xv_1, \xv_2, \xv_3, \kv_{100}, \kv_{101}, \kv_{110}, \kv_{001}, \kv_{011} \right).
                \label{eq_clm_4_main}
            \end{align}
            The first term may be bounded as,
            \begin{align*}
                I \left(\mv ; \xv_1, \xv_2, \kv_{100}, \kv_{101}, \kv_{110} \right) & \ge I \left(M ; \xv_1 \middle | \xv_2, \kv_{100}, \kv_{101}, \kv_{110} \right) = I \left(M ; \xv_1 \middle | \xv_{\s_{1}}, \kv_{\ak_1} \right).
            \end{align*}
            Consider the second term,
            \begin{align*}
                I \left(\mv ; \xv_3, \kv_{001}, \kv_{011} \middle | \xv_1, \xv_2, \kv_{100}, \kv_{101}, \kv_{110} \right) & \ge I \left(\mv ; \xv_3 \middle | \xv_1, \xv_2, \kv_{100}, \kv_{101}, \kv_{110}, \kv_{001}, \kv_{011} \right)\\
                & \stackrel{(a)}{\ge} I \left(\mv ; \xv_3 \middle | \xv_1, \kv_{101}, \kv_{001}, \kv_{011} \right)\\
                & = I \left(\mv ; \xv_3 \middle | \xv_{\s_{3}}, \kv_{\ak_3} \right).
            \end{align*}
            Here, (a) follows from Lemma~\ref{lem_ind}.
            The third term may be bounded as follows,
            \begin{align}
                I \left(\mv ; \kv_{010} \middle | \xv_{[3]}, \kv_{100}, \kv_{101}, \kv_{110}, \kv_{001}, \kv_{011} \right) & = I \left(\mv ; \kv_{010}, \kv_{110} \middle | \xv_{[3]}, \kv_{100}, \kv_{101}, \kv_{001}, \kv_{011} \right)  \nonumber\\
                & \qquad - I \left(\mv ; \kv_{110} \middle | \xv_1, \xv_2, \xv_3, \kv_{100}, \kv_{101}, \kv_{001}, \kv_{011} \right)\nonumber\\
                & \stackrel{(a)}{\ge} I \left(\mv ; \kv_{010}, \kv_{110} \middle | \xv_{[3]}, \kv_{011} \right)  \nonumber\\
                & \qquad - I \left(\mv ; \kv_{110} \middle | \xv_1, \xv_2, \xv_3, \kv_{100}, \kv_{101}, \kv_{001}, \kv_{011} \right)\nonumber\\
                & \stackrel{(b)}{\ge} I \left(\mv ; \kv_{\ak_2 \setminus \ak_3} \middle | \xv_{[3]}, \kv_{\ak_2 \cap \ak_3} \right) - H\left(\kv_{110}\right).
                \label{Eq_tran_2}
            \end{align}
            Where, (a) follows from Lemma~\ref{lem_ind} and (b) follows from the subtracted mutual information term being upper bounded by $H(\kv_{110})$.
            From~\eqref{eq_clm_4_main} and the bounds on each of its terms, we have
            \begin{align}
                nR \ge I \left(\mv ; \xv_1 \middle | \xv_{\s_{1}}, \kv_{\ak_1} \right) + I \left(\mv ; \xv_3 \middle | \xv_{\s_{3}}, \kv_{\ak_3} \right) + I \left(\mv ; \kv_{\ak_2 \setminus \ak_3} \middle | \xv_{[3]}, \kv_{\ak_2 \cap \ak_3} \right) - H\left(\kv_{110}\right) \label{eq:clm-4-rate}
            \end{align}
            We have already established that for sufficiently small $\epsilon > 0$, there exists a large enough block length $n$ such that
            \begin{align*}
                H \left(\xv_1 \middle | \xv_{\s_{1}}, \kv_{\ak_1}, \mv \right) & \stackrel{(a)}{\le} 2n h(\epsilon) \implies I \left(\mv ; \xv_1 \middle | \xv_{\s_{1}}, \kv_{\ak_1} \right) \ge H\left(\xv_1\right) - 2nh(\epsilon),\\
                H \left(\xv_3 \middle | \xv_{\s_{3}}, \kv_{\ak_3}, \mv \right) & \stackrel{(b)}{\le} 2n h(\epsilon)\implies I \left(\mv ; \xv_3 \middle | \xv_{\s_{3}}, \kv_{\ak_3} \right) \ge H\left(\xv_3\right) - 2nh(\epsilon),
            \end{align*}
            and
            \begin{align*}
                I \left(\mv ; \kv_{\ak_2 \setminus \ak_3} \middle | \xv_{[3]}, \kv_{\ak_2 \cap \ak_3} \right) \stackrel{(c)}{\ge} H(\xv_2) - 6nh(\epsilon).
            \end{align*}
            Where (a) (respectively (b)) can be shown using Fano's inequality and decoding condition at user 1 (respectively user 3) and (c) follows from Lemma~\ref{lem_ij} since $2 \notin \a_3$. 
            From the above observations and the inequality~\eqref{eq:clm-4-rate}, we have
            \begin{align*}
                nR \ge H\left(\xv_1\right) + H\left(\xv_3\right) + H\left(\xv_2\right) - h\left(\kv_{110}\right) - 10nh(\epsilon).
            \end{align*}
            Since for any block length $n$, $H\left(\xv_i\right)/n = 1$ for all $i \in [N]$, the claim now follows from taking $\epsilon$ to 0 .

            Consider any subgraph $H$ of $G_1$. A private index coding scheme for side information structure $H$ is a valid private index coding scheme for $G_1$ too. Hence this bound on the transmission rate applies the index coding schemes for the subgraph too. This proves the claim.
        \end{proof}
        \begin{claim}\label{clm:g_2}
            If the graph of the private index coding problem is a subgraph of $G_2$ given in Figure~\ref{G_Tran_1}, then the transmission rate is lower bounded as
            \begin{align*}
                R \ge 3 - R_{110}.
            \end{align*}
        \end{claim}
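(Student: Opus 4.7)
The plan is to mirror the proof of Claim~\ref{clm:g_1} by starting from $nR \geq H(\mv) \geq I(\mv; \xv_{[3]}, \kv_{\ak})$ and decomposing the right-hand side via the chain rule, but the three terms and the way they are bounded will differ qualitatively from the $G_1$ case. The key structural fact for $G_2$ is that $\s_3 = \emptyset$, so $1, 2 \notin \a_3$; correspondingly, I will extract $H(\xv_3)$ from user~3's decoding condition and extract $H(\xv_1)$ and $H(\xv_2)$ \emph{jointly} from two applications of the feasibility bound in Lemma~\ref{lem_ij}, one for the pair $(1, 3)$ and one for $(2, 3)$, both of which naturally involve the common key $\kv_{110}$.

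Concretely, I will use the decomposition
\begin{equation*}
I(\mv; \xv_{[3]}, \kv_{\ak}) = I(\mv; \xv_3, \kv_{\ak_3}) + I(\mv; \xv_1, \xv_2 \mid \xv_3, \kv_{\ak_3}) + I(\mv; \kv_{\ak \setminus \ak_3} \mid \xv_{[3]}, \kv_{\ak_3}).
\end{equation*}
The first term is at least $H(\xv_3) - 2nh(\epsilon)$ by Lemma~\ref{lem_enc_dec} applied to user~3 (using $\s_3 = \emptyset$), and the second is non-negative. The heart of the proof is to show that the third term contributes at least $H(\xv_1) + H(\xv_2) - H(\kv_{110}) - O(nh(\epsilon))$. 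For this I will apply Lemma~\ref{lem_ij} to the pairs $(1, 3)$ and $(2, 3)$ and use Lemma~\ref{lem_ind} to enlarge the conditioning from $\kv_{101}$ (respectively $\kv_{011}$) up to the full $\kv_{\ak_3}$, yielding
\begin{align*}
I(\mv; \kv_{100}, \kv_{110} \mid \xv_{[3]}, \kv_{\ak_3}) &\geq H(\xv_1) - 6nh(\epsilon),\\
I(\mv; \kv_{010}, \kv_{110} \mid \xv_{[3]}, \kv_{\ak_3}) &\geq H(\xv_2) - 6nh(\epsilon).
\end{align*}
Setting $x := I(\mv; \kv_{110} \mid \xv_{[3]}, \kv_{\ak_3}) \leq H(\kv_{110})$ and expanding each of the inequalities above via the chain rule with $\kv_{110}$ pulled out first gives $I(\mv; \kv_{100} \mid \xv_{[3]}, \kv_{\ak_3}, \kv_{110}) \geq H(\xv_1) - x - 6nh(\epsilon)$ and the analogous inequality for $\kv_{010}$. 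Finally, expanding the third term of the original decomposition in the order $\kv_{110}, \kv_{100}, \kv_{010}$ and using Lemma~\ref{lem_ind} once more to lower bound $I(\mv; \kv_{010} \mid \xv_{[3]}, \kv_{\ak_3}, \kv_{110}, \kv_{100})$ by $I(\mv; \kv_{010} \mid \xv_{[3]}, \kv_{\ak_3}, \kv_{110})$, the three pieces add up to at least $x + (H(\xv_1) - x) + (H(\xv_2) - x) = H(\xv_1) + H(\xv_2) - x \geq H(\xv_1) + H(\xv_2) - H(\kv_{110})$, up to $O(nh(\epsilon))$. Dividing by $n$ and letting $\epsilon \to 0$ yields $R \geq 3 - R_{110}$, and the argument extends to every subgraph of $G_2$ because removing edges only shrinks $\s_3$, which is already empty, so $1, 2 \notin \a_3$ is preserved and both applications of Lemma~\ref{lem_ij} remain valid.

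The main obstacle will be avoiding a double-count of $\kv_{110}$: naively summing the two feasibility inequalities gives a bound of $H(\xv_1) + H(\xv_2) - 2 H(\kv_{110})$, which is too weak. The resolution is to pull out the single term $I(\mv; \kv_{110} \mid \xv_{[3]}, \kv_{\ak_3})$ at the very start of the chain-rule expansion of the third term, so that both feasibility inequalities share the one common debit $x \leq H(\kv_{110})$ rather than each paying it separately.
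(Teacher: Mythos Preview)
Your proof is correct, but it takes a different route from the paper's. The paper decomposes $I(\mv;\xv_{[3]},\kv_{\ak})$ so as to extract $H(\xv_3)$ from the decoding condition at user~3, $H(\xv_1)$ from the decoding condition at user~1 (using $\s_1=\{2,3\}$ in $G_2$), and then applies Lemma~\ref{lem_ij} \emph{once}, to the pair $(2,3)$, to get $H(\xv_2)-H(\kv_{110})$; this is a direct transplant of the Claim~\ref{clm:g_1} argument. You instead throw away the middle term and recover both $H(\xv_1)$ and $H(\xv_2)$ from the key block via two applications of Lemma~\ref{lem_ij}, to $(1,3)$ and $(2,3)$, with the shared-$\kv_{110}$ trick ensuring you pay $H(\kv_{110})$ only once. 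Your approach uses only the single structural fact $\s_3=\emptyset$ (hence $1,2\notin\a_3$), which makes the subgraph extension immediate and keeps the argument symmetric in users $1$ and $2$; it is in fact the same pattern the paper itself uses later for Claim~\ref{clm:g_3}. The paper's route is slightly leaner (one invocation of Lemma~\ref{lem_ij} rather than two) but relies on the specific side information $\s_1=\{2,3\}$, so for subgraphs it falls back on the generic ``a scheme for a subgraph is a scheme for the supergraph'' reduction rather than on the proof holding verbatim.
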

        \begin{proof}
            Similar to the previous claim, for a block length $n$, we consider the quantity $I(\mv ; \xv_{[3]}, \kv_{001}, \ldots, \kv_{110})$ which lower bounds $H(M)$.
            \begin{align}
                I \left(\mv ; \xv_1, \xv_2, \xv_3, \kv_{001}, \ldots , \kv_{110} \right) \; =\;  & I \left(\mv ; \xv_3, \kv_{001}, \kv_{011}, \kv_{101} \right)\nonumber\\
                & + I \left(\mv ; \xv_1, \xv_2, \kv_{100}, \kv_{110} \middle | \xv_3, \kv_{001}, \kv_{011}, \kv_{101} \right)\nonumber\\
                & + I \left(\mv ; \kv_{010} \middle | \xv_1, \xv_2, \xv_3, \kv_{001}, \kv_{011}, \kv_{101}, \kv_{100}, \kv_{110} \right).
                \label{eq_clm_5_main}
            \end{align}
            The first term may be bounded as follows,
            \begin{align*}
                I \left(\mv ; \xv_3, \kv_{001}, \kv_{011}, \kv_{101} \right) \ge I \left(\mv ; \xv_3 \middle | \kv_{001}, \kv_{011}, \kv_{101} \right) = I \left(\mv ; \xv_3 \middle | \xv_{\s_3}, \kv_{\ak_3} \right).
            \end{align*}
            Note that here, $\s_3 = \emptyset$.
            Consider the second term,
            \begin{align*}
                I \left(\mv ; \xv_1, \xv_2, \kv_{100}, \kv_{110} \middle | \xv_3, \kv_{001}, \kv_{011}, \kv_{101} \right) & \ge I \left(\mv ; \xv_1 \middle | \xv_2, \xv_3, \kv_{001}, \kv_{011}, \kv_{101}, \kv_{100}, \kv_{110} \right)\\
                & \stackrel{(a)}{\ge} I \left(\mv ; \xv_1 \middle | \xv_2, \xv_3, \kv_{101}, \kv_{100}, \kv_{110} \right)\\
                & = I \left(\mv ; \xv_1 \middle | \xv_{\s_1}, \kv_{\ak_1} \right).
            \end{align*}
            Here, (a) follows from Lemma~\ref{lem_ind}.
            We bound the last term identically as we did in the previous claim.
            \begin{align*}
                I \left(\mv ; \kv_{010} \middle | \xv_{[3]}, \kv_{100}, \kv_{101}, \kv_{110}, \kv_{001}, \kv_{011} \right) & = I \left(\mv ; \kv_{010}, \kv_{110} \middle | \xv_{[3]}, \kv_{100}, \kv_{101}, \kv_{001}, \kv_{011} \right)\\
                & \qquad - I \left(\mv ; \kv_{110} \middle | \xv_1, \xv_2, \xv_3, \kv_{100}, \kv_{101}, \kv_{001}, \kv_{011} \right)\nonumber\\
                & \ge I \left(\mv ; \kv_{\ak_2 \setminus \ak_3} \middle | \xv_{[3]}, \kv_{\ak_2 \cap \ak_3} \right) - H\left(\kv_{110}\right).
            \end{align*}
            From~\eqref{eq_clm_5_main} and the bounds on each of its terms, we have
            \begin{align*}
                H \left(\mv \right) \ge I \left(\mv ; \xv_3 \middle | \xv_{\s_3}, \kv_{\ak_3} \right) + I \left(\mv ; \xv_1 \middle | \xv_{\s_1}, \kv_{\ak_1} \right) + I \left(\mv ; \kv_{\ak_2 \setminus \ak_3} \middle | \xv_{[3]}, \kv_{\ak_2 \cap \ak_3} \right) - H\left(\kv_{110}\right).
            \end{align*}
            This is identical to the inequality~\ref{Eq_tran_2} in the proof of Claim~\ref{clm:g_1} since here also $2 \notin \a_3$.
            Hence, we obtain the bound
            \begin{align*}
                & R \ge 3 - R_{110}
            \end{align*}
            the same way we obtained it in the previous claim.
            Since this bound on the transmission rate applies the index coding schemes for the subgraphs too, the claim follows.

        \end{proof}
        \begin{claim}\label{clm:g_3}
            If the graph of the private index coding problem is a subgraph of $G_3$ given in Figure~\ref{G_Tran_1}, then the transmission rate 
            \begin{align*}
                R_{M} \ge 3 - R_{101} - R_{011} .
            \end{align*}
        \end{claim}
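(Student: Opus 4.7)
The plan is to adapt the proof template of Claims~\ref{clm:g_1} and~\ref{clm:g_2} to exploit two structural features of $G_3$: both $1 \notin \a_2$ and $2 \notin \a_1$ (so Lemma~\ref{lem_ij} can be invoked twice, with $(i,j)=(1,2)$ and $(i,j)=(2,1)$), while $\s_3 = \{1,2\}$ (so user $3$'s decoding, via Lemma~\ref{lem_enc_dec}, directly extracts $H(\xv_3)$). For $G_3$, $\ak_1=\{100,110,101\}$, $\ak_2=\{010,110,011\}$, $\ak_3=\{001,101,011\}$, and these tools supply the three ingredients
\begin{align*}
I(\mv;\kv_{100},\kv_{101}\mid \xv_{[3]},\kv_{110}) &\ge H(\xv_1) - 6nh(\epsilon),\\
I(\mv;\kv_{010},\kv_{011}\mid \xv_{[3]},\kv_{110}) &\ge H(\xv_2) - 6nh(\epsilon),\\
I(\mv;\xv_3\mid \xv_1,\xv_2,\kv_{\ak_3}) &\ge H(\xv_3) - 2nh(\epsilon).
\end{align*}

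Starting from $nR \ge H(\mv) \ge I(\mv;\xv_{[3]},\kv_{\ak})$, I would use the chain-rule expansion
\begin{align*}
I(\mv;\xv_{[3]},\kv_{\ak}) &= I(\mv;\xv_1,\xv_2,\kv_{001},\kv_{101},\kv_{011}) + I(\mv;\xv_3\mid \xv_1,\xv_2,\kv_{001},\kv_{101},\kv_{011})\\
&\quad + I(\mv;\kv_{100},\kv_{010},\kv_{110}\mid \xv_{[3]},\kv_{001},\kv_{101},\kv_{011}),
\end{align*}
discarding the first (nonnegative) term and replacing the second by $H(\xv_3) - 2nh(\epsilon)$ via the third ingredient above (its conditioning is exactly $(\xv_{\s_3},\kv_{\ak_3})$). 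For the third term I would shift $\kv_{101}$ and $\kv_{011}$ from the conditioning into the target, at a cost of at most $H(\kv_{101}) + H(\kv_{011})$, obtaining a lower bound of $I(\mv;\kv_{\ak\setminus\{001\}}\mid \xv_{[3]},\kv_{001}) - H(\kv_{101}) - H(\kv_{011})$. Splitting this quantity by chain rule into $I(\mv;\kv_{100},\kv_{101},\kv_{110}\mid \xv_{[3]},\kv_{001})$ and $I(\mv;\kv_{010},\kv_{011}\mid \xv_{[3]},\kv_{001},\kv_{100},\kv_{101},\kv_{110})$, and then using Lemma~\ref{lem_ind} to shrink the conditioning of each down to $(\xv_{[3]},\kv_{110})$, makes the first two ingredients directly applicable and contributes $H(\xv_1) + H(\xv_2) - 12nh(\epsilon)$.

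Putting the pieces together yields $nR \ge H(\xv_1) + H(\xv_2) + H(\xv_3) - H(\kv_{101}) - H(\kv_{011}) - 14nh(\epsilon)$; dividing by $n$ and sending $\epsilon \to 0$ gives the claimed bound $R \ge 3 - R_{101} - R_{011}$. Extension to subgraphs of $G_3$ is automatic by the argument used at the end of the proofs of Claims~\ref{clm:g_1} and~\ref{clm:g_2}: any private index code for a subgraph of $G_3$ is also a private index code for $G_3$ itself, so the rate lower bound transfers. The main bookkeeping subtlety will be ensuring that the single shared key $\kv_{110}$ (the only one available jointly to users $1$ and $2$) supports both invocations of Lemma~\ref{lem_ij} without being paid for twice; this is handled precisely by keeping $\kv_{110}$ inside the conditioning throughout, so that only $\kv_{101}$ and $\kv_{011}$ are extracted as target entropies, matching the $R_{101} + R_{011}$ on the right-hand side of the claim.
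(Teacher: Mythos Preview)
Your proposal is correct and follows essentially the same approach as the paper: both expand $I(\mv;\xv_{[3]},\kv_{\ak})$ by first isolating $I(\mv;\xv_3\mid \xv_{\s_3},\kv_{\ak_3})$ and then manipulating the remaining $I(\mv;\kv_{100},\kv_{010},\kv_{110}\mid \xv_{[3]},\kv_{001},\kv_{101},\kv_{011})$ so as to produce the two quantities $I(\mv;\kv_{\ak_1\setminus\ak_2}\mid \xv_{[3]},\kv_{\ak_1\cap\ak_2})$ and $I(\mv;\kv_{\ak_2\setminus\ak_1}\mid \xv_{[3]},\kv_{\ak_1\cap\ak_2})$ at a total cost of $H(\kv_{101})+H(\kv_{011})$, then invoking Lemma~\ref{lem_ij} twice. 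The only difference is bookkeeping: the paper splits the residual term key-by-key ($\kv_{110}$, then $\kv_{010}$, then $\kv_{100}$) and moves $\kv_{011},\kv_{101}$ into the target separately in each piece, whereas you move both into the target up front and then split---the two orderings yield the same bound.
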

        \begin{proof}
            We have,
            \begin{align}
                H \left( \mv \right) \ge I \left( \mv ; \xv_{[3]}, \kv_{001}, \ldots , \kv_{110} \right) = & I \left( \mv ; \xv_{[3]}, \kv_{001}, \kv_{011}, \kv_{101} \right)\nonumber \\
                & + I \left( \mv ; \kv_{100}, \kv_{010}, \kv_{110} \middle | \xv_{[3]}, \kv_{001}, \kv_{011}, \kv_{101} \right).\label{eq_clm_6_main}
            \end{align}
            We may bound the first term as,
            \begin{align}
                I \left( \mv ; \xv_{[3]}, \kv_{001}, \kv_{011}, \kv_{101} \right) & \ge I \left( \mv ; \xv_3 \middle | \xv_1, \xv_2, \kv_{001}, \kv_{011}, \kv_{101} \right) = I \left( \mv ; \xv_3 \middle | \xv_{\s_3}, \kv_{\ak_3} \right).\label{eq_clm_6_dec}
            \end{align}
            The second term
            \begin{align}
                I \left( \mv ; \kv_{100}, \kv_{010}, \kv_{110} \middle | \xv_{[3]}, \kv_{001}, \kv_{011}, \kv_{101} \right) = I & \left( \mv ; \kv_{110} \middle | \xv_{[3]}, \kv_{001}, \kv_{011}, \kv_{101} \right)\nonumber \\
                & + I \left( \mv ; \kv_{010} \middle | \xv_{[3]}, \kv_{001}, \kv_{011}, \kv_{101}, \kv_{110} \right)\nonumber \\
                & + I \left( \mv ; \kv_{100} \middle | \xv_{[3]}, \kv_{001}, \kv_{011}, \kv_{101}, \kv_{110}, \kv_{010} \right).\label{eq_clm_6_key_main}
            \end{align}
            We lower bound the first term in~\eqref{eq_clm_6_key_main} by 0.
            Consider the second term.
            \begin{align}
                I \left( \mv ; \kv_{010} \middle | \xv_{[3]}, \kv_{001}, \kv_{011}, \kv_{101}, \kv_{110} \right) & \stackrel{(a)}{\ge} I \left( \mv ; \kv_{010} \middle | \xv_{[3]}, \kv_{011}, \kv_{110} \right) \nonumber \\
                & = I \left( \mv ; \kv_{010}, \kv_{011} \middle | \xv_{[3]}, \kv_{110} \right) - I \left( \mv ; \kv_{011} \middle | \xv_{[3]}, \kv_{110} \right) \nonumber \\
                & \stackrel{(b)}{\ge} I \left( \mv ; \kv_{\ak_2 \setminus \ak_1} \middle | \xv_{[3]}, \kv_{\ak_1 \cap \ak_2} \right) - H\left(\kv_{011}\right).\label{eq_clm_6_key_1}
            \end{align}
            Here, (a) follows from Lemma~\ref{lem_ind} and (b) follows from upper bounding the subtracted mutual information term by $H(\kv_{011})$.
            Similarly, third term of~\eqref{eq_clm_6_key_main} may be bounded as
            \begin{align}
                I \left( \mv ; \kv_{100} \middle | \xv_{[3]}, \kv_{001}, \kv_{011}, \kv_{101}, \kv_{110}, \kv_{010} \right) & \ge I \left( \mv ; \kv_{100} \middle | \xv_{[3]}, \kv_{101}, \kv_{110} \right) \nonumber \\
                & = I \left( \mv ; \kv_{100}, \kv_{101} \middle | \xv_{[3]}, \kv_{110} \right) - I \left( \mv ; \kv_{101} \middle | \xv_{[3]}, \kv_{110} \right) \nonumber \\
                & \stackrel{(b)}{\ge} I \left( \mv ; \kv_{\ak_1 \setminus \ak_2} \middle | \xv_{[3]}, \kv_{\ak_1 \cap \ak_2} \right) - H\left(\kv_{101}\right).\label{eq_clm_6_key_2}
            \end{align}
            Here, (a) follows from Lemma~\ref{lem_ind} and (b) follows from upper bounding the subtracted mutual information term by $H(\kv_{101})$.
            From the bounds~\eqref{eq_clm_6_dec},~\eqref{eq_clm_6_key_1},~\eqref{eq_clm_6_key_2} on the terms in the equality~\eqref{eq_clm_6_main}, we have
            \begin{align*}
                H \left( \mv \right) & \ge I \left( \mv ; \xv_3 \middle | \xv_{\s_3}, \kv_{\ak_3} \right) \\
                & \qquad + I \left( \mv ; \kv_{\ak_2 \setminus \ak_1} \middle | \xv_{[3]}, \kv_{\ak_1 \cap \ak_2} \right) - H\left(\kv_{011}\right)\\
                & \qquad + I \left( \mv ; \kv_{\ak_1 \setminus \ak_2} \middle | \xv_{[3]}, \kv_{\ak_1 \cap \ak_2} \right) - H\left(\kv_{101}\right).
            \end{align*}
            For any $\epsilon > 0$, we have already seen that there exists a large enough block-length $n$ such that,
            \begin{align*}
                I \left( \mv ; \xv_3 \middle | \xv_{\s_3}, \kv_{\ak_3} \right) \stackrel{(a)}{\ge} H\left(\xv_3\right) - 2n h(\epsilon),\\
                I \left( \mv ; \kv_{\ak_2 \setminus \ak_1} \middle | \xv_{[3]}, \kv_{\ak_1 \cap \ak_2} \right) \stackrel{(b)}{\ge} H\left(\xv_2\right) - 6n h(\epsilon),\\
                I \left( \mv ; \kv_{\ak_1 \setminus \ak_2} \middle | \xv_{[3]}, \kv_{\ak_1 \cap \ak_2} \right) \stackrel{(c)}{\ge} H\left(\xv_1\right) - 6n h(\epsilon).
            \end{align*}
            Where (a) follows from the decodability condition at user 3, (b) and (c) follows from Lemma~\ref{lem_ij} since $2 \notin \a_1$ and $1 \notin \a_2$ respectively.
            The claim can now be obtained by taking $\epsilon$ to zero.
    \end{proof}

    \section{Proof of Theorem~\ref{thm:4user}}
    \label{Sec_4user}

    \begin{figure}
      \begin{tikzpicture}
          \node (1) at (-1, 2) [draw=none]{$1$};
          \node (2) at (1, 2)  [draw=none]{$2$};
          \node (3) at (1, 0)  [draw=none]{$3$};
          \node (4) at (-1, 0)  [draw=none]{$4$};
          \path[->,thick] (1) edge (2);
          \path[->,thick] (3) edge (2);
          \path[->,thick] (4) edge (3);
          \path[->,thick] (4) edge (1);
          \path[->,thick] (2) edge (4);
          \path[->,thick] (1) edge (3);
          \path[->,thick] (3) edge (1);
        \end{tikzpicture}
        \caption{A 4 user private index coding problem for which the rate region cannot be achieved using time-sharing of scalar linear private index codes.}
        \label{G_4}
    \end{figure}
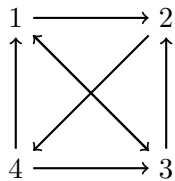
    When $N=4$, there exist private index coding problems where rate region cannot be achieved using scalar linear codes and time sharing.
    Consider the private index coding for the graph $G$ in Fig.~\ref{G_4}.
    The following vector linear scheme achieves the rate $(R, R_{1001}, R_{1010}, R_{1100}, R_{0101}, R_{0101}, R_{0011}) = (2.5, 0.5, 0.5, 0.5, 1, 0.5, 0.5)$ when every other key rate is 0
    \begin{align*}
      & \x_1^1 + \x_2^1 + \kv_{1100} + \k^1_{1001}, 
      \x_2^1 + \x_4^1 + \kv_{1100} + \k^1_{0101},\\ 
      & \x_2^2 + \x_4^2 + \kv_{0110} + \k^2_{0101},
      \x_4^2 + \x_3^1 + \k^2_{0101} + \kv_{0011},\\
      & \x_1^2 + \x_3^2 + \kv_{1010}.
    \end{align*}
    Here, $\k^i_{0101}$ is the $i$th independent bit of $\k^{(2)}_{0101}$ and $\x^i_j$ is the $i$th bit of $\x^{(2)}_j$. It can be verified that this vector linear code satisfies the conditions in Theorem~\ref{Thm_linear}.

    Next we show that this rate tuple cannot be achieved using scalar linear codes and time sharing.
        For this, it is sufficient to show that, for any scalar linear code, when
        \begin{align*}
            (R_{1001}, R_{1010}, R_{1100}, R_{0101}, R_{0110}, R_{0011}) \le (1, 1, 1, 2, 1, 1)
        \end{align*}
        and the rate of other keys are zero, the transmission rate of any scalar linear code is at least 3. To show a contradiction let $H \in \mathbb{F}_q^{2 \times 10}$ be a matrix that achieves transmission rate 2.
        \begin{align*}
            H = [v_1 \; v_2 \; v_3 \; v_4 \; v_{1001} \; v_{1010} \; v_{1100} \; v^1_{0101} \; v^2_{0101} \; v_{0110} \; v_{0011}]
        \end{align*}
        and the transmission $M$ is,
        \begin{eqnarray*}
            M = \left[ v_1 \; v_2 \; v_3 \; v_4 \; v_{1001} \; v_{1010} \; v_{1100} \; v^1_{0101} \; v^2_{0101} \; v_{0110} \; v_{0011} \right] \left[ {\scriptstyle \x_1, \x_2, \x_3, \x_4, \kv_{1001}, \kv_{1010}, \kv_{1100}, \k^1_{0101}, \k^2_{0101}. \kv_{0110}, \kv_{0011}} \right]^T
        \end{eqnarray*}
        According to Theorem~\ref{Thm_linear}, for this scheme to be a private linear index code it should satisfy the following conditions,
        \begin{align*}
            &v_1 \notin \langle v_4 \; v^1_{0101} \; v^2_{0101} \; v_{0110} \; v_{0011} \rangle,
            v_2 \notin \langle v_1 \; v_3 \; v_{1001} \; v_{1010} \; v_{0011} \rangle,\\
            &v_3 \notin \langle v_4 \; v_{1001} \; v_{1100} \; v^1_{0101} \; v^2_{0101} \rangle,
            v_4 \notin \langle v_2 \; v_{1010} \; v_{1100} \; v_{0110} \; \rangle,\\
            &\langle v_4 \rangle \in \langle v^1_{0101} \; v^2_{0101} \; v_{0110} \; v_{0011} \rangle,
            \langle v_1 \; v_3 \rangle \in \langle v_{1001} \; v_{1010} \; v_{0011} \rangle,\\
            &\langle v_4 \rangle \in \langle v_{1001} \; v_{1100} \; v^1_{0101} \; v^2_{0101} \rangle,
            \langle v_2 \rangle \in \langle v_{1010} \; v_{1100} \; v_{0110} \rangle.
        \end{align*}
        Since $v_1, \ldots, v_4$ are non-zero two 2-dimensional vectors, $v_2 \notin \langle v_1 \; v_3\rangle$ implies that $v_1 = v_3$.
        $v_2 \notin \langle v_4 \rangle$ implies that $v_2 \neq v_4$ and finally the decoding condition at user 1 implies that $v_1 \neq v_4$.
        Hence we conclude that $v_1, v_2, v_4$ are distinct and non-zero and that $v_1 = v_3$.

        Statements $v_1 \in \langle v_{1001} \; v_{1010} \; v_{0011}\rangle$ and $v_4 \notin \langle v_2 \; v_{1010}\rangle$ imply that $v_{1010}$ is either $\vec{0}$ or $v_2$.
        But, statement $v_2 \notin \langle v_1 \; v_3 \; v_{1010}\rangle$ implies that $v_{1010}$ is either $\vec{0}$ or $v_1$.
        But since $v_1$ and $v_2$ are distinct, $v_{1010} = \vec{0}$.
        $v_2 \notin \langle v_1, v_3, v_{1001}, v_{1010}, v_{0011}\rangle$, implies that $v_{1001}$ is either $\vec{0}$ or the same as $v_1 = v_3$ and that $v_{0011}$ is either $\vec{0}$ or the same as $v_1 = v_3$.
        But since $v_1 \notin \langle v_{0011} \rangle$, $v_{0011} \neq v_1$  and since $v_3 \notin \langle v_{1001} \rangle$, $v_{1001} \neq v_3$.
        Hence we have that $v_{1001} = v_{1010} = v_{0011} = \vec{0}$. This is a contradiction since $v_1$ is non-zero and $v_1 \in \langle v_{1001} \; v_{1010} \; v_{0011}\rangle$.
        \end{proof}

\section{Proof of Theorem~\ref{thm:sum_key}}\label{app_thm:sum_key}
 
 To show the upper bound in Theorem~\ref{thm:sum_key}, we use the next claim.
	\begin{claim}\label{nfold_achieve}
		If $G^c$ has an $n$-fold coloring using $C$ colors, then $G$ has a private index code with sum key rate and transmission rate $C/n$.
	\end{claim}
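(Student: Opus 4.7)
The plan is to exhibit an explicit linear perfect private index code of block length $n$ that uses one field-element transmission and one independent rate-$1/n$ key per color of the given $n$-fold coloring of $G^c$, so that both the transmission rate and the sum key rate come out to $C/n$. Fix the $n$-fold coloring and, for each vertex $v \in [N]$, let $L(v) \subseteq [C]$ denote its set of $n$ colors; for each color $c \in [C]$ set $V_c := \{v : c \in L(v)\}$. Since the coloring is proper on $G^c$, any two distinct vertices of $V_c$ are non-adjacent in $G^c$, which by the standard convention (an edge of $G^c$ being present between $i$ and $j$ precisely when at least one of the directed edges $(i,j),(j,i)$ is absent in $G$) means $j\in\s_i$ and $i\in\s_j$ for all distinct $i,j\in V_c$. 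Thus $V_c$ is a ``mutual clique'' in $G$.

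Index the $n$ symbols of $\xv_i$ by $L(i)$, i.e.\ write $\xv_i = (\x_i^{(c)} : c \in L(i))$. Let $\b_c \in \{0,1\}^N$ be the characteristic vector of $V_c$ and introduce mutually independent uniform keys $\k_{\b_c} \in \mathbb{F}$. The encoder then transmits, for each $c \in [C]$,
\begin{align*}
\mv^{(c)} \;=\; \k_{\b_c} \;+\; \sum_{i \in V_c} \x_i^{(c)}.
\end{align*}
Decoding is immediate: for each $c \in L(i)$ the user $i$ lies in $V_c$, so it holds $\k_{\b_c}$ and (by the mutual-clique property) also every $\x_j^{(c)}$ with $j \in V_c \setminus \{i\}$; subtracting these from $\mv^{(c)}$ recovers $\x_i^{(c)}$, and ranging over $c \in L(i)$ recovers all of $\xv_i$.

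For privacy I would invoke Theorem~\ref{Thm_linear}. Every column of each $\bG_i$ and each $\bH_{\b_c}$ is the standard basis vector $e_c \in \mathbb{F}^C$ (indexing the transmitted coordinates by colors), so condition~(1) is trivial. For condition~(2), whenever $j \notin \a_i$ the vertices $i,j$ are adjacent in $G^c$, forcing $L(i) \cap L(j) = \emptyset$; hence every column of $\bG_j$ equals $e_c$ for some $c \in L(j) \subseteq [C] \setminus L(i)$, i.e.\ for some $c$ with $\b_c \notin \ak_i$, giving $\langle \bG_j \rangle \subseteq \langle \bH_{\b} : \b \notin \ak_i \rangle$. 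The rate bookkeeping is then immediate: $C$ field elements transmitted over an $n$-block give transmission rate $C/n$, and $C$ keys of one field element each give sum key rate $C/n$. I foresee no real obstacle; the only subtle point is the convention for $G^c$ noted above, which is in any case already implicit in the use of $\chi_f(G^c)$ in Theorem~\ref{thm:sum_key}.
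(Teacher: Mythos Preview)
Your proposal is correct and follows essentially the same approach as the paper: one field-element transmission and one independent key per color class of the $n$-fold coloring of $G^c$, giving both rates equal to $C/n$. Your write-up is in fact more explicit than the paper's --- you present the $c$-th transmission as the single sum $\k_{\b_c}+\sum_{i\in V_c}\x_i^{(c)}$ and verify both conditions of Theorem~\ref{Thm_linear} directly, whereas the paper's set-builder description of the $c$-th transmission is somewhat ambiguous; your observation about the convention for adjacency in $G^c$ (needed so that every $V_c$ is a mutual clique in $G$) is exactly the point that makes decoding and condition~(2) go through.
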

    \begin{proof}
    	An $n$-fold coloring of $G^c$ can be given by assigning a number from $[C]$ to each of the $n$ components of $x_i^n, i=1,\ldots,N$ such that no two adjacent vertices have the same colors. Let $l\left(x_{i}^{(k)}\right))$ denote the color given to the $k^{\text{th}}$ co-ordinate of $x_i^n$.
     For a given color  $c=1, \ldots, C$ of an $n$-fold coloring, let  $\cS_c$ denote the set of vertices where color $c$ is present, i.e.,
	\begin{align*}
	\cS_c & = \left\{i:  l\left(x_{i}^{(k)}\right) = c \text{ for some } k = 1, \ldots, n\right\}.
	\end{align*}

	Consider the key access structure such that key $K_{\b_c} \in \mathbb{F}$ is shared between all the users in $\cS_c$ for $c=1, \ldots, C$.
	 The server makes the following $C$ transmissions
	\begin{align}
	\left\{ K_{\b_c} +  X_{i}^{(k)} :  l\left(x_{i}^{(k)}\right) =c \right\},  \quad \text{ for } c=1, \ldots, C.
	\end{align}
	Using Theorem~\ref{Thm_linear}, it can be verified that this scheme satisfies the decoding and privacy constraints. 
	The sum key rate  and the transmission rate of this scheme are given by $C/n$.
    \end{proof}
	From Claim~\ref{nfold_achieve}, it follows that $\frac{\chi_n(G^c)}{n}$ is an achievable sum key rate at block length $n$.
	The upper bound on $\mathsf{SumKeyRate}^{*}(G)$ follows since $\lim\limits_{n \to \infty}  \frac{\chi_n(G^c)}{n} = \chi_f(G^c)$.

To show the lower bound in Theorem~\ref{thm:sum_key}, we first consider a specific class of side information structures in the following claim and then extend it to the general case.
\begin{claim}\label{clm:spcl_inf_struct}
	If $G$ is such that for every $i \in [N]$, there exists $j \in [N]$ such that $i \notin \a_j$, then $\vec{R} = \left(R, R_{\b}: \b \in \ak\right)$ is in the rate region $\cR(G)$ only if $\mathsf{SumKeyRate}(\vec{R}) \ge R$.
\end{claim}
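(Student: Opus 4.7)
The aim is to show that for every achievable rate tuple $\vec R = (R, R_\b : \b \in \ak) \in \cR(G)$ one has $H(\mv) \leq n \cdot \mathsf{SumKeyRate}(\vec R) + o(n)$, where $\mv$ is the server transmission of an $n$-block scheme realizing $\vec R$ with vanishing error. Since $H(\mv) \leq nR$ and this bound is allowed $o(n)$ slack as $\epsilon \downarrow 0$, dividing by $n$ then yields $\mathsf{SumKeyRate}(\vec R) \geq R$. By Theorem~\ref{thm:private_rnd} I may assume without loss of generality that the encoder is deterministic, so $H(\mv \mid \xv_{[N]}, \kv_{\ak}) = 0$.

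The natural starting point is the decomposition
\begin{align*}
H(\mv) = I(\mv; \kv_{\ak}) + I(\mv; \xv_{[N]} \mid \kv_{\ak}) \leq n\,\mathsf{SumKeyRate}(\vec R) + I(\mv; \xv_{[N]} \mid \kv_{\ak}),
\end{align*}
which reduces the task to bounding $I(\mv; \xv_{[N]} \mid \kv_{\ak}) \le o(n)$. The hypothesis lets me choose, for every $i \in [N]$, some $j(i)\in[N]$ with $i \notin \a_{j(i)}$; privacy at user $j(i)$ combined with the independence of $\kv_{\ak_{j(i)}}$ from $(\mv,\xv_i,\xv_{\s_{j(i)}})$ (exactly as in \eqref{Eq_feasb2}--\eqref{Eq_feasbl3}) yields $I(\mv; \xv_i \mid \xv_{\s_{j(i)}}) \leq n\epsilon$, and hence $I(\mv; \xv_i) \leq n\epsilon$ for every $i$.

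The principal obstacle is lifting these per-message bounds to a joint bound on $I(\mv; \xv_{[N]} \mid \kv_{\ak})$: even when the $\xv_i$'s are independent, the fact that $I(\mv;\xv_i)$ is small for every $i$ does not imply $I(\mv; \xv_{[N]})$ is small (take $\mv = \xv_1 \oplus \xv_2$), and furthermore additional conditioning on the independent keys $\kv_{\ak\setminus\ak_{j(i)}}$ can only \emph{increase} the conditional mutual information $I(\mv; \xv_i \mid \xv_{\s_{j(i)}},\kv_{\ak_{j(i)}})$ rather than decrease it. To circumvent this I would proceed by a telescoping argument in the spirit of Lemma~\ref{lem_ij} and the expansion~\eqref{Eq_rate_lowr} used for the three-user rate region in Section~\ref{sec_rate}: process users in a carefully chosen order, at each step peeling off one $\xv_i$ from the joint mutual information by combining the decoding inequality $I(\mv; \xv_{j(i)} \mid \xv_{\s_{j(i)}}, \kv_{\ak_{j(i)}}) \geq H(\xv_{j(i)}) - O(nh(\epsilon))$ with the enhanced privacy inequality $I(\mv; \xv_{[N]\setminus\a_{j(i)}} \mid \xv_{\a_{j(i)}}, \kv_{\ak_{j(i)}}) \leq 3nh(\epsilon)$ from Lemma~\ref{lem_enc_dec}, invoking Lemma~\ref{lem_ind} whenever independent blocks of keys or messages have to be moved across the conditioning. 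All residual error terms are of the form $O(nh(\epsilon))$ and vanish as $\epsilon\downarrow 0$, so after $n/\epsilon \to \infty$ properly one obtains $H(\mv) \leq n\,\mathsf{SumKeyRate}(\vec R) + o(n)$, completing the proof.
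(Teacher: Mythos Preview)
Your chain-rule decomposition is in the wrong order, and this is a genuine obstruction rather than a technicality. You split $H(\mv)=I(\mv;\kv_{\ak})+I(\mv;\xv_{[N]}\mid\kv_{\ak})$ and then set out to show $I(\mv;\xv_{[N]}\mid\kv_{\ak})=o(n)$. But that quantity is typically of order $n$: take $N=2$ with empty side information (so the hypothesis of the claim holds) and the canonical scheme $\mv=(\x_1+\k_{10},\,\x_2+\k_{01})$. Given both keys the transmission determines both messages, so $I(\mv;\xv_{[2]}\mid\kv_{\ak})=2n$, while $I(\mv;\kv_{\ak})=0$ since the one-time-padded transmission is independent of the keys. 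No telescoping rearrangement of the kind you sketch can make the conditioned term small, because it simply is not small; the very purpose of the keys is to make $\mv$ informative about $\xv_{[N]}$ once they are revealed.

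The paper reverses the order: $H(\mv)=I(\mv;\xv_{[N]})+I(\mv;\kv_{\ak}\mid\xv_{[N]})\le I(\mv;\xv_{[N]})+n\,\mathsf{SumKeyRate}(\vec R)$, and it is the \emph{unconditional} term $I(\mv;\xv_{[N]})$ that is shown to be $o(n)$. You already anticipated part of the remaining difficulty (small $I(\mv;\xv_i)$ for each $i$ does not sum to a small joint quantity), but the fix is simpler than the machinery you reach for: write $I(\mv;\xv_{[N]})=\sum_i I(\mv;\xv_i\mid\xv_{[i-1]})$ and, by independence of the files, bound each summand by $I(\mv;\xv_i\mid\xv_{[N]\setminus\{i\}})$. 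This last term \emph{can} be controlled by privacy at $j=j(i)$, since $i\notin\a_{j}$ gives $I(\mv;\xv_i\mid\xv_{[N]\setminus\{i\}})\le I(\mv;\xv_{[N]\setminus\a_{j}}\mid\xv_{\a_{j}})$, after which Lemma~\ref{lem_ind} lets you insert $\kv_{\ak_{j}}$ into the conditioning and invoke the privacy condition~\eqref{Eq_Priv} together with Fano at user $j$. The point is to condition on the \emph{other messages} $\xv_{[N]\setminus\{i\}}$ rather than on all keys $\kv_{\ak}$; that is exactly what makes the argument go through. (A minor additional issue: from ``$H(\mv)\le nR$'' and ``$H(\mv)\le n\,\mathsf{SumKeyRate}+o(n)$'' you cannot conclude $\mathsf{SumKeyRate}\ge R$; you need the inequality in the other direction, which the paper handles by writing $nR$ for $H(\mv)$.)
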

\begin{proof}
	For $\epsilon > 0$, for large enough block length $n$ we know there exists a private index coding scheme such that the decoding condition~\eqref{Eq_Dec} and privacy conditions~\eqref{Eq_Priv} are satisfied.
    \begin{align*}
        nR & = H(\mv) \\
        &\stackrel{(a)}{=} I\left(\mv; \xv_{[N]}, \kv_{\ak}\right) \\
        &= I\left(\mv; \xv_{[N]}\right) + I\left(\mv; \kv_{\ak} \middle | \xv_{[N]}\right)\\
             & \le I\left(\mv; \xv_{[N]}\right) + H\left(\kv_{\ak} \middle | \xv_{[N]}\right)\\
             & \stackrel{(b)}{=} I\left(\mv; \xv_{[N]}\right) + H\left(\kv_{\ak} \right)\\
             & \le I\left(\mv; \xv_{[N]}\right) + n\mathsf{SumKeyRate}(R),
    \end{align*}
    where $(a)$ follows from the transmitted message being a deterministic function of the files and the keys, and $(b)$ follows from the independence between files and keys.
    We now show that $I\left(\mv; \xv_{[N]}\right)$ vanishes with $\epsilon$ for the side information structure considered in the claim.
    This is because if the transmitted message is not independent of the files, then it reveals some information about the files.
    By our assumption, every message has to be kept private from at least one of the users, hence this is not allowed.
    Formally,
    \begin{align}
        I\left(\mv; \xv_{[N]}\right) & = \sum_{i = 1}^N I\left(\mv; \xv_{i} \middle | \xv_1, \dots, \xv_{i-1} \right) \notag \\
        & \le \sum_{i = 1}^N I\left(\mv; \xv_{i} \middle | \xv_{[N] \setminus \{i\}} \right), \label{Eq_sum_key_low}
    \end{align}
   where the last inequality follows from the independence between files. Next we show that all $N$ terms in the summation of \eqref{Eq_sum_key_low} vanish as $n \to \infty$. To prove this, we use our assumption that for each $i \in [N]$, there exists $j$ such that $i \notin \a_j$.
   Consider the $i^{\text{th}}$ term in the summation of \eqref{Eq_sum_key_low}. By the privacy and decodability conditions at $j$,
    \begin{align*}
        I\left(\mv; \xv_{i} \middle | \xv_{[N] \setminus \{i\}} \right) & \le I\left( \mv; \xv_{i}, \xv_{[N] \setminus \a_j \cup \{i\}}\middle | \xv_{\a_j} \right)\\
        & = I \left( \mv;\xv_{[N] \setminus \a_j} \middle | \xv_{\a_j} \right)\\
        & \stackrel{(a)}{\le} I \left( \mv;\xv_{[N] \setminus \a_j} \middle | \xv_{\a_j}, \kv_{\ak_j} \right)\\
        & \le I \left( \mv, \xv_j ;\xv_{[N] \setminus \a_j} \middle | \xv_{\s_j}, \kv_{\ak_j} \right)\\
        & = I \left( \mv ; \xv_{[N] \setminus \a_j} \middle | \xv_{\s_j}, \kv_{\ak_j} \right) + I \left( \xv_j ; \xv_{[N] \setminus \a_j} \middle | \mv, \xv_{\s_j}, \kv_{\ak_j} \right)\\
        & \stackrel{(b)}{\le} n\epsilon + I \left( \xv_j ; \xv_{[N] \setminus \a_j} \middle | \mv, \xv_{\s_j}, \kv_{\ak_j} \right) \\
        & \le n\epsilon + H \left( \xv_j \middle | \mv, \xv_{\s_j}, \kv_{\ak_j} \right) \\
        &\stackrel{(c)}{\le} n\epsilon + 1 + n\epsilon,
    \end{align*}
    where $(a)$ follows from Lemma~\ref{lem_ind}, $(b)$ follows from the privacy condition~\eqref{Eq_Priv} at user $j$ and $(c)$ is obtained from the decoding condition~\eqref{Eq_Dec} by applying Fano's inequality.
\end{proof}
Consider the private index coding problem with general side information structure $G$.
Let $S $ be the subset of vertices of $G$ such that for all $i \in S$ and $j \neq i$, $i \in \s_j$, i.e., if $i \in S$, then $X_i$ is  available as side information to all users except user $i$.
Let $G'$ be the induced subgraph on the vertices $[N] \setminus S$.
The rate region for $G$ is contained in that of $G'$.
This can be argued as follows.
Any private index code for $G$ of block length $n$ can be modified into a private index code for $G'$ by setting $\xv_i = \vec{0}$ for all $i \in S$ and $\kv_{\b}^n = \vec{0}$ for all $\b$ such that $b_i = 1$ if and only if $i \in S$ (\emph{i.e.,} keys available exclusively to users in $S$).
This implies that $\mathsf{SumKeyRate}^{*}(G')$, is upper bounded by $\mathsf{SumKeyRate}^{*}(G)$.

Consider any private index coding scheme for $G'$ of block length $n$.
It can be easily verified that appending $\sum_{i \in S}\xv_i$ to the transmitted message creates a private index code for $G$.
Hence, if $R^{*}(G')$ is the optimal transmission rate for $G'$, then the optimal transmission rate for $G$, say $R^{*}(G)$ is upper bounded by $R^{*}(G') + 1$.
Hence, 
\begin{align}
R^{*}(G) - 1 \le R^{*}(G'). \label{Eq_sumkey_lowr}
\end{align}
Since $R^{*}(G') \le \mathsf{SumKeyRate}^{*}(G')$ by Claim~\ref{clm:spcl_inf_struct}, and since $ \mathsf{SumKeyRate}^{*}(G') \le \mathsf{SumKeyRate}^{*}(G)$, the lower bound on $\mathsf{SumKeyRate}^{*}(G)$ follows from~\eqref{Eq_sumkey_lowr}.
This proves Theorem~\ref{thm:sum_key}.

\subsection{Details of Example~\ref{Ex_sumkey_bounds}}

	The linear code in Example~\ref{Ex_sumkey_bounds} can also 
	be given as follows:
	\begin{align*}
	\begin{bmatrix}
	1 & 1 & 1& 0& 0& 1& 1& 0& 0\\
	0 & 1 & 1& 1& 0& 0& 1& 1& 0\\
	0 & 0 & 1& 1& 1& 0& 0& 1& 1\\
	\end{bmatrix}
	\begin{bmatrix}
	\x_1\\
	\x_2\\
	\x_3\\
	\x_4\\
	\x_5\\
	\kv_{10010}\\
	\kv_{11001}\\
	\kv_{01110}\\
	\kv_{00101}
	\end{bmatrix}
	=
	\begin{bmatrix}
	m_1\\
	m_2\\
	m_3
	\end{bmatrix}
	\end{align*}
	Here users 1, 2 and 3 decode the required data from, $\m_1, \m_2$ and $\m_3$, respectively. The decodability 
	Users 4 and 5 obtain their data since $\x_4 = \m_1 + \m_2 - \x_1$ and $\x_5 = \m_2 + \m_3 - \x_2$.
	It can be easily verified that this scheme is a perfect private linear index code as it satisfies the conditions described in Theorem~\ref{Thm_linear}.
	
	This scheme also shows that the optimal transmission rate for $G$ is at most 3.
	Next, we show that the sum key rate is strictly larger than 3 (at least $3.33$), thereby establishing that both the bounds on sum key rate could be simultaneously loose.
	
	Consider the bound on key rates in Claim~\ref{clm:keyrate-bound1} used in the proof of Theorem~\ref{thm_3user}.
	Since $4 \notin \a_1$, we get the following inequality
	\begin{align*}
	H(\kv_{\ak_4 \setminus \ak_1}) & \ge H(\x_4)
	\end{align*}
	which implies that
	\begin{align*}
	 \sum_{\b \in \ak_4 \setminus \ak_1} R_{\b} & \ge 1.
	\end{align*}
	Consider all the inequalities that can be obtained in this manner and minimize the sum key rate under these constraints, \emph{i.e.,} we solve the following LP.
	\begin{align*}
	\begin{array}{ll@{}ll}
	\text{minimize}  & \displaystyle\sum_{\b=1}^{2^5 - 2} &R_{\b} &\\
	\text{subject to}& \displaystyle\sum_{\b \in \ak_i \setminus \ak_j}   &R_{\b} \geq 1, & (i, j) \in \{(4, 1), (5, 1), (5, 2), (1, 2), (1, 3), (2, 3), (2, 4), (3, 4), (3, 5), (4, 5)\}\\
	& 													  &R_{\b} \geq 0, & \b \in \ak.
	\end{array}
	\end{align*}
	The solution of this LP, as computed by a program, comes to 3.33.
	This is strictly greater than the the minimum transmission rate for the private index code (which is at most 3).

\subsection{Details of Example~\ref{Ex_sumkey_size}}
Consider the private index coding problem with side information structure $G$, given in Figure~\ref{G_kas}.
In the sequel, we will often represent a key access structure as a family of subsets of vertices in $G$ that have access to exclusive keys. 
For $G$, there are 30 feasible key access structures of size $3$.
These are described in Table~\ref{table_kas}.
Note that, in the table, all the addition and subtraction are modulo 5.
One may verify using simple counting arguments that these are indeed the only feasible key access structures of size 3.
These key access structures are divided into 4 classes in the table.
We will show that the transmission rate allowed by the key access structures in each class is lower bound by at least 3.

\begin{table*}[t]
	\begin{tabular}{|l|l|}
		\hline
		Class 1 & $\left\{\{i\}, \{i + 1, i + 2\}, \{i - 1, i - 2\}\right\} : i \in [5]$, \\
		& $\left\{\{i, i-1\}, \{i, i + 1\}, \{i + 2, i + 3\}\right\} : i \in [5]$, \\
		& $\left\{\{i-1, i, i+1\}, \{i + 1, i + 2\}, \{i - 1, i - 2\}\right\} : i \in [5]$, \\
		\hline
		Class 2 & $\left\{\{i, i+1\}, \{i + 1, i + 2, i+3\}, \{i, i - 1, i-2\}\right\} : i \in [5]$, \\
		\hline
		Class 3 & $\left\{\{i-1, i, i+1\}, \{i + 1, i + 2, i+3\}, \{i - 1, i - 2, i-3\}\right\} : i \in [5]$, \\
		\hline
		Class 4 & $\left\{\{i, i+1, i+2\}, \{i, i-1, i-2\}, \{i-1, i-2, i+1, i+2\}\right\} : i \in [5]$, \\
		\hline
	\end{tabular}
	\caption{Feasible key access structures for the private index coding problem $G$ in Figure~\ref{G_kas} are classified into 4 classes. In the table, all addition and subtraction are modulo 5.}
	\label{table_kas}
\end{table*}

\paragraph{Class 1} It can be verified that for each key access structure $\ks$ in class 1, there exists a set $S \in \ks$ and $i, j, k \in [5]$ such that $i, j \notin S$, $j \not\equiv i \pm 1 \mod 5$, and $\ks_k = \{S\}$.
For example, consider the key access structure $\{\{i-1, i, i+1\}, \{i+1, i+2\}, \{i-1, i-2\}\}$, for some $i \in [n]$.
The key $\kv_{\{i+1, i+2\}}$ is not available to users $i, i-2$ who are not adjacent to each other ($i \not\equiv i - 2 \pm 1 \mod 5$), and it is the only key available at user $i + 2$.
We will use this property to show that the transmission rate is at least 3.

Fix $i = 5$ and consider the specific key access structure $\ks = \{\{1, 4, 5\}, \{1, 2\}, \{3, 4\}\}$.
We will show the bound on transmission rate for $\ks$.
For all the key access structures in this class, the bound can be obtained similarly.
For $\epsilon > 0$, consider a private index code of block length $n$ such that the decoding condition~\eqref{Eq_Dec} and privacy condition~\eqref{Eq_Priv} are satisfied. That is,
\begin{align}
&\pr{\widehat{\xv_i} = \xv_i, \forall i \in [5]} \ge 1 - \epsilon, \label{eq_dec_sumki}\\
&I \left( \mv;\xv_{[5] \setminus \a_i} \middle | \kv_{\ks_i}, \xv_{\s_i} \right) \le n\epsilon \mbox{ for all } i \in [5]\label{eq_priv_sumki}.
\end{align}
We have,
\begin{align*}
nR = H(\mv) \ge I \left( \mv;\xv_{[5]}, \kv_{11000}, \kv_{10011}, \kv_{00110} \right).
\end{align*}
Keeping in mind the observation that $\kv_{11000}$ is not available at users 3 and 5, we expand the mutual information as follows.
\begin{align*}
I \left( \mv;\xv_{[5]}, \kv_{11000}, \kv_{10011}, \kv_{00110} \right) =& I \left( \mv;\xv_{2}, \xv_3, \xv_4, \kv_{00110} \right) + I \left( \mv;\xv_{1}, \xv_5, \kv_{10011} \middle | \xv_2, \xv_3, \xv_4, \kv_{00110} \right)\\
 & \quad + I \left( \mv; \kv_{11000} \middle | \xv_{[5]}, \kv_{00110}, \kv_{10011} \right).
\end{align*}
This expansion helps us bound the first term using the decoding conditions at user 3 and the second term using that for user 5.
Finally, we will bound the last term using the decoding condition at user 2 who does not possess any key but $K_{11000}$.
Formally, the first term can be bound as follows.
\begin{align*}
I \left( \mv;\xv_{2}, \xv_3, \xv_4, \kv_{00110} \right) & \ge I \left( \mv;\xv_{3} \middle | \xv_2, \xv_4, \kv_{00110} \right)\\
& = H \left( \xv_{3} \middle | \xv_2, \xv_4, \kv_{00110} \right) - H \left( \xv_{3} \middle | \mv, \xv_2, \xv_4, \kv_{00110} \right)\\
& = H \left( \xv_{3} \right) - H \left( \xv_{3} \middle | \mv, \xv_2, \xv_4, \kv_{00110} \right)\\
& \stackrel{(a)}{\ge} n - 1 - n\epsilon = n(1 - \epsilon) - 1,
\end{align*}
where $(a)$ follows from the decoding condition at user 3 by applying Fano's inequality. Similarly, the second term may be bounded as follows,
\begin{align*}
I \left( \mv;\xv_{1}, \xv_5, \kv_{10011} \middle | \xv_2, \xv_3, \xv_4, \kv_{00110} \right) & \ge I \left( \mv;\xv_5 \middle | \xv_{1}, \xv_2, \xv_3, \xv_4, \kv_{00110}, \kv_{10011} \right) \\
&= H \left( \xv_{5} \right) - H \left( \xv_5 \middle | \mv, \xv_{1}, \xv_2, \xv_3, \xv_4, \kv_{00110}, \kv_{10011} \right) \\
&\stackrel{a}{\ge} n - 1 - n\epsilon = n(1 - \epsilon) - 1,
\end{align*}
where, $(a)$ is obtained directly from the decoding condition at user 5 by applying Fano's inequality.
Note that $K_{11000}$ is the only key that is available at user 2, \emph{i.e.,} $\ks_2 = \{11000\}$.
Hence the third term may be bounded as follows.
\begin{align*}
I \left( \mv; \kv_{11000} \middle | \xv_{[5]}, \kv_{00110}, \kv_{10011} \right) & = I \left( \mv; \kv_{\ks_2} \middle | \xv_{[5]}, \kv_{\ks \setminus \ks_2} \right) \\
& \stackrel{(a)}{\ge} I \left( \mv; \kv_{\ks_2} \middle | \xv_{[5]} \right)\\
& \stackrel{(b)}{\ge} n - nO(\epsilon),
\end{align*}
where $(a)$ follows from Lemma~\ref{lem_ind} and $(b)$ follows from Lemma~\ref{lem_ij}.
Taking $\epsilon$ to zero, we get the desired lower bound of 3 on the transmission rate. Note that this bound can be proved in an identical way for all instances in Class 1.

\paragraph{Class  2} Since all the key access structures in this class are isomorphic, we will show the bound for the specific key access structure $\ks = \{\{1, 2, 5\}, \{4, 5\}, \{2, 3, 4\}\} = \{11001, 00011, 01110\}$.
For $\epsilon > 0$, consider a private index code of block length $n$ such that conditions~\eqref{eq_dec_sumki} and~\eqref{eq_priv_sumki} are met.
The transmission rate can be bounded as follows.
\begin{align}
 nR &\ge H(\mv) \notag \\
& = I \left( \mv ; \xv_{[5]}, \kv_{11001}, \kv_{00011}, \kv_{01110} \right) \notag \\
& \ge I \left( \mv ; \xv_{1} \middle | \xv_2, \xv_5, \kv_{11001} \right) + I \left( \mv ;\xv_{3} \middle | \xv_1, \xv_4, \xv_2, \xv_5, \kv_{11001}, \kv_{01110} \right) \notag \\
& \quad +I ( \mv ; \kv_{00011} | \xv_{[5]}, \kv_{11001}, \kv_{01110}).\label{Eq_class1_lowr}
\end{align}
Similar to the proof for Class 1, the first and second term in the RHS of \eqref{Eq_class1_lowr} may be lower bounded by $n(1 - \epsilon) - 1$ using the decoding conditions at users 1 and 3, respectively.
Since $5 \notin \s_2$ and $\ks_5 \setminus \ks_2 = \{00011\}, \ks_5 \cap \ks_2 = \{11001\}$, the third term may be bounded as
\begin{align*}
I \left( \mv ; \kv_{00011} \middle | \xv_{[5]}, \kv_{11001}, \kv_{01110} \right) & = I \left( \mv; \kv_{\ks_5 \setminus \ks_2} \middle | \xv_{[5]}, \kv_{\ks_5 \cap \ks_2}, \kv_{01110} \right) \\
& \stackrel{(a)}{\ge} I \left( \mv; \kv_{\ks_5 \setminus \ks_2} \middle | \xv_{[5]}, \kv_{\ks_5 \cap \ks_2} \right)\\
& \stackrel{(b)}{\ge} n - nO(\epsilon),
\end{align*}
where $(a)$ follows from Lemma~\ref{lem_ind} and $(b)$ follows from Lemma~\ref{lem_ij}.
Then taking $\epsilon$ to zero, we get the desired lower bound of 3 on the transmission rate.

\paragraph{Class  3} We will prove the bound for the key access structure $\ks = \{11001, 01110, 00111\}$ since all the key access structures in this class are isomorphic to it.
For $\epsilon > 0$, we consider a private index code of block length $n$ such that conditions~\eqref{eq_dec_sumki} and~\eqref{eq_priv_sumki} are met.
The transmission rate is bounded as follows,
\begin{align*}
nR_\mv & \ge H(\mv) \\
& \ge I \left( \mv;\xv_{[5]}, \kv_{11001}, \kv_{01110}, \kv_{00111} \right)\\
& \ge I \left( \mv ; \xv_{1} \middle | \xv_2, \xv_5, \kv_{11001} \right) + I \left( \mv ; \kv_{00111} \middle | \xv_{[5]}, \kv_{11001} \right) + I \left( \mv ; \kv_{01110} \middle | \xv_{[5]}, \kv_{11001}, \kv_{00111} \right).
\end{align*}
The first term may be lower bounded by $n(1 - \epsilon) - 1$ using the decoding condition at user 1.
Since $5 \notin \s_2$ and $\ks_5 \setminus \ks_2 = \{00111\}$ and $\ks_5 \cap \ks_2 = \{11001\}$, the second term may be bounded as
\begin{align*}
I \left( \mv ; \kv_{00111} \middle | \xv_{[5]}, \kv_{11001} \right) & = I \left( \mv; \kv_{\ks_5 \setminus \ks_2} \middle | \xv_{[5]}, \kv_{\ks_5 \setminus \ks_2} \right) \\
& \stackrel{(a)}{\ge} n - nO(\epsilon),
\end{align*}
where $(a)$ follows from Lemma~\ref{lem_ij}.
Since $3 \notin \s_5$ and $\ks_3 \setminus \ks_5 = \{01110\}$ and $\ks_3 \cap \ks_5 = \{00111\}$, the third term may be bounded as
\begin{align*}
I \left( \mv ; \kv_{01110} \middle | \xv_{[5]}, \kv_{11001}, \kv_{00111} \right) & = I \left( \mv; \kv_{\ks_3 \setminus \ks_5} \middle | \xv_{[5]}, \kv_{\ks_3 \cap \ks_5}, \kv_{11001} \right) \\
& \stackrel{(a)}{\ge} I \left( \mv; \kv_{\ks_3 \setminus \ks_5} \middle | \xv_{[5]}, \kv_{\ks_3 \cap \ks_5} \right)\\ &\stackrel{(b)}{\ge} n - nO(\epsilon),
\end{align*}
where $(a)$ follows from Lemma~\ref{lem_ind} and $(b)$ follows from Lemma~\ref{lem_ij}.
Now, taking $\epsilon$ to zero, we get the desired lower bound of 3 on the transmission rate.

\paragraph{Class 4} We prove the bound for the key access structure $\ks = \{10011, 11100, 01111\}$ since all the key access structures in this class are isomorphic to it.
For $\epsilon > 0$, consider a private index code of block length $n$ such that conditions~\eqref{eq_dec_sumki} and~\eqref{eq_priv_sumki} are met.
We can bound the transmission rate as follows,
\begin{align*}
nR \ge H(\mv) & \ge I \left( \mv;\xv_{[5]}, \kv_{10011}, \kv_{11100}, \kv_{01111} \right)\\
& \ge I \left( \mv ; \xv_{5} \middle | \xv_1, \xv_4, \kv_{10011}, \kv_{01111} \right) + I \left( \mv ; \kv_{11100} \middle | \xv_1, \xv_2, \xv_4, \xv_5, \kv_{10011}, \kv_{01111} \right) \\
& \quad + I \left( \mv ; \xv_3 \middle | \xv_1, \xv_2, \xv_4, \xv_5, \kv_{\ks} \right).
\end{align*}
The first and third terms may be lower bounded by $n(1 - \epsilon) - 1$ using the decoding conditions at receivers 5 and 3, respectively.
The second term cannot be directly bounded using Lemma~\ref{lem_ij} like we bounded the third term in the previous cases.
But, using Lemma~\ref{lem_ind}, we  lower bound this term by
\begin{align*}
& I \left( \mv ; \kv_{11100} \middle | \xv_1, \xv_2, \xv_4, \xv_5, \kv_{10011}, \kv_{01111} \right) \ge I \left( \mv ; \kv_{11100} \middle | \xv_1, \xv_2, \xv_5, \kv_{10011} \right).
\end{align*}
Consider the following mutual information.
\begin{align*}
I \left( \mv ; \xv_1, \kv_{11100} \middle | \xv_2, \xv_5, \kv_{10011} \right) & = I \left( \mv ; \xv_1 \middle | \xv_2, \xv_5, \kv_{10011} \right) + I \left( \mv ; \kv_{11100} \middle | \xv_1, \xv_2, \xv_5, \kv_{10011} \right)\\
& \stackrel{(a)}{\le} I \left( \mv ; \kv_{11100} \middle | \xv_1, \xv_2, \xv_5, \kv_{10011} \right) + I \left( \mv ; \xv_1 \middle | \xv_{[5] \setminus \{1\}}, \kv_{10011}, \kv_{01111} \right)\\
& = I \left( \mv ; \kv_{11100} \middle | \xv_1, \xv_2, \xv_5, \kv_{10011} \right) + I \left( \mv ; \xv_1 \middle | \xv_{[5] \setminus \{1\}}, \kv_{\ks_4} \right)\\
& \stackrel{(b)}{\le} I \left( \mv ; \kv_{11100} \middle | \xv_1, \xv_2, \xv_5, \kv_{10011} \right) + n\epsilon,
\end{align*}
where $(a)$ follows from Lemma~\ref{lem_ind} and $(b)$ follows from privacy condition at receiver 4 since $1 \notin \s_4$. The same mutual information term can also be expanded as
\begin{align*}
I \left( \mv ; \xv_1, \kv_{11100} \middle | \xv_2, \xv_5, \kv_{10011} \right) & \ge I \left( \mv ; \xv_1 \middle | \xv_2, \xv_5, \kv_{10011}, \kv_{11100} \right)\\
& \stackrel{(a)}{=} H\left(\xv_1\right) - H \left( \xv_1 \middle | \mv, \xv_2, \xv_5, \kv_{10011}, \kv_{11100} \right)\\
& \stackrel{(b)}{\ge} n - (n\epsilon + 1)
\end{align*}
where, $(a)$ follows from the independence of files and keys and $(b)$ is obtained from decoding condition at user 1 and Fano's inequality.
From the above two observations, we have,
\begin{align*}
I \left( \mv ; \kv_{11100} \middle | \xv_1, \xv_2, \xv_4, \xv_5, \kv_{10011}, \kv_{01111} \right) & \ge I \left( \mv ; \kv_{11100} \middle | \xv_1, \xv_2, \xv_5, \kv_{10011} \right)\\
& \ge n - (2n\epsilon + 1).
\end{align*}
Taking $\epsilon$ to zero, we get the desired lower bound of 3 on the transmission rate.

\section{Proof of Theorem~\ref{thm:private_rnd}}\label{app_thm:private_rnd}

\subsection{Proof of Lemma~\ref{lem_no_cr}}
	Consider a $(n, \epsilon, \delta)$-$\crepic$ scheme with encoder $\phi$ and decoder $\psi_i$ at user $i : i \in [N]$.
	We show the existence of $w^* \in \cW$ such that conditioned $W = w^*$, the decoding error is at most $2\epsilon$ and privacy leakage is at most $2N\delta$.
	Define the events $\mathsf{Bad}^{dec}$ and $\mathsf{Bad}^{priv}_i: i \in [N]$ as follows.
	\begin{align*}
		\mathsf{Bad}^{dec} & \defeq \left\{w \in \cW : \exists i \in [N] \text{ such that } \pr{\widehat{\xv}_i \neq \xv_i | W = w} > 2\epsilon\right\},\\
		\mathsf{Bad}^{priv}_i & \defeq \left\{w \in \cW : I \left( \mv ; \xv_{[N] \setminus \a_i} \middle | \kv_{\ak_i}, \xv_{\s_i}, W = w \right) > 2nN\delta\right\}.
	\end{align*}
	Since the given scheme is $(n, \epsilon, \delta)$-$\crepic$, applying Markov's inequality to conditions~\eqref{Eq_Dec_gen} and~\eqref{Eq_Priv_gen}, respectively, we get,
	\begin{align*}
		\pr{\mathsf{Bad}^{dec}} < 1/2 \text{ and } \pr{\mathsf{Bad}^{priv}_i} < 1/2N \text{ for all } i \in [N].
	\end{align*}
	Taking a union bound of these events, it is seen that there exists $w^* \in \cW$ such that $w^* \notin \cup_{i \in [N]}\mathsf{Bad}^{priv}_i \cup \mathsf{Bad}^{dec}$.
	Define encoder $\phi'$ and decoders $\{\psi'_i\}_{i \in [N]}$ as follows
	\begin{align*}
		\phi'\left(\xv_{[N]}, \kv_{\ak}\right) & = \phi\left(\xv_{[N]}, \kv_{\ak}, W = w\right),\\ 
		\psi'_i\left(\mv, \xv_{\s_i}, \kv_{\ak_i}\right) & = \psi_i\left(\mv, \xv_{\s_i}, \kv_{\ak_i}, W = w\right).
	\end{align*}
	Clearly, the above scheme is $(n, 2\epsilon, 2N\delta)$-$\repic$.
	This proves the lemma.
\hfill{\rule{2.1mm}{2.1mm}}

\subsection{Proof of Lemma~\ref{lem_0_err}}
	By Corollary~\ref{cor_nocrd}, it is sufficient to show that if the rate $(R, R_{\b} : \b \in \ak)$ is achievable using $\repic$ schemes then for all $\delta > 0$, rate $(R + \delta, R_{\b} : \b \in \ak)$ can be achieved using zero-error $\repic$ schemes.
	Fix $\delta > 0$, $\epsilon > 0$.
	For an appropriately chosen block-length $n$, using an $(n, \epsilon, \epsilon)$-$\repic$ scheme of rate $(R, R_{\b} : \b \in \ak)$, we construct a $(n, \epsilon')$-zero-error $\repic$ scheme of rate $(R + \delta, R_{\b} : \b \in \ak)$, where $\epsilon' = (3 + N)\epsilon$.
	Clearly, such a construction proves the lemma.
	A formal description such a construction and its analysis follows.

	For $\epsilon > 0$, choose $n$ large enough such that the following properties are satisfied.
	\begin{enumerate}
		\item There exits a $(n, \epsilon, \epsilon)$-$\repic$ scheme with encoder $\phi$ and decoders $\{\psi_i\}_{i \in [N]}$.
		\item Since transmission rate $(R, R_{\b} : \b \in \ak)$ is achievable using $\repic$ schemes, by Theorem~\ref{thm:priv_nopriv}, there exists a zero-error index code of rate $R$ for large enough block-length $k$.
		\item $n \ge \max \left(\frac{3}{\delta} kR, \frac{3}{\delta}, \frac{1}{\epsilon}\right)$.
	\end{enumerate}
	Recall that when $x^n \in \cX^n$, the co-ordinate $i$ of $x^n$ is denoted by $x^{(i)}$.
	For $x^{n} \in \cX^{n}$ and $1 \le j \le \lceil \frac{n}{k}\rceil$, define
	\begin{align*}
		\mathrm{block}_j(x^{n}) \defeq \begin{cases}
		(x^{(k(j - 1) + 1)}, \ldots, x^{(kj)}) \text{ for } 1 \le j < \lceil \frac{n}{k} \rceil,\\
		(x^{(k(j - 1) + 1)}, \ldots, x^{(n)}, 0, \ldots, 0) \text{ if } j = \lceil \frac{n}{k} \rceil.
		\end{cases}
	\end{align*}
	Consider a \emph{non-private} zero-error index coding scheme of block-length $k$ and rate $R$ with encoder $\alpha$ and decoders $\beta_{i}$ for user $i \in [N]$.
	From this scheme, we construct a zero-error index coding scheme of block-length $n$ with encoder $\alpha'$ and decoders $\beta'_i$ for user $i \in [N]$ as follows.
	For $x_i^{n} \in \cX^{n} : i \in [N]$,
	\begin{align*}
		\alpha'\left(x_{[N]}^{n}\right) = (m_1, m_2, \ldots, m_{\lceil \frac{n}{k} \rceil}) \text{ where } m_j = \alpha\left(\left(\mathrm{block}_j(x_{\ell}^{n})\right)_{\ell \in [N]}\right) \text{ for } 1 \le j \le {\lceil \frac{n}{k} \rceil},\\
		\beta'_i\left(m_1, \ldots, m_{\lceil \frac{n}{k} \rceil}, x^{n}_{\s_i}\right) = (\widehat{\mathrm{block}_1(x_i^{n})}, \ldots, \widehat{\mathrm{block}_{\lceil \frac{n}{k} \rceil}(x_i^{n})}) \text{ where } \widehat{\mathrm{block}_j(x_i^n)} = \beta_i\left(m_j, \left(\mathrm{block}_j(x_{\ell}^n)\right)_{\ell \in \s_i}\right).
	\end{align*}
	The index code for block-length $n$ is the concatenation of $\lceil \frac{n}{k} \rceil$ copies of zero-error index code for block-length $k$, hence it is also a zero-error index code.
	The rate of this scheme is $\frac{k \cdot \lceil \frac{n}{k} \rceil \cdot R}{n} \le \frac{R}{n} \cdot (n + k) \le R + \delta/3$.

	The $(n, \epsilon')$ zero-error $\repic$ with encoder $\phi'$ and decoders $\psi'_i$ for user $i \in [N]$, is a hybrid of the above mentioned $(n, \epsilon, \epsilon)$-$\repic$ scheme and the $n$ block-length index coding scheme.
	Intuitively, the new scheme falls back to the zero-error index coding whenever the $\repic$ scheme is found to make an error at any of the decoders.
	A formal description follows.
	For files $x_j^n \in \cX^n$ for $j \in [N]$ and keys $k_{\b} \in \mathbb{F}^{nR_{\b}}$ for $\b \in \ak$ and private randomness $w_{\phi} \in \cW_{\phi}$ of encoder $\phi'$ (which is identical to the private randomness of encoder $\phi$), when $m = \phi(x_{[N]}^n, k_{\ak}, w_{\phi})$
	\begin{align}
		& \phi'\left(x_{[N]}^n, k_{\ak}, w_{\phi}\right) = (\theta, \hat{m}, \text{zeros}_{\theta}) \text{ where }
		\begin{cases}
		\theta = 0 \text{ and } \hat{m} \defeq m \text{ if } \psi_i(m, x^n_{\s_i}, k_{\ak_i}) = x_i^n, \forall i \in [N],\\
		\theta = 1 \text{ and } \hat{m} \defeq \alpha'(x_{[N]}^n) \text{ otherwise.}
		\end{cases}\label{eq_0repic_enc}
	\end{align}
	$\text{zeros}_{\theta}$ is the appropriate length of zero padding needed to keep the transmission at $R + \delta$.
	The encoder uses the $(n, \epsilon)$-$\repic$ scheme whenever it can be correctly decoded at all users, if not, it uses the zero-error index coding scheme whenever the $\repic$ scheme commits an error.
	The value of $\theta$ records the used scheme.
	That the transmission can be made $R + \delta$ (by zero padding) follows from the fact that $(\theta, \hat{m})$ is at most $1 + n(R + \frac{\delta}{3})$ long since rate of the index code being being at most $R + \delta/3$; recollect that $\frac{1}{n} \le \frac{\delta}{3}$.
	For $i \in [N]$,
	\begin{align}
		& \psi'_i : \left(\theta, \hat{m}, \text{zeros}_{\theta}, x_{\s_i}^n, k_{\ak_i}\right) =
		\begin{cases}
		\psi_i(\hat{m}, x_{\s_i}^n, k_{\ak_i}) \text{ if } \theta = 0,\\
		\beta'_i(\hat{m}, x_{\s_i}^n) \text{ otherwise}.
		\end{cases}\label{eq_0repic_dec}
	\end{align}
	We will establish that this is a $(n, \epsilon')$-zero-error $\repic$ scheme of rate $(R + \delta, R_{\b} : \b \in \ak)$.
	By construction, it has block-length $n$ and rate $(R + \delta, R_{\b} : \b \in \ak)$.
	It is easy to observe that it also guarantees zero-error decoding at all receivers.
	Next we show that its privacy parameter is $\epsilon'$, completing the proof of the lemma.
	
	For files $\xv_{[N]}$ and keys $\kv_{\ak}$, let $\Theta$ and $\hat{\mv}$, respectively, represent the random variables corresponding to values of $\theta$ and $\hat{m}$ at the output of the encoder $\phi'$ (See~\eqref{eq_0repic_enc}).
	The privacy condition at user $i$ is,
	\begin{align}
	I \left( \hat{\mv},  \Theta ;\xv_{[N] \setminus \a_i} \middle | \kv_{\ak_i}, \xv_{\s_i} \right) & = I \left( \Theta ;\xv_{[N] \setminus \a_i} \middle | \kv_{\ak_i}, \xv_{\s_i} \right) + I \left( \hat{\mv} ;\xv_{[N] \setminus \a_i} \middle | \kv_{\ak_i}, \xv_{\s_i}, \Theta \right) \nonumber\\
	& \le H(\Theta) +  I \left( \hat{\mv} ;\xv_{[N] \setminus \a_i} \middle | \kv_{\ak_i}, \xv_{\s_i}, \Theta \right) \label{lem1_priv_eq}.
	\end{align}
	Second term in the RHS can be bounded as follows.
	\begin{align}
	I \left( \hat{\mv} ;\xv_{[N] \setminus \a_i} \middle | \kv_{\ak_i}, \xv_{\s_i}, \Theta \right) & = \pr{\Theta = 0} I \left( \hat{\mv} ;\xv_{[N] \setminus \a_i} \middle | \kv_{\ak_i}, \xv_{\s_i}, \Theta = 0 \right)  \nonumber\\
	& \qquad + \pr{\Theta = 1} I \left( \hat{\mv} ;\xv_{[N] \setminus \a_i} \middle | \kv_{\ak_i}, \xv_{\s_i}, \Theta = 1 \right) \nonumber\\
	&\le \pr{\Theta = 0} I \left( \hat{\mv} ;\xv_{[N] \setminus \a_i} \middle | \kv_{\ak_i}, \xv_{\s_i}, \Theta = 0 \right) + \epsilon H\left(\xv_{[N]}\right) \label{lem1_priv_eq_t1}.
	\end{align}
	By the definition of $\Theta$ in~\eqref{eq_0repic_enc}, $\pr{\Theta = 1} \leq \epsilon$ since the decoding error of $(n, \epsilon)$-$\repic$ scheme is at most $\epsilon$.
	Hence, the inequality~\eqref{lem1_priv_eq_t1} follows from $I\left(\hat{\mv} ;\xv_{[N] \setminus \a_i} \middle | \kv_{\ak_i}, \xv_{\s_i}, \Theta = 1 \right) \leq H\left(\xv_{[N]}\right)$.
	When $\Theta = 0$, $\hat{\mv} = \phi\left(\xv_{[N]}, \kv_{\ak}\right)$. Then,
	\begin{align}
	\pr{\Theta = 0} \cdot I\left( \hat{\mv} ;\xv_{[N] \setminus \a_i} \middle | \kv_{\ak_i}, \xv_{\s_i}, \Theta = 0 \right) & =  \pr{\Theta = 0} \cdot I \left(\phi\left(\xv_{[N]}, \kv_{\ak}\right) ;\xv_{[N] \setminus \a_i} \middle | \kv_{\ak_i}, \xv_{\s_i}, \Theta = 0 \right)  \nonumber\\
	& \le I\left( \phi\left(\xv_{[N]}, \kv_{\ak}\right) ;\xv_{[N] \setminus \a_i} \middle | \kv_{\ak_i}, \xv_{\s_i}, \Theta \right) \nonumber\\
	& \le I\left( \phi\left(\xv_{[N]}, \kv_{\ak}\right), \Theta ;\xv_{[N] \setminus \a_i} \middle | \kv_{\ak_i}, \xv_{\s_i} \right) \nonumber\\
	& \le I\left( \phi\left(\xv_{[N]}, \kv_{\ak}\right) ;\xv_{[N] \setminus \a_i} \middle | \kv_{\ak_i}, \xv_{\s_i} \right) + H(\Theta)  \nonumber \\
	& \leq n \epsilon + H(\Theta). \label{lem1_priv_eq_t2}
	\end{align}
	The last inequality follows from the scheme with encoder $\phi$ and decoders $\{\psi_i\}_{i \in [N]}$ being a $(n, \epsilon)$-$\repic$ scheme.
	Since $H\left(\xv_{[N]}\right) = nN$ in \eqref{lem1_priv_eq_t1} and $H(\Theta) \leq 1$ in \eqref{lem1_priv_eq}~and~\eqref{lem1_priv_eq_t2}, by substituting \eqref{lem1_priv_eq_t2} in \eqref{lem1_priv_eq_t1}, and \eqref{lem1_priv_eq_t1} in \eqref{lem1_priv_eq}, and using $n > \frac{1}{\epsilon}$, we get
	\begin{align}
	I \left(\hat{\mv}, \Theta ;\xv_{[N] \setminus \a_i} \middle | \kv_{\ak_i}, \xv_{\s_i} \right) \le n(1 + N)\epsilon + 2 \le n \epsilon'. \label{Eq_priv_uppr}
	\end{align}
\hfill{\rule{2.1mm}{2.1mm}}

\subsection{Proof of Claim~\ref{clm:in_lem_det}}
		The encoder $\phi$ can be thought of as a channel from $\left( \xv_{[N]}, \kv_{\ak} \right)$ to $\mv \defeq \phi\left(\xv_{[N]}, \kv_{\ak}, W_{\phi}\right)$, with the private randomness of the encoder $W_{\phi}$ being the randomness of the channel.
		Between a sender and receiver, this channel can be simulated (asymptotically) using a channel simulation scheme that uses only common randomness between the sender and the users (and no private randomness at the sender).
		Below we state a result from channel simulation, modified for our purposes.
		\begin{theorem}~\cite[Theorem 10]{cuff}\label{thm_cuff}
			Consider random variables $(U, V)$ jointly distributed over $\cU \times \cV$.
			For any $\epsilon, \delta > 0$, there is a large enough $m$, $\beta \le I(U ; V) + \delta$, common randomness $W$ distributed uniformly over a finite set $\cW$, and deterministic maps $\mathsf{Enc}$ and $\mathsf{Dec}$ such that, when the operator $\lVert . \rVert$ denotes total variation distance,
			\begin{align*}
			\mathsf{Enc} : \cU^m \times \cW \rightarrow 2^{m\beta}, \;
			\mathsf{Dec} : 2^{m\beta} \times \cW \rightarrow \cV^m, \text{ and }
			\left \lVert  \left( U^m, V^m\right) 
			- \left(U^m, \mathsf{Dec}\left(Enc(U^m, W\right), W \right) \right \rVert \le \epsilon.
			\end{align*}
		\end{theorem}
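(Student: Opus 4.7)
\medskip

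\noindent\textbf{Proof plan for Theorem~\ref{thm_cuff}.}
The strategy is the classical random-codebook / soft-covering argument for channel simulation with shared randomness. I would use the common randomness $W$ to index a random codebook of i.i.d.\ $V^m$-sequences, let the encoder select a codeword through likelihood-based sampling, and let the decoder output the selected codeword.

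First, I would fix a block length $m$ (to be chosen large), a rate $\beta$ with $I(U;V) < \beta \le I(U;V)+\delta$, and let the common randomness $W$ index a random codebook
$\mathcal{C}(W) = \{V^m(W,k) : k \in [2^{m\beta}]\}$, where each $V^m(W,k)$ is drawn i.i.d.\ from the product distribution $P_V^{\otimes m}$, independently across $k$ and $W$. Define $\mathsf{Enc}(u^m, W)$ to output an index $K$ sampled according to the posterior
\[
\Pr(K=k \mid U^m = u^m, W) \;\propto\; P_{U^m \mid V^m}(u^m \mid V^m(W,k)),
\]
and let $\mathsf{Dec}(K, W) = V^m(W,K)$. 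This is the likelihood-encoder of Song--Cuff--Poor; it can be implemented deterministically because the randomness used by the encoder can be folded into $W$ (or, for a truly deterministic encoder, any small additional randomness can be realized by enlarging $W$).

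The heart of the argument is the \emph{soft-covering lemma}: if $Q_{U^m}$ denotes the distribution of $U^m$ obtained by first picking $K$ uniformly in $[2^{m\beta}]$, then generating $U^m$ from $P_{U\mid V}^{\otimes m}(\cdot \mid V^m(W,K))$, then
\[
\mathbb{E}_{\mathcal{C}}\,\bigl\lVert Q_{U^m} - P_U^{\otimes m}\bigr\rVert_{\mathrm{TV}} \;\longrightarrow\; 0
\quad\text{as } m\to\infty,
\]
whenever $\beta > I(U;V)$. I would prove this by a second-moment / typicality computation: bound $\mathbb{E}\,\lVert Q_{U^m} - P_U^{\otimes m}\rVert_{\mathrm{TV}}^2$ by
$\mathbb{E}\,\chi^2(Q_{U^m}\Vert P_U^{\otimes m})$ and use the i.i.d.\ codebook together with the restriction to jointly typical pairs to get an exponentially vanishing upper bound of the form $2^{-m(\beta - I(U;V) - o(1))}$. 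A direct consequence (by reversing the roles, since the joint law of $(U^m, V^m(W,K))$ under our encoder is exactly the same as the one produced by the ``backward'' scheme) is
\[
\mathbb{E}_{\mathcal{C}}\,\bigl\lVert P_{U^m, \mathsf{Dec}(\mathsf{Enc}(U^m, W), W)} - P_{U,V}^{\otimes m}\bigr\rVert_{\mathrm{TV}} \;\le\; \epsilon/2,
\]
for all $m$ sufficiently large.

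Finally, since the expectation over the random codebook is at most $\epsilon/2$, by a standard derandomization (Markov's inequality) there exists at least one realization of the codebook achieving total-variation distance at most $\epsilon$. Freezing this codebook yields deterministic maps $\mathsf{Enc}$ and $\mathsf{Dec}$ of the required type, with $\beta \le I(U;V)+\delta$, and completes the proof. The main technical obstacle is the soft-covering bound itself; the second-moment computation is delicate because one needs the $\chi^2$ bound (rather than the easier KL bound) to control total variation via Pinsker-style inequalities while keeping the rate threshold tight at $I(U;V)$. Everything else---the construction of $\mathsf{Enc}$, the derandomization, and the bookkeeping of the common-randomness alphabet---is routine once the covering lemma is in hand.
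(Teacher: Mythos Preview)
The paper does not prove this theorem; it is quoted from \cite[Theorem~10]{cuff} (Cuff--Permuter--Cover) and used as a black-box tool inside the proof of Claim~\ref{clm:in_lem_det}. So there is no ``paper's own proof'' to compare against.

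That said, your sketch is the standard route to results of this type and is essentially how the cited reference (and the later likelihood-encoder literature) establishes them: a random i.i.d.\ codebook indexed by common randomness, a likelihood/posterior-matching encoder, the soft-covering lemma to show the induced joint law is close in total variation to $P_{U,V}^{\otimes m}$ once the codebook rate exceeds $I(U;V)$, and a final derandomization to fix a good codebook. One small point worth tightening: you assert that the encoder's sampling randomness ``can be folded into $W$'' to make $\mathsf{Enc}$ deterministic. This is fine, but it is not entirely free --- you need to argue that a finite amount of extra shared randomness suffices (e.g., by discretizing the posterior sampling step with negligible additional total-variation error), since the statement requires $W$ to be uniform over a finite set and both maps to be deterministic. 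Once that bookkeeping is handled, the plan is correct.
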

	
		In the sequel, to avoid confusion, for each $\b \in \ak$, the random variable for the keys (of rate $nR_{\b}$) used in the $(n, \epsilon)$-zero-error-$\repic$ scheme is denoted by $\bl{\k}{n}_{\b}$ and the transmitted message is denoted by $\bl{\m}{n}$.
		Furthermore, $m$ i.i.d. copies of $\xv_i : i \in [N]$ is denoted by $\iid{\x}{n, m}$, and $m$ i.i.d. copies of $\bl{\k}{n}_{\b} : \b \in \ak$ and $\bl{\m}{n}$ are denoted by $\iid{\k}{n, m}$ and $\iid{\m}{n, m}$, respectively.
		For $1 \le j \le m$, the $j^{th}$, $n$-sized block in $\iid{\x}{n, m}_i, \iid{\k}{n, m}_{\b}$ and $\iid{\m}{n, m}$ are denoted by $\iid{\x}{n, (j)}_i, \iid{\k}{n, (j)}_{\b}$ and $\iid{\m}{n, (j)}$, respectively.

		Substituting $\left(\xv_{[N]}, \bl{\k}{n}_{\ak}\right)$ and $\bl{\m}{n}$ for $U$ and $V$, respectively, in above theorem we can conclude that there exists a large enough $m$, common randomness $W$, and functions $\mathsf{Enc}$ and $\mathsf{Dec}$ such that the image of $\mathsf{Enc}$ is $\mathbb{F}^{m\beta}$, where
	\begin{align}
	& m \beta \le m \left( I\left(\iid{\x}{n}_{[N]}, \iid{\k}{n}_{\ak}; \iid{\m}{n}\right) + n \cdot \min(\epsilon, \delta)\right) \le mn (R + \min(\epsilon, \delta)), \text{and}\label{eq_rate}\\
	& \bigg \lVert \left(\iid{\x}{n, m}_{[N]}, \iid{\k}{n, m}_{\ak}, \iid{\m}{n, m}\right) - \left(\iid{\x}{n, m}_{[N]}, \iid{\k}{n, m}_{\ak}, \mathsf{Dec}\left(J, W\right) \right) \bigg \rVert \le \epsilon, \text{ where } J \defeq \mathsf{Enc}\left(\left(\iid{\x}{n, m}_{[N]}, \iid{\k}{n, m}_{\ak}\right), W\right). \label{eq_det_norm}
	\end{align}
	Since $\mathsf{Dec}\left(J, W\right)$ is statistically close to $\iid{\m}{n, m}$, we will denote it by $\widehat{\iid{\m}{n, m}}$.

	Below, we define a $mn$ block-length scheme with encoder $\phi'$ and decoders $\{\psi_i\}_{i \in [N]}$ that use common randomness $W$ and \emph{no private randomness at encoder or decoders}.
	\begin{align*}
		& \phi'\left(\iid{\x}{n, m}_{[N]}, \iid{\k}{n, m}_{\ak}, W \right) = \mathsf{Enc} \left(\iid{\x}{n, m}_{[N]}, \iid{\k}{n, m}_{\ak}, W \right) = J,\\
		& \psi'_i\left(J, \iid{\x}{n, m}_{\s_i}, \iid{\k}{n, m}_{\ak_i}, W \right) = \left(\widehat{\iid{\x}{n, (1)}_i}, \ldots, \widehat{\iid{\x}{n, (m)}_i}\right), \text{where } \widehat{\iid{\x}{n, (j)}_i} \defeq \psi_i\left(\widehat{\iid{\m}{n, (j)}}, \iid{\x}{n, (j)}_{\s_i}, \iid{\k}{n, (j)}_{\ak_i}\right).
	\end{align*}
	This is a $mn$ block-length scheme and form~\eqref{eq_rate} it follows that its rate is $(R + \delta, R_{\b} : \b \in \ak)$.
	If $\epsilon < \delta$, the transmission rate can be made $R + \delta$ by zero-padding.
	Next, we establish that the decoding error of this scheme is at most $\epsilon$.
	The decoding error of is given by the expression,
	\begin{align*}
	1 - \pr{\widehat{\iid{\x}{n, (j)}_i} = \iid{\x}{n, (j)}_i, \forall i \in [N], j \in [m]}. 
	\end{align*}
	Inequality~\eqref{eq_det_norm} implies that
	\begin{align*}
		\bigg \lVert \left(\iid{\m}{n, (j)}, \iid{\x}{n, (j)}_{\s_i}, \iid{\k}{n, (j)}_{\ak_i}\right)_{j \in [m], i \in [N]} - \left(\widehat{\iid{\m}{n, (j)}}, \iid{\x}{n, (j)}_{\s_i}, \iid{\k}{n, (j)}_{\ak_i}\right)_{i \in [m], j \in [N]} \bigg \rVert \le \epsilon.
	\end{align*}
	Hence, by the definition of statistical distance,
	\begin{multline*}
	\pr{\psi'_i \left( \widehat{\iid{\m}{n, (j)}}, \iid{\x}{n, (j)}_{\s_i}, \iid{\k}{n, (j)}_{\ak_i} \right) = \iid{\x}{n, (j)}_i, i \in [N], j \in [m]}\\
	\ge \pr{\psi_i \left( \iid{\m}{n, (j)}, \iid{\x}{n, (j)}_{\s_i}, \iid{\k}{n, (j)}_{\ak_i} \right) = \iid{\x}{n, (j)}_i, i \in [N], j \in [m]} - \epsilon.
	\end{multline*}
	Since the scheme with encoder $\phi$ and decoders $\psi_i$ for user $i \in [N]$ is a zero-error $\repic$ scheme, 
	\begin{align*}
	\pr{\psi_i \left( \iid{\m}{n, (j)}, \iid{\x}{n, (j)}_{\s_i}, \iid{\k}{n, (j)}_{\ak_i} \right) = \iid{\x}{n, (j)}_i, i \in [N], j \in [m]} = 1.
	\end{align*}
	Hence the decoding error the new scheme with encoder $\phi'$ and decoders $\psi'_i$ for user $i \in [N]$ is at most $\epsilon$.
	
	To analyze the privacy parameter, we use the following lemma that claims that two distributions that are statistically close are also close in their entropies.
	\begin{lemma}~\cite[Lemma 2.7]{csiszar}\label{lem:csiszar}
		If $P, Q$ are two distributions on a finite set $\cX$ such that $\left \lVert P - Q \right \rVert \le \epsilon$, then
		\begin{align*}
		\left | H\left(P\right) - H\left(Q\right) \right | \le \epsilon(\log{\left|\cX\right|} - \log{\epsilon}).
		\end{align*}
	\end{lemma}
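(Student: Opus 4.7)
The plan is to reduce the problem to a pointwise modulus of continuity for the scalar entropy function $\eta(t) := -t\log t$ (extended by $\eta(0) := 0$) and then to aggregate the pointwise bounds via a short Jensen-type computation. Recall that $\eta$ is concave on $[0,1]$ with $\eta(0) = \eta(1) = 0$, is increasing on $[0,1/e]$, decreasing on $[1/e,1]$, and satisfies $H(P) = \sum_{x \in \cX} \eta(P(x))$.

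The first step is to establish the pointwise inequality
\[
|\eta(a) - \eta(b)| \le \eta(|a-b|) \quad \text{for all } a, b \in [0,1] \text{ with } |a-b| \le \tfrac{1}{2}.
\]
The key ingredient is the subadditivity of concave functions vanishing at the origin, namely $\eta(u) + \eta(v) \ge \eta(u+v)$ whenever $u, v \ge 0$ and $u + v \le 1$. Taking $u = \min(a,b)$ and $v = |a - b|$ immediately gives $\eta(\max(a,b)) - \eta(\min(a,b)) \le \eta(|a-b|)$, which settles the case $\eta(\max(a,b)) \ge \eta(\min(a,b))$. The complementary case can occur only when the interval between $a$ and $b$ straddles the peak at $1/e$, and requires a short direct comparison using unimodality of $\eta$ together with $|a-b| \le 1/2$. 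I expect this straddling subcase to be the main obstacle of the proof, although it is dispatched by elementary calculus.

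For the aggregation, the trivial range $\epsilon > 1/2$ is handled by $|H(P) - H(Q)| \le \log|\cX| \le \epsilon(\log|\cX| - \log \epsilon)$. In the remaining range $\epsilon \le 1/2$, every $\delta(x) := |P(x) - Q(x)|$ is at most $\epsilon \le 1/2$, so the triangle inequality together with the pointwise bound gives
\[
|H(P) - H(Q)| \le \sum_{x \in \cX} |\eta(P(x)) - \eta(Q(x))| \le \sum_{x \in \cX} \eta(\delta(x)).
\]
Setting $p(x) := \delta(x)/\epsilon$ and noting that $p$ is a probability distribution on $\cX$,
\[
\sum_{x \in \cX} \eta(\delta(x)) = \epsilon \log \tfrac{1}{\epsilon} + \epsilon \sum_{x} p(x) \log \tfrac{1}{p(x)} \le \epsilon \log \tfrac{1}{\epsilon} + \epsilon \log|\cX| = \epsilon(\log|\cX| - \log \epsilon),
\]
where the final inequality is just $H(p) \le \log|\cX|$. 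This yields the claimed bound.
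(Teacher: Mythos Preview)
The paper does not prove this lemma; it simply cites Csisz\'ar--K\"orner, so there is no in-paper argument to compare against. Your treatment of the regime $\epsilon \le 1/2$ is the standard proof and is essentially correct, with two minor patches needed: your ``complementary'' pointwise case is not restricted to intervals straddling $1/e$ (it also arises when both $a,b > 1/e$), though the same calculus argument covers it; and since $\sum_x \delta(x) \le \epsilon$ rather than $=\epsilon$, your $p$ is only a sub-probability --- fix this by first proving the bound with $\epsilon$ replaced by $\theta := \sum_x \delta(x)$ and then using that $t \mapsto t\log(|\cX|/t)$ is increasing on $(0,|\cX|/e]$.

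The range $\epsilon > 1/2$, however, contains a genuine error. The inequality you assert there, $\log|\cX| \le \epsilon(\log|\cX| - \log\epsilon)$, rearranges to $(1-\epsilon)\log|\cX| \le \epsilon\log(1/\epsilon)$ and fails already at $\epsilon = 0.6$, $|\cX| = 3$. In fact the bound as written is not valid for arbitrary $\epsilon$: with $|\cX|=3$, $P$ uniform and $Q$ a point mass one has $\sum_x|P(x)-Q(x)| = 4/3$ but $|H(P)-H(Q)| = \log 3 > \tfrac{4}{3}\log\tfrac{9}{4}$. Csisz\'ar--K\"orner's original statement carries the hypothesis $\epsilon \le 1/2$, which the paper omits; the correct move is to restore that hypothesis rather than attempt a proof in that range.
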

	The privacy condition at user $i \in [N]$ is given by the mutual information,
	\begin{align}
	I \left( J ; \iid{\x}{n, m}_{[N] \setminus \a_i} \middle | \iid{\k}{n, m}_{\ak_i}, \iid{\x}{n, m}_{\s_i}, W \right) \le I \left( J , W; \iid{\x}{n, m}_{[N] \setminus \a_i} \middle | \iid{\k}{n, m}_{\ak_i}, \iid{\x}{n, m}_{\s_i} \right). \label{Eq_priv_mutl_info}
	\end{align}
	Since $\widehat{\iid{\m}{n, m}}$ is a function of $(J, W)$, the above inequality implies that,
	\begin{multline}
	I \left( J ; \iid{\x}{n, m}_{[N] \setminus \a_i} \middle | \iid{\k}{n, m}_{\ak_i}, \iid{\x}{n, m}_{\s_i}, W \right) \le I \left( J, W, \widehat{\iid{\m}{n, m}} ; \iid{\x}{n, m}_{[N] \setminus \a_i} \middle | \iid{\k}{n, m}_{\ak_i}, \iid{\x}{n, m}_{\s_i} \right)\\ = I \left( \widehat{\iid{\m}{n, m}} ; \iid{\x}{n, m}_{[N] \setminus \a_i} \middle | \iid{\k}{n, m}_{\ak_i}, \iid{\x}{n, m}_{\s_i} \right) + I \left( J, W ; \iid{\x}{n, m}_{[N] \setminus \a_i} \middle | \iid{\k}{n, m}_{\ak_i}, \iid{\x}{n, m}_{\s_i}, \widehat{\iid{\m}{n, m}} \right).
	\label{Eq_priv_mutl_info2}
	\end{multline}
	The first term in the RHS of~\eqref{Eq_priv_mutl_info2} is upper bounded using the following observation.
	Using Lemma~\ref{lem:csiszar},
	\begin{multline}\label{Eq_priv_mutl_info3}
	I \left( \widehat{\iid{\m}{n, m}} ; \iid{\x}{n, m}_{[N] \setminus \a_i} \middle | \iid{\k}{n, m}_{\ak_i}, \iid{\x}{n, m}_{\s_i} \right) - I \left( \iid{\m}{n, m} ; \iid{\x}{n, m}_{[N] \setminus \a_i} \middle | \iid{\k}{n, m}_{\ak_i}, \iid{\x}{n, m}_{\s_i} \right)\\
	= I \left( \widehat{\iid{\m}{n, m}}, \iid{\k}{n, m}_{\ak_i}, \iid{\x}{n, m}_{\s_i}  ; \iid{\x}{n, m}_{[N] \setminus \a_i} \right) - I \left( \iid{\m}{n, m}, \iid{\k}{n, m}_{\ak_i}, \iid{\x}{n, m}_{\s_i} ; \iid{\x}{n, m}_{[N] \setminus \a_i} \right).
	\end{multline}
	This follows from $I \left( \iid{\k}{n, m}_{\ak_i}, \iid{\x}{n, m}_{\s_i}  ; \iid{\x}{n, m}_{[N] \setminus \a_i} \right) = 0$ as the files and keys are independently distributed.
	The RHS can be expanded as,
	\begin{multline*}
	H \left( \widehat{\iid{\m}{n, m}}, \iid{\k}{n, m}_{\ak_i}, \iid{\x}{n, m}_{\s_i} \right) - \left(H \left( \widehat{\iid{\m}{n, m}}, \iid{\k}{n, m}_{\ak_i}, \iid{\x}{n, m}_{\s_i}, \iid{\x}{n, m}_{[N] \setminus \a_i} \right) - H \left(\iid{\x}{n, m}_{[N] \setminus \a_i} \right)\right) \\
	- H \left( \iid{\m}{n, m}, \iid{\k}{n, m}_{\ak_i}, \iid{\x}{n, m}_{\s_i} \right) + \left(H\left( \iid{\m}{n, m}, \iid{\k}{n, m}_{\ak_i}, \iid{\x}{n, m}_{\s_i}, \iid{\x}{n, m}_{[N] \setminus \a_i} \right) - H\left(\iid{\x}{n, m}_{[N] \setminus \a_i} \right)\right).
	\end{multline*}
	Since,
	\begin{align*}
		\bigg \lVert \left(\iid{\m}{n, m}, \iid{\x}{n, m}_{[N]}, \iid{\k}{n, m}_{\ak}\right) - \left(\widehat{\iid{\m}{n, m}}, \iid{\x}{n, m}_{[N]}, \iid{\k}{n, m}_{\ak}\right) \bigg \rVert \le \epsilon,
	\end{align*}
	by Lemma~\ref{lem:csiszar}, the above expression can be bounded as,
	\begin{multline}\label{Eq_priv_mutl_info4}
	H \left( \widehat{\iid{\m}{n, m}}, \iid{\k}{n, m}_{\ak_i}, \iid{\x}{n, m}_{\s_i} \right) - H \left( \widehat{\iid{\m}{n, m}}, \iid{\k}{n, m}_{\ak_i}, \iid{\x}{n, m}_{\s_i}, \iid{\x}{n, m}_{[N] \setminus \a_i} \right) \\
	- H \left( \iid{\m}{n, m}, \iid{\k}{n, m}_{\ak_i}, \iid{\x}{n, m}_{\s_i} \right) + H\left( \iid{\m}{n, m}, \iid{\k}{n, m}_{\ak_i}, \iid{\x}{n, m}_{\s_i}, \iid{\x}{n, m}_{[N] \setminus \a_i} \right)\\
	\le 2\epsilon\left(\log{\mathsf{support}\left( \iid{\m}{n, m}, \iid{\x}{n, m}_{[N]}, \iid{\k}{n, m}_{\ak}\right)} - \log{\epsilon}\right)
	\le 2\epsilon\left( mn \left( R + N + \sum_{\b \in \ak} R_b \right) + \log{\frac{1}{\epsilon}}\right).
	\end{multline}
	From~\eqref{Eq_priv_mutl_info3}~and~\eqref{Eq_priv_mutl_info3}, we can conclude that,
	\begin{multline}
	I \left( \widehat{\iid{\m}{n, m}} ; \iid{\x}{n, m}_{[N] \setminus \a_i} \middle | \iid{\k}{n, m}_{\ak_i}, \iid{\x}{n, m}_{\s_i} \right)\\
	\le I \left( \iid{\m}{n, m} ; \iid{\x}{n, m}_{[N] \setminus \a_i} \middle | \iid{\k}{n, m}_{\ak_i}, \iid{\x}{n, m}_{\s_i} \right) + 2\epsilon\left( mn \left( R + N + \sum_{\b \in \ak} R_b \right) + \log{\frac{1}{\epsilon}}\right) \\
	= m \cdot I \left( \bl{\m}{n} ; \iid{\x}{n}_{[N] \setminus \a_i} \middle | \bl{\k}{n}_{\ak_i}, \iid{\x}{n}_{\s_i} \right) + 2\epsilon\left( mn \left( R + N + \sum_{\b \in \ak} R_b \right) + \log{\frac{1}{\epsilon}}\right) \\
	\le mn\epsilon + 2\epsilon\left( mn \left( R + N + \sum_{\b \in \ak} R_b \right) + \log{\frac{1}{\epsilon}}\right).\label{Eq_uppr_bnd_first}
	\end{multline}
	Here, the last inequality follows from the privacy condition at user $i$ for the given block-length $n$ scheme.
	The second term in the RHS of~\eqref{Eq_priv_mutl_info2} can be bounded as 
	\begin{multline}
	I \left( J, W ; \iid{\x}{n, m}_{[N] \setminus \a_i} \middle | \iid{\k}{n, m}_{\ak_i}, \iid{\x}{n, m}_{\s_i}, \widehat{\iid{\m}{n, m}} \right) \\ 
	\le I \left( J, W ; \iid{\x}{n, m}_{[N]}, \iid{\k}{n, m}_{\ak} \middle | \widehat{\iid{\m}{n, m}} \right)
	= I \left( J, W, \widehat{\iid{\m}{n, m}} ; \iid{\x}{n, m}_{[N]}, \iid{\k}{n, m}_{\ak} \right) - I \left( \widehat{\iid{\m}{n, m}} ; \iid{\x}{n, m}_{[N]}, \iid{\k}{n, m}_{\ak} \right).\label{eq_priv_bound}
	\end{multline}
	We upper bound the RHS of~\eqref{eq_priv_bound} by upper bounding  the first term and lower bounding  the second term.
	The first term can be upper bounded as follows.
	\begin{multline}
	I \left( J, W, \widehat{\iid{\m}{n, m}} ; \iid{\x}{n, m}_{[N]}, \iid{\k}{n, m}_{\ak} \right) \stackrel{(a)}{=} I \left( J, W ; \iid{\x}{n, m}_{[N]}, \iid{\k}{n, m}_{\ak} \right) \stackrel{(b)}{=} I \left( J ; \iid{\x}{n, m}_{[N]}, \iid{\k}{n, m}_{\ak} \middle | W \right) \\
	\le H(J) \stackrel{(c)}{\le} m\beta \stackrel{(d)}{\le} m\left(I\left(\iid{\m}{n} ; \iid{\x}{n}_{[N]}, \iid{\k}{n}_{\ak}\right) + n\epsilon\right) = I\left(\iid{\m}{n, m} ; \iid{\x}{n, m}_{[N]}, \iid{\k}{n, m}_{\ak}\right) + mn\epsilon. \label{Eq_mutl_uppr_bnd}
	\end{multline}
	Here, (a) follows from $\widehat{\iid{\m}{n, m}}$ being a deterministic function of $J, W$, (b) follows from the independence between $W$ and $\iid{\x}{n, m}, \iid{\k}{n, m}_{\ak}$, (c) follows from the range of $\mathsf{Enc}$ being $\mathbb{F}^{m\beta}$, and finally, (d) follows from~\eqref{eq_rate}.
	Similar to the bound~\eqref{Eq_uppr_bnd_first} obtained on the first term in the RHS of~\eqref{Eq_priv_mutl_info2}, using Lemma~\ref{lem:csiszar}, we bound the second term in the RHS of~\eqref{eq_priv_bound} as
	\begin{align}
	I\left(\widehat{\iid{\m}{n, m}} ; \iid{\x}{n, m}_{[N]}, \iid{\k}{n, m}_{\ak}\right) \ge I\left(\iid{\m}{n, m} ; \iid{\x}{n, m}_{[N]}, \iid{\k}{n, m}_{\ak}\right) - 2\epsilon\left( mn \left( R + N + \sum_{\b \in \ak} R_b \right) + \log{\frac{1}{\epsilon}}\right). \label{Eq_mutl_low_bnd}
	\end{align}
	By substituting \eqref{Eq_mutl_low_bnd} and \eqref{Eq_mutl_uppr_bnd} in \eqref{eq_priv_bound}, we get 
	\begin{align}
	I \left( J, W ; \iid{\x}{n, m}_{[N] \setminus \a_i} \middle | \iid{\k}{n, m}_{\ak_i}, \iid{\x}{n, m}_{\s_i}, \widehat{\iid{\m}{n, m}} \right)	
	\leq mn\epsilon + 2\epsilon\left( mn \left( R + N + \sum_{\b \in \ak} R_b \right) + \log{\frac{1}{\epsilon}}\right).\label{Eq_uppr_mutl_eps}
	\end{align}
	Using the bounds~\eqref{Eq_uppr_mutl_eps} and~\eqref{Eq_uppr_bnd_first} in~\eqref{Eq_priv_mutl_info2}, for all $i \in [N]$,
	\begin{multline}
	I \left( J ; \iid{\x}{n, m}_{[N] \setminus \a_i} \middle | \iid{\k}{n, m}_{\ak_i}, \iid{\x}{n, m}_{\s_i}, W \right) \le 2mn\epsilon + 4\epsilon\left( mn \left( R + N + \sum_{\b \in \ak} R_b \right) + \log{\frac{1}{\epsilon}}\right)\le mn\epsilon',\\ 
	\text{where } \epsilon' = 4\epsilon\left(\frac{1}{2} + R + N + \sum_{\b \in \ak}R_{\b} + \frac{1}{mn} \cdot \log{\frac{1}{\epsilon}}\right). \label{eq_det_priv}
	\end{multline}
	Hence, the new scheme has privacy error $\epsilon'$.
	The common randomness $W$ used in the scheme can be fixed to an appropriate value (see Lemma~\ref{lem_no_cr}) to obtain a $(mn, 2\epsilon, 2N\epsilon')$-$\pic$ scheme.
	This proves the claim as $\epsilon' \rightarrow 0$ as $\epsilon \rightarrow 0$.
\subsection{Proof of Lemma~\ref{lem:priv_linear}}
Consider perfect private linear index coding scheme for $N$ users of block-length $n$ and rate $R$, that uses private randomness at the encoder.
Let $\bX_i$ denote the row-vector corresponding to $\xv$, $\bK_{\b}: \b \in \ak$, denote the key uniformly distributed in $\mathbb{F}^{nR_{\b}}$ and let $\bP$ denote the private randomness uniformly distributed in $\mathbb{F}^{p}$ for some number $p$.
Since $\ak$ is the same as $\{1, \ldots, 2^N - 2\}$, in the sequel, we will often represent $\b$ as a number between $1$ and $2^N - 2$.
A linear encoder of the scheme may be described as follows.
\begin{align*}
	\bM &= \sum_{i \in [N]} G_i \bX_i^T + \sum_{1 \le \b \le 2^N - 2} H_{\b} \bK_{\b}^T + H' \bP\\
	&= \begin{bmatrix}
		G_1 & \ldots & G_N & H_{1} & \ldots & H_{2^N - 2} & H'
	\end{bmatrix}
	\begin{bmatrix}
		\bX_1 & \ldots & \bX_N & \bK_{1} & \ldots & \bK_{2^N - 2} & \bP
	\end{bmatrix}^T\\
	&\defeq \Psi
	\begin{bmatrix}
		\bX_1 & \ldots & \bX_N & \bK_{1} & \ldots & \bK_{2^N - 2} & \bP
	\end{bmatrix}^T
\end{align*}
where $G_i \in \mathbb{F}^{nR \times n}$, $H_{\b} \in \mathbb{F}^{nR \times nR_{\b}}$ and $H' \in \mathbb{F}^{nR \times np}$ and $\Phi$ is the generator matrix of the encoder. 
For $i \in [N]$, by the zero-error decoding condition, there exists $L^i \in \mathbb{F}^{n \times nR}$, $S^i_j \in \mathbb{F}^{n \times n}$ where $j \in \s_i$ and $T^i_{\b} \in \mathbb{F}^{n \times nR_{b}}$ where $\b \in \ak_i$ such that
\begin{align}\label{eq:priv_linear_dec}
	\bX_i = L^i \bM + \sum_{j \in \s_i} S^i_j \bX_i^T + \sum_{\b \in \ak_i} T^i_{\b} \bK_{\b}^T.
\end{align}
Hence, $L^i \bM$ is determined by $\bX_i, i \in \a_i$ and $\bK_{\b}, \b \in \ak_i$ and it is independent of $\bP$.
Consider the encoding scheme
\begin{align*}
	\begin{bmatrix}
		\bM^1 & \ldots & \bM^N
	\end{bmatrix}^T =
	\begin{bmatrix}
		(L^1)^T & \ldots & (L^N)^T
	\end{bmatrix}^T \Phi.
	\begin{bmatrix}
		\bX_1 & \ldots & \bX_N & \bK_{1} & \ldots & \bK_{2^N - 2} & \bP
	\end{bmatrix}^T.
\end{align*}
It follows from the above observation that this encoder does not use private randomness.
The actual encoding will use a row-reduced form of the matrix $\left[(L^1)^T \; \ldots \; (L^N)^T\right]^T \Phi$; \emph{i.e.,} after removing the independent rows.
Clearly, this encoding scheme is private since $\Psi$ is private.
It follows from~\eqref{eq:priv_linear_dec} that 
\begin{align*}
	\bX_i = \bM^i + \sum_{j \in \s_i} S^i_j \bX_i^T + \sum_{\b \in \ak_i} T^i_{\b} \bK_{\b}^T.
\end{align*}
Hence, we have obtained a linear private index coding scheme that does not use private randomness.
Since the rank of $[(L^1)^T \; \ldots \; (L^N)^T]^T \Phi$ is at most the rank of $\Phi$, the rate of this scheme is at most the rate of the given scheme.
This proves the lemma.
\hfill{\rule{2.1mm}{2.1mm}}

\section{Details omitted in Section~\ref{sec_1b}}

\subsection{Proof of Lemma~\ref{Lem_Subset}}

\label{Sec_proof_subset}
Let $i,j \in [N], i \neq j$ be such that $i \notin \s_j$ and $\s_i \subseteq \a_j$. By Fano's inequality, the decodability criterion for user $i$ implies that for each $\epsilon > 0$, there exists a large enough block length $n$ such that
\begin{align*}
    H \left(\xv_i \mid \mv, \xv_{\s_i}\right) \leq n \epsilon .
\end{align*}
Since $\s_i \subseteq (j \cup \s_j)$, it follows that
$ H \left(\xv_i \mid \mv, \xv_j, \xv_{\s_j}\right)  \leq n \epsilon$.
Since $i\notin \s_j$, consider the weak privacy condition for message $X_i$ at user $j$ 
\begin{align*}
     I \left(\mv, \xv_j,\xv_{\s_j};\xv_i\right) & = H\left(\xv_i\right) - H\left(\xv_i \mid \mv, \xv_j,\xv_{\s_j}\right) \\
     & \geq  n - n \epsilon,
\end{align*}
where the last inequality follows since $H\left(\xv_i \mid \mv,\xv_j,\xv_{\s_j}\right) \leq n \epsilon $, and $ H\left(\xv_i\right)  = n $. Thus, in this case weak privacy cannot be achieved. This proves Lemma~\ref{Lem_Subset}.
\hfill{\rule{2.1mm}{2.1mm}}

\subsection{Proof of Claim~\ref{Clm_subset}}
\label{Sec_proof_example}
From the decodability condition of user 1,  for each $\epsilon > 0$ and large enough block length $n$,  we have
\begin{align}
H\left(\xv_1|\mv,\xv_3,\xv_4\right) & \leq n \epsilon  \notag
\end{align}
which implies that $I\left(\mv, \xv_3,\xv_4 ; \xv_1 \right) \geq n(1-\epsilon)$. Using the chain rule of mutual information, we obtain
\begin{align}
I\left(\mv, \xv_3,\xv_4 ; \xv_1 \right)  & = I\left(\mv,\xv_3;\xv_1\right) + I\left(\xv_4;\xv_1|\mv,\xv_3\right) \label{Eq_mutual_chain}\\
& \geq n(1-\epsilon). \notag
\end{align}
First, let us assume that  $I\left(\mv,\xv_3;\xv_1\right) \geq n(1-\epsilon')$ for some $0< \epsilon' <  \epsilon$. Then, it clearly violates the privacy of $\xv_1$ at user 6 since user 6 has both $\mv$ and $\xv_3$. This implies that $I\left(\mv,\xv_3;\xv_1\right)$ cannot be greater than $n\epsilon'$.
 By assuming $I\left(\mv,\xv_3;\xv_1\right) \leq n\epsilon'$, it follows from~\eqref{Eq_mutual_chain} that  $I\left(\xv_4;\xv_1|\mv,\xv_3\right) \geq n(1-\epsilon'')$, where $\epsilon'' = \epsilon - \epsilon'$. Thus, we have
\begin{align}
I\left(\xv_4;\xv_1, \mv,\xv_3\right) \geq n(1-\epsilon''). \label{Eq_priv_cond}
\end{align}
This implies that user 3 will learn about $\xv_4$ since it has $(\xv_1, \mv,\xv_3)$ which violates the privacy of $\x_4$ at user 3. This shows that if the decodability conditions are satisfied at all users, we cannot achieve weak privacy for this example.

\subsection{Proof of Theorem~\ref{prop_neces}}
\label{Sec_proof_prop}
To prove the theorem, we assume that there exists a user $i \in [N]$ for which the condition is satisfied, and we show that weak privacy is not feasible.  The decodability condition for user $i$ gives that  for each $\epsilon > 0$ and large enough block length $n$,  we have
\begin{align}
I\left(\xv_{\s_i}, \mv;\xv_i\right)  \geq n(1-\epsilon).
\end{align}
First, let us assume that $S =\a_i$. Then  there exists a user $j \in [N]$ and a $ k \in[N]$ such that $\a_i\setminus \{k\} \subseteq \a_j$ and $ k \notin \a_j$. If $k =i$, then it violates  the subset condition given in Lemma~\ref{Lem_Subset}, so weak privacy is not feasible. Hence, let us assume that $k \neq i$. Then for $S = S' \cup \{k\}$
\begin{align}
I\left(\mv, \xv_{S'}, \xv_{k}; \xv_i\right) & = I\left(\mv, \xv_{S'}; \xv_i\right) + I\left(\xv_k; \xv_i| \mv, \xv_{S'}\right) \label{Eq_chain_mutual} \\
& \geq n(1-\epsilon). \notag
\end{align}
If $I\left(\xv_k; \xv_i| \mv, \xv_{S'}\right) \geq n(1-\epsilon')$ for some $0 < \epsilon' < \epsilon$, then  
\begin{align*}
I\left(\mv, \xv_{S'}, \xv_i ; \xv_k\right) &  \geq n(1-\epsilon')
\end{align*}
 since $ I\left(\xv_k; \xv_i| \mv, \xv_{S}\right) \leq I\left(\mv, \xv_{S}, \xv_i ; \xv_k\right)$.
Since user $j$ has $\left(\mv, \xv_{S'}, \xv_i \right)$, it violates the privacy of $\x_k$ at user $j$. So, let us assume that 
$ I\left(\xv_k; \xv_i| \mv, \xv_{S}\right) \leq n\epsilon'$. Then it follows from~\eqref{Eq_chain_mutual} that $ I\left(\mv, \xv_{S'}; \xv_i\right) \geq n(1-\epsilon'')$, where $ \epsilon'' = \epsilon - \epsilon'$. Since the given condition in Theorem~\ref{prop_neces} satisfies for all subsets of $\a_i$, by following the above arguments, we can show that for some $k\in [N]$ and $S'' \subseteq \a_i$ such that $S'= S'' \cup \{k\}$, we have
\begin{align}
  I\left(\mv, \xv_{S''}; \xv_i\right) \geq n(1-\tilde{\epsilon})
\end{align}
for some $\tilde{\epsilon} > 0$.
If we continue this, we get that for some $l \in [N], l \neq i$,
\begin{align*}
I\left(\mv, \xv_l ; \xv_i\right) & \geq n(1-\delta) \text{ for some } \delta> 0.
\end{align*}
Now let us take $S = \{l, i \}$. In this case, if $k = i$, then it clearly violates the privacy of $\x_i$ at the user since the user has both $\mv$ and $\xv_l$. So, let $k = l$. Then,
\begin{align*}
I\left(\mv, \xv_l ; \xv_i\right) & = I \left(\mv ; \xv_i\right) + I\left(\xv_l;\xv_i|\mv\right).
\end{align*}
As we already noted that if $I\left(\mv;\xv_i\right)$ is large, then it violates the privacy at some user. This implies that $I\left(\xv_l;\xv_i|\mv\right) \geq n(1-\delta')$ for some $\delta'>0$. Since $I\left(\xv_l;\xv_i|\mv\right) \leq I\left(\mv, \xv_i; \xv_l\right)$, it violates the privacy of $\x_l$ at the user who does not have $\x_l$. This shows that we cannot achieve weak privacy in this case. This completes the proof of the theorem.

\subsection{ Proof of Theorem~\ref{Thm_Sec_Cliq}}
\label{Proof_prop_cliq_covr}
Let $\cC_{G}$ denote a secure clique cover for the index coding problem represented by graph $G$ and let $C_1,C_2,\ldots, C_{|\cC_G|}$ be the cliques in $\cC_G$. 
Further, let
\begin{align}
M_i = \sum_{j \in C_i} X_j, \; i=1,\ldots, |\cC_G|.  \label{Eq_messg_cliq_def}
\end{align}
The server transmits  message $M=\left\{M_1,\ldots, M_{|\cC_G|}\right\}$. Let $k_i$ denote the index of the clique that $i$ belongs to, i.e., $i \in C_{k_i}$. Since the cliques are disjoint, this index is unique. User $i$ decodes $X_i$ from $M_{k_i}$ since it has $X_j,j\neq i$ for all $j \in C_{k_i}$. Next we prove that this scheme also satisfies the weak privacy. To this end, we show that if $j \notin \s_i$, then
\begin{align}
I(M; X_j|X_{\a_i}) =0. \label{Eq_sec_weak_priv}
\end{align}
We show~\eqref{Eq_sec_weak_priv} by  using the facts that the cliques are disjoint and if user $i$ does not have $X_j$ as side information, then it does not have access to at least one more message $X_l, l \in C_{k_j}$. 
We first define
\begin{align}
\tilde{M}(k_j) =  M \setminus\{M_{k_j}\}. \notag
\end{align}
Then, we have
\begin{align}
I(M; X_j|X_{\a_i}) &=  I\left(\tilde{M}(k_j) ; X_j|X_{\a_i} \right) +  I\left(M_{k_j}; X_j|X_{\a_i}, \tilde{M}(k_j)  \right) \notag \\
& =   I\left(M_{k_j}; X_j|X_{\a_i}, \tilde{M}(k_j)  \right) \label{Eq_sec_cliq1} \\
& = I\left(\sum_{l\in C_{k_j}} X_l; X_j| X_{\a_i},\tilde{M}(k_j)   \right) \label{Eq_sec_cliq2} \\
& = I\left(\sum_{l\in C_{k_j} \setminus {\a_i} } X_l; X_j| X_{\a_i}, \tilde{M}(k_j)   \right)\label{Eq_sec_cliq3} \\
& =  I\left(\sum_{l\in C_{k_j} \setminus {\a_i} } X_l; X_j  \right)\\
& = 0. \label{Eq_sec_cliq4} 
\end{align}
Here, \eqref{Eq_sec_cliq1} follows since $X_j$ is not part of $X_{\a_i}$ or $\tilde{M}(k_j)$, and in~\eqref{Eq_sec_cliq2}, we used the definition of $M_i$ given in~\eqref{Eq_messg_cliq_def}. Further,  \eqref{Eq_sec_cliq3} follows since any $X_l, l\in C_{k_j}\setminus \a_i$ is not part of  $X_{\a_i}$ or $\tilde{M}(k_j)$ because of the definition of secure clique cover, and~\eqref{Eq_sec_cliq4} follows since
there exists an $l \neq j$ such that $l \in C_{k_j} \setminus {\a_i}$ which implies that 
$\sum_{l\in C_{k_j} \setminus {\a_i} } X_l$ is independent of $X_j$.

\subsection{Details of Fig.~\ref{Fig_linear}}
\label{Sec_linear_schm} 
In this subsection, we argue that the example in Fig.~\ref{Fig_linear} does not
have a secure clique cover, but it is still feasible under weak privacy.  For
the given index coding problem, if $i \in \s_j$, then $j \notin \s_i $. Then,
the only clique cover of the side information graph is of all singleton sets
which is not secure. Thus, it has no secure clique cover.  We give a linear
encoding matrix $M$ that satisfies the decodability and privacy condition for
weak privacy.  The matrix $M$ is as given below \[ M= \begin{bmatrix} 1 & 1 & 1
& 0 & 0 & 0 & 1\\ 0 & 1 & 1 & 1 & 0 & 1&0 \\ 0 & 0 & 1 & 1 & 1 & 0&1 \\ 1 & 0 &
0 & 1 & 0 & 1&1 \\ 1 & 1 & 0 & 1 & 1 & 0&0 \\ 1 & 0 & 1 & 0 & 1 & 1&0 \\ 0 & 1
& 0 & 0 & 1 & 1&1 \\ \end{bmatrix} \]

Let $M_i$ denote the $i^{\text{th}}$ column of $M$, and for a subset $S$ of the columns let $ \langle M_{S} \rangle $
denote the span of all the columns in $S$.
 The decodability conditions are satisfied for $M$ since
$   M_1 \notin \langle M_{\{4,5,6\}}\rangle, 
    M_2 \notin \langle M_{\{1,5,7\}}\rangle, 
    M_3 \notin \langle M_{\{1,2,6\}}\rangle, 
    M_4 \notin \langle M_{\{2,3,5\}}\rangle,
    M_5 \notin \langle M_{\{3,6,7\}}\rangle,  
    M_6 \notin \langle M_{\{2,4,7\}}\rangle, \text{ and }
    M_7 \notin \langle M_{\{1,3,4\}}\rangle. 
$
The privacy constraints are satisfied since
\begin{align*}
    M_1 &\in \langle M_{\{5,7\}}\rangle  \cap \langle M_{\{2,6\}} \rangle  \cap \langle M_{\{3,4\}}\rangle   \\
    M_2 &\in \langle M_{\{1,6\}}\rangle \cap \langle M_{\{3,5\}} \rangle  \cap \langle M_{\{4,7\}}\rangle    \\
    M_3 &\in \langle M_{\{2,5\}}\rangle  \cap \langle M_{\{6,7\}} \rangle  \cap \langle M_{\{1,4\}}\rangle    \\
    M_4 &\in \langle M_{\{5,6\}}\rangle  \cap \langle M_{\{2,7\}} \rangle  \cap \langle M_{\{1,3\}} \rangle    \\
    M_5 &\in \langle M_{\{4,6\}}\rangle  \cap \langle M_{\{1,7\}} \rangle  \cap \langle M_{\{2,3\}}\rangle    \\
    M_6 &\in \langle M_{\{4,5\}}\rangle  \cap \langle M_{\{1,2\}} \rangle  \cap \langle M_{\{3,7\}}\rangle    \\
    M_7 &\in \langle M_{\{1,5\}}\rangle  \cap \langle M_{\{3,6\}} \rangle \cap \langle M_{\{2,4\}}\rangle.
\end{align*}

\section{Proof of Theorem~\ref{Thm_multicast}}

\label{Proof_multicast}
 In Claim~\ref{Clm_Multicst} below, we  argue that without loss of generality, we can restrict ourselves to multicast schemes in which 
the following holds for each  $\cS_k$:
 \begin{align}
 \text{if } i,j \in \cS_k, \text{ then } i \in \s_j \text { and } j \in \s_i, \text { for all } i,j \in [N], i\neq j. \label{Eq_restr_multi}
 \end{align}

\begin{claim}
\label{Clm_Multicst}
From any multicast scheme $\cM_K$ with $K$ sessions, we can obtain a scheme $\cM'_K$ with $K$ sessions such that 
for all $\cS'_k$ in $\cM'_K$ it holds that if $i,j \in \cS'_k$, then $i \in \s_j$ and $j \in \s_i$.
\end{claim}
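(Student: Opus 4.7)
The plan is to prune each multicast set $\cS_k$ without altering the encoders $\phi^{(k)}$ or the total number of sessions $K$. The key observation is a redundancy lemma: whenever $i, j \in \cS_k$ are distinct and $j \notin \s_i$, the transmitted symbol $M_k$ is a deterministic function of $\xv_{\s_j}$ alone, so user $j$ can reproduce $M_k$ locally from its side information and hence does not need to receive it.

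To establish the lemma I would invoke perfect privacy at both $i$ and $j$. Privacy at $i$, applied to the sub-message $M_k \in M(i)$ and the forbidden index $j \in [N]\setminus\a_i$, gives $I(M_k; \xv_j \mid \xv_{\s_i}) = 0$; since $M_k = \phi^{(k)}(\xv_{[N]})$ is deterministic and the messages $\xv_1,\ldots,\xv_N$ are independent and uniform, this conditional independence upgrades to the functional statement that $M_k$ does not depend on $\xv_j$—i.e., $M_k$ is a function of $\xv_{[N]\setminus\{j\}}$. Privacy at $j$ gives $I(M_k; \xv_{[N]\setminus\a_j} \mid \xv_{\s_j}) = 0$, and the same determinism-plus-independence argument forces $M_k$ to be a function of $\xv_{\a_j}$. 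Intersecting the two index sets, $M_k$ is a function of $\xv_{([N]\setminus\{j\}) \cap \a_j} = \xv_{\s_j}$, as claimed.

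Granting the lemma, I would construct $\cM'_K$ iteratively: while some $\cS_k$ still contains distinct $i, j$ with $j \notin \s_i$, replace $\cS_k$ by $\cS_k\setminus\{j\}$ and leave $\phi^{(k)}$ untouched. Correctness survives because $j$ can recompute the removed $M_k$ from $\xv_{\s_j}$ and feed it into its original decoder $\psi_j$; privacy survives because $M(j)$ only shrinks (weakly decreasing any leakage at $j$), while $M(i')$ is unchanged for every other $i' \neq j$. The process terminates since $\sum_k |\cS_k|$ strictly decreases at each step, and upon termination the rule applied to both orderings of every surviving pair forces $j \in \s_i$ and $i \in \s_j$ simultaneously. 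The main obstacle is the lift from the mutual-information equalities to functional non-dependence of $M_k$; this step relies crucially on the perfect (zero-leakage) regime together with the uniform independence of the messages, and is in the same spirit as the necessity arguments underlying Theorem~\ref{Thm_linear}.
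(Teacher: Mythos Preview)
Your overall strategy—proving that $M_k$ is actually a function of $\xv_{\s_j}$, so that user $j$ can recompute it locally—is a clean way to justify deleting $j$ from $\cS_k$, and it is more explicit than the paper's argument. The paper only derives $I(M_k;\xv_j\mid\xv_{\s_j})=0$ and then asserts that ``$M_k$ is not useful to user~$j$ in decoding $\xv_j$'', which strictly speaking is a small leap; your redundancy lemma closes it.

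However, the step you flag as ``the main obstacle'' is wrong as written. From $I(M_k;\xv_j\mid\xv_{\s_i})=0$ you \emph{cannot} conclude that $M_k$ is functionally independent of $\xv_j$: take $M_k=\xv_i+\xv_j$ (with $j\notin\a_i$), which satisfies $I(M_k;\xv_j\mid\xv_{\s_i})=0$ because $\xv_i$ acts as a one-time pad, yet $M_k$ depends on $\xv_j$. The same objection applies to your deduction that $I(M_k;\xv_{[N]\setminus\a_j}\mid\xv_{\s_j})=0$ forces $M_k$ to be a function of $\xv_{\a_j}$. The missing ingredient is perfect \emph{decoding}, not just perfect privacy: since $\xv_i$ is a function of $(M(i),\xv_{\s_i})$, one gets $I(M_k,\xv_{\a_i};\xv_{[N]\setminus\a_i})=0$, hence $I(M_k;\xv_{[N]\setminus\a_i}\mid\xv_{\a_i})=0$, and now determinism of $M_k$ really does force $M_k$ to be a function of $\xv_{\a_i}$. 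The analogous argument at $j$ gives $M_k$ a function of $\xv_{\a_j}$; intersecting, $M_k$ is a function of $\xv_{\a_i\cap\a_j}\subseteq\xv_{\s_j}$, and your proof goes through.
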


\begin{proof}
For a given $k=1,\ldots, K$, let $ i \in \cS_k$. Then, the perfect privacy criterion for user $i$ implies that
\begin{align}
   I\left(M_k,\xv_i,\xv_{\s_i} ; \xv_{[N]\setminus\a_i}\right) = 0. \label{Eq_perf_priv}
\end{align}
Then, we have
\begin{align}
   I\left(M_k;   \xv_{[N]\setminus\a_i}|\xv_{\a_i} \right) = 0. \label{Eq_perf_priv1}
\end{align} 
For any $j \in [N]\setminus \a_i $, from \eqref{Eq_perf_priv1} we get 
\begin{align}
   I\left(M_k ; \xv_j |\xv_{[N]\setminus j} \right) = 0. \label{Eq_perf_priv2}
\end{align}
Since $\s_j \subseteq [N]\setminus  \{j\} $ and by using the fact that files are independent, it follows from \eqref{Eq_perf_priv2} that
$I\left(M_k ; \xv_j |\xv_{\s_j}\right) = 0$.
This implies that user $j$ learns nothing about $\xv_j$ from $M_k$. Thus,
$M_k$ is not useful to user $j$ in decoding $\xv_j$.
Then it follows that if $j \in \cS_k$, then we can modify $\cS_k$ to $\cS'_k$ by excluding $\{j\}$ in $\cS_k$, i.e., $\cS'_k = \cS_k\setminus \{j\}$. This proves the claim.
\end{proof}

Next we show that at block length $n$, the minimum number of multicasts required is the $n$-fold chromatic number of $G^c$.
\begin{claim}
\label{clm_multi_sessn}
 $\kappa_n(G)= \chi_{n}(G^{c})$. 
\end{claim}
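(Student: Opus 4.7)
The plan is to prove the equality $\kappa_n(G) = \chi_n(G^c)$ by establishing the two inequalities separately, using the normalization afforded by Claim~\ref{Clm_Multicst} to translate between multicast sessions and color classes.

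First I would show $\kappa_n(G) \le \chi_n(G^c)$ by converting an $n$-fold coloring into a multicast scheme. Suppose $G^c$ admits an $n$-fold coloring with colors in $[L]$, so each vertex $i$ is assigned an $n$-subset $C_i \subseteq [L]$ and adjacent vertices (in the underlying undirected graph of $G^c$) get disjoint subsets. For each color $c \in [L]$, let $\mathcal{S}_c = \{i : c \in C_i\}$; this is an independent set in the underlying undirected graph of $G^c$, which (by the definition of $G^c$) is precisely the condition $j \in s_i$ and $i \in s_j$ for every pair $i,j \in \mathcal{S}_c$. For each vertex $i$, order its $n$ colors as $c_{i,1},\dots,c_{i,n}$ so that the $t$-th copy $\xv_i^{(t)}$ will be recovered from the multicast associated with color $c_{i,t}$. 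The multicast scheme sends, for each $c \in [L]$, the single field element $M_c = \sum_{i \in \mathcal{S}_c} \xv_i^{(\pi_c(i))}$ to the users in $\mathcal{S}_c$, where $\pi_c(i)$ is the index $t$ with $c_{i,t}=c$. Since for any $i \in \mathcal{S}_c$ every other $j \in \mathcal{S}_c$ lies in $s_i$, user $i$ recovers $\xv_i^{(\pi_c(i))}$ from $M_c$ by subtracting off its side information. Thus all $n$ copies of $\xv_i$ are decoded, and privacy holds because for each $c \in C_i$, the users $\mathcal{S}_c \setminus\{i\}$ are all in $s_i$, so $M_c$ is a function of $\xv_{\a_i}$ alone.

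Next I would show $\kappa_n(G) \ge \chi_n(G^c)$ by turning any multicast scheme into an $n$-fold coloring. Given any $K$-session multicast scheme achieving block length $n$, invoke Claim~\ref{Clm_Multicst} to assume each $\mathcal{S}_k$ satisfies: $i,j \in \mathcal{S}_k \Rightarrow i \in s_j$ and $j \in s_i$, so $\mathcal{S}_k$ is an independent set in the underlying undirected graph of $G^c$. Assign color $k$ to every vertex in $\mathcal{S}_k$; the resulting assignment has the property that any two vertices sharing a color are non-adjacent in $G^c$. It remains to show each vertex receives at least $n$ colors, which allows us to trim to an $n$-fold coloring. Fix user $i$; by perfect decodability $H(\xv_i^n \mid M(i), \xv_{s_i}^n) = 0$, so
\begin{equation*}
n \;=\; H(\xv_i^n) \;=\; H(\xv_i^n \mid \xv_{s_i}^n) \;\le\; H(M(i)) \;\le\; |M(i)|,
\end{equation*}
where we used independence of messages and the fact that each $M_k \in \mathbb{F}$. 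Hence user $i$ belongs to at least $n$ sessions, and we may select any $n$-subset of those session-indices as the color set $C_i$. Disjointness along edges of $G^c$ is preserved under such trimming, so this yields a valid $n$-fold coloring of $G^c$ using at most $K$ colors. Therefore $\chi_n(G^c) \le K$, and minimizing over schemes gives $\chi_n(G^c) \le \kappa_n(G)$.

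Combining the two directions yields $\kappa_n(G)=\chi_n(G^c)$. Taking infimum over $n$ and dividing by $n$ then gives $\kappa(G) = \inf_n \kappa_n(G)/n = \inf_n \chi_n(G^c)/n = \chi_f(G^c)$, completing Theorem~\ref{Thm_multicast}. The main conceptual step is the reduction via Claim~\ref{Clm_Multicst}, which is what allows an a priori directed-side-information constraint to be interpreted as an undirected independence condition suitable for fractional coloring; once this reduction is in hand the two inequalities are essentially syntactic, with the converse hinging only on the elementary entropy lower bound $|M(i)| \ge n$.
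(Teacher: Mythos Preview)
Your proof is correct and follows essentially the same approach as the paper: both directions convert between color classes and multicast sessions in the obvious way, with Claim~\ref{Clm_Multicst} ensuring each session set is a clique in $G$ (equivalently, independent in $G^c$). Your entropy inequality $n = H(\xv_i^n \mid \xv_{\s_i}^n) \le H(M(i)) \le |M(i)|$ makes explicit what the paper merely asserts (``to recover $\xv_i$ \ldots the cardinality \ldots should be at least $n$''), and your privacy argument spells out what the paper calls ``easy to observe,'' but the underlying logic is identical.
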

\begin{proof}
	To prove this claim, we restrict ourselves to the multicast schemes of the form  \eqref{Eq_restr_multi}.  We prove the claim by showing that at block length $n$, from any multicast scheme of the form \eqref{Eq_restr_multi} we can obtain an $n$-fold coloring, and
	we can also  obtain a multicast scheme from any $n$-fold coloring.
	
 Recall that an $n$-fold coloring is an assignment of sets of size $n$ to each vertex such that adjacent vertices get  disjoint sets.
For an $n$-fold coloring, let $\cC_i$ denote the set of size $n$ assigned to vertex $i$, $i \in [N]$.	
From a multicast scheme with $K$ sessions, we obtain an $n$-fold coloring $\cC_i, i \in [N]$   as follows:  Since each multicast session transmits one symbol from $\mathbb{F}$, to recover $\xv_i$ at user $i$, the cardinality of the set $\{k: i \in \cS_k\}$ should be at least $n$. We obtain the set $\cC_i$ by including the first $n$ indices from the set $\{k: i \in \cS_k\}$ to $\cC_i$. Suppose vertices $i$ and $j$ are adjacent in $G^c$.
Then,  the sets $\cC_i$ and $\cC_j$ obtained by this method are disjoint since for any $k=1,\ldots,K$, both $i$ and $j$ do not belong to $\cS_k$. This is because $i,j \in \cS_k$ for some $k$ implies that $i$ and $j$ are not adjacent in $G^c$ due to~\eqref{Eq_restr_multi}. Thus, we have an $n$-fold coloring.
 
 Now we show that we can obtain a multicast scheme with $K$ sessions from an $n$-fold coloring with $K$ colors.
 Let us assume that the $K$ colors of the $n$-fold coloring are numbered from 1 to $K$.
 Without loss of generality, we assume that the set $\cC_i, i\in[N]$ is an ordered set.
 For $k\in \cC_i , k=1,\ldots, K$, let $\cC_i(k)$ denote its position in the set $\cC_i$. By denoting the $j^{\text{th}}$ instance of $i^{\text{th}}$ message by $X_i^{(j)}$, the message $M_k$ transmitted in the $k^{\text{th}}$ multicast session is given by
 \begin{align*}
  M_k = \sum_{i: k \in \cC_i} X_i^{\left(\cC_i(k)\right)}, \; k=1,\ldots, K.
 \end{align*}
 If $j \in \cS_k$, then user $j$ can decode $X_j^{\left(\cC_j(k)\right)}$ since it has access to all the other random variables $\left\{ X_i^{\left(\cC_i(k)\right)}, i \neq j \right\}$ in $M_k$. It is easy to observe that this scheme achieves perfect privacy. This shows that $\kappa_n(G)$ is given by the $n$-fold chromatic number $\chi_{n}(G^c)$.
\end{proof}
From Claim~\ref{clm_multi_sessn}, it follows that
\begin{align*}
 \kappa(G) = \inf_{n} \frac{\kappa_n(G)}{n} = \inf_{n} \frac{\chi_{n}(G^{c})}{n} = \chi_f(G^c).
\end{align*}
This proves Theorem~\ref{Thm_multicast}.

	\end{appendices}
	\fi
	
	\section*{Acknowledgments}
	V. Prabhakaran and N. Karamchandani acknowledge initial discussions with Parathasarathi Panda and Vaishakh Ravi.
	V. Narayanan was supported by a travel fellowship from the Sarojini Damodaran Foundation.
	V. Narayanan, J. Ravi, and V. Prabhakaran acknowledge support of the Department of Atomic Energy, Government of India, under project no. 12-R\&D-TFR-5.01-0500.
	This work was done while J. Ravi was at Tata Institute of Fundamental Research and IIT Bombay. 
	J.~Ravi has received funding from European Research Council (ERC) under the European Union's Horizon 2020 research and innovation programme (Grant No. 714161).

\end{document}